\setlist{nosep}
\newcolumntype{x}[1]{>{\centering\let\newline\\\arraybackslash\hspace{0pt}}p{#1}}
\tikzstyle{player1}=[draw,rounded rectangle, minimum size=5mm]
\tikzstyle{player2}=[draw,rectangle,minimum size=5mm]
\tikzstyle{widget}=[draw=gray,rectangle, rounded rectangle=10pt,dashed,minimum size=6mm]
\tikzset{every loop/.style={looseness=7}}
\newcommand{\set}[1]{\left\{ #1 \right\}}
\newcommand{\seq}[1]{\langle #1 \rangle}
\renewcommand\geq{\geqslant}
\renewcommand\leq{\leqslant}
\newcommand\clocks{\mathcal X}
\newcommand\valuation{\nu}
\newcommand\Nat{\mathbb{N}}
\newcommand\Int{\mathbb{Z}}
\newcommand\Real{\mathbb{R}}
\newcommand\Rplus{\mathbb{R}_{\geq 0}}
\newcommand\Rpos{\Rplus}
\newcommand\R{\Real}
\newcommand\Z{\Int}
\newcommand\N{\Nat}
\newcommand\arena{\mathcal A}
\newcommand\locations{L}
\newcommand\minlocations{L_{1}}
\newcommand\maxlocations{L_{2}}
\newcommand\invariants{\mathrm{Inv}}
\newcommand\labels{\Sigma}
\newcommand\transitions{\delta}
\newcommand\prices{\omega}
\newcommand\goals{\locations_f}
\newcommand\powerset[1]{2^{#1}}
\newcommand\location{\ell}
\newcommand\pricedarena{\mathcal G}
\newcommand\vertices{V}
\newcommand\edges{E}
\newcommand\actions{A}
\newcommand\finalvertices{\vertices_f}
\newcommand\Price{\mathsf{Cost}}
\newcommand\Weight{\Price}
\newcommand\edgeweights{\pi}
\newcommand\edgeprices{\pi}
\newcommand\fstrategy{\xi}
\newcommand\minfstrategy{\fstrategy_1}
\newcommand\maxfstrategy{\fstrategy_2}
\newcommand\timedprices{\kappa}
\newcommand\sstop{\textsf{Stop}}
\newcommand{\sem}[1]{ [ \! [ {#1}  ]  \! ]} 
\newcommand\states{S}
\newcommand\timedactions{\Gamma}
\newcommand\timedtransitions{\Delta}
\newcommand\finalstates{\states_f}
\newcommand\StepsToReach{\mathsf{Length}}
\newcommand\ReachOb{\mathsf{Reach}}
\newcommand\BReachOb{\mathsf{TBReach}}
\newcommand\RReachOb{\mathsf{RReach}}
\newcommand\oneBPTG{\ensuremath{\mathrm{1BPTG}}\xspace}
\newcommand\onePTG{\ensuremath{\mathrm{1PTG}}\xspace}
\newcommand\strategies{\mathsf{Strat}}
\newcommand\minstrategies{\strategies_{1}}
\newcommand\maxstrategies{\strategies_{2}}
\newcommand\strategy{\sigma}
\newcommand\minstrategy{\strategy_1}
\newcommand\maxstrategy{\strategy_2}
\newcommand\outcomes{\mathsf{Play}}
\newcommand\tadelay[1]{\mathsf{del}(#1)}
\newcommand\taaction[1]{\mathsf{lab}(#1)}
\newcommand\Last{\textsf{Last}}
\newcommand\Play{\textsf{Play}}
\newcommand\FPlay{\textsf{FPlay}}
\newcommand\uppervalue{\overline{\mathsf{Val}}}
\newcommand\lowervalue{\underline{\mathsf{Val}}}
\newcommand\Value{\mathsf{Val}}
\newcommand\pricestrategy{\Price}
\newcommand\intervals{\mathcal I}
\newcommand\leqinterval{\preceq} 
\newcommand\uniformminstrategies{\mathsf{UStrat}_{1}}
\newcommand\uniformmaxstrategies{\mathsf{UStrat}_{2}}
\newcommand\uniformuppervalue{\overline{\mathsf{UVal}}}
\newcommand\uniformlowervalue{\underline{\mathsf{UVal}}}
\newcommand\convergentminstrategies{\mathsf{CStrat}_{1}}
\newcommand\convergentmaxstrategies{\mathsf{CStrat}_{2}}
\newcommand\convergentlowervalue{\underline{\mathsf{CVal}}}
\newcommand\abstrarena{\ensuremath{\tilde{\arena}}}
\renewcommand\paragraph[1]{\noindent \textbf{#1.}}
\title{\texorpdfstring{Adding Negative Prices to Priced Timed
    Games\thanks{The research leading to these results has received
      funding from the European Union Seventh Framework Programme
      (FP7/2007-2013) under Grant Agreement number 601148
      (CASSTING)}}{Adding Negative Prices to Priced Timed Games} }
\author{
  Thomas Brihaye\inst{1}, 
  Gilles Geeraerts\inst{2}\thanks{Supported by a `Cr\'edit aux chercheurs' 
    number 1808881 of the F.R.S./FNRS.}, 
  Shankara Narayanan Krishna\inst{3}, \\
  Lakshmi Manasa\inst{3}, 
  Benjamin Monmege\inst{2}, and 
  Ashutosh Trivedi\inst{3}}
\institute{
  Universit\'e de Mons, Belgium, \email{thomas.brihaye@umons.ac.be} 
  \and 
  Universit\'e libre de Bruxelles, Belgium, \email{gigeerae,benjamin.monmege@ulb.ac.be}  
  \and 
  IIT Bombay, India,
  \email{krishnas,manasa,trivedi@cse.iitb.ac.in}} 
\begin{document}

\maketitle

\begin{abstract}
  Priced timed games (PTGs) are two-player zero-sum games played on
  the infinite graph of configurations of priced timed automata where
  two players take turns to choose transitions in order to optimize
  cost to reach target states.  Bouyer et al. and Alur, Bernadsky, and
  Madhusudan independently proposed algorithms to solve PTGs with
  nonnegative prices under certain divergence restriction over prices.
  Brihaye, Bruy\`ere, and Raskin later provided a justification for
  such a restriction by showing the undecidability of the optimal
  strategy synthesis problem in the absence of this divergence
  restriction.  This problem for PTGs with one clock has long been
  conjectured to be in polynomial time, however the current best known
  algorithm, by Hansen, Ibsen-Jensen, and Miltersen, is exponential.
  We extend this picture by studying PTGs with both negative and
  positive prices. We refine the undecidability results for optimal
  strategy synthesis problem, and show undecidability for several
  variants of optimal reachability cost objectives including
  reachability cost, time-bounded reachability cost, and repeated
  reachability cost objectives.  We also identify a subclass with
  bi-valued price-rates and give a pseudo-polynomial\todo{G: removed
    `(polynomial when prices are
    nonnegative)'} 
  algorithm to partially answer the conjecture on the complexity of
  one-clock PTGs.

\end{abstract}

\section{Introduction}
\label{sec:introduction}
Timed automata~\cite{AluDil94} equip finite automata with a finite
number of real-valued variables---aptly called clocks---that evolve
with a uniform rate.  The syntax of timed automata also permits
specifying \emph{transition guards} and \emph{location (state)
invariants} using the constraints over clock valuations, and resetting
the clocks as a means to remember the time since the execution of a
transition.  Timed automata is a well-established formalism to specify
time-critical properties of real-time systems.  Priced timed
automata~\cite{AluLa-04,BehFeh01} (PTAs) extend timed automata with
price information by augmenting locations with price-rates and
transitions with discrete prices.  The natural reachability-cost
optimization problem for PTAs is known to be decidable with the same
complexity~\cite{BouBri07} as the reachability problem
(PSPACE-complete), and forms the backbone of many applications of
timed automata including scheduling and planning.

Priced timed games (PTGs) extend the reachability-cost optimization
problem to the setting of competitive optimization problem, and form
the basis of optimal controller synthesis \cite{RW89} for real-time
systems.  We study turn-based variant of these games where the game
arena is a PTA with a partition of the locations between two players
Player~1 and Player~2.  A play of such a game begins with a token in
an initial location, and at every step the player controlling the
current location proposes a valid timed move, i.e., a time delay and a
discrete transition, and the state of the system is modified
accordingly.  
The play stops if the token reaches a location from a distinguished set of
\emph{target locations}, and the payoff of the play is equal to the cost
accumulated before reaching the target location.
If the token never reaches a target location
then the game continues forever, and the payoff in this case is
$+\infty$ irrespective of actual cost of the infinite play.  We
characterize a PTG according to the objectives of Player~1.  Since we
study zero-sum games, the objective of Player~2 is also implicitly
defined. We study PTGs with the following objectives:
\begin{enumerate*}[label=(\emph{\roman*})]
\item \textit{Constrained-price reachability} objective $\ReachOb({\bowtie} K)$
is to achieve a payoff $C$ of the play such that $C \bowtie K$ where
${\bowtie} \in \set{\leq, <, = , >, \geq}$ and $K \in \Nat$; 
\item \textit {Bounded-time reachability} objective $\BReachOb(K, T)$
is to keep the payoff of the play less than $K$ while keeping the
total time elapsed within $T$ units; and 
\item \textit{Repeated reachability} objective $\RReachOb(\eta)$ is to
visit target infinitely often with a payoff in the interval $[-\eta, \eta]$.
\end{enumerate*}
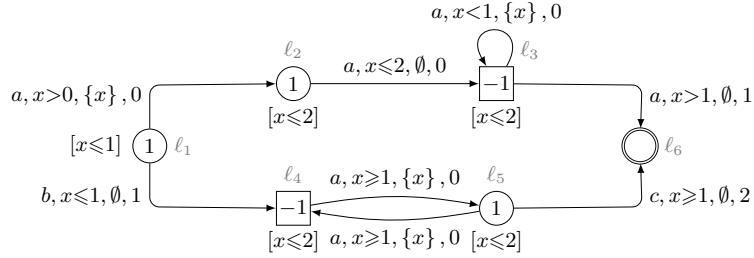
\begin{figure}[t]
  \centering
  \scalebox{.9}{
    \begin{tikzpicture}[node distance=3cm,auto,-{>},{>}=latex]
      \node[player1](1){\makebox[0mm][c]{$1$}}; 
      \node()[right of=1,node distance=5mm,color=gray]{$\location_1$};
      \node()[left of=1,node distance=8mm]{$[x{\leq} 1]$};
      \node[player1](2)[above right of=1,yshift=-1.2cm]{\makebox[0mm][c]{$1$}}; 
      \node()[above of=2,node distance=5mm,color=gray]{$\location_2$};
      \node()[below of=2,node distance=5mm]{$[x{\leq} 2]$};
      \node[player2](3)[right of=2]{\makebox[0mm][c]{$-1$}};
      \node()[above right of=3,node distance=7mm,color=gray]{$\location_3$};
      \node()[below of=3,node distance=5mm]{$[x{\leq} 2]$};
      \node[player2](4)[below right of=1,yshift=1.2cm]{\makebox[0mm][c]{$-1$}};
      \node()[above of=4,node distance=5mm,color=gray]{$\location_4$};
      \node()[below of=4,node distance=5mm]{$[x{\leq} 2]$};
      \node[player1](5)[right of=4]{\makebox[0mm][c]{$1$}}; 
      \node()[above of=5,node distance=5mm,color=gray]{$\location_5$};
      \node()[below of=5,node distance=5mm]{$[x{\leq} 2]$};
      \node[player1,double](6)[below right of=3,yshift=1.2cm]{};
      \node()[right of=6,node distance=5mm,color=gray]{$\location_6$};

      \draw[rounded corners]
      (1) -- node[left,near end]{$a, x{>}0, \set{x},0$} (0, 0.9) -- (2);

      \draw[rounded corners]
      (1) -- node[left,near end]{$b, x{\leq}1, \emptyset, 1$} (0, -0.9) -- (4);

      \draw[rounded corners]
      (3) -- (7.27, 0.9) -- node[right,,near start]{$a, x{>}1, \emptyset, 1$} (6);

      \draw[rounded corners]
      (5) -- (7.27, -0.9) -- node[right,,near start]{$c, x{\geq}1, \emptyset, 2$} (6);
      
      \path
      (2) edge node[above]{$a,x{\leq} 2, \emptyset, 0$} (3)
      (3) edge[in=120,out=60,loop] node[above]{$a,x{<}1, \set{x},0$} (3)
      (4) edge[bend left=10] node[above]{$a,x{\geq} 1, \set{x},0$} (5) 
      (5) edge[bend left=10] node[below]{$a,x{\geq} 1, \set{x},0$} (4);

    \end{tikzpicture}}
  \caption{A price timed game arena with one clock}
  \label{fig:BPTG}
\end{figure}

  An example of PTG with clock variable $x$ and six locations is
  given in Fig.~\ref{fig:BPTG}.  We depict Player~1 locations as
  circles and Player~2 locations as boxes.  The numbers inside
  locations denote their price-rates, while the clock constraints next
  to a location depicts its invariant.  We denote a transition, as
  usual, by an arrow between two location annotated by a tuple $a, g,
  r, c$ where $a$ is the label, $g$ is the guard, $r$ is the clocks
  reset set, and $c$ is the cost of the transition.

\paragraph{Related work} 
PTGs with constrained-price reachability objective $\ReachOb({\leq}
K)$ were independently introduced in \cite{BouCas04} and
\cite{AluBer04}, with semi-algorithms to decide the existence of
winning strategy for Player~1 in PTGs with nonnegative prices. They
also showed that under the \emph{strongly non-Zeno assumption} on
prices the proposed semi-algorithms always terminate.
This assumption was justified in \cite{BriBru05} by showing that, in
the absence of non-Zeno assumption, the problem of deciding the
existence of winning strategy for the objective $\ReachOb(\leq K)$ is
undecidable for PTGs with five or more clocks. This result has been
later refined in \cite{BouBri06} by showing that the problem is
undecidable for PTGs with three or more clocks and nonnegative prices.
In \cite{BCJ09} is showed the undecidability of the existence of
winning strategy problem for $\ReachOb({\leq} K)$ objective over PTGs
with both positive and negative price-rates and two or more clocks.

On a positive side, the existence of winning strategy for
$\ReachOb({\leq} K)$ problem for PTGs with one clock when the
price-rates are restricted to values $0$ and $d \in \Nat$ has been
shown decidable in \cite{BriBru05}, by proving that the
semi-algorithms in \cite{BouCas04,AluBer04} always terminate. However,
the authors did not provide any complexity analysis of their
algorithm.
One-clock PTGs with nonnegative prices are reconsidered
in~\cite{BouLar06}, and a 3-EXPTIME algorithm is given to solve the
problem, while the best known lower bound is PTIME.
A tighter analysis of the problem is presented in \cite{Rut11} that
lowered the known complexity of this problem to EXPTIME, namely
$2^{O(n^2+m)}$ where $n$ is the number of locations and $m$ is the
number of transitions. A significant improvement over the complexity
($m12^{n}n^{O(1)}$) was given in \cite{DueIbs13} by improving the
analysis of the semi-algorithms by \cite{BouCas04,AluBer04}.

\paragraph{Contributions}
We consider PTGs with both negative and positive prices.  We show that
deciding the existence of a winning strategy for reachability
objective $\ReachOb({\bowtie} K)$ is undecidable for PTGs with two or
more clocks. In \cite{OW10}, a theory of time-bounded verification has
been proposed, arguing that restriction to bounded-time domain
reclaims the decidability of several key verification problems.  As an
example, we cite~\cite{BDGORW13} where authors recovered the
decidability of the reachability problem for hybrid automata under
time-bounded restriction.  We begin studying PTGs with bounded
reachability objective $\BReachOb(K, T)$ hoping that the problem may
be decidable due to time-bounded restriction.  However, we answer this
question negatively by showing undecidability of the existence of
winning strategy problem for PTGs with six or more clocks.  We also
show the undecidability for the corresponding problem for repeated
reachability objective $\RReachOb(\eta)$ for PTGs with three or more
clocks.

On the positive side, we introduce a previously unexplored subclass of
one-clock PTGs, called one-clock bi-valued priced timed games
(1BPTGs), where the price-rates of locations are taken from a set of
two integers from $\{-d,0,d\}$ (with $d$ any positive integer).  None
of the previously cited algorithms can be applied in this case since
we do not assume non-Zenoness of prices and consider both positive and
negative prices.  After showing a determinacy result for 1BPTGs, we
proceed to give a pseudo-polynomial time algorithm to compute the
value and $\varepsilon$-optimal strategy for both players with
$\ReachOb({\leq} K)$ objective. \todo{G: removed `The complexity drops
  to polynomial for 1BPTGs if the price-rates are non-negative
  integers.  This gives a polynomial time algorithm for the
one-clock PTG problem studied in~\cite{BriBru05}.'}

\paragraph{Remark on a previous version of this paper} Note that the
five first arXiv revisions of the present paper (up to
\url{https://arxiv.org/abs/1404.5894v5}), as well as the published
version of the paper \cite{concur14}, contained a flawed claim that
has been kindly pointed out to us by the authors of
\cite{FIS20}. Namely, we were claiming that our results entail that
`the complexity [of computing the value and $\varepsilon$-optimal
strategy for both players with $\ReachOb({\leq} K)$ objective] drops
to \emph{polynomial} for 1BPTGs if the price-rates are non-negative
integers.' While our theorems and corollaries remain correct, this
conclusion was not, and we are indebted to those authors for kindly
letting us know. Such claims have been removed from the present
version.

\section{Reachability-cost games on priced game graphs}
\label{sec:preliminaries}

PTGs can be considered as a succinct representation of some games on
uncountable state space characterized by the configuration graph of
timed automata. 

We begin by introducing the concepts and notations related to such
more general game arenas that we call priced game graphs.

\begin{definition}
  A \emph{priced game graph} is a tuple $\pricedarena=
  (\vertices,\actions,\edges,\edgeprices,\finalvertices)$ where:
  \begin{itemize}
  \item $\vertices=\vertices_1\uplus \vertices_2$ is the set of
    vertices partitioned into the sets $\vertices_1$ and
    $\vertices_2$;
  \item $\actions$ is a set of labels called actions;
  \item $\edges\colon \vertices\times\actions\to \vertices$ is the
    edge function defining the set of labeled edges;
  \item $\edgeprices\colon \vertices\times\actions \to \R$ is the
    price function that assigns prices to edges; and
  \item $\finalvertices\subseteq \vertices$ is the set of target
    vertices.
\end{itemize}
We call a game graph \emph{finite} if both $\vertices$ and $\actions$
are finite and with rational prices.
\end{definition}

A reachability-cost game begins with a token placed on some initial
vertex $v_0$.  At each round, the player who controls the current
vertex $v$ chooses an action $a \in \actions$ and the token is moved
to the vertex $\edges(v, a)$.  The two players continue moving the
token in this fashion, and give rise to an infinite sequence of
vertices and actions called a play of the game.
Formally, a finite play $r$ is a finite sequence of vertices and
actions $\seq{v_0, a_0, v_1, a_1, \ldots, a_{n-1}, v_n}$ where for
each $0 \leq i < n$ we have that $v_{i+1} = \edges(v_i, a_{i})$; we
write $\Last(r)$ for the last vertex of a finite play, here $\Last(r)
= v_n$.  An infinite play is defined analogously.  We write
$\FPlay_\pricedarena$ ($\FPlay_\pricedarena(v)$) for the set of finite
plays (starting from the vertex $v$) of the game graph $\pricedarena$.
We often omit the subscript when the game arena is clear form the
context.  We similarly define $\Play$ and $\Play(v)$ for the set of
infinite plays.  For all $k\geq 0$, we let $r[k]$ be the prefix
$\seq{v_0, a_0, \ldots, a_{k-1}, v_k}$ of $r$, and we denote by
$\Price(r[k])=\sum_{i=0}^{k-1}\edgeprices(v_{i}, a_i)$ its cost.  We
write $\sstop(r)$ for the index of the first target vertex in $r$,
i.e., $\sstop(r) = \inf \set{k\::\: v_k \in F}$.  We define the cost
of an infinite run $r = \seq{v_0, a_1, v_1, \ldots}$ as $\Price(r) =
+\infty$ if $\sstop(r) = \infty$ and $\Price(r) =
\Price(r[\sstop(r)])$, otherwise.

A strategy for a Player $i$ (for $i \in \set{1, 2}$) is a partial
function $\sigma: \FPlay \to \actions$ that is defined for a run $r =
\seq{v_0, a_0, v_1, \ldots, a_{n-1}, v_n}$ if $v_n \in V_i$ and is
such that $\edges(v_n, \sigma(r))$ is defined, i.e., there is a
$\sigma(r)$-labeled outgoing transition from $v_n$.  We denote by
$\strategies_i(\pricedarena)$ (or $\strategies_i$ when the game arena
is clear) the set of strategies for Player~i. Given a strategy profile
$(\minstrategy,\maxstrategy)\in\minstrategies \times \maxstrategies$
for both players, and an initial vertex $v\in\vertices$, the unique
infinite play $\outcomes(v,\minstrategy,\maxstrategy)= \seq{v_0,
  a_0,v_1,\ldots v_{k}, a_k,v_{k+1},\ldots}$ is such that for all
$k\geq 0$ if $v_{k}\in \vertices_i$, for $i=1,2$, then
$a_{k+1}=\sigma_i(r[k])$ and $v_{k+1}=\edges(v_k, a_{k+1})$.  A
strategy $\sigma$ is said to be \emph{memoryless} (or
\emph{positional}) if, for all finite plays $r, r' \in \FPlay$ with
$\Last(r) = \Last(r')$ we have that $\sigma(r) = \sigma(r')$.
Similarly, \emph{finite-memory strategies} can be defined as
implementable with Moore machines, see \cite{priced-games} for a
formal definition.

We consider optimal reachability-cost games on priced game graphs, where the
goal of Player~1 is to minimize the reachability-cost, while the goal of
Player~2 is the opposite. 
The standard concepts of upper value and lower value of the optimal
reachability-cost game are defined in straightforward manner. 
Formally, the upper-value $\uppervalue_\pricedarena(v)$ and lower
value $\lowervalue_\pricedarena(v)$ of a game starting from a vertex
$v$ is defined as $\uppervalue_\pricedarena(v) =
\inf_{\minstrategy\in\minstrategies}
\sup_{\maxstrategy\in\maxstrategies}
\Price(\outcomes(v,\minstrategy,\maxstrategy))$ and
$\lowervalue_\pricedarena(v) = \sup_{\maxstrategy\in\maxstrategies}
\inf_{\minstrategy\in\minstrategies}
\Price(\outcomes(v,\minstrategy,\maxstrategy))$.
It is easy to see that $\lowervalue_\pricedarena(v)\leq
\uppervalue_\pricedarena(v)$ for every vertex $v$.  We say that a game
is \emph{determined} if the lower and the upper values match for every
vertex $v$, and in this case, we say that the optimal value of the
game exists and we let $\Value_\pricedarena(v)=
\lowervalue_\pricedarena(v)= \uppervalue_\pricedarena(v)$.
The determinacy of these games follow from Martin's determinacy
theorem, and an alternative proof is given in~\cite{priced-games}.

In the following, we write $\pricestrategy(v,\minstrategy)$ for the
value of the strategy $\minstrategy$ of Player~1 from vertex $v$,
i.e., $\pricestrategy(v,\minstrategy) =
\sup_{\maxstrategy\in\maxstrategies}
\Price(\outcomes(v,\minstrategy,\maxstrategy))\,.$ A strategy
$\minstrategy^*$ of Player 1 is said to be optimal from $v$ if
$\pricestrategy(v,\minstrategy^*) = \uppervalue_\pricedarena(v)\,.$
Optimal strategies do not always exist, hence we also define
$\varepsilon$-optimal strategies. For $\varepsilon>0$, a strategy
$\minstrategy$ is an $\varepsilon$-optimal strategy if for all vertex
$v\in\vertices$, $\pricestrategy(v,\minstrategy) \leq
\uppervalue_\pricedarena(v)+\varepsilon\,.$ In this paper we exploit
the following result from~\cite{priced-games}.
\vspace{-1em}

\begin{theorem}[\cite{priced-games}]\label{thm:optimal-strategy}
  Let $\pricedarena$ be a finite priced game graph.
  \begin{enumerate}
  \item Deciding $\Value_\pricedarena(v) = +\infty$ is in Polynomial Time.
  \item Deciding $\Value_\pricedarena(v) = -\infty$ is in
    $\mathrm{NP}\cap\mathrm{co\text{-}NP}$, can be achieved in
    pseudo-polynomial time\footnote{
      polynomial time if the prices are encoded in unary.} and is as hard as
    solving mean-payoff games~\cite{ZP96}.
  \item Given $-\infty < \Value_\pricedarena(v) < +\infty$ for every vertex $v$,
    optimal strategies exist for both players. 
    In particular, Player~2 has optimal memoryless strategies,
    while Player~1 has optimal finite-memory strategies. 
    Moreover, the values $\Value_\pricedarena(v)$, as well as optimal strategies, 
    can be computed in pseudo-polynomial time.
  \end{enumerate}
\end{theorem}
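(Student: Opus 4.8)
\medskip\noindent\textbf{Proof (sketch).}
The three items are handled separately, each being reduced to a standard question on finite graphs.
For the first item, I would reduce the test ``$\Value_\pricedarena(v)=+\infty$?'' to the qualitative reachability game on $\pricedarena$. Let $W=\Attr_1(\finalvertices)$, computable in polynomial time. If $v\in W$, Player~1's memoryless attractor strategy forces a visit to $\finalvertices$ within $|\vertices|$ rounds against every strategy of Player~2, so every outcome costs at most $|\vertices|\cdot\max_{v,a}|\edgeprices(v,a)|$; hence $\uppervalue_\pricedarena(v)<+\infty$, and therefore $\Value_\pricedarena(v)\neq +\infty$. If $v\notin W$, then $\vertices\setminus W$ is a trap in which Player~2 has a memoryless strategy keeping the token outside $\finalvertices$ forever, so every outcome costs $+\infty$ and $\lowervalue_\pricedarena(v)=\uppervalue_\pricedarena(v)=+\infty$. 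Consequently $\Value_\pricedarena(v)=+\infty$ iff $v\notin W$, which is decidable in polynomial time.

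For the second item, by the first one we may assume $v\in W$; I would first check that $W$ induces a sub-arena (no Player~2 vertex of $W$ has an edge out of $W$, and Player~1 never benefits from leaving $W$, where she risks being trapped away from $\finalvertices$). On this sub-arena, consider the mean-payoff game in which Player~1 \emph{minimises} the limit-average price, and let $N\subseteq W$ be the set of vertices from which Player~1 can force a strictly negative mean-payoff. The key claim is that $\Value_\pricedarena(v)=-\infty$ iff $v\in N$: if $v\in N$, Player~1 follows her mean-payoff strategy until the accumulated cost is as negative as desired and only then switches to the attractor strategy towards $\finalvertices$; conversely, if $v\notin N$, Player~2 can guarantee a non-negative mean-payoff, which keeps every prefix cost above a fixed constant $-B$, so that, whether or not $\finalvertices$ is reached, the value is at least $-B$. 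Deciding membership in $N$ is a mean-payoff game, hence in $\mathrm{NP}\cap\mathrm{co\text{-}NP}$ and solvable in pseudo-polynomial time (polynomial when prices are encoded in unary)~\cite{ZP96}, while the attractor preprocessing costs only polynomial time. For the matching hardness I would reduce mean-payoff games to this problem by interleaving, after every move, a fresh Player~1 ``checkpoint'' vertex from which Player~1 may either continue or move at price~$0$ to a new target: then $\finalvertices$ is always reachable and $\Value_\pricedarena=-\infty$ holds exactly when Player~1 can force a negative mean-payoff, at the cost of only $O(|\vertices|)$ extra vertices.

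For the third item, assume $-\infty<\Value_\pricedarena(v)<+\infty$ for all $v$, and use value iteration on the Bellman operator $\mathcal F$ acting on $x\in(\R\cup\{+\infty\})^{\vertices}$ by $\mathcal F(x)(v)=0$ for $v\in\finalvertices$, $\mathcal F(x)(v)=\min_a(\edgeprices(v,a)+x(\edges(v,a)))$ for $v\in\vertices_1\setminus\finalvertices$, and likewise with $\max$ for $v\in\vertices_2\setminus\finalvertices$. Iterating $\mathcal F$ from $x_0$ that is $0$ on $\finalvertices$ and $+\infty$ elsewhere produces a non-increasing sequence $(x_k)_k$ with $x_k(v)$ equal to the value of the truncated game in which Player~1 must reach $\finalvertices$ within $k$ rounds; since $x_k(v)\geq\Value_\pricedarena(v)$ always, the finiteness hypothesis yields a uniform lower bound on the $x_k$, and I would show that the sequence stabilises at a fixpoint after a pseudo-polynomial number of iterations and that this fixpoint equals $\Value_\pricedarena$. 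From the fixpoint, an optimal memoryless strategy for Player~2 picks at each $v\in\vertices_2$ an action maximising $\edgeprices(v,a)+\Value_\pricedarena(\edges(v,a))$; summing the Bellman inequalities along any play shows it secures $\Value_\pricedarena(v)$. For Player~1, the greedy choice of a value-non-increasing edge need not be memoryless, but becomes optimal once combined with a counter bounded by the iteration bound above, used to ensure that $\finalvertices$ is reached within pseudo-polynomially many rounds. All of $\Value_\pricedarena$, the iteration bound, and the two strategies are then computable in pseudo-polynomial time.

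The principal obstacle is the analysis of the third item: because negative prices let the truncated values $x_k$ decrease temporarily, establishing that value iteration converges to $\Value_\pricedarena$ and, above all, an explicit pseudo-polynomial bound on the number of iterations---which also governs the finite memory needed by Player~1---requires care. The secondary delicate point is the ``only if'' direction of the mean-payoff characterisation in the second item, namely that confining the play outside $N$ forces a value bounded from below.
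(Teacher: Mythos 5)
Note first that this paper does not prove Theorem~\ref{thm:optimal-strategy}: it is stated as an import from the technical report~\cite{priced-games} and used as a black box, so the comparison is with that external proof, not with anything in the present document. Your outline does follow the route taken there---attractor computation for the $+\infty$ test, a mean-payoff characterisation for the $-\infty$ test, and Kleene value iteration for the finite case.

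Item~1 is correct, and your outline of item~3 is the right one; the difficulty you flag (a pseudo-polynomial bound on the number of iterations) is indeed where all the work lies. To fill it in you need an a priori bound $|\Value_\pricedarena(v)|\leq(|\vertices|-1)\max|\edgeprices|$ on finite values (lower bound from Player~2's mean-payoff strategy as in item~2, upper bound from Player~1's attractor strategy) together with the observation that, for integer weights, $\sum_{v}x_k(v)$ is an integer, non-increasing, and strictly decreases until the fixpoint; after $|\vertices|$ iterations making all entries on $W$ finite, at most $2|\vertices|^2\max|\edgeprices|$ further iterations can occur. Your remarks about extracting a memoryless Player~2 strategy and a finite-memory Player~1 strategy from the fixpoint are sound.

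Item~2, however, has a genuine gap. You define $N$ via a mean-payoff game on the unmodified sub-arena $W$, but Player~1's mean-payoff optimal strategy may traverse $\finalvertices$, and in the reachability-cost game the payoff is frozen at the \emph{first} visit to $\finalvertices$, so the step ``follow the mean-payoff strategy until the accumulated cost is as negative as desired'' is not available. Concretely, take $\vertices=\{v,f\}$ with $f\in\finalvertices$, both controlled by Player~1, a single edge $v\to f$ of weight $-1$ and an edge $f\to v$ of weight $0$. The unique cycle has mean-payoff $-\tfrac12<0$, so $v\in N$ under your definition, yet $\Value_\pricedarena(v)=-1\neq-\infty$ since the payoff stops at $f$ immediately. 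The repair is to compute the mean-payoff game on the sub-arena of $W$ in which every target vertex is made absorbing with a weight-$0$ self-loop (equivalently, deleted, since the payoff is determined once $\finalvertices$ is entered). With that modification both directions go through: a strictly negative mean-payoff strategy must then avoid $\finalvertices$ forever, so Player~1 can push prefix cost below any threshold before switching to an attractor strategy; conversely, Player~2's optimal positional strategy guaranteeing non-negative mean-payoff bounds every prefix cost before the first $\finalvertices$-visit by $-(|\vertices|-1)\max|\edgeprices|$. Your hardness reduction is unaffected, as its target vertex is fresh and never enters the embedded mean-payoff arena. Finally, the ``check'' that $W=\Attr_1(\finalvertices)$ is closed under Player~2 moves is automatic from the definition of attractor; what you genuinely need to argue is that restricting Player~1 to moves staying inside $W$ is without loss, which holds because any move leaving $W$ yields value $+\infty$.
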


It must be noticed that, in the presence of negative costs, and even
when every vertex $v$ has a finite value
$\Value_\pricedarena(v)\in\R$, memoryless optimal strategies may not
exist for Player~1, as pointed out in~\cite[Example~1]{priced-games}.

\section{Priced timed games}
\label{sec:definition}
In order to formally introduce priced timed games, we need to define the
concepts of clocks, clock valuations, constraints, and zones. 
Let $\clocks$ be a finite set of real-valued variables
called \emph{clocks}. A clock valuation on $\clocks$ is a function
$\valuation\colon \clocks\to\Rpos$ and we write $V(\clocks)$ for the
set of clock valuations. Abusing notation, we also treat a valuation
$\valuation$ as a point in $\R^{|\clocks|}$. If $\valuation\in
V(\clocks)$ and $t \in \Rpos$ then we write $\valuation+t$ for the
clock valuation defined by $(\valuation+t)(c) = \valuation(c)+t$ for
all $c \in \clocks$. For $C \subseteq \clocks$, we write
$\valuation[C:=0]$ for the valuation where $\valuation[C:=0](c)$
equals~$0$ if $c \in C$ and $\valuation(c)$ otherwise.
A clock constraint over $\clocks$ is a conjunction of simple
constraints of the form $c \bowtie i$ or $c-c' \bowtie i$, where
$c,c'\in\clocks$, $i\in \N$ and ${\bowtie} \in \{<,>,=,\leq,\geq\}$. 
A clock zone is a finite set of clock constraints that
defines a convex set of clock valuations. 
We write $Z(\clocks)$ for the set of clock zones over the set of clocks
$\clocks$. 
\begin{definition}
  A priced timed game is a tuple $\arena = (\locations,
  \clocks, \invariants, \labels, \transitions, \prices, \goals)$ where:
  \begin{itemize}
  \item 
    $\locations=\locations_1\uplus \locations_2$ is a finite set
    of locations, partitioned into the sets $\locations_1$ and $\locations_2$;
  \item 
    $\clocks$ is a finite set of clocks;
  \item 
    $\invariants \colon \locations \to Z(\clocks)$ associates an invariant to
    each location;
  \item
    $\labels$ is a finite set of labels;
  \item
    $\transitions \colon \locations\times \labels \to Z(\clocks) \times
    \powerset\clocks \times \locations$ is a transition function that
    maps a location $\location\in\locations$ and label
    $a\in\labels$ to a clock zone $\zeta\in Z(\clocks)$ representing
    the guard on the transition, a set of clocks $R\subseteq \clocks$
    to be reset and successor location $\location'\in\locations$;
  \item
    $\prices\colon \locations\cup\labels \to \Z$ is the price
    function; and
  \item 
    and $\goals \subseteq \locations$ is the set of target
    locations.
  \end{itemize}
\end{definition}

A configuration of a PTG is a tuple $(\location,\valuation)
\in\locations\times V$ where $\location$ is a location, $\valuation$
is a clock valuation and $\valuation\in\invariants(\location)$.  A
timed action is a tuple $\tau = (t,a) \in \Rpos\times\labels$ where
$t$ is a time delay and $a$ is a label.  In the following, for a timed
move $\tau=(t,a)\in\Rpos\times \labels$, we let $\tadelay\tau=t$ be
the delay part and $\taaction\tau=a$ be the label part.
The semantics of a PTG is given as an infinite priced game graph.

\begin{definition}[Semantics]
The semantics of a PTG $\arena = (\locations, \clocks, \invariants,
\labels, \transitions, \prices, \allowbreak \goals)$ is given as a
priced game graph $\sem\arena=(\states, \timedactions,
\timedtransitions, \timedprices, \finalstates)$ where
 \begin{itemize}
 \item
$\states=\{(\location,\valuation)\in \locations\times V \mid
\valuation \in \invariants(\location)\}$ is the set of configurations
of the PTG;
 \item
$\timedactions = \Rpos \times \labels$ is the set of timed moves;
 \item
$\timedtransitions\colon \states \times \timedactions \to \states$ is
the transition function defined by
$(\location',\valuation')=\timedtransitions((\location,\valuation),
(t,a))$ if $\transitions(\location,a)=(\zeta,R,\location')$ 
such that $\valuation+t\in\zeta$,
$\valuation+t'\in\invariants(\location)$ for all $0\leq t'\leq t$, and
$\valuation'=(\valuation+t)[R:=0]$;
\item $\timedprices: \states \times \timedactions \to \Real$ is such that 
$\timedprices((\location,\valuation), (t,a)) = \prices(\location)\times
t+\prices(a)$; and 
 \item
   $\finalstates \subseteq \states$ is such that
   $(\location,\valuation) \in\finalstates$ iff
   $\location\in\goals$.
 \end{itemize}
 \end{definition}

 The concepts of a play, its cost, and strategies of players for a PTG
 $\arena$ is defined via corresponding objects for its semantic priced
 game graph $\sem\arena$.  In the previous section we introduced games
 with reachability-cost objective for priced game graphs.  We also
 study the following winning objectives for Player~1 in the context of
 priced timed games; the objective for Player~2 is the opposite.
\begin{enumerate}
\item \textbf{Constrained-price reachability}.  The constrained-price
  reachability objective $\ReachOb({\leq} K)$ is to keep the payoff
  within a given bound $K \in \Nat$.  Objectives
  $\ReachOb({\bowtie}K)$ for constrains ${\bowtie} \in \set{<, =, >,
    \geq}$ are defined analogously.
\item \textbf{Bounded-time reachability}.  Given constants $K, T \in
  \Nat$, the bounded-time reachability objective $\BReachOb(K, T)$ is
  to keep the payoff of the play less than or equal to $K$ while
  keeping the total time elapsed within $T$ units.
\item \textbf{Repeated reachability}.  For this objective, we consider
  slightly different semantics of the game where the play continues
  forever, and the repeated reachability objective $\RReachOb(\eta)$,
  $\eta \in \Rpos$ is to visit target locations infinitely often each
  time with a payoff in a given interval $[-\eta, \eta]$.
\end{enumerate}

In Section~\ref{sec:undecidability}, we sketch the proof of the
following negative result regarding the decidability of PTGs with
these objectives. This result is particularly surprising for
bounded-time reachability objective, since bounded-time restriction
has been shown to recover decidability in many related
problems~\cite{OW10,BDGORW13}.
\begin{theorem}
\label{key-thm}
Let $\arena$ be a priced timed game arena. 
The decision problems corresponding to the existence of winning strategy for
following objectives are undecidable:
\begin{enumerate}
\item $\ReachOb({\bowtie} K)$ objective for PTGs with two or more
  clocks and arbitrary prices;
\item $\BReachOb(K, T)$ objective for PTGs with five or more clocks;
  and prices 0,1;
\item $\RReachOb(\eta)$ objective for PTGs with three or more clocks
  and arbitrary prices.
\end{enumerate}
\end{theorem}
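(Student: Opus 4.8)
The plan is to prove all three undecidability results by reduction from the halting problem for two-counter (Minsky) machines, following the classical recipe of Brihaye--Bruy\`ere--Raskin but refined so as to control the number of clocks and the price-rates used. For a two-counter machine $M$ with counters $c_1,c_2$, I encode a configuration with counter values $m,n$ by a configuration of the PTG in which one clock $x$ holds the value $1/2^m 3^n$ (or some similar prime-power encoding), the remaining clocks acting as scratch space to perform the multiplications/divisions by $2$ and $3$ needed to simulate increments, decrements and zero-tests. Player~1 proposes delays that claim to implement an instruction faithfully; Player~2 is given the power to ``challenge'' any claimed update, and the price structure is rigged so that a faithful simulation yields payoff exactly the target value (say $0$, or $\le K$), whereas any cheating by Player~1 can be punished by Player~2 forcing the payoff outside the winning region, and any spurious challenge by Player~2 when Player~1 was honest is itself punished. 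The game halts (reaches a target location) exactly when the simulated run of $M$ reaches its halting instruction, so Player~1 has a winning strategy for $\ReachOb(\bowtie K)$ iff $M$ halts.

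For part~(1), the refinement over the known three-clock nonnegative-price construction is to exploit negative price-rates: with rates drawn from $\{-d,0,d\}$ one can build ``price widgets'' that compare an accumulated cost against a clock value using fewer auxiliary clocks, so that two clocks suffice --- one carrying the encoded counter value and one used transiently for copying/guarding. The widget for a faithful step should contribute net cost $0$; the widget triggered by a challenge should contribute a cost whose sign detects whether the claimed value was too large or too small, using the fact that $\int_0^t d\,dt' - \int_0^{t'} d\,dt'' $ type expressions can isolate a single clock difference. I would present one gadget per instruction type (increment, decrement, zero-test) together with the challenge gadgets, verify the cost bookkeeping, and then observe that the same arena works for every ${\bowtie}\in\{<,\le,=,>,\ge\}$ by adjusting the target constant and the direction of the final widget.

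For part~(2), the bounded-time objective $\BReachOb(K,T)$, the obstacle is that a single bounded run of the PTG can only contain finitely much elapsed time, so an unbounded simulation of $M$ cannot let total time grow without bound. The standard trick is to make each simulated step of $M$ consume a geometrically shrinking amount of time: step $k$ uses time about $T/2^{k+1}$, so the whole (possibly infinite) simulation fits inside $T$. Implementing a shrinking-time step with guards like $c \le 1$ but effective work proportional to the residual time requires extra clocks to renormalise after each step, which is exactly why the clock count rises to five; the prices $0,1$ suffice because time-bounding already provides the comparison power that negative rates gave us in part~(1). I would give the rescaling gadget, argue that Player~2 can still punish cheating within the shrunk time budget, and check that reaching the halting location is still possible within total time $< T$ and payoff $\le K$ iff $M$ halts.

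For part~(3), the repeated reachability objective $\RReachOb(\eta)$, I would reduce instead from a non-halting / recurrence problem (e.g.\ whether $M$ has a run visiting a designated instruction infinitely often, or simply invert the construction so that a non-halting faithful run is the ``good'' outcome): arrange that each full simulated step returns the token to the target location exactly once, accumulating net cost $0$ (so the payoff at each visit lies in $[-\eta,\eta]$, taking $\eta$ small, e.g.\ $\eta = 0$ or any fixed bound absorbing rounding), while any deviation makes some later visit's payoff escape $[-\eta,\eta]$ or blocks the return to the target forever. Three clocks should suffice here because the repeated-visit structure lets us reuse the price comparison across rounds rather than building a one-shot comparator. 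The main obstacle throughout is the cost-bookkeeping in the challenge gadgets: one must show simultaneously that an honest Player~1 survives every possible challenge with the right payoff and that a dishonest Player~1 cannot survive some challenge --- i.e.\ that the ``punishment'' is tight in both directions --- and that this remains true when Player~2 is restricted (in (3)) to punishing at a single, adversarially chosen round. I would isolate this as a lemma about the cost of each gadget as a function of the (possibly cheated) clock values and prove it once, then assemble the three reductions around it. Full details appear in~\cite{BGKMMT14}.
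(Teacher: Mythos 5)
Your high-level plan for parts~(1) and~(2) matches the paper's reductions: encode the counters in a single clock via a prime-power fraction, build one module per instruction with a Player-2 ``challenge'' detour, and use price bookkeeping so that a faithful simulation hits the target exactly and any cheat can be punished. For~(1) the paper uses $x_1 = 1/5^{c_1}7^{c_2}$ with a two-clock widget per comparison operator (and does give separate widgets for each $\bowtie$, not one arena with a rotated final gadget, but this is a minor presentational difference). For~(2) your geometric-time-budgeting idea is the right one; the paper implements it by pushing the step index $k$ into the exponent, $x_1 = 1/(2^{k+c_1}3^{k+c_2})$ with a clock $z$ tracking $1-1/2^k$, which is the concrete form of your ``renormalise after each step'' and is what drives the clock count up. So~(1) and~(2) are essentially the paper's argument, sketched at a coarser grain.

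For part~(3), however, your proposal deviates from the paper in a way that hides a genuine gap. The paper reduces \emph{from halting} (Player~1 wins iff $M$ halts), placing a target location \emph{inside} each challenge widget and making the widget a Player-2-controlled loop (locations $A\to\cdots\to F\to A$, all Player~2's, with clock $x_3$ restoring the entry valuation on each pass). If Player~1 has introduced an error $\varepsilon\neq 0$, each pass through the loop accrues cost exactly $\varepsilon$, so Player~2 can iterate the loop enough times to push the accumulated payoff outside $[-\eta,\eta]$ \emph{for any fixed} $\eta>0$. This iterated-accumulation mechanism is the crux of part~(3), because a single pass only contributes a tiny $\varepsilon$. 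Your sketch instead proposes reducing from non-halting, visits the target once per honest step, and hopes either that ``some later visit's payoff escapes $[-\eta,\eta]$'' or that Player~2 can ``block the return forever''; your closing remark that Player~2 is ``restricted to punishing at a single, adversarially chosen round'' makes this worse, because a single round cannot exceed an arbitrary $\eta$. You also hedge with ``taking $\eta$ small, e.g.\ $\eta=0$'', but the theorem is stated for every $\eta\in\Rpos$, and the problem instance \emph{fixes} $\eta$; the reduction must work against arbitrarily large $\eta$, which your construction, as described, does not achieve. You need an explicit mechanism by which Player~2 can amplify a bounded per-step error unboundedly --- the paper's Player-2 loop through the target with a clock to restore the state --- and that mechanism is absent from your proposal.
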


To recover decidability, we consider a subclass of one-clock PTGs.  In
this subclass, the set of clocks $\clocks$ is a singleton $\{x\}$, and
price-rates of the locations come from a doubleton set $\{p^-,p^+\}$
with $p^-<p^+$ two distinct elements of $\{-1,0,1\}$ (no condition is
made on the prices $\prices(a)$ of labels $a\in\Sigma$).  We call
these restricted games \emph{one-clock bi-valued priced timed games},
abbreviated as $\onePTG(p^-,p^+)$, or \oneBPTG if $p^-$ and $p^+$ do
not matter. All our results may easily be extended to the case where
$p^-$ and $p^+$ are taken from the set $\{-d,0,d\}$ with $d\in\N$. We
devote Section~\ref{sec:single-clock} to the proof of the following
decidability results.
\begin{theorem}\label{thm:overall}
  We have the following results:
  \begin{enumerate}
  \item $\oneBPTG$s are determined.
  \item The value of a $\oneBPTG$ can be computed in pseudo-polynomial
    time.
  \item Given that a \oneBPTG has a finite value, an
    $\varepsilon$-optimal strategy for Player~1 can be computed in
    pseudo-polynomial time.
  \item Aforementioned complexities drop to polynomial time for
    $\onePTG(0,1)$ with prices of labels taken from $\N$.
\end{enumerate}
\end{theorem}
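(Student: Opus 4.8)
The plan is to reduce the analysis of a \oneBPTG to that of a finite priced game graph, so that Theorem~\ref{thm:optimal-strategy} can be invoked. First I would establish a \emph{region-like} decomposition of the clock axis: since there is a single clock $x$ and all constants in guards and invariants are integers bounded by some $M$, the relevant behaviour of the game is governed by the integer points $0,1,\dots,M$ and the open unit intervals between them (plus the region $x>M$). The crucial structural observation — and this is where the bi-valued restriction on price-rates does the work — is that within any such open interval the optimal cost-to-target, as a function of the clock value, is piecewise affine with slopes drawn from a small set determined by $\{p^-,p^+\}$ and the label prices; moreover the only places where a player would ever want to delay are the endpoints of these intervals (or at an integer point), because with at most two price-rates the cost accrued by waiting is a monotone affine function of the delay and so an optimal delay is always ``all the way'' to the next guard-relevant clock value. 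This lets me argue that it suffices to consider, for each location, finitely many representative clock values, yielding a finite ``corner-point'' abstraction $\abstrarena$ whose size is polynomial in the number of locations and transitions and pseudo-polynomial in $M$ (polynomial if $M$ is given in unary).

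Next I would prove the three quantitative claims by transfer along this abstraction. \textbf{Determinacy (item 1):} determinacy of the finite priced game graph $\sem{\abstrarena}$ follows from Theorem~\ref{thm:optimal-strategy} together with the Borel determinacy already cited for priced game graphs; the remaining work is a simulation argument showing $\Value_{\sem{\arena}} = \Value_{\sem{\abstrarena}}$ on corresponding configurations, i.e. that neither player loses anything by restricting to corner-point delays — one direction is ``a strategy in the abstraction lifts to the concrete game'', the other is ``an arbitrary concrete strategy can be rounded to the nearest corner without increasing (resp. decreasing) the payoff for the minimizer (resp. maximizer)''. Care is needed here because of the strict inequalities in guards, which is exactly why the statement is in terms of $\varepsilon$-optimal strategies for Player~1 rather than optimal ones. \textbf{Value computation (item 2):} apply Theorem~\ref{thm:optimal-strategy}(1)--(3) to $\sem{\abstrarena}$: first test whether the value is $+\infty$ (polynomial) or $-\infty$ (pseudo-polynomial, mean-payoff-hard), and otherwise compute the finite value in pseudo-polynomial time; then read off the value of the \oneBPTG via the simulation. \textbf{$\varepsilon$-optimal strategy for Player~1 (item 3):} take an optimal finite-memory strategy in $\sem{\abstrarena}$ (guaranteed by Theorem~\ref{thm:optimal-strategy}(3) when the value is finite), and translate it back, perturbing any delay that must land at the open end of an interval by a sufficiently small amount so that all strict guards are satisfied and the extra cost incurred is below $\varepsilon$; a uniform choice of perturbation works because the slopes are bounded and there are finitely many memory states.

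For item~4, I would observe that when $p^-,p^+ \in \{0,1\}$ and label prices are nonnegative, the whole game is nonnegatively priced, so the value is never $-\infty$ and the mean-payoff-hard case of Theorem~\ref{thm:optimal-strategy}(2) never arises; the remaining computations on $\sem{\abstrarena}$ reduce to shortest-path / attractor-style fixpoints over nonnegative weights, which run in polynomial time. Combined with the fact that \onePTG$(0,1)$ is exactly the class shown decidable (without complexity bound) in~\cite{BriBru05}, this yields the claimed polynomial-time algorithm for that problem. The main obstacle I anticipate is not the invocation of Theorem~\ref{thm:optimal-strategy} but the correctness of the corner-point abstraction in the presence of negative price-rates and strict guards: one must show precisely that the supremum/infimum over the uncountably many concrete delays is attained (or approached to within $\varepsilon$) at the finitely many corner delays, and that this is preserved under the interleaving of the two players' moves over an unbounded horizon — in particular that accumulating arbitrarily many tiny sub-$\varepsilon$ roundings along an infinite play does not blow up, which is where one needs the play either to reach the target in a bounded number of ``resets'' or to have value $\pm\infty$. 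Handling that bookkeeping carefully is the technical heart of the argument.
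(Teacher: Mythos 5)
Your overall architecture matches the paper's: reduce the \oneBPTG to a finite priced game graph built from clock values close to the integer borders, invoke Theorem~\ref{thm:optimal-strategy} on the abstraction, and transfer values and strategies back and forth. That is indeed what the paper does, so the plan's skeleton is sound.

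However, there is a genuine gap exactly where you flag ``the technical heart'' but then wave it away. You propose a uniform perturbation — translate the abstract strategy back to the concrete game by playing at a fixed distance from each border, chosen small enough that the accumulated error stays below $\varepsilon$ — and you justify termination of the accumulation by claiming the play either hits the target in a bounded number of resets or has value $\pm\infty$. This is false for \oneBPTGs. The paper's Example~\ref{ex:convergence} exhibits a two-location \oneBPTG with finite value $0$ in which Player~2's location has rate $-1$ and a guard $x>0$ forcing strictly positive delay before exit; Player~1 controls a reset and can loop indefinitely. There is no bound on the number of resets along optimal play, and any strategy of Player~2 that always stays at least some fixed $\delta>0$ in his location loses $\delta$ per loop, so its value is $-\infty$ against a patient Player~1. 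The paper's fix is to define $\eta$-\emph{convergent} strategies that play at distance $\eta/2^{n}$ from the border on the $n$-th round, making the total accumulated rounding error a convergent geometric series bounded by $O(\eta)$; this forces Player~2 to use \emph{infinite-memory} strategies (a counter for $n$), which is why the theorem asserts only $\varepsilon$-optimal strategies and why Corollary~\ref{cor:pseudo} explicitly says Player~2 may need infinite memory. A second, related subtlety you also miss is the asymmetry between the two players: the paper proves $\uppervalue \leq \uniformuppervalue^\eta$ by one rounding scheme (Lemma~\ref{lem:uniform}) and $\convergentlowervalue^\eta \leq \lowervalue$ by a different, geometrically shrinking scheme (Lemma~\ref{lem:convergent}), and only then sandwiches both by $\Value_{\abstrarena} \pm \varepsilon$ (Lemma~\ref{lem:translation}). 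A single symmetric corner-point lemma, as you sketch, would not go through because the same counterexample shows the near-border rounding can be made arbitrarily bad for one side unless the approximation distance decays. So the missing idea is concrete: you need the time-dependent $\eta/2^n$ rounding and must accept infinite memory for Player~2; without it, step (3) of your plan (translating the abstract strategy back with a sufficiently small uniform perturbation) would not yield an $\varepsilon$-optimal strategy.

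Your item~4 argument is essentially the paper's; and your side remark that the value function is piecewise affine with few slopes is true (the paper notes the finite values are in fact integers and the number of line segments is polynomial) but plays no role in the paper's proof, which proceeds by direct simulation rather than by analyzing value iteration.
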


\section{Undecidability results}
\label{sec:undecidability}
In this section we provide a proof sketch of our undecidability result
(Theorem~\ref{key-thm}) by reducing the halting problem for two
counter machines (see \cite{Min67}) to the existence of a winning
strategy for Player~1 for the desired objective.  For all the three
objectives, given a two counter machine, we construct a PTG $\arena$
whose building blocks are the modules for instructions.  In these
reductions the objective of Player 1 is linked to a faithful
simulation of various increment, decrement, and zero-test instructions
of the machine by choosing appropriate delays to adjust the clocks to
reflect changes in counter values.  The goal of Player 2 is then to
verify the simulation performed by Player~1. Proofs of correctness of
the reductions, as well as more details can be found in the appendix.
    
\smallskip \paragraph{Constrained-price reachability objectives
  $\ReachOb({\bowtie} K)$} The result in the case $\ReachOb({\leq}K)$
is a consequence of the result in \cite{BCJ09}. Undecidability for
other comparison operators $\bowtie$ is a new contribution. We only
consider the objective $\ReachOb({=}1)$ in this section, since proofs
for other constraints are similar. Our reduction uses a PTG with two
clocks $x_1$ and $x_2$, arbitrary price-rates for locations and no
prices for labels. Each counter machine instruction (increment,
decrement, and test for zero value) is specified using a PTG
module. The main invariant in our reduction is that upon entry into a
module, we have that $x_1=\frac{1}{5^{c_1}7^{c_2}}$ and $x_2=0$ where
$c_1$ (respectively, $c_2$) is the value of counter $C_1$
(respectively, $C_2$).  We outline the simulation of a decrement
instruction for counter $C_1$ in Fig.~\ref{fig:equalone}.
\begin{figure}[tbp]
\centering
\scalebox{.9}{
 \begin{tikzpicture}[->,>=stealth',shorten >=1pt,auto,node distance=1cm,
       semithick,scale=0.8]
       \node[initial,initial text={}, player1] (lk) {$0$} ;
       \node()[above of=lk,node distance=5mm,color=gray]{$\ell_k$};
       \node[player2] at (3,0) (chk){$-1$} ;
       \node()[above of=chk,node distance=5mm,color=gray]{$\text{Check}$};

       \node[player1] at (6,0) (lk1){$0$ } ;  
       \node()[above of=lk1,node distance=5mm,color=gray]{Go};

       \node[fill=gray!20,rounded corners,fill opacity=0.9] at   (14,-0.65)(nxt){\texttt{$\ell_{k+1}$}}; 

       \node[player1] at (9, 0) (B) {$1$};
       \node()[above of=B,node distance=5mm,color=gray]{$\text{Abort}$};

       \node[accepting, player1] at (12, 0) (T1) {$0$};
       \node()[above of=T1,node distance=5mm,color=gray]{$T_1$};
       
       \path (lk) edge node {} (chk);
       \path (chk) edge node {$\set{x_2}$} (lk1);
       \path (lk1) edge node {$x_2{=}0$} (B);
       
       \draw[rounded corners] (lk1) -- (6, -0.65) --  node[above, near start] {$x_2{=}0$}(nxt);
       \path (B) edge node {$x_2 {>} 1$} (T1);
        
       \node[player1] (A) at (3, -1.5) {$-1$};
       \node()[below of=A,node distance=5mm,color=gray]{$L$};
       
       \node[player1] at (6, -1.5) (B){$-5$};
       \node()[below of=B,node distance=5mm,color=gray]{$M$};
       
       \node[player1] at (9, -1.5) (C){$2$};
       \node()[below of=C,node distance=5mm,color=gray]{$N$};

       \node[accepting, player1] at (12, -1.5) (T) {$0$};
       \node()[below of=T,node distance=5mm,color=gray]{$T$};
       
       \path (A) edge node {$x_1{=}1$} node[below] {$\set{x_1}$} (B);
       \path (B) edge node {$x_2 {=} 1$} node[below] {$\set{x_2}$} (C);
       \path (C) edge node {$x_2 {=} 1$} node[below]{$\set{x_2}$} (T);
     
       \path (chk) edge node[yshift=1mm] {$x_1{\leq} 1$} (A);

       \node[rotate=90,color=gray] at (1.8, -1.5) (N) {$\text{WD}_1$};

       \draw[dashed,draw=gray,rounded corners=10pt] (2.3,-1) rectangle (12.6,-2.4);
     \end{tikzpicture}}
\caption{Decrement module for the objection $\ReachOb({=}1)$}
\label{fig:equalone}
\end{figure}
Let us denote by $x_{old}=\frac{1}{5^{c_1}7^{c_2}}$ the value of $x_1$
while entering the module.  At the location $\ell_{k+1}$ of the
module, $x_1 = x_{new}$ should be $5x_{old}$ to correctly decrement
counter $C_1$.  At location $\ell_k$, Player 1 spends a
non-deterministic amount of time $t_k= x_{new} - x_{old}$ such that
$x_{new} = 5x_{old}+\varepsilon$. To correctly decrement $C_1$,
$\varepsilon$ should be 0, and $t_k$ must be
$\frac{4}{5^{c_1}7^{c_2}}$.  At location $\text{Check}$, Player~2
could choose to go to Go (in order to continue the simulation of the
machine) or go to the widget $\text{WD}_1$, if he suspects that
$\varepsilon \neq 0$. If Player~2 spends time $t>0$ in the location
$\text{Check}$ before proceeding to $\text{Go}$, then Player~1 can
enter the location Abort (to abort the simulation), rather than going
to $\ell_{k+1}$. Player 1 spends $1+t$ time in location Abort and
reaches a target $T_1$ with cost $1$ (and thus achieve his
objective). However, if $t=0$ then entering location Abort will make
the cost to be greater than $1$ (which is losing for Player 1). If
Player~2 decides to enter widget $\text{WD}_1$, then the cost upon
reaching the target in the widget $\text{WD}_1$ is $1+\varepsilon$
which is $1$ iff $\varepsilon=0$.

\smallskip \paragraph{Bounded-time reachability objective} We sketch
the reduction for objective $\BReachOb(K, T)$. Our reduction uses a
PTG with price-rates 0 or 1 on locations, and zero prices on labels,
along with five clocks $x_1,x_2,z,a,b$.  On entry into a module for
the $(k+1)$th instruction, we always have one of the two clocks
$x_1,x_2$ with value $\frac{1}{2^{k+c_1}{3^{k+c_2}}}$ and other is
$0$.  Clock $z$ keeps track of the total time elapsed during
simulation of an instruction: we always have $z=1-\frac{1}{2^k}$ at
the end of simulating $k$th instruction. Thus, time $\frac{1}{2}$ is
spent simulating the first instruction, $\frac{1}{4}$ for the second
instruction and so on, so that the total time spent in simulating the
main modules is less than $1$. The main challenge here is to ensure
that only a bounded time is spent along the entire simulation, along
with updating the counter values correctly. Clocks $a,b$ are used for
rough work.  For instance, if the $(k+1)$th instruction $\ell_{k+1}$
is an increment of $C_1$, and we have $x_1 =
\frac{1}{2^{k+c_1}3^{k+c_2}}$, while $a=b=x_2=0$, and
$z=1-\frac{1}{2^k}$, then at the end of the module simulating
$\ell_{k+1}$, we want $x_2= \frac{1}{2^{k+1+c_1+1}3^{k+1+c_2}}$ and
$x_1 = 0$ and $z = 1 - \frac{1}{2^{k+1}}$.
 
\smallskip \paragraph{Repeated reachability objective} Finally, we consider the
repeated reachability objective $\RReachOb(\eta)$.  Our reduction uses
a PTG with 3 clocks, and arbitrary price-rates, but zero prices for
labels.  On entry into a module, we have $x_1 =
\frac{1}{5^{c_1}7^{c_2}}$, $x_2 = 0$ and $x_3=0$, where $c_1,c_2$ are
the values of $C_1$ and $C_2$.  Fig.~\ref{fig:repeated} shows module
to simulate decrement $C_1$.
\begin{figure}[tbp]
\centering
\scalebox{.9}{
\begin{tikzpicture}[->,>=stealth',shorten >=1pt,auto,node distance=1cm,
  semithick,scale=0.9]
  \node[initial,initial text={}, player1] at (0,-.5) (lk) {$0$ } ;
   \node()[above of=lk,node distance=5mm,color=gray]{$\ell_k$};

  \node[player2] at (3,-.5) (chk){$0$} ;
     \node()[above of=chk,node distance=5mm,color=gray]{$\text{Check}$};

  \node () [below right of=chk,node distance=6.5mm,xshift=-3mm] {$[x_3=0]$};

   \node[fill=gray!20,rounded corners,fill opacity=0.9] at (6,-.5)(lk1){$\ell_{k+1}$};

  \node[ player2] at (0,-2)(A) {$-1$};
   \node()[above of=A,node distance=5mm,color=gray]{$A$};

  \node[player2] at (2.5,-2) (B){$4$};
   \node()[above of=B,node distance=5mm,color=gray]{$B$};

  \node[player2] at (5,-2) (C){$-3$};
   \node()[above of=C,node distance=5mm,color=gray]{$C$};

  \node[player2] at (7,-2) (D) {$0$};
   \node()[above of=D,node distance=5mm,color=gray]{$D$};

  \node[player2] at (9,-2) (E) {$0$};
   \node()[above of=E,node distance=5mm,color=gray]{$E$};

  \node[player2,accepting] at (11,-2) (F) {$0$} ;
   \node()[above of=F,node distance=5mm,color=gray]{$F$};

  \node () [below of=C,node distance=7mm] {$x_3=0$};
    
\path (lk) edge node {$x_1 {\leq} 1$} node[below] {$\set{x_3}$}(chk);
\path (chk) edge node[below] {$\set{x_2}$} (lk1);
\path (chk) edge node {} (A);

    \path (A) edge node {$ x_2 {=} 1$} node[below]{$\set{x_2}$} (B);
    \path (B) edge node {$x_1 {=} 2$} node[below]{$\set{x_1}$} (C);
    \path (C) edge node {$x_1 {=} 1$} node[below]{$\set{x_1}$} (D);
    \path (D) edge node {$x_2 {=} 2$} node[below]{$\set{x_2}$} (E);
    \path (E) edge node {$x_3 {=} 3$} node[below]{$\set{x_3}$} (F);
    \draw[rounded corners] (F) -- (11, -3) -- (0, -3) --  (A);
           
    \node[rotate=90,color=gray] at (-.8, -2) (N) {$\text{WD}_1$};
    \draw[dashed,draw=gray,rounded corners=10pt] (-.5,-3.2) rectangle (11.5,-1.25);

 \end{tikzpicture}}
\caption{Decrement module for Repeated reachability objective.}
\label{fig:repeated}
\end{figure}
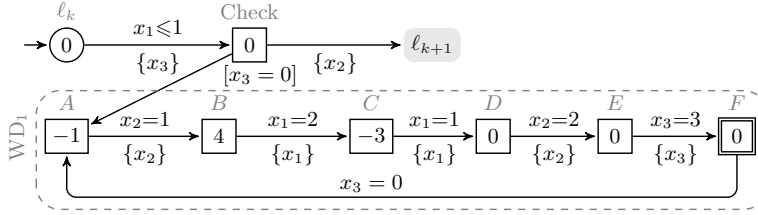
Location $\ell_k$ is entered with $x_1=\frac{1}{5^{c_1}7^{c_2}}, x_2 =
0$ and $x_3=0$.  To correctly decrement $C_1$, Player 1 should choose
a delay of $\frac{4}{5^{c_1}7^{c_2}}$ in location $\ell_k$.  At
location Check, no time can elapse because of the invariant. If Player
1 makes an error, and delays $\frac{4}{5^{c_1}7^{c_2}}+\varepsilon$ at
$\ell_k$ ($\varepsilon \neq 0$) then Player 2 can jump in widget
$\text{WD}_1$.  The cost of going from location $A$ to $F$ is
$\varepsilon$; each time we come back to $A$, the clock values with
which $A$ was entered are restored.  Clearly, if $\varepsilon \neq 0$,
Player 2 can incur a cost that is not in $[-\eta, \eta]$ by taking the
loop from $A$ to $F$ a large number of times.

\section{One-clock bi-valued priced timed games}
\label{sec:single-clock}
This section is devoted to the proof of
Theorem~\ref{thm:overall}. First of all, let us assume that all
$\oneBPTG$s $\arena$ we consider are \emph{bounded}, i.e., that there
is a global invariant in every location, of the form $x\leq M_K$
(where $M_K$ denotes the greatest constant appearing in the clock
guards and invariants of $\arena$). This restriction comes w.l.o.g
since every $\oneBPTG$ arena can be made bounded with a polynomial
algorithm.\footnote{By introducing auxiliary states in order to reset
  the clock $x$ at every time unit once its value goes beyond $M_k$.
  The polynomial complexity holds only for one-clock PTGs.}

Our proof of Theorem~\ref{thm:overall} is based on an extension of the
classical notion of regions in timed automata, in the spirit of the
regions introduced to define the corner point abstraction
\cite{BBL-fmsd08}.  Indeed, to take the price into account,
$\varepsilon$-optimal strategies do not take uniform decisions on the
classical regions.  That is why we need to subdivide each classical
region into three parts: two small parts around the corners of the
region (that we will call \emph{borders} in the following, considering
our one-clock setting), and a big part in-between.  We will show that
considering only strategies that never jump into those big parts is
sufficient (Lemma~\ref{lem:uniform}).  Lemma~\ref{lem:convergent},
later, shows a stronger result that one can restrict attention to
strategies that play closer and closer to the borders of the regions
as time elapses.  Finally, we combine these results to show that a
finite abstraction of \oneBPTG{s} is sufficient to compute the value
as well as $\varepsilon$-optimal strategies
(Lemma~\ref{lem:translation}).
This not only yields the desired result, but also provides us further
insight into the shape of $\varepsilon$-optimal strategies for both
players.

\subsection{\texorpdfstring{Reduction to $\eta$-region-uniform
     strategies}{Reduction to region-uniform strategies}}
 Since we only consider one-clock PTGs, we need not consider the
 standard Alur-Dill regional equivalence relation.  Instead, we
 consider special region equivalence relation characterized by the
 intervals with constants appearing in guards and invariants of
 $\arena$ inspired by Laroussinie, Markey, and Schnoebelen
 construction~\cite{LarMar04}.  Let $0{=}M_0{<}M_1{<}\cdots {<}M_K$ be
 the integers appearing in guards and invariants of
 $\arena$. 
 We say that two valuations $\valuation, \valuation' \in \Rplus$ are
 region-equivalent (or lie in the same region), and we write
 $\valuation \sim \valuation'$, if for every $k\in\{0,\ldots,K\}$,
 $\valuation\leq M_k$ iff $\valuation'\leq M_k$, and $\valuation\geq
 M_k$ iff $\valuation'\geq M_k$.  We define the set of regions to be
 the set of equivalence classes of $\sim$.
We extend the equivalence relation $\sim$ from valuations to configurations in a
straightforward manner. 
We also generalize the regional equivalence relation to the plays.
For two (finite or infinite) plays
$r=\seq{(\location_0,\valuation_0),(t_0,a_0),\ldots}$ and
$r'=\seq{(\location'_0,\valuation'_0),(t'_0,a'_0),\ldots}$ we say that
$r\sim r'$ if the lengths of $r$ and $r'$ are equal, and they define
sequences of regional equivalent states (i.e.,
$(\location_i,\valuation_i)\sim (\location'_i,\valuation'_i)$ for all
$i\geq 0$) and follow equivalent timed actions (i.e., $a_i=a'_i$ and
$\valuation_i+t_i\sim \valuation'_i+t'_i$ for all $i\geq 0$).  We also
consider a refinement of region equivalence relation that we call the
$\eta$-region equivalence relation, and we write $\sim_\eta$, for a
given $\eta\in(0,\frac 1 3)$.  Intuitively,
$\valuation\sim_\eta\valuation'$ if both valuations are close or far
from any borders of the regions, with respect to the distance $\eta$.
\begin{definition}[$\eta$-regions]
  For valuations $\valuation, \valuation'\in\Rpos$ we say that
  $\valuation \sim_\eta \valuation'$ if $\valuation \sim \valuation'$
  and for every $k\in\{0,\ldots,K-1\}$, $|\valuation-M_k|\leq\eta$ iff
  $|\valuation'-M_k|\leq\eta$, and $\valuation\geq M_K-\eta$ iff
  $\valuation'\geq M_K-\eta$.  We assume the natural order
  $\leqinterval$ over $\eta$-regions by their lower bounds.  We call
  $\eta$-regions the equivalence classes of $\sim_\eta$.  We also
  extend the relation $\sim_\eta$ to configurations and runs.
\end{definition}
 
For instance, if $M_1=2$ and $M_2=3$, the set of $\eta$-regions is
given by $ \{\{0\}, (0,\eta], (\eta,2-\eta), [2-\eta,2), \{2\},
(2,2+\eta], (2+\eta,3-\eta), [3-\eta,3), \{3\}, (3,+\infty)\}$. We
next introduce the strategies of a restricted shape with the
properties that
they depend only on the $\eta$-region abstraction of runs;
their decision is uniform over each $\eta$-region; and 
    they play $\eta$-close to the borders of the regions.

\begin{definition}[$\eta$-region uniform strategies]
  Let $\eta\in(0,\frac 1 3)$ be a constant. A strategy
  $\strategy\in\minstrategies\cup\maxstrategies$ is said to be
  \emph{$\eta$-region-uniform}
  if
  \begin{itemize}
  \item for all finite run $r\sim_\eta r'$ ending respectively in
    $(\location,\valuation)$ and $(\location,\valuation')$ (in
    particular $\valuation\sim_\eta\valuation'$) we have
    $\valuation+\tadelay{\strategy(r)}\sim_\eta
    \valuation'+\tadelay{\strategy(r')}$ and
    $\taaction{\strategy(r)}=\taaction{\strategy(r')}$;
  \item for every finite run $r$ ending in $(\location,\valuation)$,
    if 
    $\valuation+\tadelay{\strategy(r)}\in(M_k,M_{k+1})$, we have
    $\valuation+\tadelay{\strategy(r)}\in (M_k,M_k+\eta]\cup
    [M_{k+1}-\eta,M_{k+1})$.
  \end{itemize}
  We write $\uniformminstrategies^\eta$ and $\uniformmaxstrategies^\eta$
  for the set of $\eta$-region-uniform strategies for
  Players~1 and~ 2.
  We also define upper-value $\uniformuppervalue^\eta(s)$ when both players are
  restricted to use only $\eta$-region-uniform strategies. 
  Formally, 
  \[ 
  \uniformuppervalue^\eta(s) =
  \inf_{\minstrategy\in\uniformminstrategies^\eta}
  \sup_{\maxstrategy\in\uniformmaxstrategies^\eta}
  \Weight(\outcomes(s,\minstrategy,\maxstrategy)), \text{ for all
  $s\in\states$}.
\]
\end{definition}

\begin{example}
  Consider PTG $\arena_1$ shown in Fig.~\ref{fig:CE-0+1-1} (that is
  not a $\oneBPTG$ since there are three distinct price-rates). A
  strategy of Player~2 is entirely described by the time spent in the
  initial location with initial valuation $0$. For example, Player~2
  can choose to delay $1/2$ time units before jumping in the next
  location. Indeed, the lower and upper value of the game is $-\frac 1
  2$. However, this strategy is not $\eta$-region-uniform. Instead, an
  $\eta$-region-uniform strategy will delay $t$ time units with $t\in
  [0,\eta]\cup [1-\eta,1]$. Hence, the upper value when players can
  only use $\eta$-region-uniform strategies is equal to $-1$.
\end{example}
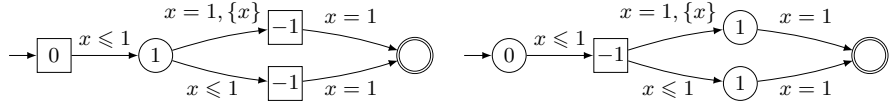
\begin{figure}[tbp]
  \begin{minipage}{0.495\linewidth}
    \scalebox{.9}{
      \begin{tikzpicture}[node distance=2cm,auto,->,>=latex] 
        \node[player2,initial,initial
        text=](1){\makebox[0mm][c]{$0$}};%
        \node[player1](2)[right of=1,node
        distance=1.5cm]{\makebox[0mm][c]{$1$}};%
        \node[player2](4)[below right
        of=2,xshift=5mm,yshift=1cm]{\makebox[0mm][c]{$-1$}};%
        \node[player2](5)[above right
        of=2,xshift=5mm,yshift=-1cm]{\makebox[0mm][c]{$-1$}};%
        \node[player1,double](7)[below right
        of=5,xshift=5mm,yshift=1cm]{};%
        
        \path (1) edge node[above]{$x\leq 1$} (2)%
        (2) edge[bend left=5] node[above,node
        distance=0mm,xshift=-1mm,yshift=1mm]
        {$x=1,\{x\}$} (5) %
        edge[bend right=5] node[below,xshift=-1mm]{$x\leq 1$} (4)%
        (4) edge[bend right=5] node[below]{$x=1$} (7) %
        (5) edge[bend left=5] node[above,yshift=1mm]{$x=1$} (7);%
      \end{tikzpicture}}
  \end{minipage}
  \begin{minipage}{0.495\linewidth}
    \scalebox{.9}{\begin{tikzpicture}[node
        distance=2cm,auto,->,>=latex] 
        \node[player1,initial,initial
        text=](1){\makebox[0mm][c]{$0$}};%
        \node[player2](2)[right of=1,node
        distance=1.5cm]{\makebox[0mm][c]{$-1$}};%
        \node[player1](4)[below right
        of=2,xshift=5mm,yshift=+1cm]{\makebox[0mm][c]{$1$}};%
        \node[player1](5)[above right
        of=2,xshift=5mm,yshift=-1cm]{\makebox[0mm][c]{$1$}};%
        \node[player1,double](7)[below right
        of=5,xshift=5mm,yshift=+1cm]{};%
        
        \path (1) edge node[above]{$x\leq 1$} (2)%
        (2) edge[bend left=5] node[above,node
        distance=0mm,xshift=-1mm,yshift=1mm]
        {$x=1,\{x\}$}(5)%
        edge[bend right=5] node[below,xshift=-1mm]{$x\leq 1$} (4)%
        (4) edge[bend right=5] node[below]{$x=1$} (7) %
        (5) edge[bend left=5] node[above,yshift=1mm]{$x=1$} (7);
      \end{tikzpicture}}
  \end{minipage}
  \caption{The value in the left-side one-clock PTG $\arena_1$ with
    price-rates in $\{-1,0,1\}$ is $-\frac 1 2$, while the value in
    the right-side PTG $\arena_2$ is $\frac 1 2$.}
  \label{fig:CE-0+1-1}
\end{figure}

Contrary to this example, the next lemma shows that, in $\oneBPTG$s,
the upper value of the game increases when we restrict ourselves to
$\eta$-region-uniform strategies.  Intuitively, every cost that
Player~2 can secure with general strategies, it can also secure it
with $\eta$-region-uniform strategies against $\eta$-region-uniform
strategies of Player~1.
\begin{lemma}\label{lem:uniform} 
$\uppervalue(s)\leq \uniformuppervalue^\eta(s)\,$, for every 
$\oneBPTG$ $\arena$, $s\in\states$ and $\eta\in(0,\frac 1 3)$, 
\end{lemma}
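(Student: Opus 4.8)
The plan is to show that Player~2 can always convert an arbitrary strategy into an $\eta$-region-uniform one without decreasing the payoff, provided Player~1 is also restricted to $\eta$-region-uniform strategies. More precisely, I would fix $\eta\in(0,\frac 1 3)$, a state $s$, and a target value $c<\uppervalue(s)$, and exhibit, for every $\eta$-region-uniform Player~1 strategy $\minstrategy$, an $\eta$-region-uniform Player~2 strategy $\maxstrategy$ with $\Weight(\outcomes(s,\minstrategy,\maxstrategy))\geq c$; taking suprema then yields $c\leq\uniformuppervalue^\eta(s)$ for all $c<\uppervalue(s)$, hence the claim. The key device is a \emph{translation map} on valuations: given a ``real'' play the opponent might produce, map each configuration to the $\eta$-region-uniform configuration lying on the same side of, and at distance exactly matching the nearest border pattern, so that regions and the ordering $\leqinterval$ are preserved. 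Because Player~1 is $\eta$-region-uniform, Player~1's responses on the translated play are themselves $\eta$-region-uniform and depend only on the $\eta$-region history; this is exactly what lets Player~2 simulate the real game inside the $\eta$-region-uniform world.

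The main steps, in order, would be: (1) define precisely how Player~2 reads the current $\eta$-region-abstracted history, internally maintains a ``shadow'' play in the unrestricted game in which it uses its optimal (or near-optimal) unrestricted strategy, and then plays the $\eta$-region-uniform move obtained by pushing its shadow move to the appropriate border $(M_k,M_k+\eta]\cup[M_{k+1}-\eta,M_{k+1})$; (2) verify this $\maxstrategy$ is well-defined and $\eta$-region-uniform (the hardest bookkeeping is checking that the invariant constraints are still respected after translation, using boundedness of $\arena$ and the one-clock structure); (3) establish the crucial cost inequality: along the outcome, the \emph{actual} accumulated cost in the $\eta$-region-uniform game is at least the cost accumulated in the shadow play. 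Here the sign analysis is essential: since price-rates lie in $\{p^-,p^+\}\subseteq\{-1,0,1\}$, pushing a delay toward the \emph{larger} border increases time spent in a positive-rate location and pushing toward the \emph{smaller} border decreases time in a negative-rate location, and in each case Player~2 chooses the border that does not hurt it; the discrete label prices $\prices(a)$ are unchanged because labels are preserved. (4) Conclude that when the shadow play reaches a target, so does the real play (targets are region-closed), with no smaller cost, and when the shadow play avoids targets forever so does the real one, giving cost $+\infty$; hence $\Weight(\outcomes(s,\minstrategy,\maxstrategy))\geq \Weight(\text{shadow play})\geq c$.

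The main obstacle I expect is step~(3), the cost-comparison argument, and in particular making the ``push toward the favourable border'' choice globally consistent across the whole infinite play rather than segment-by-segment. One must ensure that a single $\eta$-region-uniform $\maxstrategy$ simultaneously dominates the shadow cost on \emph{every} prefix, including the limit behaviour, and that Player~1's $\eta$-region-uniformity genuinely forces Player~1's shadow-play moves to coincide (up to $\sim_\eta$) with its real moves so that the simulation does not drift. I would handle this by an induction on the length of the play, maintaining the invariant ``real configuration $\sim_\eta$ shadow configuration and real accumulated cost $\geq$ shadow accumulated cost,'' and using the bi-valued restriction to guarantee that each single step preserves the cost inequality regardless of which player moves. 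The one-clock and bounded-arena hypotheses are what keep the border-pushing operation both well-defined (only two borders per open region) and invariant-respecting.
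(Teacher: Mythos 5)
Your high-level plan is essentially the paper's: from an $\eta$-region-uniform Player~1 strategy and an arbitrary Player~2 strategy, simulate a ``shadow'' unrestricted play, and repeatedly push each shadow delay to the favourable border (right for $p^+$, left for $p^-$) to obtain an $\eta$-region-uniform play dominating the shadow cost. The paper formalises exactly this as a play transformation $r\mapsto r^+$ and proves it step by step (Lemma~\ref{lem:eta-runs}), together with a consistency argument showing that Player~1's shadow moves, derived from the $\eta$-region-uniform $\minstrategy'$, do not drift away from the real play.

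However, your proposed induction invariant is too weak and would not close. You suggest maintaining ``real accumulated cost $\geq$ shadow accumulated cost'' and that ``each single step preserves the cost inequality.'' That single-step claim is false: consider $\eta=0.1$, a location with rate $+1$, shadow at $\valuation_j=0.5$, real at $\valuation^+_j=0.9$ (the real play already having been pushed right at the previous step), and suppose the shadow delays $0.1$ to reach $0.6$. Pushing $0.6$ to the upper border $0.9$ forces the real delay to be $0$, so the shadow cost increases by $0.1$ while the real cost is unchanged --- the naive inequality degrades at this step. What saves the argument, and what the paper proves, is the strengthened invariant $\Weight(r^+[j])\geq\Weight(r[j])+|\valuation_j-\valuation^+_j|$ (for $\{p^-,p^+\}=\{-1,+1\}$; and $+\max(\valuation_j-\valuation^+_j,0)$ when $p^-=0$), i.e., the surplus must account for the accumulated ``displacement'' between the shadow and real valuations. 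Without this slack term the induction does not go through. A smaller imprecision: your invariant asserts the real and shadow configurations are $\sim_\eta$-equivalent, but they cannot be when the shadow valuation sits in the middle of a region and the real one is at a border (as in the example above); the correct relation is ordinary region equivalence $\sim$. Finally, the case $\uppervalue(s)=+\infty$ requires a separate argument, since no near-optimal Player~2 shadow strategy with finite value exists there.
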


\subsection{\texorpdfstring{Reduction to $\eta$-convergent
    strategies}{Reduction to convergent strategies}}

A similar result concerning the lower values of the games can be shown
in case of $\eta$-region-uniform strategies.  In subsequent proofs, we
need a stronger result to avoid situations detailed in
Example~\ref{ex:convergence}, where player 2 needs infinite precision
to play incrementally closer to borders (as well as an infinite
memory).  For this reason, we restrict the shape of strategies to
force them to play at distance $\frac \eta{2^n}$ of borders when
playing the $n$th round of the game.  The slight asymmetry in the
definitions for the two players is exploited in proving subsequent
results.
\begin{definition}[$\eta$-convergent strategies]
  Let $\eta\in(0,\frac 1 3)$ be a constant. A strategy
  $\strategy\in\minstrategies\cup\maxstrategies$ is said to be
  \emph{$\eta$-convergent} if $\strategy$ is $\eta$-region-uniform and
  for all finite run $r$ of length $n$ ending in
  $(\location,\valuation)$:
  \begin{itemize}
  \item if $\strategy\in\minstrategies$, there exists $k$ such that
    either $|\valuation+\tadelay{\strategy(r)}-M_k|\leq \frac
    \eta{2^{n+1}}$, or $\tadelay{\strategy(r)}=0$ and $\valuation\in
    (M_k+\frac \eta{2^{n+1}},M_k+\eta]$;
  \item if $\strategy\in\maxstrategies$, there exists $k$ such that
    either $\valuation+\tadelay{\strategy(r)}\in\{M_k+\frac
    \eta{2^{n+1}}\}\cup [M_k-\frac \eta{2^{n+1}},M_k)$, or
    $\tadelay{\strategy(r)}=0$ and $\valuation\in (M_k+\frac
    \eta{2^{n+1}},M_k+\eta]$.
  \end{itemize}
  We let $\convergentminstrategies^\eta$ and
  $\convergentmaxstrategies^\eta$ be respectively the set of
  $\eta$-convergent strategies for Player~1 and Player~2, and we
  define, for every configuration $s\in\states$,
  $\convergentlowervalue^\eta(s) =
  \sup_{\maxstrategy\in\convergentmaxstrategies^\eta}
  \inf_{\minstrategy\in\convergentminstrategies^\eta}
  \Weight(\outcomes(s,\minstrategy,\maxstrategy))\,.$
\end{definition}

\begin{example}\label{ex:convergence}
  Consider the $\oneBPTG$ $\arena_3$ composed of a vertex per player,
  on top of the target vertex. In its vertex, having price-rate $0$,
  Player~1 must choose between going to the target vertex, or going to
  the vertex of Player~2 by resetting clock~$x$. In its vertex, having
  price-rate $-1$, Player~2 must go back to the vertex of Player~1,
  with a guard $x>0$: hence, Player~2 would like to exit as soon as
  possible, but because of the guard, he must spend some time before
  exiting. If Player~2 plays according to a finite-memory strategy,
  there must be a bound $\varepsilon$ such that Player~2 always stays
  in his state for a duration bounded from below by $\varepsilon$, and
  Player~1 can exploit it by letting the game continue for an
  arbitrarily long time to achieve an arbitrarily small
  payoff. 
  On the other hand, if Player~2 plays an infinite-memory
  $\eta$-convergent strategy by staying in his location for a duration
  $\varepsilon/{2^n}$ in his $n$-th visit to its location, Player~2
  ensures a payoff $-\varepsilon$ for an arbitrarily small
  $\varepsilon>0$, resulting in the value $0$ of the game.
\end{example}

It is clear from the previous example that Player~2 needs
infinite-memory strategies to optimize his objective.
The following lemma formalizes our intuition that the lower value of the game
decreases when we restrict ourselves to $\eta$-convergent strategies.
Intuitively, every cost that Player~1 can secure with general strategies, it can
also secure it with $\eta$-convergent strategies against an $\eta$-convergent
strategy of Player~2. 
\begin{lemma}\label{lem:convergent} 
$\convergentlowervalue^\eta(s)\leq \lowervalue(s)\,$, for every 
$\oneBPTG$ $\arena$, $s\in\states$ and $\eta\in(0,\frac 1 3)$.
\end{lemma}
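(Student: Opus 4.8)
The plan is to establish the inequality $\convergentlowervalue^\eta(s)\leq \lowervalue(s)$ by showing that, against any fixed $\eta$-convergent strategy $\maxstrategy\in\convergentmaxstrategies^\eta$ of Player~2, Player~1 can achieve a payoff at most $\lowervalue(s)$ by playing an $\eta$-convergent strategy of her own. Concretely, I would fix $\varepsilon>0$ and an $\varepsilon$-optimal strategy $\minstrategy$ of Player~1 in the unrestricted game (whose existence when the value is finite is guaranteed by Theorem~\ref{thm:optimal-strategy}, and which can be handled separately in the $-\infty$ case by attaining arbitrarily small payoffs), then \emph{transform} $\minstrategy$ into an $\eta$-convergent strategy $\minstrategy'$ that does at least as well against the fixed $\maxstrategy$.

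The key step is the transformation argument, carried out round by round along the unique play $\outcomes(s,\minstrategy',\maxstrategy)$. The idea is to maintain a bij_ective correspondence (a \emph{shifting}) between the configurations reached in this play and the configurations of a ``virtual'' play following $\minstrategy$ against a suitably shifted virtual opponent, so that the two plays are region-equivalent ($\sim_\eta$) and the real play's accumulated cost never exceeds the virtual one's. When it is Player~1's turn with real valuation $\valuation$ and virtual valuation $\tilde\valuation$ (with $\valuation\sim_\eta\tilde\valuation$), we let $\minstrategy$ propose a delay $\tilde t$ landing at $\tilde\valuation+\tilde t$; Player~1 in the real play picks the $\eta$-convergent delay $t$ that lands in the \emph{same} $\eta$-region, as close as possible to the appropriate border (within $\eta/2^{n+1}$ on round $n$), choosing the side (left border, right border, or the ``$\tadelay{\strategy(r)}=0$ with $\valuation\in(M_k+\eta/2^{n+1},M_k+\eta]$'' escape clause) that is cost-favourable. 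Since price-rates in a $\oneBPTG$ take only values in $\{p^-,p^+\}\subseteq\{-1,0,1\}$, the cost of a delay is monotone (or constant) in the delay within a fixed region, so playing at the favourable border can only decrease the real cost relative to the virtual cost; on a positive-rate location Player~1 wants the smallest delay, on a negative-rate location the largest, and the $\eta$-convergent definition for Player~1 was crafted precisely to make both achievable arbitrarily close to a border. When it is Player~2's turn, $\maxstrategy$ being $\eta$-region-uniform forces its real and virtual responses into the same $\eta$-region; the virtual opponent simply mirrors this, and region-equivalence of the two plays is preserved. Iterating, the cost gap between real and virtual plays stays non-positive (or bounded by a geometric series $\sum\eta/2^n$ that can be absorbed into $\varepsilon$ via a recomputation of the bound, which is where the $1/2^{n+1}$ decay in the definition of $\eta$-convergence is used), and in the limit $\Weight(\outcomes(s,\minstrategy',\maxstrategy))\leq \Weight(\text{virtual play})\leq \pricestrategy(s,\minstrategy)\leq\lowervalue(s)+\varepsilon$. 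Since $\minstrategy'\in\convergentminstrategies^\eta$ and $\maxstrategy$ was an arbitrary $\eta$-convergent strategy of Player~2, taking $\sup$ over $\maxstrategy$ and then $\varepsilon\to 0$ gives $\convergentlowervalue^\eta(s)\leq\lowervalue(s)$.

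The main obstacle will be making the round-by-round shifting genuinely rigorous: one must track not just the current valuation but enough information about the virtual opponent to guarantee that \emph{every} future move of $\maxstrategy$ can be matched, and one must verify that the $\eta$-region-equivalence of the real and virtual histories is exactly what $\maxstrategy$ (being $\eta$-region-uniform) needs in order to keep producing matchable moves --- in particular that resets, guards, and invariants of $\arena$ behave consistently under the shift, which is where the one-clock restriction (only two borders of a region, cleanly ordered by $\leqinterval$) is essential. A secondary subtlety is the treatment of plays that never reach a target: there the cost is $+\infty$ for both the real and virtual play, so the inequality is trivial, but one should check that the transformation does not inadvertently turn a non-reaching virtual play into a reaching real play with large finite cost --- region-equivalence of the two plays rules this out since reaching $\goals$ depends only on the location sequence, which is preserved.
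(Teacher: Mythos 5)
Your overall architecture---a round-by-round shifting transformation that maps the real play against $\maxstrategy$ to a region-equivalent virtual play, with a cost comparison exploiting the bi-valued price-rates---is essentially the paper's strategy. However, several of the steps as you've stated them would not go through.

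First, the top-level reduction is circular. You invoke an $\varepsilon$-optimal $\minstrategy$ (already problematic, since Theorem~\ref{thm:optimal-strategy} is stated for \emph{finite} priced game graphs, not $\sem\arena$) and then bound $\pricestrategy(s,\minstrategy) \leq \lowervalue(s) + \varepsilon$. But $\varepsilon$-optimality only gives $\pricestrategy(s,\minstrategy)\leq\uppervalue(s)+\varepsilon$; replacing $\uppervalue$ by $\lowervalue$ is exactly determinacy, which is a \emph{consequence} of this lemma (via Corollary~\ref{cor:pseudo}), not something you may assume. The paper avoids this entirely: fixing $\maxstrategy'\in\convergentmaxstrategies^\eta$, it builds an unrestricted $\maxstrategy\in\maxstrategies$ and shows that for \emph{every} $\minstrategy\in\minstrategies$ there is an $\eta$-convergent $\minstrategy'$ with $\Weight(\outcomes(s,\minstrategy',\maxstrategy'))\leq\Weight(\outcomes(s,\minstrategy,\maxstrategy))$; taking $\inf$ over $\minstrategy$ and then $\sup$ over $\maxstrategy'$ gives the result with no optimality or determinacy needed.

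Second, your hedge that the cost gap is ``bounded by a geometric series $\sum\eta/2^n$ that can be absorbed into $\varepsilon$'' would actually be fatal rather than a harmless recalculation. The lemma is a per-$\eta$ statement; any residual error of order $\eta$ cannot be sent to zero, since $\eta$ is a free parameter of the claim, not something you get to shrink (contrast this with Lemma~\ref{lem:translation}, where $\eta$ \emph{is} chosen after $\varepsilon$). The paper's transformation $r\mapsto r^-$ in Lemma~\ref{lem:eta-convergent-runs} achieves $\Weight(r^-)\leq\Weight(r)$ \emph{exactly}, with no error term; the $\eta/2^{n+1}$ decay in the definition of $\eta$-convergence serves a different purpose (allowing Player~2 to keep playing legally arbitrarily close to borders, cf.~Example~\ref{ex:convergence}), not to tame a cost series. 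Relatedly, the argument ``playing at the favourable border can only decrease the real cost'' is too naive as a proof of the cost inequality: the real play is forced to land in the \emph{same} $\eta$-region as the virtual one (to keep $\maxstrategy$ supplying matchable moves), and once the two plays have drifted apart the sign of the per-step cost difference is not always favourable. What makes the induction close is the strengthened invariant $\Weight(r^-[j+1]) \leq \Weight(r[j+1]) - |\valuation_{j+1}-\valuation^-_{j+1}|$ (and its $\max(\cdot,0)$ variant when $0\in\{p^-,p^+\}$), which banks the current valuation gap as slack to pay for later unfavourable steps; this is precisely where the bi-valued restriction on price-rates is needed, and it is not something that follows from ``monotonicity in a fixed region'' alone. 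Finally, the consistency of the two constructions (that the next delay proposed by $\maxstrategy'$ on $r^-$ is exactly $t^-_i$ so that the iteration is well-defined) is a genuine lemma in itself (Lemma~\ref{lem:consistency-convergent}), not an afterthought.
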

Observe that this lemma fails to hold when location price-rates can take more
than two values as exemplified by arena $\arena_2$ in Fig.~\ref{fig:CE-0+1-1}. 
It shows a game with three distinct prices with lower and upper value
equal to $1/2$. However, when restricted to $\eta$-convergent
strategies, the lower value equals $1$.

Our next goal is to find a common bound being both a lower bound on
$\convergentlowervalue^\eta(s)$ and an upper bound on
$\uniformuppervalue^\eta(s)$ by studying the value of a reachability-cost game 
on a finitary abstraction of \oneBPTG{s}. 

\subsection{\texorpdfstring{Finite abstraction of \oneBPTG{s}}{Finite
    abstraction of 1BPTGs}}
We now construct a finite price game graph $\abstrarena$ from any
\oneBPTG $\arena$, as a finite abstraction of the infinite weighted
game $\sem\arena$, based on $\eta$-regions.
Since we have learned that $\eta$-region-uniform strategies suffice,
we limit ourselves to playing at a distance at most $\eta$ from the
borders of regions.  Observe that only $\eta$-regions close to the
borders are of interest, and moreover $\eta$-regions after the maximal
constant $M_K$ are not useful since $\arena$ is bounded.  Let
$\intervals^\eta_\arena$ be the set of remaining ``useful''
$\eta$-regions.  For example, if constant appearing in the PTG are
$M_1=2$ and $M_2=3$, we have $\intervals^\eta_\arena=\{\{0\},
(0,\eta], [2-\eta,2), \{2\}, (2,2+\eta], [3-\eta,3), \{3\}\}$.  We
next define the \emph{delay} between two such $\eta$-regions
$I\leqinterval J$, denoted by $d(I,J)$, as the closest integer of
$q'-q$, where $q$ (respectively, $q'$) is the lower bound of interval
$I$ (respectively, $J$).  For example, $d((2,2+\eta],[3-\eta,3))=1$
and $d(\{0\},[2-\eta,2))=2$.

\begin{definition}
  For every \oneBPTG $\arena$ we define its border abstraction as a finite
  priced game graph 
  $\abstrarena=(\vertices=\vertices_1\uplus
  \vertices_2,\actions,\edges,\edgeweights,\finalvertices)$ where:
  \begin{itemize}
  \item
    $\vertices_{i}=\{(\location,I)\mid \location \in
    \locations_i, I\in\intervals^\eta_\arena,
    I\subseteq\invariants(\location)\}$ for $i\in\{1,2\}$;
  \item 
    $\actions = \intervals^\eta_\arena\times\labels$;
  \item 
    $E$ is the set of tuples
    $((\location,I),(J,a),(\location',J'))$ such that $I \leqinterval
    J$ and for all $I\leqinterval K \leqinterval J$ we have $K
    \subseteq \invariants(\location)$ and $J \subseteq \zeta$ and
    $J'=J[R:=0]$ with
    $(\zeta,R,\location')=\transitions(\location,a)\}$;
  \item 
    $\edgeweights((\location,I),(J,a),(\location',J'))=
    \prices(\location) \times d(I,J) + \prices(a)$; and 
  \item 
    $\finalvertices = \{(\location,I)\mid \location\in\goals,
    I\in\intervals^\eta_\arena\}$.
  \end{itemize}
\end{definition}
In a border abstraction game $\abstrarena$, the meaning of action $(J,a)$ is
that the player wants to let time elapse until it reaches the $\eta$-region
$J$, then playing label $a$. 
It simulates any timed move $(t,a)$ with $t$ any delay reaching a point in $J$. 

\begin{example}
  Consider the border abstraction of the $\oneBPTG$ of Fig.~\ref{fig:BPTG} shown
  in Fig.~\ref{fig:BPTG-finite}.
  Observe that we depict only a succinct representation of the real abstraction,
  since we only show the reachable part of the game from $(\location_1,0)$, and
  we have removed multiple edges (introduced due to label hiding) and kept only
  the most useful ones for the corresponding player. 
  For example, consider the location $(\location_5,\{0\})$.
  There are edges labelled by $(J,a)$ for every interval $J\in
  \intervals^\eta_\arena$, all directed to 
  $(\location_4,\{0\})$ due to a reset being performed there.
  We only show the best possible edge---the one with lowest price---since
  location $\location_5$ belongs to Player~1, who seeks to minimise cost. 
  Each vertex contains the $\eta$-region it represents.
\begin{figure}[tbp]
  \centering
  \scalebox{.8}{
  \begin{tikzpicture}[node distance=1cm,auto,->,>=latex]
    \node[player1](1a){\makebox[0mm][c]{$\{0\}$}};

    \node[player1,above of=1a,node distance=1.5cm,xshift=1.5cm](2){\makebox[0mm][c]{$\{0\}$}};

    \node[player2,right of=2,node distance=2.5cm](3a){\makebox[5mm][c]{\footnotesize{$[0,\eta]$}}};
    \node[player2,right of=3a,node distance=2cm](3b){\makebox[9mm][c]{\footnotesize{$[1\mathrm{-}\eta,1)$}}};
    \node[player2,right of=3b,node distance=1.5cm](3c){\makebox[9mm][c]{\footnotesize{$[1,1\mathrm{+}\eta]$}}};
    \node[player2,right of=3c,node distance=1.5cm](3d){\makebox[9mm][c]{\footnotesize{$[2\mathrm{-}\eta,2]$}}};

    \node[player2,below of=1a,node distance=1.5cm,xshift=1cm](4a){\makebox[0mm][c]{$\{0\}$}};
    \node[player2,right of=4a](4b){\makebox[5mm][c]{\footnotesize{$(0,\eta]$}}};
    \node[player2,right of=4b,node distance=1.5cm](4c){\makebox[9mm][c]{\footnotesize{$[1\mathrm{-}\eta,1)$}}};
    \node[player2,right of=4c,node
    distance=1.3cm](4d){\makebox[0mm][c]{$\{1\}$}};

    \node[player1,right of=4d,node distance=4cm](5a){\makebox[0mm][c]{$\{0\}$}};

    \node[draw,rounded rectangle,minimum size=5mm,below of=3d,node
    distance=1.5cm,xshift=1.5cm,double](6){};

    \path (1a) edge node[above left]{$0$} (2)
     (1a) edge[bend right=10] node[left]{$1$} (4a)
     (1a) edge node[left,yshift=-1mm]{$1$} (4b)
     (1a) edge node[above right]{$2$} (4c)
     (1a) edge[bend left=15] node[above right]{$2$} (4d)
     (2) edge node[above]{$0$} (3a)
         edge[bend left=25] node[above right,yshift=-1mm,xshift=10mm]{$1$} (3b)
         edge[bend left=30] node[right,xshift=12mm]{$1$} (3c)
         edge[bend left=35] node[below,xshift=15mm,yshift=-2mm]{$2$} (3d)
     (3a) edge[out=-100,in=-150,loop] node[below left]{$0$} (3a)
          edge[bend right=10] node[below left]{$0$} (6)
     (3b) edge node[above]{$0$} (3a)
          edge[bend right=5] node[left,xshift=-3mm]{$1$} (6)
     (3c) edge node[right,yshift=2mm]{$1$} (6)
     (3d) edge[bend left=5] node[right,yshift=1mm]{$1$} (6)
     (4d) edge node[above]{$0$} (5a)
     (4c) edge[bend right=18] node[above]{$0$} (5a)
     (4b) edge[bend right=28] node[above]{$-1$} (5a)
     (4a) edge[bend right=38] node[above]{$-1$} (5a)
     (5a) edge[bend left=45] node[below]{$1$} (4a)
     (5a) edge node[below right]{$3$} (6);

     \tikzset{color=gray,node distance=5mm}
     \node[left of=1a](){$\location_1$};
     \node[above left of=2](){$\location_2$};
     \node[right of=6](){$\location_6$};
     \node[below right of=5a](){$\location_5$};
     \draw[dashed,draw=gray,rounded corners=10pt] (.5,-2) rectangle (5.3,-1);
     \node at (.3,-1.5) {$\location_4$};
     \draw[dashed,draw=gray,rounded corners=10pt] (3.3,1) rectangle (9.8,2);
     \node at (10,1.5) {$\location_3$};
  \end{tikzpicture}}
\caption{Finite weighted game associated with the $\oneBPTG$ of
  Fig.~\ref{fig:BPTG}.}
  \label{fig:BPTG-finite}
\end{figure}
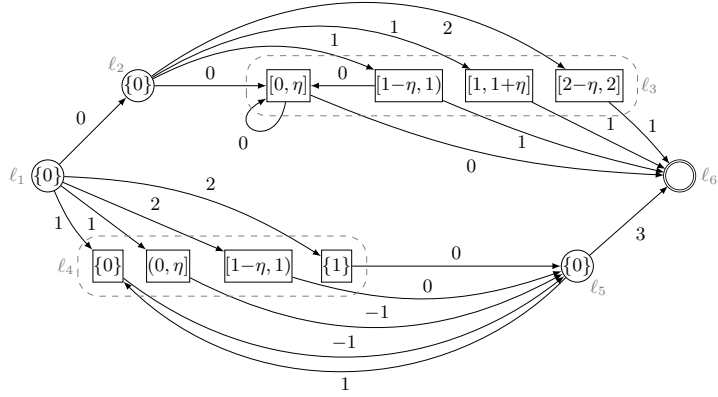
Thanks to Theorem~\ref{thm:optimal-strategy}, it is possible to
compute the optimal value as well as optimal strategies for both
players. 
Here, the value of state $(\location_1,0)$ is $1$, and an optimal strategy for
Player~1 is to follow action $(\{0\},a)$ (i.e., jump to $\location_2$
immediately), and then action $(\{1\},a)$ (i.e., to delay 1 time unit, before
jumping in $\location_3$). 
\end{example}

\begin{lemma}\label{lem:translation}
  Let $\arena$ be a \oneBPTG and $\abstrarena$ be its border
  abstraction.  Suppose that for all $0\leq k \leq K$ and
  $\location\in\locations$ we have that
  $\Value_{\abstrarena}((\location,\{M_k\}))$ is finite. Then, for
  all $\varepsilon>0$, there is $\eta>0$ s.t.
  $\uniformuppervalue^\eta_\arena((\location,M_k))-\varepsilon \leq
  \Value_{\abstrarena}((\location,\{M_k\})) \leq
  \convergentlowervalue^\eta_\arena((\location,M_k))+\varepsilon$.
\end{lemma}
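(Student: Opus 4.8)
The plan is to prove the two inequalities separately, each time by transferring an optimal strategy of one player in the finite abstraction $\abstrarena$ into an $\eta$-region-uniform (resp.\ $\eta$-convergent) strategy in the concrete game $\sem\arena$, and conversely transferring a concrete strategy of the opponent back into $\abstrarena$. Since all values $\Value_{\abstrarena}((\location,\{M_k\}))$ are finite by hypothesis, Theorem~\ref{thm:optimal-strategy}(3) guarantees that in $\abstrarena$ Player~2 has an optimal memoryless strategy and Player~1 an optimal finite-memory strategy; these are the objects we will lift. Fix $\varepsilon>0$; the constant $\eta$ will be chosen small enough at the end so that all accumulated discrepancies stay below $\varepsilon$.

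For the right inequality $\Value_{\abstrarena}((\location,\{M_k\})) \leq \convergentlowervalue^\eta_\arena((\location,M_k))+\varepsilon$, I would start from Player~1's optimal finite-memory strategy $\tilde\strategy_1$ in $\abstrarena$ and build an $\eta$-convergent strategy $\strategy_1$ in $\sem\arena$: when $\tilde\strategy_1$ prescribes the abstract move $(J,a)$, the concrete strategy delays exactly to the point of $J$ dictated by the $\eta$-convergent shape at the current round $n$ (i.e.\ at distance $\eta/2^{n+1}$ of the relevant border $M_k$, with the small asymmetric exception for zero-delay moves already built into the definition). The key computation is a telescoping/geometric-series estimate: the concrete cost along any play differs from the abstract cost edge-by-edge by at most $|\prices(\location)|$ times the gap between the true delay and the integer $d(I,J)$, and this gap is $O(\eta/2^{n})$ at round $n$, so the total error over the (possibly infinite) play is bounded by $O(\eta)\sum_n 2^{-n}=O(\eta)$. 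To turn a lower bound on Player~1's strategy into a bound on $\Value_{\abstrarena}$ we must also check that \emph{every} $\eta$-convergent strategy of Player~2 can be mimicked by an abstract (memoryless suffices) strategy, so that the concrete outcome against $\strategy_1$ is matched by an abstract outcome against $\tilde\strategy_1$, whose cost is $\geq \Value_{\abstrarena}$. The symmetric reasoning, starting from Player~2's optimal memoryless strategy in $\abstrarena$ and lifting it to an $\eta$-region-uniform strategy of Player~2 in $\sem\arena$ (delaying into the $\eta$-neighbourhood of the appropriate border), while checking that every $\eta$-region-uniform strategy of Player~1 projects to $\abstrarena$, yields the left inequality $\uniformuppervalue^\eta_\arena((\location,M_k))-\varepsilon \leq \Value_{\abstrarena}((\location,\{M_k\}))$.

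The main obstacle, and the place where bi-valuedness of the price-rates is essential, is controlling the accumulated error when a play is \emph{infinite}: a naive bound gives an error growing linearly with the number of rounds, which is useless. This is exactly why the abstraction is built on $\eta$-convergent strategies, whose geometrically shrinking distances $\eta/2^{n}$ make the error series converge; the subtle point is to verify that the price contributed in region interiors $(M_k,M_{k+1})$ is never touched (the lifted strategies only ever stop in the $\eta$-borders, by construction of $\intervals^\eta_\arena$), so that the only discrepancy is between a real sub-$\eta$ delay and its rounded integer $d(I,J)$, multiplied by a price-rate of absolute value at most $1$. A second, more bookkeeping-heavy obstacle is the correspondence between concrete plays and abstract plays in the presence of the "label hiding" and multiple parallel edges of $\abstrarena$: one must argue that the abstract edge realising a given concrete timed move exists (guard and invariant satisfied all along, reset applied, correct target $\eta$-region $J'=J[R:=0]$), which is immediate from the definition of $\edges$ in $\abstrarena$ but needs to be stated carefully, and that restricting Player~1 to the "best" parallel edge in the succinct drawing does not change the value.
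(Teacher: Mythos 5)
The central structural choice---which abstract optimal strategy to lift for which inequality---is reversed in your plan, and the reversal is not cosmetic. To bound $\uniformuppervalue^\eta=\inf_{\minstrategy}\sup_{\maxstrategy}$ from above (the left inequality) one must exhibit a good Player~1 strategy, so the paper lifts Player~1's optimal \emph{finite-memory} strategy $\minfstrategy^*$ from $\abstrarena$ into an $\eta$-region-uniform strategy; to bound $\convergentlowervalue^\eta=\sup_{\maxstrategy}\inf_{\minstrategy}$ from below (the right inequality) one must exhibit a good Player~2 strategy, so the paper lifts Player~2's optimal \emph{memoryless} strategy $\maxfstrategy^*$ into an $\eta$-convergent strategy. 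You do the opposite. Since Player~1 is the $\inf$ player in both values, lifting $\minfstrategy^*$ only gives control of $\convergentuppervalue^\eta$, not $\convergentlowervalue^\eta$, and lifting $\maxfstrategy^*$ only gives control of $\uniformlowervalue^\eta$, not $\uniformuppervalue^\eta$. Those bounds do not chain with Lemmas~\ref{lem:uniform} and~\ref{lem:convergent} to produce determinacy in Corollary~\ref{cor:pseudo}; the slip is visible in your phrase ``whose cost is $\geq\Value_{\abstrarena}$'': against Player~1's optimal minimizing strategy, the abstract cost is $\leq\Value_{\abstrarena}$, not $\geq$.

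The swap is also at odds with the error-control mechanism you yourself invoke. The geometric-series bound $O(\eta)\sum_n 2^{-n}$ is what rescues the argument when the lifted strategy does \emph{not} control play length; that is Player~2's situation, since Player~1 may refuse to reach the target forever, and this is exactly why the $\eta$-convergent definition forces delays at distance $\eta/2^{n+1}$ and the paper proves a length-independent $3\eta$ bound for that side. When lifting Player~1's optimal finite-memory strategy, by contrast, finiteness of the values plus finite memory give a uniform bound $L$ on the number of steps to the target, and the paper simply picks $\eta<\varepsilon/(2L)$ to beat the linear bound $2\eta\cdot\StepsToReach(\tilde r)$ from Lemma~\ref{lem:weights-abstraction}; no geometric decay is needed or wanted there. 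Your plan omits this play-length bound, which is the load-bearing observation that lets the $\eta$-region-uniform side close. Once the two lifts are paired correctly (Player~1 $\to$ $\eta$-region-uniform, $\eta<\varepsilon/(2L)$; Player~2 $\to$ $\eta$-convergent, $\eta<\varepsilon/3$), the remaining ingredients you describe---per-step discrepancy bounded by price-rate magnitude times the gap to $d(I,J)$, and projecting concrete plays onto $\abstrarena$ through $\eta$-region sequences---do match the paper's proof. A last small point: bi-valuedness of price-rates is not what makes the translation error small (that only uses $|\prices(\location)|\leq 1$); bi-valuedness is consumed earlier, in the border-shifting constructions $r^+$ and $r^-$ underlying Lemmas~\ref{lem:uniform} and~\ref{lem:convergent}.
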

Combining this result with Theorem~\ref{thm:optimal-strategy} we obtain the
following. 
\begin{corollary}\label{cor:pseudo}
  \oneBPTG{s} are determined and we can compute their
  values in pseudo-polynomial time. Moreover, in case the values are
  finite, $\varepsilon$-optimal strategies exist for both players:
  Player 2 may require infinite memory strategies, whereas finite
  memory is sufficient for Player 1. Finally, $\varepsilon$-optimal
  strategies can also be computed in pseudo-polynomial time.
\end{corollary}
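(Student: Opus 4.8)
The plan is to assemble the three lemmas of this section with Theorem~\ref{thm:optimal-strategy}. Fix a \oneBPTG $\arena$ and let $\abstrarena$ be its border abstraction. First suppose that $\Value_{\abstrarena}((\location,\{M_k\}))$ is finite for every $\location\in\locations$ and every $0\leq k\leq K$. Then Lemma~\ref{lem:translation} gives, for every $\varepsilon>0$, some $\eta\in(0,\tfrac13)$ with $\uniformuppervalue^\eta_\arena((\location,M_k))-\varepsilon \leq \Value_{\abstrarena}((\location,\{M_k\})) \leq \convergentlowervalue^\eta_\arena((\location,M_k))+\varepsilon$. Combining this with Lemma~\ref{lem:uniform} ($\uppervalue(s)\leq\uniformuppervalue^\eta(s)$), Lemma~\ref{lem:convergent} ($\convergentlowervalue^\eta(s)\leq\lowervalue(s)$) and the trivial $\lowervalue\leq\uppervalue$, one gets $\Value_{\abstrarena}((\location,\{M_k\}))-\varepsilon \leq \convergentlowervalue^\eta_\arena((\location,M_k)) \leq \lowervalue((\location,M_k)) \leq \uppervalue((\location,M_k)) \leq \uniformuppervalue^\eta_\arena((\location,M_k)) \leq \Value_{\abstrarena}((\location,\{M_k\}))+\varepsilon$, and letting $\varepsilon\to0$ yields $\lowervalue((\location,M_k))=\uppervalue((\location,M_k))=\Value_{\abstrarena}((\location,\{M_k\}))$: the game is determined from every integer-valued configuration and its value equals that of the finite abstraction $\abstrarena$. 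Determinacy from an arbitrary configuration $(\location,\valuation)$ then follows either from the general determinacy of priced game graphs recalled in Section~\ref{sec:preliminaries}, applied to $\sem\arena$, or constructively by re-running the same argument after adjoining to $\abstrarena$ a vertex for the $\eta$-region containing $\valuation$ (a routine extension of Lemma~\ref{lem:translation}).

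Next, the value of $\abstrarena$ is computable in pseudo-polynomial time by Theorem~\ref{thm:optimal-strategy}: part~(1) decides in polynomial time whether it is $+\infty$, part~(2) in pseudo-polynomial time whether it is $-\infty$, and part~(3) computes the finite values otherwise. The two infinite cases, which escape Lemma~\ref{lem:translation}, are handled directly: $\Value_{\abstrarena}((\location,\{M_k\}))=+\infty$ says that Player~2 can avoid all target locations in $\abstrarena$, a purely qualitative reachability property preserved by the region abstraction, so Player~2 can avoid the targets in $\arena$ too and $\Value_\arena((\location,M_k))=+\infty$; symmetrically, $\Value_{\abstrarena}((\location,\{M_k\}))=-\infty$ exhibits a Player-1-forceable negative cycle from which a target is still reachable, which Player~1 realises in $\arena$ by an $\eta$-convergent play iterating the cycle, exactly as in Example~\ref{ex:convergence}, giving $\Value_\arena((\location,M_k))=-\infty$. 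In every case $\arena$ is determined and $\Value_\arena$ is computed in pseudo-polynomial time.

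Finally, assume the values are finite. By Theorem~\ref{thm:optimal-strategy}(3) the abstraction $\abstrarena$ admits optimal strategies — memoryless for Player~2, finite-memory for Player~1 — computable in pseudo-polynomial time, and it remains to transfer them to $\arena$. Unwinding the proofs of Lemmas~\ref{lem:uniform}--\ref{lem:translation}, an optimal finite-memory strategy of Player~1 in $\abstrarena$ lifts to a finite-memory $\eta$-region-uniform strategy in $\arena$ that, for an abstract action $(J,a)$, plays the timed move landing at a fixed point of the $\eta$-region $J$; this strategy is $\varepsilon$-optimal for Player~1. An optimal memoryless strategy of Player~2 lifts to an $\eta$-convergent strategy in $\arena$ that on its $n$-th move plays at distance $\eta/2^{n}$ from the prescribed border; this strategy needs infinite memory — necessarily, by Example~\ref{ex:convergence} — and is $\varepsilon$-optimal for Player~2. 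The pseudo-polynomial bounds on the abstract strategies carry over. I expect this back-translation to be the main obstacle: one must verify that the finitely many $\eta$-close choices made in $\arena$ alter the accumulated cost by at most $\varepsilon$ over the whole, possibly infinite, play and that the finite-memory structure of Player~1's abstract strategy is faithfully preserved — precisely what the proofs of Lemmas~\ref{lem:uniform}, \ref{lem:convergent} and~\ref{lem:translation} are designed to establish, so here it is only a matter of invoking them.
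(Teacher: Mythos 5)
Your treatment of the finite case is exactly the paper's: sandwich $\lowervalue$ and $\uppervalue$ between $\convergentlowervalue^\eta$ and $\uniformuppervalue^\eta$ via Lemmas~\ref{lem:uniform} and~\ref{lem:convergent}, squeeze both against $\Value_{\abstrarena}$ using Lemma~\ref{lem:translation}, and let $\varepsilon\to0$. The strategy back-translation and the invocation of Theorem~\ref{thm:optimal-strategy} also match the paper's argument, and your remark that determinacy at non-border configurations follows either from general determinacy of $\sem\arena$ or by enlarging $\abstrarena$ is a legitimate observation the paper leaves implicit.

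The genuine divergence is in the infinite cases, which the paper delegates to Appendix~\ref{app:lem:infinite}. Your $+\infty$ shortcut (qualitative reachability is preserved by the border abstraction, so a Player-2 non-reaching strategy in $\abstrarena$ transfers to $\arena$) is cleaner than the paper's constructive re-translation and is correct. Your $-\infty$ sketch, however, is slightly off. First, the machinery you invoke is mismatched: forcing a small cost is an upper-value claim for Player~1, so the relevant tool is $\eta$-region-uniform strategies and Lemma~\ref{lem:uniform}, not $\eta$-convergent strategies and Lemma~\ref{lem:convergent}; and Example~\ref{ex:convergence} is about Player~2 needing infinite memory, not Player~1 iterating a negative cycle. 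Second, and more importantly, one must argue that the per-step abstraction error does not overwhelm the negative drift as the play becomes arbitrarily long. The paper handles this by coupling $\eta$ to the target threshold $M$: the (finite-memory, mean-payoff-derived) abstract strategy achieving cost $\le M$ produces plays of length at most $CM$, so choosing $\eta\le\varepsilon/(CM\max|\edgeweights|)$ keeps the translated cost within $\varepsilon$ of $M$, and one then lets $M\to-\infty$. A fixed $\eta$-convergent strategy as you suggest would in fact also bound the total error (by $3\eta$), but that requires the $\eta$-convergent analogue of the error bound, which the paper proves only in the direction needed for Player~2. So while your high-level idea is sound, to make it rigorous you would either have to adopt the paper's $M$-dependent scaling of $\eta$ with $\eta$-region-uniform strategies, or re-prove the convergent error bound for Player~1.
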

\begin{proof}
  In case of infinite values
  $\Value_{\abstrarena}((\location,\{M_k\}))$, we can show directly
  that $\uppervalue_\arena((\location,M_k))=
  \Value_{\abstrarena}((\location,\{M_k\}))
  =\lowervalue_\arena((\location,M_k))\,.$
  Otherwise, let $\varepsilon>0$. By Lemma~\ref{lem:translation}, we
  know that there exists $\eta>0$ such that for every location
  $\location\in\locations$ and integer $0\leq k\leq K$: 
  \[
  \uniformuppervalue^\eta_\arena((\location,M_k))-\varepsilon\leq
  \Value_{\abstrarena}((\location,\{M_k\})) \leq
  \convergentlowervalue^\eta_\arena((\location,M_k))+\varepsilon\,.
  \]
  Moreover Lemma~\ref{lem:uniform} and \ref{lem:convergent} show that: 
  \[
  \convergentlowervalue^\eta((\location,M_k)) \leq
  \lowervalue((\location,M_k))\leq \uppervalue((\location,M_k))\leq
  \uniformuppervalue^\eta((\location,M_k))\,.
  \]
  Both inequalities combined permit to obtain
  \[ 
  \Value_{\abstrarena}((\location,\{M_k\}))-\varepsilon
  \leq \lowervalue((\location,M_k))\leq \uppervalue((\location,M_k))
  \leq
  \Value_{\abstrarena}((\location,\{M_k\}))+\varepsilon\,.
  \]
  Taking the limit when $\varepsilon$ tends to $0$, we obtain that
  $\lowervalue((\location,M_k)) = \uppervalue((\location,M_k)) =
  \Value_{\abstrarena}((\location,\{M_k\}))$. 
  Therefore, \oneBPTG are determined.  
  Moreover, in case of finite values, the proof of Lemma~\ref{lem:translation}
  permits to construct  $\varepsilon$-optimal $\eta$-region-uniform strategies
  $\minstrategy^*$ (with finite memory) and $\maxstrategy^*$ (which is
  moreover $\eta$-convergent).\qed
\end{proof}

\begin{figure}[t]
  \centering
  \scalebox{.9}{\begin{tikzpicture}[node distance=.5cm,auto,->,>=latex]
    \node[player1,initial,initial text=](1){$0$}; %
    \node[player2](2)[right of=1,node distance=3cm]{$0$}; %
    \node[player1](3)[above right of=2,xshift=2cm]{$1$};%
    \node[player1](4)[below right of=2,xshift=2cm]{$0$};%
    \node[player1](5)[right of=3, node distance=2cm]{$0$}; %
    \node[player1](6)[right of=4, node distance=2cm]{$1$}; %
    \node[player1,double](7)[below right of=5,xshift=2cm]{$0$};%
    
    \path %
    (1) edge node[above]{$x\leq 1, \{y\}$} (2) %
    (2) edge[bend left=5] node[above]{$y=0$} (3)%
    edge[bend right=5] node[below]{$y=0$} (4) %
    (3) edge node[above]{$x=1$} (5)%
    (4) edge node[above]{$x=1$} (6) %
    (5) edge[bend left=5] node[above]{$y=1$} (7) %
    (6) edge[bend right=5] node[below]{$y=1$} (7);
  \end{tikzpicture}}
  \caption{A two-clock PTG with prices of locations in $\{0,+1\}$ and
    value $1/2$}
  \label{fig:2clocks}
\end{figure}
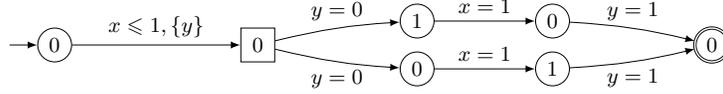

In the case of \oneBPTG{s}, the finite values are integers.  This
property fails if we allow more than one clock, as shows
Fig.~\ref{fig:2clocks} with a two-clock PTG with price-rates in
$\{0,1\}$ and optimal value $\frac 1 2$.  It also fails if we allow
more than two price-rates as was shown in Fig.~\ref{fig:CE-0+1-1}.
However for $\onePTG(0,1)$ with prices of labels in $\N$, the value of
the game is necessarily nonnegative disallowing the case $-\infty$.
The case $+\infty$ can be detected in polynomial time.  If the value
is not $+\infty$, the exact computation in the finite abstraction
$\abstrarena$ can be performed in polynomial time (see
\cite{priced-games} or \cite{DueIbs13}), resulting in a polynomial
algorithm for PTGs.  The sketch of Theorem~\ref{thm:overall} is now
complete.  Contrary to what it has been said in \cite{concur14}, our
results do not allow to ensure that the number of cutpoints is
polynomial in this subclass. It can indeed be exponential as shown in
\cite{FIS20}.  \todo{G: Removed `Notice that our proof shows that
  optimal value functions (as defined in
  \cite{BouLar06,Rut11,DueIbs13}) of such games have a polynomial
  number of line segments, and hence algorithms presented
  in~\cite{BouLar06,Rut11,DueIbs13} are indeed polynomial time.'}

\section{Conclusion}
We revisited games with reachability objective on PTGs with both
positive and negative price-rates.  We showed undecidability of all
classes of constrained-price reachability objectives with two or more
clocks. We also observed that adding bounded-time restriction does not
recover decidability, even with nonnegative prices. We also partially
answer the question regarding polynomial-time algorithm for one-clock
PTGs by showing that for a bi-valued variant the problem is in
pseudo-polynomial time. However, the existence of a polynomial-time
algorithm for multi-priced one-clock PTGs with nonnegative
price-rates, and the existence of algorithm for computing
$\varepsilon$-optimal strategies for PTGs with arbitrary number of
clocks remain open problems.

\clearpage
\appendix
\section{Motivation: A case-study from Project Cassting} 
Our work is partly motivated by possible applications to real case
study, like energy-aware houses, taken from the EU FP7 project
Cassting. It consists in houses equiped with solar panels and energy
storage capacities (e.g., with a water tank), that can produce some
energy, use or store it, and possibly sell it on a local grid, for use
by other houses. Priced timed games may permit to model such house,
and strategy synthesis allows us to build optimal controller for the
house, aiming at energy and cost savings: player 1 models the house
and wants to minimize its cost, whereas player 2 models other houses
and the environment, modelling the worst possible situation. Pricing
policy implies that selling energy on the grid is more profitable
during the day than the night, whereas it is more profitable to use
energy from the grid during the night than the day. Moreover, weather
conditions (sun or clouds, e.g.) imply that the solar panels are not
constantly producing energy. Hence, we can model the situation with
three independant phases: sunny day, cloudy day and night. In each of
these phases, we can suppose that only two rates are indeed
available. During a sunny day, selling energy will be rewarded
$\alpha$ euros per time unit (assuming that the energy production or
consumption is constant), whereas consumption or storing of energy
does not cost anything. During a cloudy day, solar panels are off:
because of storage capacities, it is however possible to sell energy
at the same cost than previously, but the consumption may now cost
$\alpha$ euros per time unit. Finally, during the night, selling
energy rewards $\beta$ euros per time unit, whereas consuming costs
$\beta$ euros per time unit. Notice that in each of the phases, we can
use bi-valued priced timed games to model the arena.
 
\section{Detailed decidability proofs}

\subsection{Proof of Lemma~\ref{lem:uniform}}
\label{app:lem:uniform}
With respect to rechability of the target locations,
($\eta$-)equivalent plays are indistinguishable. Moreover, with
respect to weights, we show that plays may avoid regions
$(M_k+\eta,M_{k+1}+\eta)$ without loss of generality. Formally, a play
$r=(\location_0,\valuation_0),(t_0,a_0),\ldots$ is said to \emph{stay
  $\eta$-close to borders} if for all $i\geq 0$, there exists $k$ such
that $|\valuation_i+t_i - M_k|\leq \eta$. Before proving
Lemma~\ref{lem:uniform}, we first study more precisely the
relationship between general plays and plays that stay $\eta$-close to
borders. More precisely, we now explain how to construct, from any
finite play $r$, a play $r^+$ such that
\begin{enumerate*}[label=(\emph{\roman*})]
\item $r^+$ stays $\eta$-close to borders; 
\item $r$ and $r^+$ are region-equivalent; and
\item $\Weight(r)\leq \Weight(r^+)$.
\end{enumerate*}
Intuitively, the idea is to consider the steps of play $r$ and to
shift them towards one of the closest borders, unless the current
step is already $\eta$-close to a border: when the current location
has price $p^+$, since we want the weight of $r^+$ to be greater than
or equal to the weight of $r$, we will spend more time in this
location, hence shifting the step `to the right', and symetrically in
case of a location of price $p^-$. Moreover, the construction will
trivially verify that if $r'$ and $r$ are two plays that coincide on
their prefix of length $n$, i.e., $r'[n]=r[n]$, then
$(r')^+[n]=r^+[n]$.

The construction is by induction on the length of the
play. Henceforth, let $r=(\location_0,\valuation_0),(t_0,a_0),\ldots,
(\location_{i+1},\valuation_{i+1})$ and suppose that $\valuation_0$ is
$\eta$-close to some $M_k$. We construct a play
$r^+=(\location^+_0,\valuation^+_0),(t^+_0,a^+_0),\ldots,
(\location^+_{i+1},\valuation^+_{i+1})$. In case $i+1=0$, we simply
let $\location^+_0=\location_0$ and
$\valuation^+_0=\valuation_0$. Otherwise, we consider the play $r^+$
constructed by induction up to its configuration of index $i$, and we
now explain how to construct the next timed action
$(t^+_{i},a^+_{i})$. First, we let $a^+_{i}=a_{i}$. Then, we
distinguish between three cases:
\begin{itemize}
\item if there exists $k$ such that $|\valuation_i+t_i-M_k|\leq \eta$,
  then we let $t^+_i= \valuation_i+t_i-\valuation^+_i$;
\item if there exists $k$ such that $\valuation_i+t_i\in
  (M_k+\eta,M_{k+1}-\eta)$ and $\prices(\location_i)=p^-$, then we let
  $t^+_i=\max(M_{k}+\eta-\valuation^+_i,0)$. Indeed, in this case, we
  want to spend as little time as possible in the location in order to
  ensure $\Weight(r)\leq \Weight(r^+)$;
\item if there exists $k$ such that $\valuation_i+t_i\in
  (M_k+\eta,M_{k+1}-\eta)$ and $\prices(\location_i)=p^+$, then there
  are two cases. In case $\location_i\in\maxlocations$ or $t_i>0$ or
  $\valuation^+_i\neq M_k+\eta$, we let
  $t^+_i=M_{k+1}-\eta-\valuation^+_i$. The intution is that we try to
  spend as much time as possible in the location to enforce
  $\Weight(r)\leq \Weight(r^+)$. Otherwise, i.e., if
  $\location_i\in\minlocations$ and $t_i=0$ and $\valuation^+_i=
  M_k+\eta$, then we let $t^+_i=0$. Because $t_i=0=t^+_i$, we will
  still ensure that $\Weight(r)\leq \Weight(r^+)$ in that case.
\end{itemize}
Notice that this definition is purely syntactic, and we will show
later that all $t_i^+$'s are non-negative, ensuring that we are indeed
constructing a play of the game. We then let
$\location^+_{i+1}=\location_{i+1}$, and valuation
$\valuation^+_{i+1}$ is defined according to the semantics of the
game.

An example of construction of $r^+$ is given in Fig.~\ref{fig:ex1},
for a play $r$ without reset. We have supposed that the sequence of
borders is $0, 1, 3$ and $5$. 

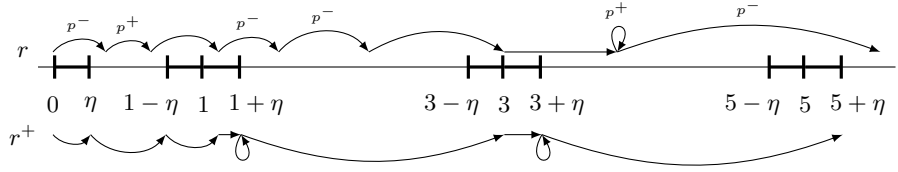
\begin{figure}[tbp]
  \centering
  \begin{tikzpicture}
    \draw(-.2,0) -- (11.2,0);
    \draw[very thick,|-|] (0,0) -- (.5,0);
    \draw[very thick,|-|] (1.5,0) -- (2,0);
    \draw[very thick,-|] (2,0) -- (2.5,0);
    \draw[very thick,|-|] (5.5,0) -- (6,0);
    \draw[very thick,-|] (6,0) -- (6.5,0);
    \draw[very thick,|-|] (9.5,0) -- (10,0);
    \draw[very thick,-|] (10,0) -- (10.5,0);
    \node at (0,-.5) {$0$};
    \node at (.5,-.5) {$\eta$};
    \node at (1.3,-.5) {$1-\eta$};
    \node at (2,-.5) {$1$};
    \node at (2.7,-.5) {$1+\eta$};
    \node at (5.3,-.5) {$3-\eta$};
    \node at (6,-.5) {$3$};
    \node at (6.7,-.5) {$3+\eta$};
    \node at (9.3,-.5) {$5-\eta$};
    \node at (10,-.5) {$5$};
    \node at (10.7,-.5) {$5+\eta$};

    \node at (-.4,.2) {$r$};
    {
    \tikzset{-latex,bend left=50};
    \draw (0,.2) to node[above]{\tiny$p^-$} (.7,.2);
    \draw (.7,.2) to node[above]{\tiny$p^+$} (1.3,.2);
    \draw (1.3,.2) to (2.2,.2);
    \draw (2.2,.2) to node[above]{\tiny$p^-$} (3,.2);
    \draw (3,.2) to node[above]{\tiny$p^-$} (4.2,.2);
    \draw (4.2,.2) to[bend left=30] (6,.2);
    \draw (6,.2) to[bend left=0] (7.5,.2);
    \draw (7.5,.2) to[bend left=20] node[above]{\tiny$p^-$} (11,.2);
    }
    \draw (7.5,.2) edge[-latex,out=120,in=60,loop] node[above]{\tiny$p^+$} (7.5,.2);
    
    \node at (-.4,-.9) {$r^+$};
    {
    \tikzset{-latex,bend left=-50};
    \draw (0,-.9) to (.5,-.9);
    \draw (.5,-.9) to (1.5,-.9);
    \draw (1.5,-.9) to (2.2,-.9);
    \draw (2.2,-.9) to[bend left=0] (2.5,-.9);
    \draw (2.5,-.9) to[bend left=-20] (6,-.9);
    \draw (6,-.9) to[bend left=0] (6.5,-.9);
    \draw (6.5,-.9) to[bend left=-20] (10.5,-.9);
    }
    \draw (2.5,-.9) edge[-latex,out=-120,in=-60,loop] (2.5,-.9);
    \draw (6.5,-.9) edge[-latex,out=-120,in=-60,loop] (6.5,-.9);
    
  \end{tikzpicture}
  \caption{A play $r$ and its associated play $r^+$. The price of the
    location in play $r$, when it matters, is denoted on the
    transition exiting this location: for instance the first location
    of $r$ has price $p^-$, whereas the second has price $p^+$. On the
    first transition, the second rule of the definition applies, and
    less time is spent in the location of price $p^-$. On the second
    transition, the first rule applies, and more time is spent in the
    location of price $p^+$. On the third transition, the first rule
    applies and both plays synchronize. The transition of time
    duration $t=0$ (denoted as a loop) in $r$ is supposed to be taken
    in a location owned by player 1, of price $p^+$. In particular, in
    $r^+$, it is taken when the valuation is $3+\eta$, which implies a
    time duration $t^+=0$ as prescribed in the second case of the last
    rule of the definition.}
  \label{fig:ex1}
\end{figure}

\begin{lemma}\label{lem:eta-runs}
  Let $r$ be a play that starts from some state $\eta$-close to a
  border. Then, the play $r^+$ verifies the following properties:
  \begin{enumerate}
  \item\label{item:near} if for some $j$, there exists $k$ such that
    $|\valuation_j+t_j-M_k|\leq \eta$, then
    $\valuation^+_j+t^+_j=\valuation_j+t_j$;
  \item\label{item:far} if for some $j$, there exists $k$ such that
    $\valuation_j+t_j\in (M_k+\eta,M_{k+1}-\eta)$, then
    $\valuation^+_j+t^+_j\in\{M_k+\eta,M_{k+1}-\eta\}$;
  \item $r^+$ is a play that stays $\eta$-close to borders;
  \item\label{item:sim} $r^+\sim r$;
  \item $\Weight(r)\leq \Weight(r^+)$.
  \end{enumerate}
\end{lemma}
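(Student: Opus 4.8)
The plan is to proceed by induction on the length of the play, exactly mirroring the inductive construction of $r^+$, and to prove all five items simultaneously (since item~5 will depend on items~1 and~2 holding at every step). First I would note that the construction of $(t^+_i, a^+_i)$ falls into exactly three cases at each step, and the key preliminary observation is a loop invariant: \emph{after step~$i$, the valuation $\valuation^+_{i+1}$ lies in the same region as $\valuation_{i+1}$ and is itself $\eta$-close to a border} (either exactly at some $M_k$, or in $(M_k, M_k+\eta]$, or in $[M_{k+1}-\eta, M_{k+1})$). This invariant is what lets the induction go through, since the next step's case analysis presupposes that $\valuation^+_i$ is $\eta$-close to a border, and it is also what delivers item~3.

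The heart of the argument is verifying that each $t^+_i$ is nonnegative (so that $r^+$ is genuinely a play) and that the region-equivalence is preserved. For the first case ($\valuation_i+t_i$ already $\eta$-close to some $M_k$), we set $t^+_i = \valuation_i+t_i - \valuation^+_i$; nonnegativity follows because by the loop invariant $\valuation^+_i$ is $\eta$-close to a border and lies in the same region as $\valuation_i$, which is $\leq \valuation_i + t_i$, together with a short case check that $\valuation^+_i$ cannot overshoot $\valuation_i+t_i$ when both are pinned near the same (or region-ordered) borders — here items~1 and~2 for the previous index are exactly the facts needed. The second and third cases (target strictly inside $(M_k+\eta, M_{k+1}-\eta)$, with price $p^-$ resp.\ $p^+$) are where items~1 and~2 are actually \emph{established}: we push $\valuation^+_i + t^+_i$ to the nearer relevant border ($M_k+\eta$ for $p^-$, $M_{k+1}-\eta$ for $p^+$), with the delicate sub-case $t_i = 0$, $\valuation^+_i = M_k+\eta$, $\location_i \in \minlocations$, $p^+$ handled by taking $t^+_i = 0$ as well. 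I would then check region-equivalence ($\valuation_i+t_i \sim \valuation^+_i+t^+_i$ and after reset $\valuation_{i+1} \sim \valuation^+_{i+1}$) case by case; this is routine from the fact that $M_k+\eta$ and $M_{k+1}-\eta$ lie in the open region $(M_k, M_{k+1})$, together with $0 < \eta < \tfrac13$ so that distinct borders stay separated. Item~4 ($r^+ \sim r$) is then immediate by assembling the per-step region-equivalences and noting labels agree by construction.

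For item~5, $\Weight(r) \leq \Weight(r^+)$, I would compare the two plays step by step. Discrete edge prices $\prices(a^+_i) = \prices(a_i)$ contribute identically. For a location of price $p^+ > p^-$ (so in particular $p^+ \in \{0,1\}$, i.e.\ $p^+ \geq 0$, and $p^- \leq 0$), the construction ensures $t^+_i \geq t_i$ whenever $\valuation_i+t_i$ is far from borders — we moved to the \emph{right} border $M_{k+1}-\eta$ — so $\prices(\location^+_i)\, t^+_i \geq \prices(\location_i)\, t_i$; the degenerate $t_i = 0$ sub-case gives $t^+_i = 0 = t_i$, so equality. Symmetrically, for a location of price $p^- \leq 0$ we shifted to the \emph{left} border, making $t^+_i \leq t_i$ and hence $\prices(\location^+_i)\, t^+_i \geq \prices(\location_i)\, t_i$ again (the sign flips). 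When $\valuation_i + t_i$ is already $\eta$-close to a border, item~1 gives $\valuation^+_i + t^+_i = \valuation_i + t_i$, but the time \emph{durations} $t^+_i$ and $t_i$ may still differ because $\valuation^+_i$ may differ from $\valuation_i$; one must bound the accumulated discrepancy. The cleanest way is to track, as an additional part of the loop invariant, the signed quantity $\valuation^+_i - \valuation_i$ and show that the running difference $\Weight(r^+[i]) - \Weight(r[i])$ together with an appropriate price-weighted multiple of $\valuation^+_i - \valuation_i$ stays nonnegative — a telescoping/potential argument. I expect this bookkeeping, reconciling the "near a border" steps where valuations have drifted apart, to be the main obstacle; the "far from border" steps are the easy ones that do the real work of forcing the inequality, while the "near" steps must be shown merely not to destroy it. Finally, $\sstop(r) = \sstop(r^+)$ since $r \sim r^+$ and target membership depends only on the location, so if $r$ never reaches the target neither does $r^+$ and $\Weight(r) = +\infty = \Weight(r^+)$, while if it does, the costs compared above are the relevant finite prefixes' costs, completing item~5.
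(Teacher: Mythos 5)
Your high-level plan coincides with the paper's: simultaneous induction on the play length; a loop invariant ensuring $\valuation^+_i$ is $\eta$-close to a border and region-equivalent to $\valuation_i$; items~1--2 by direct case analysis of the construction; item~3 (non-negativity of $t^+_i$) via the induction hypotheses~1--2 for the previous index; item~4 assembled per step; and, for item~5, some form of potential/telescoping argument tracking the drift $\valuation^+_i - \valuation_i$. Up to item~4 your sketch is essentially the paper's argument.

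For item~5, however, your preliminary per-step comparison is false, and the way you lean on it mischaracterises where the work lies. You claim a far, $p^+$ step yields $t^+_i \geq t_i$ (and a far, $p^-$ step yields $t^+_i \leq t_i$), so that each such step individually contributes at least as much cost to $r^+$ as to $r$. Once $\valuation^+$ has drifted ahead of $\valuation$, this fails. Concretely, take $\eta=0.1$, borders $M_0=0$, $M_1=1$, $p^+=1$, and two consecutive far $p^+$ steps with no reset, starting from $\valuation_0=\valuation^+_0=0$: with $t_0=0.11$ the construction gives $t^+_0=0.9$, so $\valuation_1=0.11$ but $\valuation^+_1=0.9$; then with $t_1=0.74$ we have $\valuation_1+t_1=0.85\in(0.1,0.9)$, the non-degenerate far $p^+$ case applies, and $t^+_1=0.9-0.9=0<t_1$. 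So the $p^+$ step contributes $0$ to $\Weight(r^+)$ but $0.74$ to $\Weight(r)$. The inequality nonetheless holds globally ($0.9 \geq 0.85$), which shows that the far steps are \emph{not} the ones that uniformly ``do the real work'': potential banked at one far step is spent at a later far step, not only at near steps. The paper handles this cleanly by decomposing the play into maximal segments between indices at which $r$ and $r^+$ coincide (no resets, no near-border indices strictly inside a segment), and then proving the single invariant $\Weight(r^+[j+1]) \geq \Weight(r[j+1]) + |\valuation_{j+1}-\valuation^+_{j+1}|$ inductively at \emph{every} step when $(p^-,p^+)=(-1,+1)$, replacing the potential by $\max(\valuation_{j+1}-\valuation^+_{j+1},0)$ when $(p^-,p^+)=(0,+1)$ (symmetrically for $(-1,0)$). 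You anticipated the need for such a potential at the end of your plan; you should drop the per-step sign comparison entirely and carry that invariant from the start, choosing the potential's shape according to the price pair.
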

In the rest of this section, we call the five points of this Lemma
`property~\ref{item:near}', `property~\ref{item:far}', and so forth.
\begin{remark}
  Before starting the proof, notice that in case
  $\transitions(\location_j,a_j)=(\zeta,\emptyset,\location_{j+1})$,
  i.e., the clock is not reset at step $j$, we have
  $\valuation_{j+1}=\valuation_j+t_j$ (and we will also have
  $\valuation^+_{j+1}=\valuation^+_j+t^+_j$). In particular, in that
  case, property~\ref{item:near} implies that if
  $|\valuation_{j+1}-M_k|\leq \eta$, then
  $\valuation^+_{j+1}=\valuation_{j+1}$, whereas
  property~\ref{item:far} implies that if $\valuation_{j+1}\in
  (M_k+\eta,M_{k+1}-\eta)$, then
  $\valuation^+_{j+1}\in\{M_k+\eta,M_{k+1}-\eta\}$.
\end{remark}
\begin{proof}
  All properties are shown by a simultaneous induction on the length
  of the play. All properties are clearly true in case of a play $r$
  reduced to a single state (which is assumed to be $\eta$-close to a
  border). We now consider a play
  $r=(\location_0,\valuation_0),\allowbreak (t_0,a_0),\ldots,
  (\location_{i+1},\valuation_{i+1})$ with $i+1\geq 1$, and prove the
  properties for the play $r^+$ whose construction has been given
  before.
  \begin{enumerate}
  \item The first property is true directly by construction.
  \item For the second property, by induction hypothesis, it is
    sufficient to prove that if $\valuation_i+t_i\in
    (M_k+\eta,M_{k+1}-\eta)$, then
    $\valuation^+_i+t^+_i\in\{M_k+\eta,M_{k+1}-\eta\}$. 

    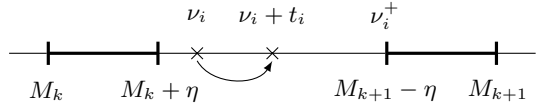
\begin{figure}[tbp]
      \centering
      \begin{tikzpicture}
        \draw(-.5,0) -- (0,0);
        \draw[very thick, |-|](0,0) -- (1.5,0);
        \draw(1,0) -- (4.5,0);
        \draw[very thick, |-|](4.5,0)--(6,0);
        \draw(6,0)--(6.5,0);
        \node at (0,-0.5) {$M_k$};
        \node at (1.5,-0.5) {$M_k+\eta$};
        \node at (4.5,-0.5) {$M_{k+1}-\eta$};
        \node at (6,-0.5) {$M_{k+1}$};
        \node[cross out,inner sep=2pt,draw] at (2,0){};
        \node at (2,0.5) {$\valuation_i$};
        \node[cross out,inner sep=2pt,draw] at (3,0){};
        \node at (3,0.5) {$\valuation_i+t_i$};
        \node at (4.5,0.5) {$\valuation^+_i$};
        \draw[-latex] (2,-.1) to[bend right=60] (3,-.1);
      \end{tikzpicture}
      \caption{Illustration for the proof of
        Lemma~\ref{lem:eta-runs}-\ref{item:far}}
      \label{fig:ex2}
    \end{figure} 
    Suppose first that $\prices(\location_i)=p^-$: then,
    $t_i^+=\max(M_k+\eta-\valuation_i^+,0)$. In case $t_i^+>0$, we
    have $t_i^+=M_{k}+\eta-\valuation^+_i$, hence,
    $\valuation_i^++t_i^+=M_k+\eta$. Otherwise, we have $t_i^+=0$ so
    that $\valuation^+_i+t^+_i=\valuation^+_i$ and
    $\valuation^+_i>M_k+\eta$: we have depicted an example of the
    situation in Fig.~\ref{fig:ex2}. Since $\valuation^+_i$ must be
    $\eta$-close to a border, we have $\valuation^+_i\geq
    M_{k+1}-\eta$. Then, $\valuation^+_i>\valuation_i+t_i\geq
    \valuation_i$. But, $\valuation^+_i$ and $\valuation_i$ are in the
    same region by induction hypothesis of property~\ref{item:sim} (no
    reset has been performed in the previous transition since
    $\valuation_i\neq \valuation^+_i$, so that they are respectively
    equal to $\valuation^+_{i-1}+t^+_{i-1}$ and
    $\valuation_{i-1}+t_{i-1}$). This implies that $\valuation^+_i\in
    [M_{k+1}-\eta,M_{k+1})$ and $\valuation_i\in
    (M_k,M_{k+1}-\eta)$. However, by induction hypothesis, if
    $\valuation_i\in (M_k,M_k+\eta]$, then
    $\valuation^+_i=\valuation_i$ which is forbidden in this
    case. Hence, we have
    $\valuation_i\in(M_k+\eta,M_{k+1}-\eta)$. Hence, by induction
    hypothesis again, we have
    $\valuation^+_i\in\{M_k+\eta,M_{k+1}-\eta\}$, which finally
    implies that $\valuation^+_i=M_{k+1}-\eta$.

    Suppose then that $\prices(\location_i)=p^+$. The result is again
    immediate in case $t^+_i=M_{k+1}-\eta-\valuation_i^+$. Otherwise,
    we have $t^+_i=0$, $\location_i\in\minlocations$, $t_i=0$ and
    $\valuation^+_i= M_k+\eta$. We directly obtain
    $\valuation^+_i+t^+_i=M_k+\eta$ which permits to conclude.
    
  \item We now prove that $r^+$ is indeed a play. The only non-trivial
    property is that $t^+_i$ is a non negative delay, especially when
    $t^+_i=\valuation_i+t_i-\valuation^+_i$ or
    $t^+_i=M_{k+1}-\eta-\valuation^+_i$. 

    Consider first the case
    $t^+_i=\valuation_i+t_i-\valuation^+_i$. It has to be shown that
    if there exists $k$ such that $|\valuation_i+t_i-M_k|\leq \eta$,
    then $\valuation^+_i$ is not greater than
    $\valuation_i+t_i$. Notice first that $\valuation_i\leq
    \valuation_i+t_i$. Hence, if $\valuation_i$ is $\eta$-close to a
    border, then by property~\ref{item:near},
    $\valuation^+_i=\valuation_i\leq \valuation_i+t_i$. Otherwise, we
    know that $\valuation_i\notin [M_k-\eta,M_k+\eta]$, hence, either
    $\valuation_i<M_k-\eta$, or $\valuation_i>M_k+\eta$. Since
    $\valuation_i\leq \valuation_i+t_i\leq M_k+\eta$, we can rule out
    the case $\valuation_i>M_k+\eta$, and conclude that
    $\valuation_i<M_k-\eta$. By property \ref{item:far} (applied to
    $\valuation_{i}^+=\valuation_{i-1}^++t_{i-1}^+$, since no reset
    has just been performed, knowing that $\valuation_i$ is not
    $\eta$-close to $0$), this implies that $\valuation^+_i\leq
    M_k-\eta\leq \valuation_i+t_i$.

    Consider then the case $t^+_i=M_{k+1}-\eta-\valuation^+_i$, which
    holds when $\valuation_i+t_i\in(M_k+\eta,M_{k+1}-\eta)$ and
    $\prices(\location_i)=p^+$. Then, $\valuation_i\leq
    \valuation_i+t_i<M_{k+1}-\eta$. Hence, either $\valuation_i$ is
    $\eta$-close to a border (in particular when the clock has just
    been reset), in which case, by property~\ref{item:near}, we have
    $\valuation^+_i=\valuation_i<M_{k+1}-\eta$, so that
    $t^+_i=M_{k+1}-\eta-\valuation^+_i>0$. Or
    $\valuation_i\in(M_{k'}+\eta,M_{k'+1}-\eta)$ with $k'\leq k$. By
    property \ref{item:far}, this implies that $\valuation^+_i\leq
    M_{k'+1}-\eta\leq M_{k+1}-\eta$, once again implying that
    $t^+_i\geq 0$.

    The fact that $r^+$ stays $\eta$-close to the borders is
    directly implied by properties \ref{item:near} and \ref{item:far}.

  \item Property $r^+\sim r$ is also a direct consequence of
    properties \ref{item:near} and \ref{item:far}.

  \item It only remains to prove that $\Weight(r)\leq
    \Weight(r^+)$. Notice that the weights $\Weight(r)$ and
    $\Weight(r^+)$ can be decomposed as sums of weights of subplays
    that start and end in the same configuration, but with
    intermediate configurations that do no match. Hence, it is
    sufficient to prove the inequality for subplays $r$ and $r^+$ that
    start in the same configuration (at step $j$,
    $\valuation_{j}+t_{j}=\valuation^+_{j}+t^+_{j}$) and that do not
    contain other identical configurations, unless possibly the last
    one. For the sake of
    simplicity, we suppose in the following that $j=0$. In particular,
    we may suppose that $r$ does not contain reset transitions or
    positions $j$ such that $\valuation_j$ is $\eta$-close to
    borders, except possibly the very last one (otherwise, the two
    plays would again contain matching configurations). Since there
    are no resets, we have $\valuation_j=\valuation_{j-1}+t_{j-1}$ and
    $\valuation^+_j=\valuation^+_{j-1}+t^+_{j-1}$ for every $j>0$.

    We now consider separately the possibility of sets $\{p^-,p^+\}$.
    \begin{itemize}
    \item As a first case, consider that $p^-=-1$ and $p^+=+1$. We
      prove by induction over $0\geq j \geq i$ that
      \[\Weight(r^+[j+1])\geq
      \Weight(r[j+1])+|\valuation_{j+1}-\valuation^+_{j+1}|\,.\] For
      the sake of brevity, we omit the weights of the actions in this
      proof, but notice that the same actions occur in $r$ and $r^+$
      since these two plays are equivalent (by property~4).
      \begin{itemize}
      \item {\bf Base case}. If $j=0$, then we have supposed that
        $\valuation_0$ is $\eta$-close to a border, so that
        $\valuation^+_0=\valuation_0$. Let $k$ be such
        that\footnote{Remember that we suppose that there is no
          synchronization for now, so that no valuation in $r$ is
          $\eta$-close to a border.}
        $\valuation_1\in(M_k+\eta,M_{k+1}-\eta)$. Then, if
        $\prices(\location_0)=-1$, $t^+_0=\max
        (M_{k}+\eta-\valuation_0,0)$. However, $\valuation_0\leq
        \valuation_1$ and $\valuation_0$ is $\eta$-close to a
        border, so that $\valuation_0\leq M_k+\eta$. Hence,
        $t_0^+=M_k+\eta-\valuation_0$, which implies
        $\valuation^+_1=\valuation_0+t^+_0=M_k+\eta$. Hence,
        \begin{align*}
          \Weight(r^+[1]) = -(\valuation^+_1-\valuation_0) &=
          \Weight(r[1]) +\valuation_1-\valuation^+_1 \\ &=
          \Weight(r[1])+|\valuation_1-\valuation^+_1|\,.
        \end{align*}
        Consider then the case where $\prices(\location_0)=+1$. Notice
        that we supposed that $\valuation_1$ is not $\eta$-close to a
        border, contrary to $\valuation_0$, so that $t_0>0$. Hence, we
        are sure that $t^+_0 = M_{k+1}-\eta-\valuation_0$. This
        implies that
        $\valuation^+_1=\valuation^+_0+t^+_0=\valuation_0+t^+_0
        =M_{k+1}-\eta > \valuation_1$ and
        \begin{align*}
          \Weight(r^+[1]) = \valuation^+_1-\valuation_0 &=
          \Weight(r[1]) +\valuation^+_1-\valuation_1\\ &=\Weight(r[1])+
          |\valuation_1-\valuation^+_1|\,.
        \end{align*}
      \item {\bf Inductive case}. Let us suppose that the property is
        proved for all indices less than or equal to $j$, and prove it
        for $j+1$. We let $k$ be such that
        $\valuation_{j+1}=\valuation_j+t_j\in(M_k+\eta,M_{k+1}-\eta)$. We
        will distinguish four possible cases depending on
        $\prices(\location_j)$ and the relative order between
        $\valuation^+_{j+1}$ and $\valuation_{j+1}$.

        \begin{enumerate}
        \item We first suppose that $\prices(\location_j)=+1$. Then,
          \begin{align*}
            &\Weight(r^+[j+1]) = \Weight(r^+[j])
            +\valuation^+_{j+1}-\valuation^+_j  \notag\\
            &\hspace{1cm}\geq \Weight(r[j]) + |\valuation_j-\valuation^+_j| +
            \valuation^+_{j+1}-\valuation^+_j\quad \text{(Ind. Hyp.)}\notag
          \end{align*}
          Hence, since $ \Weight(r[j+1]) = \Weight(r[j])
          +(\valuation_{j+1}-\valuation_j)$:
          \begin{align}
            \Weight(r^+[j+1]) & \geq
            \Weight(r[j+1]) -(\valuation_{j+1}-\valuation_j)\notag \\
            &\qquad + |\valuation_j-\valuation^+_j| +
            \valuation^+_{j+1}-\valuation^+_j\,.\label{eq:+1}
          \end{align}
          \begin{enumerate}
          \item In the case where $\valuation^+_{j+1}>
            \valuation_{j+1}$, we have
            $|\valuation_{j+1}-\valuation^+_{j+1}|=
            \valuation^+_{j+1}-\valuation_{j+1}$ so that \eqref{eq:+1}
            can be rewritten
            \begin{multline*}
              \Weight(r^+[j+1]) \geq \Weight(r[j+1]) +
              |\valuation_{j+1}-\valuation^+_{j+1}|\\
              +(\valuation_j-\valuation^+_j)+|\valuation_j-\valuation^+_j|
            \end{multline*}
            which is greater than or equal to $\Weight(r[j+1]) +
            |\valuation_{j+1}-\valuation^+_{j+1}|$ since
            $\valuation_j-\valuation^+_j\geq
            -|\valuation_j-\valuation^+_j|$.

          \item Similarly, in the case where
            $\valuation^+_{j+1}<\valuation_{j+1}$, we have
            $|\valuation_{j+1}-\valuation^+_{j+1}|=
            \valuation_{j+1}-\valuation^+_{j+1}$. Notice that this
            necessarily implies that $\location_i\in\minlocations$ and
            $t_j=0$ and $\valuation^+_j= M_{k}+\eta$: otherwise, we
            would have $t^+_{j}=M_{k+1}-\eta-\valuation^+_j$ and thus
            $\valuation^+_{j+1} = M_{k+1}-\eta>\valuation_{j+1}$ that
            contradicts the hypothesis. In particular, we have
            $t^+_{j}=0$ and $\valuation^+_{j}=\valuation^+_{j+1}<
            \valuation_{j+1}=\valuation_{j}$, so that \eqref{eq:+1}
            becomes
            \begin{align*}
              \Weight(r^+[j+1]) &\geq \Weight(r[j+1]) +
              |\valuation_j-\valuation^+_j| \\ &= \Weight(r[j+1]) +
              |\valuation_{j+1}-\valuation^+_{j+1}| \,.
            \end{align*}
          \end{enumerate}
      
        \item Suppose then that $\prices(\location_j)=-1$. Then, a
          similar calculation gives
          \begin{align*}
            &\Weight(r^+[j+1]) = 
            \Weight(r^+[j])-(\valuation^+_{j+1}-\valuation^+_j)\notag\\
            &\hspace{1cm}\geq \Weight(r[j]) + |\valuation_j-\valuation^+_j| -
            (\valuation^+_{j+1}-\valuation^+_j)
            \quad\text{(Ind. Hyp.)}\notag 
          \end{align*}
          Hence, since $ \Weight(r[j+1]) = \Weight(r[j])
          -(\valuation_{j+1}-\valuation_j)$:
          \begin{align}
            \Weight(r^+[j+1])& \geq \Weight(r[j+1])
            +(\valuation_{j+1}-\valuation_j)  \notag\\
            &\qquad +
            |\valuation_j-\valuation^+_j|
            -(\valuation^+_{j+1}-\valuation^+_j)\,.\label{eq:-1}
          \end{align}
          \begin{enumerate}
          \item Once again, if $\valuation^+_{j+1}<\valuation_{j+1}$,
            we have $|\valuation_{j+1}-\valuation^+_{j+1}|=
            \valuation_{j+1}-\valuation^+_{j+1}$ so that \eqref{eq:-1}
            becomes
            \begin{multline*}
              \Weight(r^+[j+1]) \geq \Weight(r[j+1]) +
              |\valuation_{j+1}-\valuation^+_{j+1}|\\ -
              (\valuation_j-\valuation^+_j)+|\valuation_j-\valuation^+_j|
            \end{multline*}
            which is greater than or equal to $\Weight(r[j+1]) -
            |\valuation_{j+1}-\valuation^+_{j+1}|$ since
            $\valuation_j-\valuation^+_j\leq
            |\valuation_j-\valuation^+_j|$. 

          \item Similarly, if $\valuation^+_{j+1}>\valuation_{j+1}$,
            we know by property~\ref{item:far} that
            $\valuation^+_{j+1}=M_{k+1}-\eta$. In particular, since
            $t^+_j=\max(M_k+\eta-\valuation^+_j,0)$ and
            $\valuation^+_{j+1}=\valuation^+_j+t^+_j\neq M_k+\eta$, we
            know that $t^+_j=0$, and $\valuation^+_j\geq
            M_k+\eta$. This implies
            $\valuation^+_{j+1}=\valuation^+_j=
            M_{k+1}-\eta>\valuation_{j+1}\geq \valuation_j$. Knowing
            that $|\valuation_{j+1}-\valuation^+_{j+1}|=
            \valuation^+_{j+1}-\valuation_{j+1}$, we obtain from
            \eqref{eq:-1}
            \begin{align*}
              \Weight(r^+[j+1]) &\geq \Weight(r[j+1]) +
              |\valuation_{j+1}-\valuation^+_{j+1}|\\
              &\qquad + 2(\valuation_{j+1}-\valuation^+_{j+1})-2
              (\valuation_j-\valuation^+_j)\\
              &= \Weight(r[j+1]) +
              |\valuation_{j+1}-\valuation^+_{j+1}|+
              2(\valuation_{j+1}-\valuation_j) \\
              &\geq \Weight(r[j+1]) +
              |\valuation_{j+1}-\valuation^+_{j+1}|\,.
            \end{align*}
          \end{enumerate}
        \end{enumerate}
      \end{itemize}

      We finally have proved the property by induction. Notice in
      particular that this shows that $\Weight(r^+[j+1])\geq
      \Weight(r[j+1])$ for every $j$ with
      $\valuation^+_{j+1}\neq\valuation_{j+1}$. To conclude the proof
      of $\Weight(r^+)\geq \Weight(r)$, it remains to deal with the
      case of a possible last transition ending with
      $\valuation^+_{i+1}=\valuation_{i+1}$. Unless $i=0$, in which
      case we have $\Weight(r^+)=\Weight(r)$, we know by hypothesis
      that $\valuation^+_{i}\neq \valuation_{i}$. By the previous
      property, we have $\Weight(r^+[i])\geq
      \Weight(r[i])+|\valuation_i-\valuation^+_i|$. Moreover,
      $\Weight(r^+)=\Weight(r^+[i])+
      \prices(\location_i)(\valuation^+_{i+1}-\valuation^+_i)$ and
      $\Weight(r)=\Weight(r[i])+
      \prices(\location_i)(\valuation_{i+1}-\valuation_i)$. In the
      overall (using the fact that
      $\valuation^+_{i+1}=\valuation_{i+1}$), we get
      \[\Weight(r^+)\geq \Weight(r)
      +\prices(\location_i)(\valuation_i-\valuation^+_i)+
      |\valuation_i-\valuation^+_i|\,.\] In all cases, we verify that
      $-\prices(\location_i)(\valuation_i-\valuation^+_i)\leq
      |\valuation_i-\valuation^+_i|$, so that we have proved that
      $\Weight(r^+)\geq \Weight(r)$.

    \item We now consider the case where $p^-=0$ and $p^+=+1$ (the
      case $p^-=-1$ and $p^+=0$ is very similar, and not explained in
      details here). We prove another inequality by induction over
      $0\geq j \geq i$, namely that
      \[\Weight(r^+[j+1])\geq
      \Weight(r[j+1])+\max(\valuation_{j+1}-\valuation^+_{j+1},0)\,.\]
      The proof is very similar to the previous case, and we conclude
      as previously. \qed
    \end{itemize}
  \end{enumerate}
\end{proof}

We now go to the proof of Lemma~\ref{lem:uniform}. In case,
$\uppervalue(s)=-\infty$ the Lemma is trivially true. We first
consider the case $\uppervalue(s)<+\infty$. Let
$\minstrategy'\in\uniformminstrategies^\eta$. We now explain how to
construct a strategy $\minstrategy\in\minstrategies$ such that for all
states $s$
\[\sup_{\maxstrategy'\in\uniformmaxstrategies^\eta}
\Weight(\outcomes(s,\minstrategy',\maxstrategy'))\geq
\sup_{\maxstrategy\in\maxstrategies}
\Weight(\outcomes(s,\minstrategy,\maxstrategy))\,.\] To prove such an
inequality, we will consider any strategy
$\maxstrategy\in\maxstrategies$ and construct a strategy
$\maxstrategy'\in\uniformmaxstrategies^\eta$ such that
\[\Weight(\outcomes(s,\minstrategy',\maxstrategy'))\geq
\Weight(\outcomes(s,\minstrategy,\maxstrategy))\,.\]

Strategy $\minstrategy$ follows $\minstrategy'$ in case of plays
staying $\eta$-close to borders. We must however extend it to deal
with the other plays faithfully. Let
$r=(\location_0,\valuation_0),\allowbreak (t_0,a_0),\ldots,
(\location_{i},\valuation_{i})$ be any finite play ending in a
location $\location_{i}$ of player 1, and
$r^+=(\location^+_0,\valuation^+_0),(t^+_0,a^+_0), \ldots,\allowbreak
(\location^+_{i},\valuation^+_{i})$ the play constructed as before. By
Lemma~\ref{lem:eta-runs}, we know that $r^+$ is a play that stays
$\eta$-close to borders. Hence, $\minstrategy'(r^+)=(t'_i,a)$, for
some $t'_i \in \Rpos$, with $\valuation^+_i+t'_i$ being $\eta$-close
to a border. We let $t_i= \max(\valuation^+_i+t'_i-\valuation_i,0)$
and $\minstrategy(r) = (t_i,a)$. Let $\tilde r$ (respectively, $r'$)
be the play $r$ (respectively, $r^+$) extended with the step
prescribed by $\minstrategy$ (respectively, $\minstrategy'$). Then, we
prove that $r'$ matches the construction above starting from the run
$\tilde r$, i.e., $\tilde r^+=r'$. By construction, we only have to
verify that the value of $t'_i$ is consistent with the previous
constructions, i.e., $t'_i=t^+_i$.

\begin{lemma}\label{lem:consistency}
  We have $t'_i=t^+_i$.
\end{lemma}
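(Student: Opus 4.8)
The plan is to unwind the two definitions that together prescribe $t'_i$ and $t^+_i$, and to check that the case distinction governing $t^+_i$ (the three-way split in the construction of $r^+$: ``$\eta$-close'', ``far with price $p^-$'', ``far with price $p^+$'') produces exactly the same delay as the one obtained by first applying $\minstrategy'$ to $r^+$ (which yields a point $\eta$-close to a border, since $r^+$ stays $\eta$-close to borders by Lemma~\ref{lem:eta-runs}) and then truncating. Recall that by definition $\minstrategy'(r^+)=(t'_i,a)$ with $\valuation^+_i+t'_i$ being $\eta$-close to some $M_k$, and that $\minstrategy(r)=(t_i,a)$ with $t_i=\max(\valuation^+_i+t'_i-\valuation_i,0)$. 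The extended play $\tilde r$ has as its last timed action $(t_i,a)$, so $\valuation_i+t_i = \max(\valuation^+_i+t'_i,\valuation_i)$; when we now build $\tilde r^+$ from $\tilde r$, the quantity $\valuation_i+t_i$ is exactly the value to which the three-case construction is applied, and $t^+_i$ is defined from it. So the claim reduces to: running the $r^+$-construction on the target point $\valuation_i+t_i=\max(\valuation^+_i+t'_i,\valuation_i)$ returns precisely $t'_i$ (i.e.\ lands on $\valuation^+_i+t'_i$).

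The key steps, in order: first I would observe that $\valuation^+_i+t'_i$ is $\eta$-close to a border, so that $\valuation_i+t_i=\max(\valuation^+_i+t'_i,\valuation_i)$ is also $\eta$-close to a border whenever $\valuation_i\le \valuation^+_i+t'_i$ — which is the generic situation, since $t_i$ is defined exactly to reach $\valuation^+_i+t'_i$ when that is $\ge\valuation_i$. In that case the first rule of the $r^+$-construction applies to step $i$ of $\tilde r$, giving $t^+_i = (\valuation_i+t_i)-\valuation^+_i = (\valuation^+_i+t'_i)-\valuation^+_i = t'_i$, as desired. The remaining case is $\valuation^+_i+t'_i < \valuation_i$, where $t_i=0$; here I need to show that the construction still assigns $t^+_i=0$, so that $t^+_i=\valuation_i+t_i-\valuation^+_i$ would have to match $t'_i$, forcing $t'_i=0$ as well. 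To handle this I would use that $\valuation_i$ and $\valuation^+_i$ are region-equivalent (property~\ref{item:sim} of Lemma~\ref{lem:eta-runs}, since no reset has just occurred) together with the $\eta$-region-uniformity of $\minstrategy'$: uniformity of $\minstrategy'$ on $r^+$ forces $\valuation^+_i+t'_i$ to be $\eta$-close to a border and in fact, combined with $\valuation_i\ge \valuation^+_i$ and same-region-ness, pins down the delay $t'_i=0$ (the strategy is already $\eta$-close to a border and cannot move backwards), and then the ``$\eta$-close'' case of the construction applied to $\valuation_i+0=\valuation_i$ returns $t^+_i=0=t'_i$.

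The main obstacle I expect is the bookkeeping around the degenerate subcase $t_i=0$: I must argue, using region-equivalence of $\valuation_i$ and $\valuation^+_i$ plus the precise shape of $\eta$-region-uniform strategies (in particular the clause in the definition of $\uniformminstrategies^\eta$ that if $\valuation^+_i+t'_i\in(M_k,M_{k+1})$ then it lies in $(M_k,M_k+\eta]\cup[M_{k+1}-\eta,M_{k+1})$), that the only consistent value of $t'_i$ is $0$ and that the construction of $r^+$ from $\tilde r$ then indeed falls into its first (``$\eta$-close'') branch rather than one of the ``far'' branches. Everything else is a direct substitution into the two definitions. Once Lemma~\ref{lem:consistency} is in place, $\tilde r^+=r'$ follows and the inductive construction of $\minstrategy$ is coherent, which is what is needed to complete the proof of Lemma~\ref{lem:uniform}.
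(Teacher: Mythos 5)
Your Case~1 (when $\valuation_i\le \valuation^+_i+t'_i$) matches the paper exactly. The gap is in Case~2, where you misidentify which branch of the $r^+$-construction fires. You write that ``the `$\eta$-close' case of the construction applied to $\valuation_i+0=\valuation_i$ returns $t^+_i=0$''. But the paper establishes precisely that in this case $\valuation_i$ \emph{cannot} be $\eta$-close to any border: since $\valuation_i>\valuation^+_i$, if $\valuation_i$ were $\eta$-close to a border then Lemma~\ref{lem:eta-runs}-\ref{item:near} (applied to the previous step, using $\valuation_i=\valuation_{i-1}+t_{i-1}$) would force $\valuation^+_i=\valuation_i$, a contradiction. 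Moreover, had the $\eta$-close rule applied, it would set $t^+_i=\valuation_i+t_i-\valuation^+_i=\valuation_i-\valuation^+_i>0$, which is \emph{not} the value~$0$ you need. So the rule you invoke both does not apply and would give the wrong answer if it did.

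The correct route is the one taken by the paper. From $\valuation^+_i<\valuation_i$ and Lemma~\ref{lem:eta-runs}-\ref{item:far} one deduces that $\valuation^+_i=M_k+\eta$ and $\valuation_i\in(M_k+\eta,M_{k+1}-\eta)$; from the $\eta$-closeness of $\valuation^+_i+t'_i$ together with $M_k+\eta=\valuation^+_i\le\valuation^+_i+t'_i<\valuation_i<M_{k+1}-\eta$ one gets $t'_i=0$ (your argument for this half is fine). Then, to see that the construction on $\tilde r$ also yields $t^+_i=0$, one is in a ``far'' branch with $\valuation_i+t_i=\valuation_i\in(M_k+\eta,M_{k+1}-\eta)$: if $\prices(\location_i)=p^-$ this gives $t^+_i=\max(M_k+\eta-\valuation^+_i,0)=0$, and if $\prices(\location_i)=p^+$ one is exactly in the dedicated ``otherwise'' subcase ($\location_i\in\minlocations$ because $\minstrategy$ is being applied, $t_i=0$, $\valuation^+_i=M_k+\eta$) that the construction carves out precisely so that $t^+_i=0$ here. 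That subcase is not a consequence of $\eta$-closeness; it is a separate clause designed to make this lemma go through, and your proof does not invoke it.
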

\begin{proof}
  In case $\valuation_i\leq \valuation^+_i+t'_i$, since
  $t_i=\max(\valuation^+_i+t'_i-\valuation_i,0)$, we have
  $\valuation_i+t_i=\valuation^+_i+t'_i$ which is $\eta$-close to
  borders, and $t'_i=\valuation_i+t_i-\valuation^+_i$ that fits
  with the definition of $t^+_i$ in $\tilde r^+$: hence $t'_i=t^+_i$
  in that case.

  Otherwise, we have $\valuation_i>\valuation^+_i+t'_i$ and
  $t_i=0$. In particular, $\valuation_i>\valuation^+_i$ so that
  $\valuation_i$ cannot be $\eta$-close to a border (by
  Lemma~\ref{lem:eta-runs}-\ref{item:near}). By
  Lemma~\ref{lem:eta-runs}-\ref{item:far}, since
  $\valuation^+_i<\valuation_i$, there exists $k$ such that
  $\valuation^+_i=M_k+\eta$ and
  $\valuation_i\in(M_k+\eta,M_{k+1}-\eta)$. We are thus in the
  situation $\location_i\in\minlocations$ and $t_i=0$ and
  $\valuation^+_i= M_k+\eta$ that prescribes a choice of the next time
  delay $t^+_i=0$. Hence, we must prove that $t'_i=0$. It is
  necessarily the case, since $M_k+\eta=\valuation^+_i\leq
  \valuation^+_i+t'_i<\valuation_i$ with $\valuation^+_i+t'_i$ being
  $\eta$-close to a border and
  $\valuation_i\in(M_k+\eta,M_{k+1}-\eta)$. Finally, we obtain
  $t'_i=0=t^+_i$. \qed
\end{proof}

We now consider any strategy $\maxstrategy\in\maxstrategies$, and
construct a strategy $\maxstrategy'\in\uniformmaxstrategies^\eta$ such
that $\outcomes(s,\minstrategy',\maxstrategy')=
\outcomes(s,\minstrategy,\maxstrategy)^+$ for every state $s$
$\eta$-close to a border. From Lemma~\ref{lem:eta-runs}, we will
then get
\[\Weight(\outcomes(s,\minstrategy',\maxstrategy'))\geq
\Weight(\outcomes(s,\minstrategy,\maxstrategy))\,,\] which will enable
us to conclude. We assume $\outcomes(s,\minstrategy,\maxstrategy)^+=
(\location^+_0,\valuation^+_0),(t^+_0,a^+_0),\allowbreak \ldots,
(\location^+_{n},\valuation^+_{n}),\ldots$ with
$s=(\location^+_0,\valuation^+_0)$ $\eta$-close to a border. Then,
we first define $\maxstrategy'$ over the finite plays
$\outcomes(s,\minstrategy,\maxstrategy)^+[n]$ with $n\in \N$, by
letting
\[\maxstrategy'(\outcomes(s,\minstrategy,\maxstrategy)^+[n]) =
(t^+_n,a^+_n)\,.\] Notice first that this strategy verifies
$\outcomes(s,\minstrategy',\maxstrategy')[n]=
\outcomes(s,\minstrategy,\maxstrategy)[n]^+$ for every state $s$
$\eta$-close to a border, by induction on $n\in\N$. In fact, in case
$\outcomes(s,\minstrategy,\maxstrategy)^+[n]$ ends with a state of
player 1, the equation holds by construction of $\minstrategy$, and in
case it ends with a state of player 2, by construction of
$\maxstrategy'$.

Once built on these finite plays, it is possible to extend
$\maxstrategy'$ as an $\eta$-region-uniform strategy defined over
every play: in particular, if
$\outcomes(s,\minstrategy,\maxstrategy)^+[n]\sim_\eta
\outcomes(s',\minstrategy,\maxstrategy)^+[n]$ (with $s$ and $s'$
different states $\eta$-close to a border), we have that
$\maxstrategy'(\outcomes(s,\minstrategy,\maxstrategy)^+[n])
\sim_\eta\maxstrategy'(\outcomes(s',\minstrategy,\maxstrategy)^+[n])$
(induced by Lemma~\ref{lem:eta-runs}-4) validating the definition of
$\eta$-region-uniform strategies.

This concludes the proof of $\uppervalue(s)\leq
\uniformuppervalue^\eta(s)$ in case $\uppervalue(s)<+\infty$.

\medskip Finally, the case $\uppervalue(s)=+\infty$, corresponds to
two possible situations: either player~2 has a way to ensure that the
goal is never reached, or he cannot have such a guarantee, but still
is able to make the price go bigger and bigger, i.e. he has a family
of strategies that do not forbid from reaching the goal but ensure a
price which is not bounded over the family\footnote{Indeed, we will
  show in Appendix~\ref{app:lem:infinite} that only the first
  alternative is possible.}. It only remains to prove that player 2
can do it so with $\eta$-region-uniform strategies too. Let
$\minstrategy\in\uniformminstrategies^\eta$ be an
$\eta$-region-uniform strategy for player 1. We know that
\[\sup_{\maxstrategy\in\maxstrategies}
\Weight(\outcomes(s,\minstrategy,\maxstrategy))=+\infty\,.\] The first
case corresponds to the one where there exists a strategy
$\maxstrategy\in\maxstrategies$ such that
$\Weight(\outcomes(s,\minstrategy,\maxstrategy))=+\infty$, i.e., in
this outcome, the goal is not reached. As previously, it is possible
to reconstruct from $\maxstrategy$ an $\eta$-region-uniform strategy
$\maxstrategy'$ achieving the very same goal (notice that the goal is
definable with regions): the only difference is the fact that
$\maxstrategy'$ must now mimic an infinite number of prefixes since
the outcome is no longer finite. The second case finally corresponds
to the one where there is no strategy $\maxstrategy\in\maxstrategies$
such that
$\Weight(\outcomes(s,\minstrategy,\maxstrategy))=+\infty$. We
construct as previously a strategy $\minstrategy\in\maxstrategies$ for
player 1 from strategy $\minstrategy'$. From the fact that
$\sup_{\maxstrategy\in\maxstrategies}
\Weight(\outcomes(s,\minstrategy,\maxstrategy))>M$, we know the
existence of a strategy $\maxstrategy\in\maxstrategies$ so that
$\Weight(\outcomes(s,\minstrategy,\maxstrategy))>M$. Since this price
is finite by hypothesis, the previous construction allows us to obtain
a region-uniform strategy $\maxstrategy'$ verifying:
\[\Weight(\outcomes(s,\minstrategy',\maxstrategy')) \geq
\Weight(\outcomes(s,\minstrategy,\maxstrategy))>M\] for all $M \in
\R$. This proves that $\uniformuppervalue(s)=+\infty$.

\subsection{Proof of Lemma~\ref{lem:convergent}}
\label{app:lem:convergent}

The proof of Lemma~\ref{lem:convergent} is a refinement of the proof
of Lemma~\ref{lem:uniform} (where, moreover, the roles of both players
are switched). To avoid the divergence phenomenon of
Example~\ref{ex:convergence}, we restrict our attention to plays
staying $\eta$-close to borders, that moreover jump closer and close
to borders. More formally, a play
$r=(\location_0,\valuation_0),(t_0,a_0),\ldots$ is said to be
\emph{$\eta$-convergent} if for all $i\geq 0$, there exists $k$ such
that either $|\valuation_i+t_i - M_k|\leq \eta/2^{i+1}$, or $t_i=0$
and $\valuation_i\in (M_k + \eta/2^{i+1},M_k+\eta]$. Notice in
particular that $\eta$-convergent runs stay $\eta$-close to
borders. Unfortunately, it is not possible to define
$\eta$-convergent runs as runs such that the first property
($|\valuation_i+t_i - M_k|\leq \eta/2^{i+1}$) always holds since it
would forbid a player to delay $0$ time units when, in the round $i$,
its valuation is $\valuation_i\in (M_k+\eta/2^{i+1},M_k+\eta]$. The
second property is there to fix this issue. 

We first study more precisely the relationship between general plays
and $\eta$-convergent plays, like we did for runs staying $\eta$-close
to borders. More precisely, we now explain how to construct from
any finite play $r$, a play $r^-$ such that
\begin{enumerate*}[label=(\emph{\roman*})]
\item $r^-$ is an $\eta$-convergent play;
\item $r$ and $r^-$ are region-equivalent; and 
\item $\Weight(r^-)\leq \Weight(r)$.
\end{enumerate*}
Moreover, the construction will trivially verify that if $r'$ and $r$
are two plays that coincide on their prefix of length $n$, i.e.,
$r'[n]=r[n]$, then $(r')^-[n]=r^-[n]$. Indeed, the construction is by
induction on the length of the play, and very similar to the
construction of $r^+$ in the previous section. The main difference
with the case of $r^+$, apart from the fact that we look for a play
with a smaller weight rather than a greater weight, belongs in the
fact that $r^-$ must jump closer and closer to the borders.

Henceforth, let $r=(\location_0,\valuation_0),(t_0,a_0),\ldots,
(\location_{i+1},\valuation_{i+1})$ and suppose that $\valuation_0$ is
$\eta$-close to some $M_k$. We construct a play
$r^-=(\location^-_0,\valuation^-_0),(t^-_0,a^-_0),
\ldots,(\location^-_{i+1},\valuation^-_{i+1})$. In case $i+1=0$, we
simply let $\location^-_0=\location^+_0=\location_0$ and
$\valuation^-_0=\valuation^+_0=\valuation_0$. Otherwise, we consider
the play $r^-$ constructed by induction up to its configuration of
index $i$, and we now explain how to construct the next timed action
$(t^-_{i},a^-_i)$. First, we let $a^-_{i}=a_{i}$. Then, we distinguish
between three cases:
\begin{itemize}
\item if there exists $k$ such that $|\valuation_i+t_i-M_k|\leq
  \eta/2^{i+1}$, then we let $t^-_i= \valuation_i+t_i-\valuation^-_i$;
\item if there exists $k$ such that $\valuation_i+t_i\in
  (M_k+\eta/2^{i+1},M_{k+1}-\eta/2^{i+1})$ and
  $\prices(\location_i)=p^+$, then we let
  $t^-_i=\max(M_k+\eta/2^{i+1}-\valuation^-_i,0)$;
\item if there exists $k$ such that $\valuation_i+t_i\in
  (M_k+\eta/2^{i+1},M_{k+1}-\eta/2^{i+1})$ and
  $\prices(\location_i)=p^-$, then there are two cases. In case
  $\location_i\in\minlocations$ or $t_i>0$ or $\valuation^-_i>
  M_{k}+\eta$, in which case we let
  $t^-_i=M_{k+1}-\eta/2^{i+1}-\valuation^-_i$. Otherwise, i.e., if
  $\location_i\in\maxlocations$ and $t_i=0$ and $\valuation^-_i\leq
  M_{k}+\eta$, then we let $t^-_i=
  \max(M_k+\eta/2^{i+1}-\valuation^-_i,0)$.
\end{itemize}

Once again, we will verify in the next lemma that $t^-_i$ is always
non-negative, ensuring that we indeed construct a valid play. Once
defined $t^-_i$, we let $\location^-_{i+1}=\location_{i+1}$, and
valuation $\valuation^-_{i+1}$ is defined to be consistent with the
semantics $\sem\arena$ of the game.

\begin{lemma}\label{lem:eta-convergent-runs}
  Let $r$ be a play that starts from some state $\eta$-close to a
  border. Then, the play $r^-$ constructed before verifies the following
  properties:
  \begin{enumerate}
  \item\label{item:near-convergent} if for some $j$, there exists $k$
    such that $|\valuation_j+t_j-M_k|\leq \eta/2^{j+1}$, then
    $\valuation^-_j+t^-_j=\valuation_j+t_j$;
  \item\label{item:far-convergent} if for some $j$, there exists $k$
    such that $\valuation_j+t_j\in
    (M_k+\eta/2^{j+1},M_{k+1}-\eta/2^{j+1})$, then
    \begin{enumerate}
    \item $\valuation^-_j+t^-_j\in\{M_k+\eta/2^{j+1},M_{k+1}-\eta/2^{j+1},
      \valuation^-_{j}\}$;
    \item $\valuation^-_j+t^-_j\leq M_{k+1}-\eta/2^{j+1}$;
    \item if $t_j^-=0$ and $\valuation^-_j\leq M_k+\eta$, then
      $\valuation^-_j\leq \valuation_j+t_j$;
    \end{enumerate}
  \item $r^-$ is an $\eta$-convergent play;
  \item $r^-\sim r$;
  \item $\Weight(r^-)\leq \Weight(r)$.
  \end{enumerate}
\end{lemma}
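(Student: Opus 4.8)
The plan is to establish all five properties by a single simultaneous induction on the length of the play, in the same spirit as the proof of Lemma~\ref{lem:eta-runs}, but with the inequality on weights reversed, the roles of the two players exchanged, and the shrinking window $\eta/2^{j+1}$ replacing the fixed window $\eta$. The base case, a play reduced to a single state that is $\eta$-close to a border, is trivial since $r^-$ then coincides with $r$. For the inductive step one fixes $r=(\location_0,\valuation_0),(t_0,a_0),\ldots,(\location_{i+1},\valuation_{i+1})$ together with the play $r^-$ already built up to index~$i$, and analyses the construction of $(t^-_i,a^-_i)$ through its three defining cases (near; far with $\prices(\location_i)=p^+$; far with $\prices(\location_i)=p^-$, the latter itself split into a ``main'' and a ``fix'' subcase).

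Property~\ref{item:near-convergent} holds immediately by the first case of the construction. For Property~\ref{item:far-convergent}, in the $p^+$ far case one has $t^-_i=\max(M_k+\eta/2^{i+1}-\valuation^-_i,0)$, so $\valuation^-_i+t^-_i$ is either $M_k+\eta/2^{i+1}$ or $\valuation^-_i$; in the $p^-$ far case one gets $\valuation^-_i+t^-_i=M_{k+1}-\eta/2^{i+1}$ in the main subcase and again $M_k+\eta/2^{i+1}$ or $\valuation^-_i$ in the fix subcase. Part~(a) is then read off directly; part~(b) follows since $\valuation^-_i$ --- being the endpoint of the previous step or a reset thereof --- already obeys the bound by the induction hypothesis on part~(b), while the points $M_k+\eta/2^{i+1}$ and $M_{k+1}-\eta/2^{i+1}$ trivially satisfy it. Part~(c) is the delicate point: when $t^-_i=0$ while $\valuation^-_i\le M_k+\eta$, one must show $\valuation^-_i$ did not overshoot $\valuation_i+t_i$, and this requires unwinding how $\valuation^-_i$ was produced at step $i-1$ --- either it equals $\valuation_i$ (when the previous move was a near move, so no overshoot), or it was pushed leftwards to a value of the form $M_k+\eta/2^{i}$, in which case the far condition at step $i-1$ forces $\valuation_i>M_k+\eta/2^{i}$ and hence $\valuation^-_i\le\valuation_i\le\valuation_i+t_i$ --- together with the region-equivalence inherited from Property~4 and the fact that the single clock, when reset, is sent to $0$ (a case handled separately and trivially).

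Properties~3 and~4 then follow from Properties~\ref{item:near-convergent} and~\ref{item:far-convergent} as in Lemma~\ref{lem:eta-runs}: the only non-obvious delays, $t^-_i=\valuation_i+t_i-\valuation^-_i$ and $t^-_i=M_{k+1}-\eta/2^{i+1}-\valuation^-_i$, are non-negative --- the former exactly as before, the latter by part~(b) of the preceding step --- so $r^-$ is a genuine play; it stays $\eta$-convergent by the very form of its endpoints; and $r^-\sim r$ since $\valuation^-_j+t^-_j$ and $\valuation_j+t_j$ are at every step either equal or both strictly inside the same open interval $(M_k,M_{k+1})$, with resets copied verbatim. Finally, for Property~5 one decomposes $r$ and $r^-$ into maximal subplays sharing their first and last configurations but disagreeing on the intermediate ones, reducing (with $j=0$, no resets, and no intermediate valuation $\eta$-close to a border) to a strengthened inequality proved by induction on $j$: $\Weight(r^-[j+1])\le\Weight(r[j+1])-|\valuation_{j+1}-\valuation^-_{j+1}|$ when $\{p^-,p^+\}=\{-1,+1\}$, and $\Weight(r^-[j+1])\le\Weight(r[j+1])-\max(\valuation^-_{j+1}-\valuation_{j+1},0)$ when $\{p^-,p^+\}=\{0,1\}$ (symmetrically for $\{-1,0\}$); the case split on $\prices(\location_j)$ and on the order of $\valuation^-_{j+1}$ and $\valuation_{j+1}$ mirrors the one in the proof of Lemma~\ref{lem:eta-runs}, with the extra $p^-$ ``fix'' subcase contributing a weight-free step ($t^-_j=0$) that keeps the induction alive. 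I expect the main obstacle to be the bookkeeping around part~(c) of Property~\ref{item:far-convergent}: reconciling the tighter target window $\eta/2^{j+1}$ with the $\eta/2^{j}$-loose endpoint inherited from the previous step is precisely why the construction needs its two-subcase ``fix'' clause and why part~(a) must permit the value $\valuation^-_j$ itself, and checking that these choices stay mutually consistent throughout the induction is where the care is needed.
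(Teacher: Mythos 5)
Your proposal tracks the paper's proof quite closely: a simultaneous induction over the length of the play, Property~\ref{item:near-convergent} immediate from the construction, Properties~\ref{item:far-convergent}(a)--(c) from the same case analysis on the three defining clauses of $t^-_i$ (with the caveat that at step $i$ the induction hypothesis only controls $\valuation^-_i$ up to the looser window $\eta/2^i$, not $\eta/2^{i+1}$), Properties 3 and 4 reduced to non-negativity of $t^-_i$ together with Properties~\ref{item:near-convergent} and~\ref{item:far-convergent}, and Property 5 proved by splitting into maximal non-synchronised subplays and strengthening the weight inequality with a $j$-indexed correction term.

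The one concrete error is the form of that correction term when $\{p^-,p^+\}=\{0,1\}$. You wrote $\Weight(r^-[j+1])\le\Weight(r[j+1])-\max(\valuation^-_{j+1}-\valuation_{j+1},0)$, but the correct invariant (and the one the paper uses) is $\Weight(r^-[j+1])\le\Weight(r[j+1])-\max(\valuation_{j+1}-\valuation^-_{j+1},0)$; you have the $\{0,1\}$ and $\{-1,0\}$ cases swapped. Your version fails immediately: take $\valuation_0=0$, $\prices(\location_0)=0=p^-$, a far move with $t_0>0$ and $\valuation_1\in(\eta/2,M_1-\eta/2)$, so the main subcase sets $\valuation^-_1=M_1-\eta/2>\valuation_1$. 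Both plays accumulate the same price over this step (the location has rate $0$, label prices coincide), so $\Weight(r^-[1])=\Weight(r[1])$, yet your bound would require $\Weight(r^-[1])\le\Weight(r[1])-(\valuation^-_1-\valuation_1)<\Weight(r[1])$. More structurally, at the final synchronising step ($\valuation^-_{i+1}=\valuation_{i+1}$, writing $\delta_i=\valuation_i-\valuation^-_i$) one derives $\Weight(r^-)-\Weight(r)\le -\phi(\delta_i)+\prices(\location_i)\,\delta_i$, and closing this for $\prices(\location_i)=p^+=1$ forces $\phi(\delta)\ge\delta$; your $\phi(\delta)=\max(-\delta,0)$ violates that whenever $\delta_i>0$. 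Once the two formulas are swapped, your case split on $\prices(\location_j)$ and on the sign of $\valuation_{j+1}-\valuation^-_{j+1}$ goes through as in the paper.
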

\begin{remark}
  As for Lemma~\ref{lem:uniform}, notice that if the clock is not
  reset at step $j$, we have $\valuation_{j+1}=\valuation_j+t_j$, and
  $\valuation^-_{j+1}=\valuation^-_j+t^-_j$. In particular, in that
  case, property~\ref{item:near-convergent} implies that if
  $|\valuation_{j+1}-M_k|\leq \eta/2^{j+1}$, then
  $\valuation^-_{j+1}=\valuation_{j+1}$. Similarly,
  property~\ref{item:far-convergent} implies that if
  $\valuation_{j+1}\in (M_k+\eta/2^{j+1},M_{k+1}-\eta/2^{j+1})$, then
  $\valuation^-_{j+1}\in \{M_k+\eta/2^{j+1},M_{k+1}-\eta/2^{j+1},
  \valuation^-_{j}\}$. Moreover, $\valuation^-_{j+1}\leq
  M_{k+1}-\eta/2^{j+1}$ and if $t_j^-=0$ and $\valuation^-_j\leq
  M_k+\eta$, then $\valuation^-_j\leq \valuation_{j+1}$. Properties
  ($b$) and ($c$) will be useful in the proof by induction of
  subsequent properties.
\end{remark}

\begin{proof}
  All properties are shown by induction on the length of the play. All
  properties are clearly true in case of a play $r$ reduced to a
  single state (which is assumed to be $\eta$-close to a border). We
  now consider a play $r=(\location_0,\valuation_0),\allowbreak
  (t_0,a_0),\ldots, (\location_{i+1},\valuation_{i+1})$ with $i+1\geq
  1$ and we prove the properties for the play $r^-$ constructed
  before.
  \begin{enumerate}
  \item The first property is true directly by construction.
  \item For the second property, by induction hypothesis, it is
    sufficient to prove the property for $j=i$.
    Hence, suppose that $\valuation_i+t_i\in
    (M_k+\eta/2^{i+1},M_{k+1}-\eta/2^{i+1})$.

    In case $\prices(\location_i)=p^+$, we have
    $t_i^-\in\{M_{k}+\eta/2^{i+1}-\valuation^-_i,0\}$. In case
    $\prices(\location_i)=p^-$, we have $t_i^-
    \in\{M_{k}+\eta/2^{i+1}-\valuation^-_i,
    M_{k+1}-\eta/2^{i+1}-\valuation_i^-,0\}$.  This implies that
    $\valuation^-_i+t^-_i\in\{M_k+\eta/2^{i+1},M_{k+1}-\eta/2^{i+1},
    \valuation^-_{i}\}$, i.e., property~($a$). Property~($b$), namely
    $\valuation^-_i+t^-_i\leq M_{k+1}-\eta/2^{i+1}$, needs only to be
    proved when $\valuation^-_i+t^-_i=\valuation^-_i$ (otherwise the
    property is trivially verified since
    $M_k+\eta/2^{i+1}<M_{k+1}-\eta/2^{i+1}$). In that case, there are
    two possibilities. If $\valuation_i$ is $\eta/2^i$-close to
    borders, since $\valuation_i\leq \valuation_i+t_i$, we have
    $\valuation_i\leq M_{k+1}-\eta/2^{i+1}$, and by
    property~\ref{item:near-convergent}, $\valuation^-_i=\valuation_i
    \leq M_{k+1}-\eta/2^{i+1}$ (as $\valuation^-_i=\valuation^-_{i-1}
    + t^-_{i-1}=\valuation_{i-1} + t_{i-1}=\valuation_i$). If
    $\valuation_i$ is not $\eta/2^i$-close to borders, then
    $\valuation_i\in(M_{k'}+\eta/2^i,M_{k'+1}-\eta/2^i)$, with $k'\leq
    k$. By induction, $\valuation^-_i\leq M_{k'+1}-\eta/2^i$, so that
    $\valuation^-_i+t^-_i=\valuation^-_i\leq M_{k+1}-\eta/2^{i+1}$.

    Finally, let us prove property~($c$). Assume that $t_i^-=0$ and
    $\valuation_i^-\leq M_k+\eta$. We now prove that
    $\valuation_i^-\leq \valuation_i$. Whatever the value of
    $\prices(\location_i)$, we have that $\valuation^-_i\geq
    M_k+\eta/2^{i+1}$. There are then four cases.
    \begin{itemize}
    \item In a first case, we have $\valuation_i\leq M_k$. Whatever
      $\valuation_i$ is $\eta/2^i$-close to a border or not, we
      obtain (by property~\ref{item:near-convergent} or by induction),
      that $\valuation^-_i\leq M_k\leq \valuation_i+t_i$. 
    \item In a second case, we have $\valuation_i\in
      (M_k,M_k+\eta/2^i]$. By property~\ref{item:near-convergent}, we
      deduce that $\valuation^-_i=\valuation_i\leq
      \valuation_i+t_i$.
    \item The third case corresponds to
      $\valuation_i\in(M_k+\eta/2^{i},M_{k+1}-\eta/2^i)$, which
      implies that $\valuation_{i-1}+t_{i-1}=
      \valuation_i\in(M_k+\eta/2^i,M_{k+1}-\eta/2^i)$. By induction,
      we obtain that $\valuation^-_i=\valuation^-_{i-1}+t^-_{i-1}\in
      \{M_k+\eta/2^i,M_{k+1}-\eta/2^i,\valuation^-_{i-1}\}$. If
      $\valuation^-_i=M_k+\eta/2^i<\valuation_i\leq\valuation_i+t_i$,
      we conclude directly. The case
      $\valuation^-_i=M_{k+1}-\eta/2^i>M_k+\eta$ leads to a
      contradiction. Finally, if $\valuation^-_i=\valuation^-_{i-1}$,
      since we assume $\valuation^-_{i-1}\leq M_k+\eta$, we obtain by
      induction that $\valuation^-_{i-1}\leq
      \valuation_{i-1}+t_{i-1}=\valuation_i\leq \valuation_i+t_i$.
    \item The fourth case is
      $\valuation_i\in[M_{k+1}-\eta/2^i,M_{k+1}-\eta/2^{i+1})$, but
      then $\valuation^-_i\leq M_k+\eta \leq M_{k+1}-\eta/2^i\leq
      \valuation_i\leq \valuation_i+t_i$.
    \end{itemize}

  \item We now prove that $r^-$ is indeed a play. The only non-trivial
    property is that $t^-_i$ is a non negative delay, especially when
    $t^-_i=\valuation_i+t_i-\valuation^-_i$ or
    $t^-_i=M_{k+1}-\eta/2^{i+1}-\valuation^-_i$. Consider first the
    case $t^-_i=\valuation_i+t_i-\valuation^-_i$. It has to be shown
    that if there exists $k$ such that $|\valuation_i+t_i-M_k|\leq
    \eta/2^{i+1}$, then $\valuation^-_i$ is not greater than
    $\valuation_i+t_i$. Notice that $\valuation_i\leq
    \valuation_i+t_i$. Hence, if $\valuation_i$ is $\eta/2^{i}$-close
    to a border, then by property \ref{item:near-convergent},
    $\valuation^-_i=\valuation_i\leq \valuation_i+t_i$, and we are
    done. Otherwise (i.e. if $\valuation_i$ is not $\eta/2^{i}$-close
    to a border), since $\valuation_i\leq \valuation_i+t_i\leq
    M_k+\eta/2^{i+1}$, we even know that
    $\valuation_i<M_k-\eta/2^{i}$. Since no reset has been possibly
    performed during action $a_{i-1}$ (otherwise, $\valuation_i=0$ is
    $\eta/2^i$-close to a border), we have
    $\valuation_i=\valuation_{i-1}+t_{i-1}
    \in(M_{k'}+\eta/2^{i},M_{k'+1}-\eta/2^{i})$ with $k'< k$. By
    property \ref{item:far-convergent}, this implies that
    $\valuation^-_i=\valuation^-_{i-1}+t^-_{i-1}\leq
    M_{k'+1}-\eta/2^{i}$. In consequence, $\valuation_i^-\leq
    M_k-\eta/2^{i}\leq \valuation_i+t_i$.

    Consider then the case
    $t^-_i=M_{k+1}-\eta/2^{i+1}-\valuation^-_i$, which holds when
    $\valuation_i+t_i\in(M_k+\eta/2^{i+1},M_{k+1}-\eta/2^{i+1})$ and
    $\prices(\location_i)=p^-$. Then, $\valuation_i\leq
    \valuation_i+t_i<M_{k+1}-\eta/2^{i+1}$. Hence, either
    $\valuation_i$ is $\eta/2^{i}$-close to a border (in particular
    when the clock has just been reset), in which case, by property
    \ref{item:near-convergent}, we have
    $\valuation^-_i=\valuation_i<M_{k+1}-\eta/2^{i+1}$, so that
    $t^-_i>0$. Or $\valuation_i\in(M_{k'}+\eta/2^i,M_{k'+1}-\eta/2^i)$
    with $k'\leq k$. By property \ref{item:far-convergent}, this
    implies that $\valuation^-_i\leq M_{k'+1}-\eta/2^{i}<
    M_{k+1}-\eta/2^{i+1}$, once again implying that $t^-_i>0$.

    The fact that $r^-$ is an $\eta$-convergent play is then directly
    implied by properties \ref{item:near-convergent} and
    \ref{item:far-convergent}.

  \item Property $r^-\sim r$ is also a direct consequence of
    properties \ref{item:near-convergent} and
    \ref{item:far-convergent}.

  \item It only remains to prove that $\Weight(r^-)\leq
    \Weight(r)$. Notice that, by induction, only matters the weight
    since the last index $ j$ where plays $r$ and $r^-$ have
    synchronized, i.e., where $\valuation_{ j}+t_{ j}=\valuation^-_{
      j}+t^-_{ j}$. For the sake of simplicity, we suppose in the
    following that $j=0$. In particular, we may suppose that $r$ does
    not contain reset transitions or positions $j$ such that
    $\valuation_j$ is $\eta/2^{j}$-close to borders, except
    possibly the very last one. Since there are no resets, we have
    $\valuation_j=\valuation_{j-1}+t_{j-1}$ for every $j>0$.

    We now consider separately the possibility of sets $\{p^-,p^+\}$.
    \begin{itemize}
    \item As a first case, consider that $p^-=-1$ and $p^+=+1$. We
      prove by induction over $0\leq j \leq i$ that
      \[\Weight(r^-[j+1])\leq
      \Weight(r[j+1])-|\valuation_{j+1}-\valuation^-_{j+1}|\,.\] To
      simplify the notations, we forget the weights of the actions in
      this proof, but notice that the same weights occur in $r$ and
      $r^-$ since these two plays are equivalent.
    
      \begin{itemize}
      \item If $j=0$, then we have supposed that $\valuation_0$ is
        $\eta$-close to a border, so that
        $\valuation^-_0=\valuation_0$. Let $k$ be such that
        $\valuation_1\in(M_k+\eta/2,M_{k+1}-\eta/2)$. Then, if
        $\prices(\location_0)=+1$, $t^-_0=\max
        (M_{k}+\eta/2-\valuation_0,0)$. If $\valuation_0\geq
        M_k+\eta/2$, then $\valuation^-_1=\valuation^-_0=\valuation_0$
        so that \[\Weight(r^-[1]) = 0 = \Weight(r[1])
        -|\valuation_1-\valuation_0| =
        \Weight(r[1])-|\valuation_1-\valuation^-_1|\,.\] %
        If $\valuation_0< M_k+\eta/2$, $\valuation^-_1=M_k+\eta/2\leq
        \valuation_1$ so that
        \[\Weight(r^-[1]) = \valuation^-_1-\valuation_0 =
        \Weight(r[1]) -\valuation_1+\valuation^-_1 =
        \Weight(r[1])-|\valuation_1-\valuation^-_1|\,.\] 

        Consider then the case where $\prices(\location_0)=-1$. In
        case $\valuation^-_1=M_{k+1}-\eta/2\geq \valuation_1$, we have
        \begin{align*}
          \Weight(r^-[1]) = -(\valuation^-_1-\valuation_0) &=
          \Weight(r[1]) -(\valuation^-_1-\valuation_1) \\
          &=\Weight(r[1])-
          |\valuation_1-\valuation^-_1|\,.
        \end{align*} 
        Otherwise, we know that $t_0=0$ and $\valuation^-_0\leq
        M_k+\eta$. Since $\valuation_0$ is not $\eta/2$-close from a
        border, but $\valuation_1=\valuation_0$ should be $\eta$-close
        from a border, we know that
        $\valuation_0\in(M_k+\eta/2,M_k+\eta]\cup[M_{k+1}-\eta,M_{k+1}-\eta/2)$.
        Since $\valuation^-_0=\valuation_0\leq M_k+\eta$, we know that
        $\valuation_0\in(M_k+\eta/2,M_k+\eta]$. Then, we obtain
        $\valuation^-_1=M_k+\eta/2\leq \valuation_0=\valuation_1$, so
        that
        \[\Weight(r^-[1]) = -(\valuation^-_1-\valuation_0) =
        \Weight(r[1])-|\valuation^-_1-\valuation_1|\,.\]

      \item Let us suppose that the property is proved for all indices
        less than or equal to $j$, and prove it for $j+1$. We let $k$
        be such that $\valuation_{j+1}=\valuation_j+t_j\in
        (M_k+\eta/2^{j+1},M_{k+1}-\eta/2^{j+1})$. We will distinguish
        four possible cases depending on $\prices(\location_j)$ and
        the relative order between $\valuation^-_{j+1}$ and
        $\valuation_{j+1}$.

        \begin{enumerate}
        \item We first suppose that $\prices(\location_j)=-1$. Then,
          \begin{align}
            &\Weight(r^-[j+1]) = \Weight(r^-[j])
            -(\valuation^-_{j+1}-\valuation^-_j) \notag\\
            &\hspace{1cm}\leq \Weight(r[j]) - |\valuation_j-\valuation^-_j| -
            \valuation^-_{j+1}+\valuation^-_j \quad\text{(Ind. Hyp.)}\notag\\
            &\Weight(r^-[j+1]) \leq \Weight(r[j+1])
            +(\valuation_{j+1}-\valuation_j)\notag \\ 
            &\hspace{4cm} -
            |\valuation_j-\valuation^-_j| -
            \valuation^-_{j+1}+\valuation^-_j\,.\label{eq:conv-1}
          \end{align}
          \begin{enumerate}
          \item In the case where $\valuation^-_{j+1}>
            \valuation_{j+1}$, we have
            $|\valuation_{j+1}-\valuation^-_{j+1}|=
            \valuation^-_{j+1}-\valuation_{j+1}$ so that
            \eqref{eq:conv-1} becomes
            \begin{multline*}
              \Weight(r^-[j+1]) \leq \Weight(r[j+1]) -
              |\valuation_{j+1}-\valuation^-_{j+1}|\\
              +(\valuation^-_j-\valuation_j)-|\valuation^-_j-\valuation_j|
            \end{multline*}
            which is less than or equal to $\Weight(r[j+1]) -
            |\valuation_{j+1}-\valuation^-_{j+1}|$ since
            $\valuation^-_j-\valuation_j\leq
            |\valuation^-_j-\valuation_j|$.

          \item Similarly, in the case where
            $\valuation^-_{j+1}<\valuation_{j+1}$, we have
            $|\valuation_{j+1}-\valuation^-_{j+1}|=
            \valuation_{j+1}-\valuation^-_{j+1}$. Notice that this
            necessarily implies that $\location_i\in\maxlocations$ and
            $t_j=0$, $\valuation^-_j\leq M_{k}+\eta$ and
            $t^-_j=\max(M_k+\eta/2^{j+1}-\valuation^-_j,0)$:
            otherwise, we would have $\valuation^-_{j+1} =
            M_{k+1}-\eta/2^{i+1}>\valuation_{j+1}$ that contradicts
            the hypothesis. If $t^-_{j}=0$, this implies that
            $\valuation^-_{j}=\valuation^-_{j+1}<
            \valuation_{j+1}=\valuation_{j}$,
              so that \eqref{eq:conv-1} can be rewritten
            \begin{align*}
            \Weight(r^-[j+1]) &\leq \Weight(r[j+1]) -
            \valuation_j+\valuation^-_j \\ &= \Weight(r[j+1]) -
            |\valuation_{j+1}+\valuation^-_{j+1}| \,.
            \end{align*}
            Otherwise, $t^-_j>0$ and we have
            $t^-_j=M_k+\eta/2^{j+1}-\valuation^-_j$. This is possible
            only if $\valuation^-_j\leq
            M_k+\eta/2^{j+1}<\valuation_{j+1}=\valuation_j$. Then,
            $|\valuation^-_j-\valuation_j| =
            \valuation_j-\valuation^-_j$ so that
            \begin{align*}
              \Weight(r^-[j+1]) &\leq \Weight(r[j+1]) -
              |\valuation_{j+1}+\valuation^-_{j+1}| -
              2|\valuation^-_j-\valuation_j| \\
              &\leq \Weight(r[j+1]) -
              |\valuation_{j+1}+\valuation^-_{j+1}| \,.
            \end{align*}
          \end{enumerate}
      
          \item Suppose then that $\prices(\location_j)=+1$. Then, a
            similar calculation gives
            \begin{align}
              &\Weight(r^-[j+1]) = \Weight(r^-[j])
              +\valuation^-_{j+1}-\valuation^-_j \notag\\
              &\hspace{1cm}\leq \Weight(r[j]) - |\valuation_j-\valuation^-_j| +
              \valuation^-_{j+1}-\valuation^-_j \quad\text{(Ind. Hyp.)}\notag\\
              &\Weight(r^-[j+1])  \leq \Weight(r[j+1])
              -(\valuation_{j+1}-\valuation_j) \notag\\
              &\hspace{4cm} -
              |\valuation_j-\valuation^-_j| +
              \valuation^-_{j+1}-\valuation^-_j\,.\label{eq:conv+1}
            \end{align}
            \begin{enumerate}
            \item Once again, if
              $\valuation^-_{j+1}<\valuation_{j+1}$, we have
              $|\valuation_{j+1}-\valuation^-_{j+1}|=
              \valuation_{j+1}-\valuation^-_{j+1}$ so that
              \eqref{eq:conv+1} is rewritten
            \[ \Weight(r^-[j+1]) \leq \Weight(r[j+1]) -
            |\valuation_{j+1}-\valuation^-_{j+1}|+
            \valuation_j-\valuation^-_j-|\valuation_j-\valuation^-_j|\]
            which is less than or equal to $\Weight(r[j+1]) -
            |\valuation_{j+1}-\valuation^-_{j+1}|$ since
            $\valuation_j-\valuation^-_j\leq
            |\valuation_j-\valuation^-_j|$.  

          \item Similarly, if $\valuation^-_{j+1}>\valuation_{j+1}$,
            we know by property \ref{item:far-convergent} that
            $\valuation^-_{j+1}\in\{M_{k+1}-\eta/2^{j+1},\valuation^-_j\}$. If
            $\valuation^-_{j+1}=M_{k+1}-\eta/2^{j+1}$, since
            $t^-_j=\max(M_k+\eta/2^{j+1}-\valuation^-_j,0)$, we know
            that $t^-_j=0$, i.e., in all case
            $\valuation^-_{j+1}=\valuation^-_j$.  Then,
            $\valuation^-_{j+1}=\valuation^-_j>\valuation_{j+1}\geq
            \valuation_j$. Knowing that
            $|\valuation_{j+1}-\valuation^-_{j+1}|=
            \valuation^-_{j+1}-\valuation_{j+1}$, \eqref{eq:conv+1}
            becomes
            \begin{align*}
              \Weight(r^-[j+1]) &\leq \Weight(r[j+1]) -
              |\valuation_{j+1}-\valuation^-_{j+1}|\\
              &\qquad + 2(\valuation^-_{j+1}-\valuation_{j+1})+2
              (\valuation_j-\valuation^-_j)\\
              &= \Weight(r[j+1]) -
              |\valuation_{j+1}-\valuation^-_{j+1}|+
              2(\valuation_j-\valuation_{j+1}) \\
              &\leq \Weight(r[j+1]) -
              |\valuation_{j+1}-\valuation^-_{j+1}|\,.
            \end{align*}
          \end{enumerate}
        \end{enumerate}
      \end{itemize}

      We finally have proved the property by induction. Notice in
      particular that this shows that $\Weight(r^-[j+1])\leq
      \Weight(r[j+1])$ for every $j$ with
      $\valuation^-_{j+1}\neq\valuation_{j+1}$. To conclude the proof
      of $\Weight(r^-)\leq \Weight(r)$, it remains to deal with the
      case of a possible last transition ending with
      $\valuation^-_{i+1}=\valuation_{i+1}$. Unless $i=0$, in which
      case we have $\Weight(r^-)=\Weight(r)$, we know by hypothesis
      that $\valuation^-_{i}\neq \valuation_{i}$. By the previous
      property, we have $\Weight(r^-[i])\leq
      \Weight(r[i])-|\valuation_i-\valuation^-_i|$. Then,
      $\Weight(r^-)=\Weight(r^-[i])+
      \prices(\location_i)(\valuation^-_{i+1}-\valuation^-_i)$ and
      $\Weight(r)=\Weight(r[i])+
      \prices(\location_i)(\valuation_{i+1}-\valuation_i)$. In the
      overall, we get
      \[\Weight(r^-)\leq \Weight(r)
      +\prices(\location_i)(\valuation_i-\valuation^-_i)-
      |\valuation_i-\valuation^-_i|\,.\] In all cases, we verify that
      $\prices(\location_i)(\valuation_i-\valuation^-_i)\leq
      |\valuation_i-\valuation^-_i|$, so that we have proved that
      $\Weight(r^-)\leq \Weight(r)$.

    \item We now consider the case where $p^-=0$ and $p^+=+1$ (the
      case $p^-=-1$ and $p^+=0$ is very similar, and not explained in
      details here). We prove another inequality by induction over
      $0\leq j \leq i$, namely that
      \[\Weight(r^-[j+1])\leq
      \Weight(r[j+1])-\max(\valuation_{j+1}-\valuation^-_{j+1},0)\,.\]
      The proof is very similar to the previous case, and we conclude
      as previously. \qed
    \end{itemize}
  \end{enumerate}
\end{proof}

We now go to the proof of Lemma~\ref{lem:convergent}. In case,
$\lowervalue(s)=+\infty$ nothing has to be done. We then consider the
case $\lowervalue(s)<+\infty$. Let
$\maxstrategy'\in\convergentmaxstrategies^\eta$. We now explain how to
construct a strategy $\maxstrategy\in\maxstrategies$ such that for all
state $s$
\[\inf_{\minstrategy'\in\convergentminstrategies^\eta}
\Weight(\outcomes(s,\minstrategy',\maxstrategy'))\leq
\inf_{\minstrategy\in\minstrategies}
\Weight(\outcomes(s,\minstrategy,\maxstrategy))\,.\] To prove such an
inequality, we will consider any strategy
$\minstrategy\in\minstrategies$ and construct a strategy
$\minstrategy'\in\convergentminstrategies^\eta$ such that
\[\Weight(\outcomes(s,\minstrategy',\maxstrategy'))\leq
\Weight(\outcomes(s,\minstrategy,\maxstrategy))\,.\]

Strategy $\maxstrategy$ follows $\maxstrategy'$ in case of
$\eta$-convergent plays. We must however extend it to deal with the
other plays faithfully. Let
$r=(\location_0,\valuation_0),(t_0,a_0),\ldots,
(\location_{i},\valuation_{i})$ be any finite play ending in a
location $\location_{i}$ of player 2, and
$r^-=(\location^-_0,\valuation^-_0),(t^-_0,a^-_0), \ldots,\allowbreak
(\location^-_{i},\valuation^-_{i})$ the play constructed as before. By
Lemma~\ref{lem:eta-convergent-runs}, we know that $r^-$ is an
$\eta$-convergent play. Hence, $\maxstrategy'(r^-)=(t'_i,a)$, for some
$t'_i\in\Rpos$, and there exists $k$ such that either
$\valuation^-_i+t'_i
\in\{M_k+\eta/2^{i+1}\}\cup[M_k-\eta/2^{i+1},M_k)$, or $t'_i=0$ and
$\valuation^-_i\in (M_k+\eta/2^{i+1},M_k+\eta]$. We let $t_i=
\max(\valuation^-_i+t'_i-\valuation_i,0)$ and $\maxstrategy(r) =
(t_i,a)$. Let $\tilde r$ (respectively, $r'$) be the play $r$
(respectively, $r^-$) extended with the step prescribed by
$\maxstrategy$ (respectively, $\maxstrategy'$). Then, we prove that
$r'$ matches the construction above starting from the run $\tilde r$,
i.e., $\tilde r^-=r'$. By construction, we only have to verify that
the value of $t'_i$ is consistent with the previous constructions,
i.e., $t'_i=t^-_i$.

\begin{lemma}\label{lem:consistency-convergent}
  We have $t'_i=t^-_i$.
\end{lemma}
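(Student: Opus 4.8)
The plan is to mirror, essentially line for line, the proof of Lemma~\ref{lem:consistency}, replacing the play $r^+$ by $r^-$, the slack $\eta$ by the shrinking slack $\eta/2^{i+1}$, and interchanging the roles of the two players---so the ``asymmetric'' branch of the construction, which in the $r^+$ case concerned a Player~1 location of rate $p^+$ with a null delay at a boundary, here concerns the Player~2 location $\location_i$ of rate $p^-$. First I would record that, $\maxstrategy'$ being $\eta$-convergent and $r^-$ being an $\eta$-convergent play of length $i$, its move $(t'_i,a)=\maxstrategy'(r^-)$ satisfies, for some $k$, either $\valuation^-_i+t'_i\in\{M_k+\eta/2^{i+1}\}\cup[M_k-\eta/2^{i+1},M_k)$, or $t'_i=0$ and $\valuation^-_i\in(M_k+\eta/2^{i+1},M_k+\eta]$. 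With $t_i=\max(\valuation^-_i+t'_i-\valuation_i,0)$, I must show that the three-case definition of $t^-_i$ in $\tilde r$---whose inputs are the target $\valuation_i+t_i$, the rate $\prices(\location_i)$, the owner $\location_i\in\maxlocations$, the delay $t_i$, and the already-built $\valuation^-_i$---returns exactly $t'_i$; since the labels coincide this means $t^-_i=t'_i$.

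I would split on whether $\valuation_i\leq\valuation^-_i+t'_i$. If so, $\valuation_i+t_i=\valuation^-_i+t'_i$: when $\maxstrategy'$ uses its first alternative the target is $\eta/2^{i+1}$-close to a border, the ``near'' case fires, and $t^-_i=\valuation_i+t_i-\valuation^-_i=t'_i$ at once, as in Lemma~\ref{lem:consistency}; when $\maxstrategy'$ uses its second alternative, $t'_i=0$ and $\valuation_i+t_i=\valuation^-_i\in(M_k+\eta/2^{i+1},M_k+\eta]$, which (by $\eta<\frac13$ and $M_{k+1}-M_k\geq1$) lies in the ``far'' zone $(M_k+\eta/2^{i+1},M_{k+1}-\eta/2^{i+1})$ and is not border-close, so the construction is in its second case ($\prices(\location_i)=p^+$) or third case ($\prices(\location_i)=p^-$); in both, once we know $t_i=0$ and (for $p^-$) that the ``otherwise'' branch is active because $\location_i\in\maxlocations$ and $\valuation^-_i\leq M_k+\eta$, we get $t^-_i=\max(M_k+\eta/2^{i+1}-\valuation^-_i,0)=0=t'_i$. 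Proving $t_i=0$, i.e.\ $\valuation^-_i=\valuation_i$ (we already have $\valuation_i\leq\valuation^-_i$), is where Lemma~\ref{lem:eta-convergent-runs} does the work: either $\valuation_i$ is $\eta/2^i$-close to a border and property~\ref{item:near-convergent} gives $\valuation^-_i=\valuation_i$; or $\valuation_i\in(M_k+\eta/2^i,M_{k+1}-\eta/2^i)$ and property~\ref{item:far-convergent}(a) pins $\valuation^-_i$ to $M_k+\eta/2^i$, $M_{k+1}-\eta/2^i$ or $\valuation^-_{i-1}$---the first two incompatible with $\valuation_i\leq\valuation^-_i\leq M_k+\eta$, the third handled by property~\ref{item:far-convergent}(c), which applies since there $t^-_{i-1}=0$ and $\valuation^-_{i-1}\leq M_k+\eta$, yielding $\valuation^-_i\leq\valuation_i$.

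If instead $\valuation_i>\valuation^-_i+t'_i$, then $t_i=0$ and the target is $\valuation_i\neq\valuation^-_i$; hence no reset at step $i-1$ and, by property~\ref{item:near-convergent}, $\valuation_i$ is not $\eta/2^i$-close to a border, say $\valuation_i\in(M_k+\eta/2^i,M_{k+1}-\eta/2^i)$, so property~\ref{item:far-convergent}(a) forces $\valuation^-_i\in\{M_k+\eta/2^i,\valuation^-_{i-1}\}$, and using $\eta$-convergence of $r^-$ at step $i-1$ together with $r^-\sim r$ and $\valuation^-_i<\valuation_i<M_{k+1}-\eta/2^i$ one localises $\valuation^-_i$ inside $(M_k,M_k+\eta]$. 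Since $\valuation_i$ is not $\eta/2^{i+1}$-close to a border the construction is in its second/third case and $t^-_i=\max(M_k+\eta/2^{i+1}-\valuation^-_i,0)$; and $\maxstrategy'$, being $\eta$-convergent, is forced to play exactly this value---if $\valuation^-_i\leq M_k+\eta/2^{i+1}$ it is $\eta/2^{i+1}$-close to $M_k$, the ``stay put'' alternative is unavailable, and the only border-close target reachable without passing $\valuation_i$ is $M_k+\eta/2^{i+1}$; if $\valuation^-_i>M_k+\eta/2^{i+1}$ then $t'_i=0$ is the ``stay put'' alternative while any positive delay to a border-close target would overshoot $\valuation_i$. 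Either way $t'_i=t^-_i$, which closes the lemma.

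The step I expect to be the real obstacle is the bookkeeping whenever $\valuation^-_i=\valuation^-_{i-1}$ (i.e.\ $t^-_{i-1}=0$): there the exact position of $\valuation^-_i$ cannot be read off a single step of the construction but must be propagated through sub-items (b)--(c) of property~\ref{item:far-convergent} of Lemma~\ref{lem:eta-convergent-runs}, and one has to keep careful track of how the three nested slacks $\eta/2^{i+1}<\eta/2^i\leq\eta$ sit between consecutive integer constants---which is precisely where $\eta<\frac13$ and $M_{k+1}-M_k\geq1$ are consumed.
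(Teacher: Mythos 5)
Your proof is correct and follows essentially the same route as the paper's: it splits by whether $\valuation_i+t_i$ and $\valuation^-_i+t'_i$ coincide (equivalently whether $t_i>0$), and in each branch invokes properties~\ref{item:near-convergent} and~\ref{item:far-convergent}(a)--(c) of Lemma~\ref{lem:eta-convergent-runs} to pin down $\valuation^-_i$ and show that the constraints imposed by $\eta$-convergence of $\maxstrategy'$ force $t'_i$ to agree with the three-case definition of $t^-_i$. The only cosmetic difference is that the paper's outermost split is on whether $\valuation_i+t_i$ is $\eta/2^{i+1}$-close to a border or not, while yours is on synchronization; this reshuffles the sub-cases (the contradiction case of the paper becomes an impossibility inside your second branch) but the underlying arguments and invocations of the auxiliary lemma are identical.
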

\begin{proof}
The proof considers several cases.
  \begin{itemize}
  \item Suppose first that there exists $k'$ such that
    $|\valuation_i+t_i-M_{k'}|\leq \eta/2^{i+1}$. In case
    $\valuation^-_i+t'_i=\valuation_i+t_i$, we have $t'_i =
    \valuation_i+t_i-\valuation^-_i$ which is equal to $t^-_i$ (since
    $\valuation_i+t_i$ is $\eta/2^{i+1}$-close to a border implying
    that the first rule of the construction of $\tilde r^-$
    applies). Otherwise (i.e. when
    $\valuation^-_i+t'_i\ne\valuation_i+t_i$), we know that $t_i=0$
    (by definition of $t_i$ as
    $\max(\valuation^-_i+t'_i-\valuation_i,0)$) and that
    $\valuation_i>\valuation^-_i+t'_i\geq \valuation^-_i$. In
    particular, we have $\valuation_i\neq \valuation^-_i$. However,
    since $t_i=0$, $|\valuation_i-M_{k'}|\leq \eta/2^{i+1}\leq
    \eta/2^i$ so that we should have $\valuation^-_i=\valuation_i$,
    causing a contradiction.

  \item Suppose then that there exists $k'$ such that
    $\valuation_i+t_i\in(M_{k'}+\eta/2^{i+1},M_{k'+1}-\eta/2^{i+1})$.

    \begin{itemize}
    \item In case $\valuation_i+t_i=\valuation^-_i+t'_i$, since
      $\maxstrategy'$ is $\eta$-convergent and $\valuation^-_i+t'_i$
      not $\eta/2^{i+1}$-close to a border, we have $t'_i=0$, and
      $\valuation^-_i \in (M_{k}+\eta/2^{i+1},M_{k}+\eta]$: in
      particular, $k'=k$ and $\valuation_i+t_i=\valuation^-_i\in
      (M_{k}+\eta/2^{i+1},M_{k}+\eta]$. We know that
      $\valuation_i\sim\valuation^-_i$, hence there are two
      possibilities for the position of $\valuation_i\in(M_k,
      M_k+\eta]$.
      \begin{itemize}
      \item The first case is $\valuation_i\in(M_k,M_k+\eta/2^{i})$:
        by Lemma~\ref{lem:eta-convergent-runs}-1 (applied on
        $\valuation_i=\valuation_{i-1}+t_{i-1}$ since no reset
        transition may have been taken), we know that
        $\valuation^-_i=\valuation_i$. Since
        $\valuation^-_i=\valuation_i+t_i$, we have $t_i=0$. This shows
        that $t'_i=0$ is consistent with the construction of $\tilde
        r^-$ that sets $t^-_i=\valuation_i+t_i-\valuation^-_i=0$ in
        that case (since $\valuation_i+t_i\in(M_k,M_k+\eta/2^i)$).

      \item The second case is
        $\valuation_i\in(M_k+\eta/2^i,M_k+\eta]$: by
        Lemma~\ref{lem:eta-convergent-runs}-2-($a$), we know that
        $\valuation^-_i\in\{M_k+\eta/2^i, M_{k+1}-\eta/2^i,
        \valuation^-_{i-1}\}$. It is not possible neither that
        $\valuation^-_i=M_k+\eta/2^i<\valuation_i$ (because
        $\valuation^-_i=\valuation_i+t_i\geq \valuation_i$), nor that
        $\valuation^-_i=M_{k+1}-\eta/2^i>M_k+\eta$ (because we know
        that $\valuation^-_i\leq M_k+\eta$). Hence, we have
        $\valuation^-_i=\valuation^-_{i-1}$ (and thus
        $t^-_{i-1}=0$). Moreover, by
        Lemma~\ref{lem:eta-convergent-runs}-\ref{item:far-convergent}-($c$),
        since $t^-_{i-1}=0$ and $\valuation^-_{i-1}=\valuation^-_i\leq
        M_k+\eta$, we have $\valuation^-_{i}=\valuation^-_{i-1}\leq
        \valuation_{i-1}+t_{i-1}=\valuation_i$. Since, we also have
        $\valuation_i\leq \valuation_i+t_i=\valuation^-_i$, we obtain
        that $\valuation_i=\valuation^-_i$, and
        $t_i=\valuation^-_i+t'_i-\valuation_i=0$. In case
        $\prices(\location_i)=p^+$, this shows that $t'_i=0$ is
        consistent with the construction of $\tilde r^-$ that sets
        $t^-_i=\max(M_k+\eta/2^{i+1}-\valuation^-_i,0)=0$ (since
        $\valuation^-_i=\valuation_i>M_k+\eta/2^{i}>M_k+\eta/2^{i+1}$). In
        case $\prices(\location_i)=p^-$, since we are in the case
        where $\location_i\in \maxlocations$ (since $\location_i$ is a
        location where $\maxstrategy$ needs to be defined), $t_i=0$
        and $\valuation^-_i\leq M_k+\eta$, the choice $t'_i=0$ is also
        consistent with the construction of $\tilde r^-$ which sets
        $t^-_i=\max(M_k+\eta/2^{i+1}-\valuation^-_i,0)=0$ (once again,
        because $\valuation^-_i>M_k+\eta/2^{i+1}$).
      \end{itemize}

    \item The last case is when $\valuation_i+t_i\neq
      \valuation^-_i+t'_i$, implying that $t_i=0$ (by definition of
      $t_i$ as $\max(\valuation^-_i+t'_i-\valuation_i,0)$). This
      implies that $\valuation_i>\valuation^-_i+t'_i\geq
      \valuation^-_i$. It is not possible that
      $\valuation_i\in(M_{k'}+\eta/2^{i+1},M_{k'}+\eta/2^i]\cup
        [M_{k'+1}-\eta/2^i,M_{k'+1}-\eta/2^{i+1})$, since otherwise we
          would have $\valuation^-_i=\valuation_i$, by
          Lemma~\ref{lem:eta-convergent-runs}-1 (applied to
          $\valuation^-_i=\valuation^-_{i-1}+t^-_{i-1}=
          \valuation_{i-1}+t_{i-1}=\valuation_{i}$). Hence, we have
          $\valuation_i\in(M_{k'}+\eta/2^i, M_{k'+1}-\eta/2^i)$. In
          particular, by Lemma~\ref{lem:eta-convergent-runs}-2-($a$),
          we know that $\valuation^-_i\in\{M_{k'}+\eta/2^i,
          M_{k'+1}-\eta/2^i, \valuation^-_{i-1}\}$. There are two
          possibilities now, depending on $t'_i$ taken from the fact
          that $\maxstrategy'$ is $\eta$-convergent.
      \begin{itemize}
      \item The first possibility is $\valuation^-_i+ t'_i
        \in\{M_k+\eta/2^{i+1}\}\cup[M_k-\eta/2^{i+1},M_k)$. Notice
        that $\valuation^-_i\sim \valuation_i$ and
        $\valuation^-_i+t'_i\in[\valuation^-_i,\valuation_i)$, so that
        $\valuation^-_i+t'_i\sim \valuation_i$. Hence,
        $\valuation^-_i+t'_i\in (M_{k'},M_{k'+1})$. Knowing that
        $\valuation^-_i+ t'_i <\valuation_i<M_{k'+1}-\eta/2^{i}$, and
        that $|\valuation^-_i+t'_i-M_k|\leq \eta/2^{i+1}$, we conclude
        that $k'=k$ and $\valuation^-_i+ t'_i
        \in(M_{k'},M_{k'}+\eta/2^{i+1}]$. It is therefore only
        possible that $\valuation^-_i+t'_i=M_{k'}+\eta/2^{i+1}$, i.e.,
        $t'_i=M_{k'}+\eta/2^{i+1}-\valuation^-_i$. This choice is
        consistent with the construction of $\tilde r^-$, whatever the
        price of $\location_i$, which sets $t^-_i =
        \max(M_{k'}+\eta/2^{i+1}-\valuation^-_i,0)=
        M_{k'}+\eta/2^{i+1}-\valuation^-_i$ (since $\valuation^-_i\leq
        M_{k'}+\eta/2^{i+1}$): in particular if
        $\prices(\location_i)=p^-$, we are indeed in the case
        $\location_i\in\maxlocations$, $t_i=0$ and $\valuation^-_i\leq
        M_{k'}+\eta$.

      \item The second possibility is that $t'_i=0$ and
        $\valuation^-_i\in(M_k+\eta/2^{i+1},M_k+\eta]$, in which case
        we again deduce from $\valuation_i\sim \valuation^-_i$ that
        $k'=k$. We have
        $t^-_i=\max(M_{k'}+\eta/2^{i+1}-\valuation^-_i,0)=0=t'_i$
        (since $\valuation^-_i\geq M_{k'}+\eta/2^{i+1}$), whatever the
        price of $\location_i$: once again, if
        $\prices(\location_i)=p^-$, we are indeed in the case
        $\location_i\in\maxlocations$, $t_i=0$ and $\valuation^-_i\leq
        M_{k'}+\eta$.\qed
      \end{itemize}
    \end{itemize}
  \end{itemize}
\end{proof}

We now consider any strategy $\minstrategy\in\minstrategies$, and
construct a strategy $\minstrategy'\in\convergentminstrategies^\eta$ such
that $\outcomes(s,\minstrategy',\maxstrategy')=
\outcomes(s,\minstrategy,\maxstrategy)^-$ for every state $s$
$\eta$-close to a border. From Lemma~\ref{lem:eta-convergent-runs},
we will then get
\[\Weight(\outcomes(s,\minstrategy',\maxstrategy'))\leq
\Weight(\outcomes(s,\minstrategy,\maxstrategy))\,,\] which will enable
us to conclude. We let $\outcomes(s,\minstrategy,\maxstrategy)^-=
(\location^-_0,\valuation^-_0),(t^-_0,a^-_0), \ldots,\allowbreak
(\location^-_{n},\valuation^-_{n}),\ldots$ with
$s=(\location^-_0,\valuation^-_0)$ $\eta$-close to a border. Then,
we first define $\minstrategy'$ over the finite plays
$\outcomes(s,\minstrategy,\maxstrategy)^-[n]$ with $n\in \N$, by
letting
\[\minstrategy'(\outcomes(s,\minstrategy,\maxstrategy)^-[n]) =
(t^-_n,a^-_n)\,.\] Notice first that this strategy verifies
$\outcomes(s,\minstrategy',\maxstrategy')[n]=
\outcomes(s,\minstrategy,\maxstrategy)[n]^-$ for every state $s$
$\eta$-close to a border, by induction on $n\in\N$. In fact, in case
$\outcomes(s,\minstrategy,\maxstrategy)^-[n]$ ends with a state of
player 2, the equation holds by construction of $\maxstrategy$, and in
case it ends with a state of player 1, by construction of
$\minstrategy'$.

Once built on these finite plays, it is possible to extend
$\minstrategy'$ as an $\eta$-convergent strategy defined over every
play: in particular, the $\eta$-region-uniformity is possible, since,
if $\outcomes(s,\minstrategy,\maxstrategy)^-[n]\sim_\eta
\outcomes(s',\minstrategy,\maxstrategy)^-[n]$ (with $s$ and $s'$
$\eta$-close to a border), we have
$\minstrategy'(\outcomes(s,\minstrategy,\maxstrategy)^-[n])
\sim_\eta\minstrategy'(\outcomes(s',\minstrategy,\maxstrategy)^-[n])$
(induced by Lemma~\ref{lem:eta-convergent-runs}-4) validating the
definition of $\eta$-region-uniform strategies. The $\eta$-convergence
is ensured by Lemma~\ref{lem:eta-convergent-runs}-3.

This concludes the proof of $\lowervalue(s)\geq
\uniformlowervalue^\eta(s)$ in case $\lowervalue(s)>-\infty$.

\medskip
Finally, in case $\lowervalue(s)=-\infty$, we have to show that
$\convergentlowervalue^\eta(s)=-\infty$ too. Notice that
$\lowervalue(s)=-\infty$ means that for all strategy
$\maxstrategy\in\maxstrategies$ of player 2, we have
\[\inf_{\minstrategy\in\minstrategies}
\Weight(\outcomes(s,\minstrategy,\maxstrategy))=-\infty\,,\] i.e.,
player 1 has a sequence of strategies ensuring the reachability of the
goal with smaller and smaller prices. Hence, let
$\maxstrategy'\in\convergentmaxstrategies^\eta$ be an
$\eta$-convergent strategy for player 2, and $M\in\R$ be any
constant. We construct as previously a strategy
$\maxstrategy\in\maxstrategies$ for player 2. From the fact that
$\inf_{\minstrategy\in\minstrategies}
\Weight(\outcomes(s,\minstrategy,\maxstrategy))<M$, we know the
existence of a strategy $\minstrategy\in\minstrategies$ so that
$\Weight(\outcomes(s,\minstrategy,\maxstrategy))<M$. In particular,
this price is finite so that the previous construction allows us to
obtain an $\eta$-convergent strategy $\minstrategy'$ verifying that
\[\Weight(\outcomes(s,\minstrategy',\maxstrategy')) \leq
\Weight(\outcomes(s,\minstrategy,\maxstrategy))<M\,.\] This proves
that $\convergentlowervalue(s)=-\infty$.

\subsection{Proof of Lemma~\ref{lem:translation}}
\label{app:lem:translation}

The proof uses the fact that we can translate a strategy in
$\abstrarena$ into $\eta$-region-uniform/$\eta$-convergent
strategies in the original game $\arena$ and vice versa.

Since we assume that $\Value_{\abstrarena}((\location,\{M_k\}))$ is
finite, let $\minfstrategy^*$ and $\maxfstrategy^*$ the optimal
strategies for both players provided by
Theorem~\ref{thm:optimal-strategy}. Strategy $\minfstrategy^*$ uses a
finite memory whereas $\maxfstrategy^*$ is memoryless. They verify,
for all $\location$, and $0\leq k\leq K$:
\[\Value_{\abstrarena}((\location,\{M_k\})) =
\Weight(\outcomes((\location,\{M_k\}),
\minfstrategy^*,\maxfstrategy^*)) \,.\]

\medskip We first show inequality
$\uniformuppervalue^\eta_\arena((\location,M_k))-\varepsilon\leq
\Value_{\abstrarena}((\location,\{M_k\}))$. In case
$\uniformuppervalue^\eta_\arena((\location,M_k))=-\infty$, the
inequality is trivially verified. Since
$\Value_{\abstrarena}((\location,\{M_k\}))$ is finite for every
$\location$, and $0\leq k\leq K$, we know that every play where player
1 follows strategy $\minfstrategy^*$ reaches the target set of
states. Let $L\in \N$ be the maximum $\StepsToReach(r)$ for every such
play $r$: notice that $L$ is finite since the game arena is finite and
that $\minfstrategy^*$ has finite memory. For all $\varepsilon>0$, we
let $\eta$ be a positive rational number less than $\varepsilon/(2L)$
and we explain how to construct an $\eta$-region-uniform strategy
$\minstrategy^{(\varepsilon)}$ of $\arena$ so that
\begin{equation}\label{eq:translation-sigma-tau}
  \sup_{\maxstrategy\in\uniformmaxstrategies^\eta(\arena)}
  \Weight(\outcomes((\location,M_k),\minstrategy^{(\varepsilon)},
  \maxstrategy)) \leq
  \Weight(\outcomes((\location,\{M_k\}),\minfstrategy^*,\maxfstrategy^*)) 
  +\varepsilon 
\end{equation} 
Since $\minstrategy^{(\varepsilon)}$ will only have to play against an
$\eta$-region-uniform strategy $\maxstrategy$, we may define only
$\minstrategy^{(\varepsilon)}$ on finite plays $r$ only visiting
$\eta$-regions in $\intervals^\eta_\arena$, i.e., that only stops at
distance at most $\eta$ from the integers $M_k$. Such a play $r$ can
indeed be translated in a play $\rho$ of $\abstrarena$, by simply
translating each state $(\location_i,\valuation_i)$ (with
$\valuation_i$ at distance at most $\eta$ from some $M_k$ by
assumption) it encounters into $(\location_i,I_i)$ with
$I_i\in\intervals^\eta_\arena$ the interval containing
$\valuation_i$. Then, we let $(J,a)=\minfstrategy^*(\rho)$. Let also
denote by $(\location',\valuation)$ the last state of $r$, as well as
$I$ the interval of $\intervals^\eta_\arena$ containing
$\valuation$. Notice that $I\leqinterval J$. In case $J=I$, we let
$\minstrategy^{(\varepsilon)}(r)=(0,a)$ forcing player 1 to play
immediately. In case $J$ is a singleton $\{M_k\}$, we let
$\minstrategy^{(\varepsilon)}(r) = (M_k-\valuation,a)$. In case
$J=[M_k-\eta,M_k)$, we let $\minstrategy^{(\varepsilon)}(r) =
(M_k-\eta-\valuation,a)$. Finally, in case $J=(M_k,M_k+\eta]$, we let
$\minstrategy^{(\varepsilon)}(r) = (M_k+\eta-\valuation,a)$. Notice
that in all case, the fact that $\valuation\in I$ implies that the
delay prescribed in $\minstrategy^{(\varepsilon)}(r)$ is not less than
$0$.

We now prove \eqref{eq:translation-sigma-tau}. For that purpose, we
consider a strategy
$\maxstrategy\in\uniformmaxstrategies^\eta(\arena)$. We reconstruct
from it a strategy $\maxfstrategy\in\maxstrategies(\abstrarena)$ such
that $\outcomes((\location,\{M_k\}),\minfstrategy^*,\maxfstrategy)$ is
the sequence of $\eta$-regions visited by the play
$\outcomes((\location,M_k),\minstrategy^{(\varepsilon)},\maxstrategy)$,
which stays $\eta$-close to borders, since
$\minstrategy^{(\varepsilon)}$ and $\maxstrategy$ are
$\eta$-region-uniform strategies. 
The following lemma compares the weight of these two plays.

\begin{lemma}\label{lem:weights-abstraction}
  Let $r$ be a play of $\arena$ staying $\eta$-close to borders and
  $\tilde r$ its abstraction in terms of $\eta$-regions. Then,
  \[|\Weight(\tilde r)-\Weight(r)|\leq 2\eta\,\StepsToReach(\tilde
  r)\,.\]
\end{lemma}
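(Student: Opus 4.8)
The plan is to bound $|\Weight(\tilde r)-\Weight(r)|$ one step at a time, the crucial structural fact being that in a $\oneBPTG$ every location price-rate lies in $\{-1,0,1\}$, so that an error of at most $2\eta$ on a delay yields an error of at most $2\eta$ on the weight contributed by the corresponding step. Write $r=(\location_0,\valuation_0),(t_0,a_0),(\location_1,\valuation_1),\ldots$; its abstraction is the play $\tilde r=(\location_0,I_0),(J_0,a_0),(\location_1,I_1),(J_1,a_1),\ldots$ of $\abstrarena$, where $I_i\in\intervals^\eta_\arena$ is the $\eta$-region of $\valuation_i$ and $J_i\in\intervals^\eta_\arena$ is the $\eta$-region of $\valuation_i+t_i$. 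Since membership in the target set depends only on the location $\location_i$, which is identical in $r$ and $\tilde r$, the two plays stop at the same index $n=\StepsToReach(\tilde r)$ (if $n=\infty$ then both weights are $+\infty$ and the bound is vacuous, so assume $n$ finite). The $i$-th edge of $\tilde r$ has weight $\prices(\location_i)\,d(I_i,J_i)+\prices(a_i)$ and the $i$-th step of $r$ has weight $\prices(\location_i)\,t_i+\prices(a_i)$, hence
\[\Weight(\tilde r)-\Weight(r)=\sum_{i=0}^{n-1}\prices(\location_i)\bigl(d(I_i,J_i)-t_i\bigr),\]
so it suffices to prove $|d(I_i,J_i)-t_i|\leq 2\eta$ for each $i<n$.

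The core of the argument is this per-step estimate, and it is where the hypotheses $\eta\in(0,\tfrac13)$ and ``$r$ stays $\eta$-close to borders'' are used. A routine induction on $i$ shows that every $\valuation_i$ is $\eta$-close to some border $M_{a_i}$: $\valuation_0$ is by assumption, and $\valuation_{i+1}$ is either $0=M_0$ (if the transition resets the clock) or $\valuation_i+t_i$ (otherwise), which is $\eta$-close to a border precisely because $r$ stays $\eta$-close to borders. Picking $M_{b_i}$ with $|\valuation_i+t_i-M_{b_i}|\leq\eta$, the triangle inequality applied to $t_i=(\valuation_i+t_i)-\valuation_i$ gives $|t_i-(M_{b_i}-M_{a_i})|\leq 2\eta$; since $t_i\geq 0$ and $\eta<\tfrac12$ this also forces $M_{a_i}\leq M_{b_i}$ and $I_i\leqinterval J_i$. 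It remains to identify $d(I_i,J_i)$: every $\eta$-region of $\intervals^\eta_\arena$ lying near a border $M_k$ is one of $\{M_k\}$, $(M_k,M_k+\eta]$, $[M_k-\eta,M_k)$ (here we use that $\arena$ is bounded, so $\valuation_i+t_i\leq M_K$ and indeed $I_i,J_i\in\intervals^\eta_\arena$), and hence has lower bound $M_k$ or $M_k-\eta$. Writing $q_i,q'_i$ for the lower bounds of $I_i,J_i$, this gives $|(q'_i-q_i)-(M_{b_i}-M_{a_i})|\leq\eta<\tfrac12$; as $M_{b_i}-M_{a_i}\in\Int$, it is therefore the integer closest to $q'_i-q_i$, i.e.\ $d(I_i,J_i)=M_{b_i}-M_{a_i}$. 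Combining this with $|t_i-(M_{b_i}-M_{a_i})|\leq 2\eta$ yields $|d(I_i,J_i)-t_i|\leq 2\eta$.

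Summation then finishes the proof: using $|\prices(\location_i)|\leq 1$,
\[|\Weight(\tilde r)-\Weight(r)|\leq\sum_{i=0}^{n-1}|\prices(\location_i)|\cdot|d(I_i,J_i)-t_i|\leq 2\eta\,n=2\eta\,\StepsToReach(\tilde r).\]
I expect the only delicate point --- the one to write out with care --- to be the identification $d(I_i,J_i)=M_{b_i}-M_{a_i}$, since $d$ is defined by rounding a difference of lower bounds to the nearest integer and one must check that $\eta<\tfrac13$ keeps this rounding from ever jumping to a neighbouring integer; everything else, including the verification that $\tilde r$ is a legal play of $\abstrarena$ and that $\valuation_i$, $\valuation_i+t_i$ remain within $[0,M_K]$, is bookkeeping about $\eta$-regions.
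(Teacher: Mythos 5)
Your proof is correct and follows essentially the same route as the paper's: the paper argues by induction on the step index while you sum the telescoping per-step contributions directly, but in both cases the heart of the matter is the per-step estimate $|d(I_i,J_i)-t_i|\leq 2\eta$, which the paper asserts without detail and you verify carefully (via the identification $d(I_i,J_i)=M_{b_i}-M_{a_i}$, relying on $\eta<\tfrac12$). Your extra care here is welcome, since the paper's "We clearly have $|d(I,J)-t|\leq 2\eta$" is exactly the point that merits justification.
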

\begin{proof}
  Nothing has to be done in case $\Weight(r)$ (or equivalently
  $\Weight(\tilde r)$) is equal to $+\infty$. Hence, we can suppose
  that these two runs have a finite length.  We show by induction on
  $0\leq n\leq \StepsToReach(\tilde r)=\StepsToReach(r)$ that
  \[|\Weight(\tilde r[n])-\Weight(r[n])|\leq 2\eta\, n\,.\]

  For $n=0$, nothing has to be proved since $\Weight(\tilde
  r[0])=\Weight(r[0])=0$.

  Suppose that the property holds until step $n<\StepsToReach(\tilde
  r)=\StepsToReach(r)$. We now prove it for index $n+1$. Let
  $(\location,\valuation)$ and $(\location,I)$ be the last states of
  $r[n]$ and $\tilde r[n]$, respectively. We denote by $t$ the delay
  taken in $r$ at step $n$, and $J$ the choice of successor in $\tilde
  r$ at step $n$. Since the same action occurs at step $n$, we have
  \[\Weight(\tilde r[n+1])-\Weight(r[n+1]) = \Weight(\tilde
  r[n])-\Weight(r[n]) + \prices(\location)(d(I,J)-t)\,.\]
  We clearly have $|d(I,J)-t|\leq 2\eta$ so that, by induction
  hypothesis, 
  \[|\Weight(\tilde r[n+1])-\Weight(r[n+1])|\leq 2\eta\, n + 2\eta =
  2\eta\,(n+1)\,.\]

  Hence, the property is proved by induction.\qed
\end{proof}

By Lemma~\ref{lem:weights-abstraction}, we have 
\begin{align*}
  \Weight(\outcomes((\location,M_k),
  \minstrategy^{(\varepsilon)},\maxstrategy)) &\leq
  \Weight(\outcomes((\location,\{M_k\}),
  \minfstrategy^*,\maxfstrategy))+{}\\
  &\hspace{1cm}2\eta\, \StepsToReach(\outcomes((\location,\{M_k\}),
  \minfstrategy^*,\maxfstrategy))\\ & \leq
  \Weight(\outcomes((\location,\{M_k\}),
  \minfstrategy^*,\maxfstrategy))+2\eta\, L \\
  &\leq \Weight(\outcomes((\location,\{M_k\}),
  \minfstrategy^*,\maxfstrategy^*))+\varepsilon
\end{align*}
where the last inequality comes from the definition of $\eta$, and
$\maxfstrategy^*$ is the optimal strategy for player 2 in
$\abstrarena$. In particular, notice that this shows that
$\Weight(\outcomes((\location,M_k),
\minstrategy^{(\varepsilon)},\maxstrategy))$, and hence
$\uniformuppervalue^\eta_\arena((\location,M_k))$, is less than
$+\infty$.

\medskip We then show inequality
$\Value_{\abstrarena}((\location,\{M_k\})) \leq
\convergentlowervalue^\eta_\arena((\location,M_k))+\varepsilon$.  In case
$\convergentlowervalue^\eta_\arena((\location,M_k))=+\infty$, the
inequality is trivially verified. For all $\varepsilon>0$, we let
$\eta$ be a positive rational number less than $\varepsilon/3$ and we
explain how to construct (from $\maxfstrategy^*$) an
$\eta$-convergent strategy $\maxstrategy^{(\varepsilon)}$ of
$\arena$ so that
\begin{equation}\label{eq:lower-value}
  \Weight(\outcomes((\location,\{M_k\}),\minfstrategy^*,\maxfstrategy^*)) 
  -\varepsilon \leq \inf_{\minstrategy\in\convergentminstrategies^\eta(\arena)}
  \Weight(\outcomes((\location,M_k),\minstrategy,\maxstrategy^{(\varepsilon)}))\,.
\end{equation} 

The construction of $\maxstrategy^{(\varepsilon)}$ is inspired from
the previous case, but special care has to be made to prevent player 1
from decreasing its cost in $\arena$ by accumulating errors made by
player 2. 
Once again, since $\maxstrategy^{(\varepsilon)}$ will be evaluated
against an $\eta$-convergent play of player 1, we only define it on finite
plays $r$ that are $\eta$-convergent. Let
$(\location',\valuation)$ be the last configuration of $r$, $n$ its
length and $I\in\intervals^\eta_\arena$ the interval containing
$\valuation$. Let $(J,a)=\maxfstrategy^*(\location',I)$ (remember that
$\maxfstrategy^*$ is memoryless so that only the last configuration of
the play is useful). Once again, we necessarily have $I\leqinterval
J$. In case $J$ is a singleton $\{M_k\}$, we let
$\maxstrategy^{(\varepsilon)}(r)=(M_k-\valuation,a)$. In case
$J=[M_k-\eta,M_k)$, we let
$\maxstrategy^{(\varepsilon)}(r)=(\max(M_k-\eta/2^{n+1}-\valuation,0),a)$. In
case $J=(M_k,M_k+\eta]$, we let
$\maxstrategy^{(\varepsilon)}(r)=(\max(M_k+\eta/2^{n+1}-\valuation,0),a)$. The
use of a maximum operator to define the delay in the two last cases
permits to delay $0$ in case $\valuation$ is either too close or too
far from $M_k$ to go exactly at distance $\eta/2^{n+1}$ from $M_k$.

Then, to prove inequality \eqref{eq:lower-value}, we consider any
$\eta$-convergent strategy $\minstrategy$ of player 1. As in the
previous case, we can construct from $\minstrategy$ a strategy
$\minfstrategy\in\minstrategies(\abstrarena)$ verifying that
$\outcomes((\location,\{M_k\}),\minfstrategy,\maxfstrategy^*)$ is the
sequence of $\eta$-regions visited by the play
$\outcomes((\location,M_k),\minstrategy,\maxstrategy^{(\varepsilon)})$.
The following lemma permits to compare the weight of these two plays.

\begin{lemma}
  Let $r=(\location_0,\valuation_0),(t_0,a_0),\ldots,
  (\location_{i+1},\valuation_{i+1}),\ldots$ be a play of $\arena$,
  such that $\valuation_0=M_{k'}$ for some $k'$, and $\tilde r$ its
  abstraction in terms of $\eta$-regions. Suppose that $r$ is
  $\eta$-convergent. Then, 
  \[|\Weight(\tilde r)-\Weight(r)|\leq 3\eta\,.\]
\end{lemma}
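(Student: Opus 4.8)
The plan is to mirror the proof of Lemma~\ref{lem:weights-abstraction} — express the weight difference as a telescoping sum over the steps and bound the per‑step discrepancy — but now to exploit $\eta$-convergence so that the discrepancies decay geometrically rather than accumulate linearly. As there, if $\Weight(r)=+\infty$ then $\Weight(\tilde r)=+\infty$ too (being a target location is region‑definable and $\tilde r\sim r$), so I would assume $r$, hence $\tilde r$, reaches the target after $L=\sstop(r)<\infty$ steps. Since the same labels $a_n$ occur in $r$ and $\tilde r$, the label contributions cancel and
\[
\Weight(\tilde r)-\Weight(r)=\sum_{n=0}^{L-1}e_n,\qquad e_n:=\prices(\location_n)\bigl(d(I_n,J_n)-t_n\bigr),
\]
where $I_n$ is the $\eta$-region of $\valuation_n$ and $J_n$ that of $\valuation_n+t_n$, and $|\prices(\location_n)|\le 1$.

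First I would call step $n$ \emph{idle} if it uses the second clause of $\eta$-convergence, i.e.\ $t_n=0$ and $\valuation_n\in(M_k+\eta/2^{n+1},M_k+\eta]$: then $I_n=J_n=(M_k,M_k+\eta]$, so $d(I_n,J_n)=0=t_n$ and $e_n=0$. For a non‑idle step $\valuation_n+t_n$ lies within $\eta/2^{n+1}$ of some border $M$, and if moreover $\valuation_n$ lies within distance $\rho\le\eta$ of its nearest border $M''$, then (by the same rounding computation as in Lemma~\ref{lem:weights-abstraction}) $d(I_n,J_n)$ equals the exact integer difference $M-M''$, whence $|e_n|\le\rho+\eta/2^{n+1}$. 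The heart of the argument will be bounding $\rho$. I would define $h(n)=n$ if $n=0$ or step $n-1$ is not idle (a reset counting as not idle), and $h(n)=h(n-1)$ otherwise, so $h(n)\le n$, and prove by induction that $\valuation_n$ is within $\eta/2^{h(n)}$ of a border: clear for $n=0$ ($\valuation_0=M_{k'}$); if step $n-1$ is a reset then $\valuation_n=0$; if step $n-1$ is non‑idle and non‑reset it is of the first type so $\valuation_n=\valuation_{n-1}+t_{n-1}$ is within $\eta/2^n$ of a border; and if step $n-1$ is idle one traces the maximal block of idle steps backwards — with $j+1=h(n)$, steps $j{+}1,\dots,n{-}1$ are idle so $\valuation_n=\valuation_{j+1}$, the non‑idle step $j$ places $\valuation_{j+1}$ within $\eta/2^{j+1}$ of some $M$, and the idleness of step $j{+}1$ pins $\valuation_{j+1}\in(M+\eta/2^{j+2},M+\eta/2^{j+1}]$ (the border is forced to be that same $M$ since borders are $\ge1$ apart and $\eta<\tfrac13$). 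Thus $|e_n|\le\eta/2^{h(n)}+\eta/2^{n+1}\le 3\eta/2^{h(n)+1}$ for non‑idle steps and $e_n=0$ for idle steps.

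To conclude, I would observe that the sets $\{n:h(n)=m\}$ partition $\{0,\dots,L{-}1\}$ into blocks, each block $\{m,m{+}1,\dots,m'\}$ starting at $n=m$, with steps $m,\dots,m'{-}1$ idle (contributing $0$) and only the last step $m'$ possibly nonzero, with $|e_{m'}|\le 3\eta/2^{m+1}$. Summing the geometric series then gives
\[
\bigl|\Weight(\tilde r)-\Weight(r)\bigr|\le\sum_{n=0}^{L-1}|e_n|\le\sum_{m\ge0}\frac{3\eta}{2^{m+1}}=3\eta .
\]
The main obstacle I anticipate is exactly the ``delay $0$'' clause in the definition of $\eta$-convergent strategies: after such a move $\valuation_n$ is only $\eta$-close, not $\eta/2^n$-close, to a border, so a naive induction would destroy the geometric decay; the remedy is the bookkeeping through $h(n)$, which charges the slightly larger error of such a step back to the step where the run of idle moves began, together with the crucial fact that idle moves are themselves error‑free. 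The reset case and the possible final (target‑reaching) transition are handled exactly as in the analogous spots of the proof of Lemma~\ref{lem:eta-convergent-runs}, with no new idea needed.
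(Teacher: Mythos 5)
Your argument is correct and reaches the same $3\eta$ bound, but the bookkeeping is genuinely different from the paper's. Both proofs start from the same telescoping identity
\[
\Weight(\tilde r)-\Weight(r)=\sum_{n} \prices(\location_n)\bigl(d(I_n,J_n)-t_n\bigr),
\]
and from the fact that each term is controlled by the border-distances of $\valuation_n$ and $\valuation_n+t_n$. The paper then runs an induction on prefixes with the invariant $|\Weight(\tilde r[n])-\Weight(r[n])|\leq 3\eta\bigl(1-2^{-\lambda_n}\bigr)$, where $\lambda_n$ is the last index $m\leq n$ at which $t_{m-1}>0$ or a reset happened; the final $3\eta$ is obtained as the supremum of that invariant. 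You instead track $h(n)$, the last non-idle index, prove the sharper pointwise fact that $\valuation_n$ lies within $\eta/2^{h(n)}$ of a border, and sum the per-step errors directly by partitioning indices into maximal blocks on which $h$ is constant: idle steps inside a block contribute $0$, the closing step of the block beginning at $m$ contributes at most $3\eta/2^{m+1}$, and the block starts are strictly increasing, so the whole sum is dominated by $\sum_{m\geq 0} 3\eta/2^{m+1}=3\eta$. Since any step counted by $\lambda$ is also non-idle, $\lambda_n\leq h(n)$, so your per-step estimate is a bit tighter than the paper's; more importantly, the geometric series that drives the bound is displayed explicitly rather than being folded into the shape of the induction invariant, which is arguably the more transparent presentation.

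One small imprecision to patch: having decreed that a reset step counts as non-idle in the definition of $h$, your sentence ``for a non-idle step $\valuation_n+t_n$ lies within $\eta/2^{n+1}$ of some border'' is not literally true for a reset step with $t_n=0$ whose $\valuation_n$ sits in the interval of the second clause of $\eta$-convergence. The bound on $e_n$ survives --- for such a step $I_n=J_n$ and $t_n=0$, so $e_n=0$ --- but the reader should be told so. Cleaner is to define ``idle'' purely as ``the second clause holds'' (irrespective of resets) so that idle steps always have $e_n=0$, and to absorb the reset case into the inductive step for the border-distance claim: if an idle step resets the clock, then $\valuation_{n}=0$ is trivially within $\eta/2^{h(n)}$ of the border $0$. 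The handling of the final, target-reaching transition and of the $\Weight(r)=+\infty$ case is as you say, mirroring Lemma~\ref{lem:weights-abstraction}.
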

\begin{proof}
  We define the \emph{latency} $\lambda_n$ of an index $n\leq
  \StepsToReach(\tilde r)=\StepsToReach(r)$ as the greatest index
  $0<m\leq n$ such that $t_{m-1}\neq 0$ or the clock has been reset by
  action $a_{m-1}$ in location $\location_{m-1}$, or $0$ if such an
  index $m$ does not exist. Notice that if $n<n'$ and
  $\lambda_n=\lambda_{n'}$, then we have $t_m=0$ for every $m\in
  \{n,\ldots,n'-1\}$, and $\valuation_n=\valuation_{n'}$. We also know
  that $\valuation_{\lambda_n}$ is $\eta/2^{\lambda_n+1}$-close to a
  border: indeed, either $t_{\lambda_n-1}\neq 0$ (which permits to
  conclude by definition of $\eta$-convergent plays), or the clock has
  been reset by action $a_{\lambda_n-1}$, in which case
  $\valuation_{\lambda_n}=0$.

  We then prove by induction on $0\leq n\leq \StepsToReach(\tilde
  r)=\StepsToReach(r)$ that
  \[|\Weight(\tilde r[n])-\Weight(r[n])|\leq 3\eta\left(1- \frac
    1{2^{\lambda_n}}\right)\,.\]

  The property is trivially verified for $n=0$, since $\lambda_n=0$,
  and $\Weight(\tilde r[n])=\Weight(r[n])=0$.

  Suppose now that the property holds for $n<\StepsToReach(\tilde
  r)=\StepsToReach(r)$. We now prove it for $n+1$. We split our study
  with respect to the value of $\lambda_{n+1}$. 
  \begin{itemize}
  \item If $\lambda_{n+1}=n+1$, then we know that $t_n\neq 0$ or that
    the clock has just been reset. Denote by $I$ the $\eta$-region
    containing $\valuation_n$ and $J$ the $\eta$-region containing
    $\valuation_n+t_n$. 
    \begin{itemize}
    \item If $t_n=0$, then $d(I,J)=0$, and
      \begin{align*}
        |\Weight(\tilde r[n+1])-\Weight(r[n+1])| &= |\Weight(\tilde
        r[n])-\Weight(r[n])|\\
        &\leq 3\eta\left(1-\frac
          1{2^{\lambda_n}}\right) \quad \text{(Ind. Hyp.)}\\
        &\leq 3\eta\left(1-\frac 1{2^{\lambda_{n+1}}}\right)
      \end{align*}
      since $\lambda_{n+1}=n+1>n\geq\lambda_n$. 
    \item Otherwise, $t_n\neq 0$ so that, $\valuation_{n}+t_n$ is
      $\eta/2^{n+1}$-close to a border (since the play is supposed to
      be $\eta$-convergent). By reasoning on the transition from step
      $\lambda_n-1$ to step $\lambda_n$, we also know that
      $\valuation_n$ is $\eta/2^{\lambda_n}$-close to a border. By
      definition of $d(I,J)$, we obtain $|d(I,J)-t_n|\leq
      \eta(1/2^{n+1} + 1/2^{\lambda_n})$. In the overall, we get
      \begin{align*}
        &|\Weight(\tilde r[n+1])-\Weight(r[n+1])| \\
        &\hspace{2cm}\leq |\Weight(\tilde
        r[n])-\Weight(r[n])|+|\prices(\location_n)(d(I,J)-t_n)| \\
        &\hspace{2cm}\leq 3\eta\left(1-\frac 1{2^{\lambda_n}}\right) +
        \eta\left(\frac 1{2^{n+1}} + \frac 1{2^{\lambda_n}}\right)\\
        &\hspace{2cm}=\eta \left(3 - \frac 1{2^{\lambda_n-1}} + \frac
          1{2^{n+1}}\right)\\
        &\hspace{2cm}\leq \eta\left(3-\frac 3
          {2^{n+1}}\right)=3\eta\left(1-\frac 1
          {2^{\lambda_{n+1}}}\right)
      \end{align*}
      the last inequality coming from the fact that $\lambda_n\leq n$,
      so that $1/2^{n-1}\leq 1/2^{\lambda_n-1}$.
    \end{itemize}

  \item If $\lambda_{n+1}<n$, then we know that
    $\lambda_{n+1}=\lambda_n$ and $t_n=0$, so that 
    \begin{align*}
      |\Weight(\tilde r[n+1])-\Weight(r[n+1])| &= |\Weight(\tilde
      r[n])-\Weight(r[n])|\\
      &\leq 3\eta\left(1-\frac
        1{2^{\lambda_n}}\right) \quad \text{(Ind. Hyp.)}\\
      &\leq 3\eta\left(1-\frac 1{2^{\lambda_{n+1}}}\right)\,.
    \end{align*}
  \end{itemize}
  
  The property is proved by induction, and we conclude as in
  Lemma~\ref{lem:weights-abstraction}.\qed
\end{proof}

Since
$\outcomes((\location,M_k),\minstrategy,\maxstrategy^{(\varepsilon)})$
is an $\eta$-convergent play by construction, this lemma permits to
conclude that
\begin{equation*}
  \Weight(\outcomes((\location,\{M_k\}),\minfstrategy,\maxfstrategy^*)) 
  -\varepsilon \leq 
  \Weight(\outcomes((\location,M_k),
  \minstrategy,\maxstrategy^{(\varepsilon)}))\,. 
\end{equation*} 
Once again, notice that this shows that
$\Weight(\outcomes((\location,M_k),
\minstrategy,\maxstrategy^{(\varepsilon)}))$, and hence also
$\convergentlowervalue^\eta_\arena((\location,M_k))$, is greater than
$-\infty$.

\subsection{Proof of the infinite case of Corollary~\ref{cor:pseudo}}
\label{app:lem:infinite}

If $\Value_{\abstrarena}((\location,\{M_k\}))=-\infty$, then it is
sufficient to prove that
$\uppervalue_\arena((\location,M_k))=-\infty$. To do so, let $M\in\R$
and $\minfstrategy$ a strategy of player 1 such that
$\pricestrategy((\location,\{M_k\}),\minfstrategy)\leq M$. From the
construction in \cite{priced-games}, we know that we can choose
$\minfstrategy$ with the following structure: the strategy follows the
mean-payoff strategy, storing in its memory the cost accumulated so
far, and stops the mean-payoff execution to reach a target state
whenever the cost is low enough. In particular, the length of this
computation is bounded by a linear function over $M$ of the shape $CM$
with $C$ only depending on $\arena$. Following the same reconstruction
as the one given in Lemma~\ref{lem:translation}, with $\eta\leq
\varepsilon/(CM\max|\edgeweights|)$, we can map strategy
$\minfstrategy$ into an $\eta$-region-uniform strategy
$\minstrategy^{(\varepsilon)}$ of $\arena$. In the overall, whatever
strategy $\maxstrategy$ of player 2, we can build $\maxfstrategy$ in
$\abstrarena$ that mimicks the play following the profile
$(\minstrategy,\maxstrategy)$ of strategies in $\arena$. However, the
length of this play will necessarily be bounded by $CM$, so that the
difference of weight between the two plays is bounded by
$\varepsilon$. This suffices to prove that
$\uniformuppervalue^\eta_\arena((\location,M_k))\leq M$. By
Lemma~\ref{lem:uniform}, this shows that
$\uppervalue_\arena((\location,M_k))\leq M$. Since this holds for
every $M$, we conclude that
$\uppervalue_\arena((\location,M_k))=-\infty$.

Finally, if $\Value_{\abstrarena}((\location,\{M_k\}))=+\infty$, then
we prove that
$\convergentlowervalue^\eta_\arena((\location,M_k))=+\infty$, that is
sufficient for the result by Lemma~\ref{lem:convergent}. We will
indeed show that player 2 has a strategy $\maxstrategy$ to ensure a
value $+\infty$, i.e., ensuring that player 1 cannot reach the target
set of states. We again use our knowledge on the strategy
$\maxfstrategy$ ensuring a cost $+\infty$ in $\abstrarena$:
following the construction of Theorem~\ref{thm:overall}, we know that
a memoryless strategy is enough. Similarly to the previous cases, we
can reconstruct a strategy $\maxstrategy$ in $\arena$. In particular,
it automatically ensure that the target set of states is not reachable
by any strategy of player 1, which permits to conclude easily.

\section{Detailed undecidability proofs}
\subsection{Counter machines}
A two-counter machine $M$ is a tuple $(L, C)$ where ${L = \set{\ell_0,
    \ell_1, \ldots, \ell_n}}$ is the set of instructions---including a
distinguished terminal instruction $\ell_n$ called HALT---and ${C =
  \set{c_1, c_2}}$ is the set of two \emph{counters}.  The
instructions $L$ are one of the following types:
\begin{enumerate}
\item (increment $c$) $\ell_i : c := c+1$;  goto  $\ell_k$,
\item (decrement $c$) $\ell_i : c := c-1$;  goto  $\ell_k$,
\item (zero-check $c$) $\ell_i$ : if $(c >0)$ then goto $\ell_k$
  else goto $\ell_m$,
\item (Halt) $\ell_n:$ HALT.
\end{enumerate}
where $c \in C$, $\ell_i, \ell_k, \ell_m \in L$.
A configuration of a two-counter machine is a tuple $(l, c, d)$ where
$l \in L$ is an instruction, and $c, d$ are natural numbers that specify the value
of counters $c_1$ and $c_2$, respectively.
The initial configuration is $(\ell_0, 0, 0)$.
A run of a two-counter machine is a (finite or infinite) sequence of
configurations $\seq{k_0, k_1, \ldots}$ where $k_0$ is the initial
configuration, and the relation between subsequent configurations is
governed by transitions between respective instructions.
The run is a finite sequence if and only if the last configuration is
the terminal instruction $\ell_n$.
Note that a two-counter  machine has exactly one run starting from the initial
configuration. 
The \emph{halting problem} for a two-counter machine asks whether 
its unique run ends at the terminal instruction $\ell_n$.
It is well known~(\cite{Min67}) that the halting problem for
two-counter machines is undecidable.

\subsection{Constrained-price reachability} 

\begin{theorem}
  Deciding the existence of a strategy for constrained-reachability
  objective $\ReachOb({\bowtie} 1)$ with
  ${\bowtie}\in\{\leq,<,=,>,\geq\}$ is undecidable for PTGs with two
  clocks or more.
\end{theorem}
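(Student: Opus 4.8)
The plan is to reduce the halting problem for two-counter (Minsky) machines, which is undecidable (Minsky~\cite{Min67}), to the existence of a winning strategy for Player~1 with objective $\ReachOb({\bowtie}1)$. Given a two-counter machine $M$ with instructions $\ell_0,\dots,\ell_n$ and counters $c_1,c_2$, I would build a PTG $\arena$ with two clocks $x_1,x_2$, arbitrary integer price-rates on locations and no prices on labels, consisting of one module per instruction, wired together exactly as the control graph of $M$, so that plays of $\arena$ mimic the unique run of $M$. The encoding is the one announced in Section~\ref{sec:undecidability}: the module for $\ell_k$ is entered in a configuration with $x_1=\tfrac{1}{5^{c_1}7^{c_2}}$ and $x_2=0$, and is left in the analogous configuration for the successor instruction with the updated counter values, so that incrementing (resp.\ decrementing) $C_1$ divides (resp.\ multiplies) $x_1$ by $5$, and similarly by $7$ for $C_2$, while a zero-test on $C_i$ branches according to whether the corresponding exponent is $0$. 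Since time delays are real numbers, Player~1, who is asked to choose the delay realising an arithmetic update, may introduce a signed error $\varepsilon$; the construction is organised so that he has no incentive to do so.

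Each module has two layers. On the main layer Player~1 picks the delay performing the update (e.g.\ for a decrement of $C_1$, delay $t_k$ in $\ell_k$ so that the new value of $x_1$ is $5\,x_{\mathit{old}}+\varepsilon$); then control passes to a location Check owned by Player~2, from which Player~2 either lets the simulation proceed to the next module or, suspecting a cheat, diverts the token into a verification widget $\mathrm{WD}_i$ (as in Fig.~\ref{fig:equalone} for decrement, with symmetric widgets for increment and for the two branches of a zero-test). The widget is a short deterministic chain of locations with carefully chosen price-rates and guards; it resets the clocks back to their entry values and its unique outcome reaches a target with payoff exactly $1+\varepsilon$, hence payoff $1$ iff $\varepsilon=0$. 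To prevent Player~1 from gaining in the opposite direction, the module offers him an Abort branch (the location of price-rate $1$ in Fig.~\ref{fig:equalone}): if Player~2 wastes any time $t>0$ in Check, Player~1 takes Abort, spends $1+t$ time there and reaches a target with payoff exactly $1$; but if Player~2 commits no such time, taking Abort makes the payoff exceed $1$. This forces Player~2 to play either an immediate continuation or an honest verification, and forces Player~1 to play honest updates.

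Correctness splits into the two usual directions. If $M$ halts, Player~1 plays every update faithfully ($\varepsilon=0$ throughout) and takes Abort exactly when Player~2 delays in a Check location; against any Player~2 behaviour this yields payoff $1$, reached either through some widget, through some Abort target, or through the target of the HALT module (which is reached because the simulation follows the unique halting run), so Player~1 wins $\ReachOb({=}1)$. Conversely, if $M$ does not halt, any Player~1 strategy that ever commits $\varepsilon\ne0$ is punished by Player~2 diverting into the corresponding widget, with payoff $1+\varepsilon\ne1$; a strategy that never cheats produces the faithful infinite simulation, the target is never reached and the payoff is $+\infty\ne1$; in all cases Player~1 loses. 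Hence Player~1 wins iff $M$ halts, giving undecidability for $\ReachOb({=}1)$. The remaining comparisons follow by minor surgery: $\ReachOb({\le}1)$ is already covered by the reduction of \cite{BCJ09}, and for ${\bowtie}\in\{<,>,\ge\}$ one shifts the constant costs so that honest play lands exactly on the boundary (payoff $1$, which satisfies ${\le}1$ and ${\ge}1$ but not ${<}1$ or ${>}1$) while any $\varepsilon\ne0$ pushes the payoff strictly to one side, adding a trivial extra branch that lets Player~2 force the sign of the error when the objective is one-sided.

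\textbf{Main obstacle.} The delicate part is the design of the verification widgets together with the Abort gadget. They must be deterministic two-clock sub-PTGs whose unique payoff is an \emph{exact affine function} $1+\varepsilon$ of the error just committed; they must restore the clock values so that the main invariant is preserved for the continuation; they must let the zero-test module distinguish $c_i=0$ from $c_i>0$ using only the two available clocks; and the whole scheme must be adapted uniformly to all five operators, where the strict-versus-non-strict cases and the direction in which Player~2 can force an error are the most error-prone.
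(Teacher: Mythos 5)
Your construction for $\ReachOb({=}1)$ follows the paper almost exactly: the $\frac{1}{5^{c_1}7^{c_2}}$ encoding on two clocks, the Player~2 Check location, the deterministic widgets yielding payoff $1+\varepsilon$, and the Abort gadget that punishes a verifier who delays. For that operator the argument you give is the paper's argument, and it is sound.

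Your sketch for the strict comparisons, however, does not go through. You propose to ``shift the constant costs so that honest play lands exactly on the boundary (payoff $1$, which satisfies ${\le}1$ and ${\ge}1$ but not ${<}1$ or ${>}1$).'' But then an honest Player~1 never satisfies $\ReachOb({<}1)$, and your reduction would conclude that Player~1 loses even when $M$ halts --- the wrong direction of the equivalence. What the paper actually does for the strict operators is structural, not a constant shift: each verification widget gains a final location with price-rate $-1$ whose outgoing guard is $0<x_2<1$, so that Player~2 must delay some tunable $p\in(0,1)$ there. Honest play ($\varepsilon=0$) then yields payoff $1-p<1$ for every $p$, while a cheat $\varepsilon>0$ lets Player~2 pick $p\le\varepsilon$ and push the payoff to at least $1$; a twin widget with reversed price-rates handles $\varepsilon<0$, and the Abort guard is tightened to $x_2=1$ so Player~1 cannot pad delays. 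As a smaller point: your widgets need not ``reset the clocks back to their entry values'' or ``restore the clock values so that the main invariant is preserved for the continuation.'' In the construction the widgets are terminal (they end at a target), so there is no continuation; the invariant for $\ell_{k+1}$ is maintained on the Go branch of the main module, not inside any widget.
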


\begin{proof}
  We prove that the existence of a strategy for
  constrained-reachability objective $\ReachOb({=}1)$ is
  undecidable. The proofs for other objectives follow a similar
  approach: we outline the changes at the end of this proof. In order
  to obtain the undecidability result, we use a reduction from the
  halting problem for two counter machines. Our reduction uses a PTG
  with arbitrary price-rates, and zero prices on labels, along with
  two clocks $x_1, x_2$.
  
  We specify a module for each instruction of the counter machine.  On
  entry into a module for increment/decrement/zero check, we always
  have that $x_1=\frac{1}{5^{c_1}7^{c_2}}$ and $x_2=0$ where $c_1$
  (resp. $c_2$) is the value of counter $C_1$ (resp. $C_2$).  Given a
  two counter machine, we construct a PTG $\arena$ whose building
  blocks are the modules for instructions.  The role of Player 1 will
  be to simulate faithfully the machine by choosing appropriate delays
  to adjust the clocks to reflect changes in counter values.  Player 2
  will have the opportunity to verify that Player~1 did not cheat
  while simulating the machine.  We shall now present the modules for
  decrement, increment and zero-check instructions of the two counter
  machine.
  
\smallskip
  \paragraph{Simulate decrement instruction} 
  Fig.~\ref{fig:equalone} gives the complete module for the
  instruction to decrement $C_1$.  Let us denote by
  $x_{old}=\frac{1}{5^{c_1}7^{c_2}}$ the value of $x_1$ while entering
  the module. At the location $\ell_{k+1}$ of the module, $x_1 = x_{new}$
  should be $5x_{old}$ to correctly decrement counter $C_1$.

  At location $\ell_k$, Player 1 spends a nondeterministic amount of
  time $k= x_{new} - x_{old}$ such that $x_{new} =
  5x_{old}+\varepsilon$. To correctly decrement $C_1$, $\varepsilon$
  should be 0, and $k$ must be $\frac{4}{5^{c_1}7^{c_2}}$.  At
  location Check, Player 2 could choose to go to $Go$ (in order to
  continue the simulation of the machine) or to go to the widget
  $\text{WD}_1$, if he suspects that $\varepsilon \neq 0$. If Player 2
  spends $t$ time in the location Check before proceeding to
  $\ell_{k+1}$, then Player 1 can enter the location Abort from $Go$
  (to abort the simulation), spend $1+t$ time in location Abort and
  reach a target $T_1$ with cost $=1$ (and thus achieve his
  objective). However, if $t=0$ then entering location Abort will make
  the cost to be $>1$ (which is losing for Player 1). In this case,
  player 1 will prefer entering $\ell_{k+1}$ from $Go$.
   
  If player 2 spends $t$ time in location Check, and enters widget
  $\text{WD}_1$, then the cost upon reaching the target in the widget
  $\text{WD}_1$ is $1+\varepsilon$ which is $1$ iff $\varepsilon=0$
  (see Table~\ref{tab1}).
  
  \begin{table}[tbp]
    \caption{Cost Incurred in $\text{WD}_1$}
    \label{tab1}
    \centering
    \begin{tabular}{|x{.2\linewidth}|x{.2\linewidth}
        |x{.1\linewidth}|x{.1\linewidth}|x{.35\linewidth}|} 
      \hline
      Location $\rightarrow$ & $L$ & $M$ & $~~N~~$ & $T$ \\\hline  
      $x_1$ while entering& $x_{new} + t$ delay in
      Check is $t$ & $0$ &   &  \\ 
      \hline
      $x_2$ while entering&  $x_{new}-x_{old} + t$ & $1-x_{old}$ &
      $0$ & 1 \\\hline 
      Time elapsed at & $1 - x_{new}-t$ & $x_{old}$ & $1$ & -\\\hline
      Cost incurred at &  $-1 + x_{new} + t$ & $-5x_{old}$ & $2$ & - \\\hline
      \newline Total cost upon reaching target &  &
       & & cost of $-t$ due to
      delay in Check and $x_{new} = 5x_{old}+\varepsilon$ gives a total
      cost $-t+1+x_{new}+t -5x_{old}= 1 + \varepsilon = 1$ iff
      $\varepsilon=0$ \\\hline 
    \end{tabular}
  \end{table}

  Let us summarize the construction. Let us assume that when entering
  $\ell_k$ (see Fig.~\ref{fig:equalone}) the value of $x_{1}$
  (resp. $x_2$) is $\frac{1}{5^{c_1}7^{c_2}}$ (resp. $0$).  First, let
  us consider the case where Player 1 simulates correctly the machine
  (i.e., if Player 1 spends $\frac{4}{5^{c_1}7^{c_2}}$ time units in
  $\ell_k$). In this case, Player 2 has three possibilities: $(i)$ either
  Player 2 goes immediately to $Go$, in this case Player 1 has no
  incentive to go to the location Abort and thus the simulation of the
  machines goes on; $(ii)$ or Player 2 goes to $Go$ after
  delaying some time $t>0$ in the location Check, in this case Player
  1 will go to the location Abort in order to immediately achieve his
  objective; $(iii)$ or Player 2 goes to the widget $\text{WD}_1$, in
  this case, as Player 1 has correctly simulated the machine, Player 1
  achieves his objective. Let us now turn to the case where Player 1
  does not simulate the machine faithfully. In this case, Player 2 has
  the possibility to reach the target location with a cost different
  from $1$ by using the widget $\text{WD}_1$. In conclusion, if Player
  1 simulates the machine properly either the simulation continues or
  he achieves his objective immediately; and if Player 1 does not
  simulate the machine properly, then Player 2 has a strategy to reach
  a target location with a cost different from $1$.

  \smallskip
  \paragraph{Simulate Increment instruction}
  Fig.~\ref{fig_undec_ek_inc} gives the complete module for increment
  instruction. The construction is similar to that as decrement.
\begin{figure}[t]
  \begin{center}
\begin{tikzpicture}[->,>=stealth',shorten >=1pt,auto,node distance=1cm,
       semithick,scale=0.8]
       \node[initial,initial text={}, player1] at (-1.3,0) (lk) {$0$} ;
       \node()[above of=lk,node distance=5mm,color=gray]{$\ell_k$};
       \node[player1] at (0.5,0) (i) {$0$} ;
       \node()[above of=i,node distance=5mm,color=gray]{$I$};

       \node[player2] at (3,0) (chk){$-5$} ;
       \node()[above of=chk,node distance=5mm,color=gray]{$\text{Check}$};

       \node[player1] at (6,0) (lk1){$0$ } ;  
       \node()[above of=lk1,node distance=5mm,color=gray]{Go};

       \node[fill=gray!20,rounded corners,fill opacity=0.9] at (12,-0.65)(nxt){$\ell_{k+1}$}; 

       \node[player1] at (9, 0) (B) {$1$};
       \node()[above of=B,node distance=5mm,color=gray]{$\text{Abort}$};

       \node[accepting, player1] at (12, 0) (T1) {};
       \node()[above of=T1,node distance=5mm,color=gray]{$T_1$};
       
       \path (lk) edge node {$x_1{=}1$} node[below]{$\set{x_1}$}(i);
       \path (i) edge node {$0{<}x_1{<}1$} (chk);
       \path (chk) edge node {$\set{x_2}$} (lk1);
       \path (lk1) edge node {$x_2{=}0$} (B);
       \path(lk1) edge[bend right=7] node[below, near start] {$x_2{=}0$}(nxt);
       \path (B) edge node {$x_2 {>} 1$} (T1);
       \node[player1] (A) at (3, -1.5) {$-5$};
       \node()[below of=A,node distance=5mm,color=gray]{$L$};
       
       \node[player1] at (6, -1.5) (B){$-1$};
       \node()[below of=B,node distance=5mm,color=gray]{$M$};
       
       \node[player1] at (9, -1.5) (C){$6$};
       \node()[below of=C,node distance=5mm,color=gray]{$N$};

       \node[accepting, player1] at (12, -1.5) (T) {};
       \node()[below of=T,node distance=5mm,color=gray]{$T$};
       
       \path (A) edge node {$x_1{=}1$} node[below] {$\set{x_1}$} (B);
       \path (B) edge node {$x_2 {=} 2$} node[below] {$\set{x_2}$} (C);
       \path (C) edge node {$x_2 {=} 1$} node[below]{$\set{x_2}$} (T);
     
       \path (chk) edge node[yshift=1mm] {$x_1{\leq} 1$} (A);

       \node[rotate=90,color=gray] at (1.8, -1.5) (N) {$\text{WI}_1$};

       \draw[dashed,draw=gray,rounded corners=10pt] (2.3,-1) rectangle (12.6,-2.4);
     \end{tikzpicture}

\caption{$\ReachOb({=}1)$ : Widget to simulate increment counter $C_1$ instructions}
\label{fig_undec_ek_inc}
\end{center}
\end{figure}
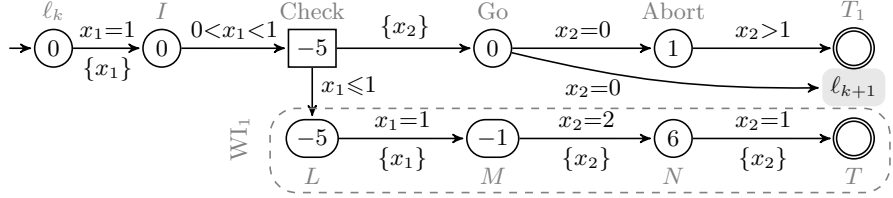

  \smallskip
  \paragraph{Simulate Zero-check instruction}
  The module for the Zero check instruction ($\ell_k:$ if $C_2=0$ goto
  $\ell_{k+1}$ else goto $\ell_{k+2}$) is depicted in
  Fig.~\ref{fig_undec_ek_zeroCheck} and its widgets are in
  Fig.~\ref{fig_undec_ek_wze}~and~\ref{fig_undec_ek_wzne}.
\begin{figure}[tbp]
\centering
    \begin{tikzpicture}[->,>=stealth',shorten >=1pt,auto,node distance=1cm,
      semithick,scale=0.8]
      \node[initial,initial text={}, player1] at (-.5,0) (A) {$0$};
      \node()[above of=A,node distance=5mm,color=gray]{$\ell_k$};

      \node[player2] at (2,1) (B){$0$} ;
       \node()[below of=B,node distance=5mm,color=gray]{$\text{Check}_{c_2=0}$};
      \node[widget,color=gray] at (2,2.5) (B1){$\text{W}_2^{= 0}$ };

      \node[player2] at (2,-1) (C){$0$} ;
      \node()[above of=C,node distance=5mm,color=gray]{$\text{Check}_{c_2\neq 0}$};
      \node[widget,color=gray] at (2,-2.5) (C1){$\text{W}_2^{\neq 0}$};

      \node[fill=gray!20,rounded corners,fill opacity=0.9] at (4.5,1)(D){$\ell_{k+1}$};

         \node[fill=gray!20,rounded corners,fill opacity=0.9] at (4.5,-1)(E){$\ell_{k+2}$};

      \path (A) edge[bend left=10] node[above left,xshift=3mm] {$x_2=0$} (B);
      \path (A) edge[bend right=10] node[below left,xshift=3mm]  {$x_2=0$} (C);
      \path (B) edge node {$x_2=0$} (B1);
      \path (C) edge node {$x_2=0$} (C1);
      \path (B) edge  node {$x_2=0$} (D);
      \path (C) edge  node {$x_2=0$} (E);
    \end{tikzpicture}
    \caption{Widget $\text{WZ}_2$ simulating zero-check for $C_2$}
    \label{fig_undec_ek_zeroCheck}
\end{figure}
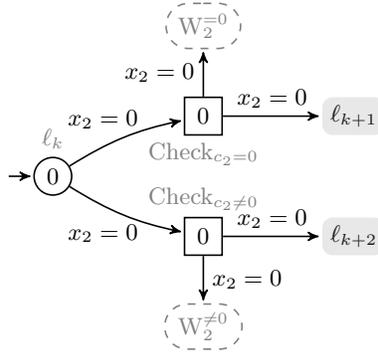

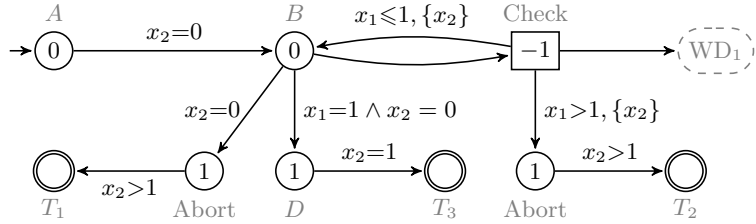
\begin{figure}[tbp]
\centering
\begin{tikzpicture}[->,>=stealth',shorten >=1pt,auto,node distance=1cm,
  semithick,scale=0.8]
  \node[initial,initial text={}, player1] (a) {$0$};
  \node()[above of=a,node distance=5mm,color=gray]{$A$};
  \node[player1] at (4,0) (b) {$0$};
  \node()[above of=b,node distance=5mm,color=gray]{$B$};
  \node[player2] at (8,0) (chk) {$-1$};
  \node()[above of=chk,node distance=5mm,color=gray]{Check};

  \node[widget,color=gray] at (11,0) (wd){$\text{WD}_1$};
  
  \node[player1] at (8,-2) (c) {$1$};
  \node()[below of=c,node distance=5mm,color=gray]{Abort};
  \node[player1] at (4,-2) (d) {$1$};
  \node()[below of=d,node distance=5mm,color=gray]{$D$};
  \node[accepting,player1] at (10.5,-2) (t) {};
  \node()[below of=t,node distance=5mm,color=gray]{$T_2$};
  \node[accepting,player1] at (6.5,-2) (t1) {};
  \node()[below of=t1,node distance=5mm,color=gray]{$T_3$};
  
  \path (a) edge node {$x_2{=}0$} (b);
  \path (b) edge[bend right=10] node {} (chk);
  \path (chk) edge [bend right=10] node[above] {$x_1 {\leq} 1,\{x_2\}$} (b);  
  \path (chk) edge node {} (wd);
  \path (b) edge node {$x_1{=}1\land x_2=0$} (d);
  \path (chk) edge node {$x_1{>}1,\{x_2\}$} (c);
  \path (c) edge node {$x_2{>}1$} (t);
  \path (d) edge node {$x_2{=}1$} (t1);
  
  \node[player1] at (2.5,-2) (C) {$1$};
  \node()[below of=C,node distance=5mm,color=gray]{Abort};
  \node[accepting, player1] at (0,-2) (T) {};
  \node()[below of=T,node distance=5mm,color=gray]{$T_1$};
  
  \path (b) edge[left] node {$x_2{=}0$} (C);
  \path (C) edge node {$x_2{>}1$} (T);
\end{tikzpicture}
\caption{Widget $\text{W}_2^{=0}$. Note that if the delay in Check is
  $>0$ then Player 1 will go to Abort and reach a target with cost
  $=1$. Widget $\text{WD}_1$ is the same as in the decrement
  module. Delay at $B$ is $k = \frac{4\times 5^i}{5^{c_1}7^{c_2}} +
  \varepsilon$. If $\varepsilon \neq 0$ then Player 2 will enter the
  widget $\text{WD}_1$ and the target in $\text{WD}_1$ is reached with
  cost $\neq 1$.}
\label{fig_undec_ek_wze}
\end{figure}

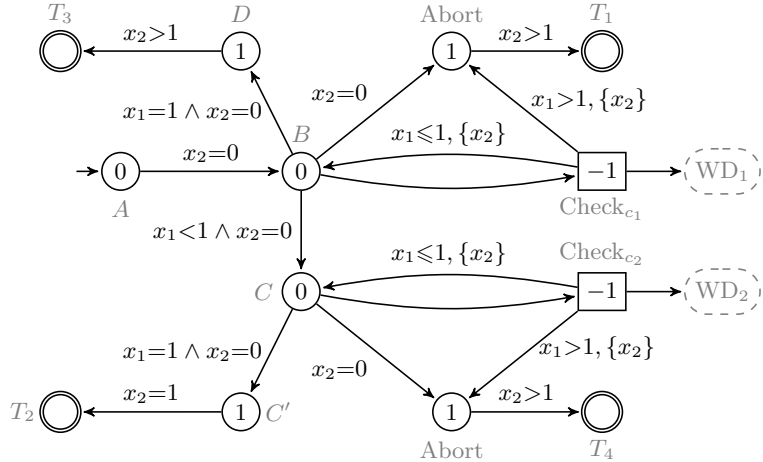
\begin{figure}[tbp]
\centering
\begin{tikzpicture}[->,>=stealth',shorten >=1pt,auto,node distance=1cm,
  semithick,scale=0.8]
  \node[initial,initial text={}, player1] (a) {$0$};
  \node()[below of=a,node distance=5mm,color=gray]{$A$};
  \node[player1] at (3,0) (b) {$0$};
  \node()[above of=b,node distance=5mm,color=gray]{$B$};

  \node[player2] at (8,0) (chk) {$-1$};
  \node()[below of=chk,node distance=5mm,color=gray]{$\text{Check}_{c_1}$};
  \node[widget,color=gray] at (10,0) (wl){$\text{WD}_1$};
  
  \node[player2] at (8,-2) (chk2) {$-1$};
  \node()[above of=chk2,node distance=5mm,color=gray]{$\text{Check}_{c_2}$};
  \node[widget,color=gray] at (10,-2) (wl2){$\text{WD}_2$};
  
  \node[player1] at (3,-2) (c) {$0$};
  \node()[left of=c,node distance=5mm,color=gray]{$C$};
  \node[player1] at (2,2) (d) {$1$};
  \node()[above of=d,node distance=5mm,color=gray]{$D$};
  \node[player1] at (2,-4) (c1) {$1$};
  \node()[right of=c1,node distance=5mm,color=gray]{$C'$};
    
  \node[accepting, player1] at (-1,-4) (t) {};
  \node()[left of=t,node distance=5mm,color=gray]{$T_2$};
  \node[accepting, player1] at (-1,2) (t1) {};
  \node()[above of=t1,node distance=5mm,color=gray]{$T_3$};
  
  \path (a) edge node {$x_2{=}0$} (b);
  \path (b) edge[bend right=10] node {} (chk);
  \path (chk) edge [bend right=10] node[above] {$x_1{\leq}1,\{x_2\}$} (b);  
  \path (chk) edge node {} (wl);
  
  \path (c) edge[bend right=10] node {} (chk2);
  \path (chk2) edge [bend right=10] node[above] {$x_1{\leq}1,\{x_2\}$} (c);  
  \path (chk2) edge node {} (wl2);
  
  \path (b) edge [left] node {$x_1{=}1\land x_2{=}0$} (d);
  \path (b) edge node[left] {$x_1{<}1\land x_2{=}0$} (c);
  \path (c) edge node[left] {$x_1{=}1\land x_2{=}0$} (c1);
  
  \path (c1) edge node[above] {$x_2{=}1$} (t);
  \path (d) edge node[above] {$x_2{>}1$} (t1);
  
  \node[player1] at (5.5,2) (C) {$1$};
  \node()[above of=C,node distance=5mm,color=gray]{Abort};
  \node[accepting, player1] at (8,2) (T) {};
  \node()[above of=T,node distance=5mm,color=gray]{$T_1$};

  \path (b) edge node[above left] {$x_2{=}0$} (C);
  \path (C) edge node {$x_2{>}1$} (T);
  \path (chk) edge node[right,yshift=1mm] {$x_1{>}1,\{x_2\}$} (C);
  
  \node[player1] at (5.5,-4) (C1) {$1$}; 
  \node()[below of=C1,node distance=5mm,color=gray]{Abort};

  \node[accepting, player1] at (8,-4) (T1){};
  \node()[below of=T1,node distance=5mm,color=gray]{$T_4$};

  \path (c) edge node[below left] {$x_2{=}0$} (C1);
  \path (C1) edge node {$x_2{>}1$} (T1);
  \path (chk2) edge node[right,yshift=1mm,xshift=1mm] {$x_1{>}1,\{x_2\}$} (C1);
\end{tikzpicture}
\caption{Widget $\text{W}_2^{\neq 0}$. Widget $\text{WD}_2$ is similar
  to $\text{WD}_1$ in the decrement module shown earlier, except that
  the prices are adjusted to verify decrement of counter $C_2$ instead
  of $C_1$.}
\label{fig_undec_ek_wzne}
\end{figure}
  
  Player 1 guesses whether $C_2$ is zero or not and Player 2 has the
  possibility to check if the guess of Player 1 is correct by entering
  the corresponding widget. In the widget depicted in Fig.
  \ref{fig_undec_ek_wze}, in order to verify if $C_2$ is indeed zero,
  $C_1$ is repeatedly decremented (using a construction similar to
  Decrement module in Fig.~\ref{fig:equalone}) until $C_1$ becomes 0.
    At that point, if $x_1$ is not 1 then clearly $C_2$ was nonzero, and
  Player 1 has made an error in guessing. In this case, the total cost
  incurred will be greater than $1$. However, if indeed $C_2$ was
  zero, then the total cost incurred is 1 for the following reasons:
  \begin{itemize}
  \item If some time has elapsed at location Check, and still $x_1$ is
    less than 1, Player~1 can go to Abort from location $B$.  However,
    if time has elapsed at Check and $x_1>1$, the only option is to go
    to Abort from Check; then Abort goes to $T_2$. If Player~1 cannot
    control $x_1$ not exceeding 1, then in that case, Player~2 will
    spend 0 units of time in Check, and then Player~1 will pay a cost
    $>1$ on reaching $T_2$.
  \end{itemize}
  In a similar manner, if Player~1 guesses that $C_2 \neq 0$, then
  Player~2 verifies it by entering the widget in
  Fig.~\ref{fig_undec_ek_wzne}.  In this case, $C_1$ is first
  decremented (the decrement module may be called several times) until
  it reaches 0, and then $C_2$ is decremented until it also reaches 0
  (the decrement module is called at least once):
  \begin{itemize}  
  \item Entry to $C$ happens with $x_1<1$ and $x_2=0$. To get to $C'$,
    some time elapse is needed at $C$. However, then $x_2 >0$, so
    directly there is no access to $C'$, without going to
    $\text{Check}_{c_2}$ (ensuring that the decrement module for $C_2$
    is taken at least once). To not get punished through
    $\text{Check}_{c_2}$, Player~1 has to elapse a time of the form
    $\frac 6 {5^i7^j}$: hence, if $C_2=0$, this time elapse must be
    $\frac 6{5^i}$, if $C$ was entered with $x_1=\frac{1}{5^i}$. Due
    to $C_2=0$, at some point, $\frac 6{5^i}$ will exceed 1; at that
    time, Player~2 will spend 0 unit of time in Check, and go to
    Abort. That will punish Player~1.
  \end{itemize}  
  Note that Player~2 can not delay in the zero check module due to
  clock constraints. In the widgets $\text{W}_1^{=0}$ and
  $\text{W}_1^{\neq 0}$ however, Player~2 can delay at the Check
  locations. As in the decrement module, we offer to Player~1 the
  option to abort (via the Abort location) : if $t$ units of time was
  spent at location Check, then the cost incurred will be $1+p-t$
  where $p>0$ is chosen by Player~1. Table~\ref{tab2} shows the cost
  incurred in the widget $\text{W}_2^{\neq 0}$ when Player~1 guesses
  $C_2\neq 0$.

  \begin{table}[tbp]
    \caption{Cost incurred in widget $\text{W}_2^{\neq 0}$}
    \label{tab2}
    \centering
    \begin{tabular}{|x{.15\linewidth}|x{.08\linewidth}|x{.13\linewidth}|
        x{.165\linewidth}|x{.05\linewidth}|x{.13\linewidth}|
        x{.15\linewidth}|x{.05\linewidth}|} 
      \hline
      Location $\rightarrow$ & $A$ & $B$ & $D$ & $T_3$ & $C$ & $C'$ & $T_2$\\\hline  
      $x_1$ while entering & $\frac{1}{5^{c_1}7^{c_2}}$ &
      $\frac{5^i}{5^{c_1}7^{c_2}}$ \newline loop taken $i$ times &
      $1$ \newline  $\frac{5^i}{5^{c_1}7^{c_2}} = 1$ \newline
      $\Rightarrow i=c_1 \land{} $\newline $c_2=0$ &  &
      $\frac{7^j}{7^{c_2}}$  \newline  loop taken $j$ times  &
      $1$ \newline $\frac{5^i7^j}{5^{c_1}7^{c_2}} = 1$ \newline
      $\Rightarrow i=c_1 \land{}$\newline  $j=c_2>0$ &  
      \\\hline
  
       $x_2$ while entering & 0 & 0 & 0 & $>1$ & 0 & 0 & $1$ \\\hline
  
       Time elapsed at& 0 & $\frac{4\times
         5^i}{5^{c_1}7^{c_2}}+\varepsilon$ & $>1$ & 0 &
       $\frac{6\times 7^j}{7^{c_2}}+\varepsilon$  & $1$ &0 \\\hline 
  
       Cost incurred at& 0 & 0 & $>1$ & 0 & 0 & $1$ &  \\\hline
  
       Total cost &  &  &  & $>1$ &  & & $1$ \\\hline
     \end{tabular}
   \end{table}

  To summarize the zero check instruction for $C_2$: assume that at
  location $\ell_k$, we enter with $x_1=\frac{1}{5^i}$ (hence $C_2=0$)
  and $x_2=0$.  First let us consider the case when Player~1 guesses
  correctly that $C_2$ is zero: in this case, Player~2 has 2
  possibilities: ($i$) continue with the next instruction, ($ii$)
  simulate the widget $\text{W}_2^{=0}$.  In both cases, no time
  elapse is possible at location $\text{Check}_{c_2}$.  Player~1
  achieves his objective even on entering widget $\text{W}_2^{=0}$ due
  to his correct guess. Let us now turn to the case when Player~1 made
  a wrong guess. Then, Player~2 has the capability to invoke widget
  $\text{W}_2^{\neq 0}$ and in this case, Player~1 will incur a cost
  different from $1$. A similar analysis applies to the case when we
  start in $\ell_k$ with $x_1=\frac{1}{5^i7^j}$.
   
  \smallskip
  \paragraph{Correctness of the construction}
  On entry into the location $\location_n$ (encoding the HALT
  instruction of the two-counter machine), we reset clock $x_1$ to 0;
  $\location_n$ has cost 1, and the edge coming out of $\location_n$
  goes to a Goal location, with guard $x_1{=}1$.
  \begin{enumerate}
  \item Assume that the two counter machine halts. If Player~1
    simulates all the instructions correctly, he will incur a cost 1,
    by either reaching the goal location after $\location_n$, or by
    entering a widget (the second case only occurs if Player~2 decides
    to check whether Player~1 simulates the machine faithfully. If
    Player~1 makes an error in his computation, Player~2 can always
    enter an appropriate widget, making the cost different from $1$.
    In summary, if the two counter machine halts, Player~1 has a
    strategy to achieve his goal (i.e., reaching a target location
    with a cost equal to $1$).
  \item Assume that the two counter machine does not halt.
    \begin{itemize}
    \item If Player~1 simulates all the instructions correctly, and if
      Player~2 never enters a widget, then Player~1 incurs cost
      $\infty$ as the path never reaches a target. 
    \item Suppose now that Player~1 makes an error. In this case,
      Player~2 always has the capability to reach a target set with a
      cost different from $1$.
     \end{itemize}
     In summary, if the two counter machine does not halt, Player~1
     does not have a strategy to achieve his goal.
   \end{enumerate}

   Thus, Player~1 incurs a cost 1 iff he chooses the strategy of
   faithfully simulating the two counter machine, when the machine
   halts. When the machine does not halt, the cost incurred by
   Player~1 is different from $1$ if Player 1 made a simulation error
   and Player~2 entered a widget. Else if a widget is not entered then
   the run does not end and cost is $+\infty$.  \qed
\end{proof}

We briefly discuss the difference with the proofs for other
constrained-price reachability objectives.
\begin{itemize}
\item $\ReachOb({<}1)$. We use 2 clocks $x_1, x_2$, and a module for
  each instruction of the two counter machine. On entry into a module
  for increment/decrement/zero check, $x_1=\frac{1}{5^{c_1}7^{c_2}}$
  and $x_2=0$.
 
  We discuss the case of decrementing counter $C_1$.
  Fig.~\ref{fig_undec_lk_dec} gives the complete module for the
  instruction to decrement $C_1$. To simulate the decrementation
  correctly, Player 1 has to elapse $\frac{4}{5^{c_1}7^{c_2}}$ units
  of time at $\ell_k$. At the location Check, Player 2 can either
  continue to $\ell_{k+1}$ after elapsing $t \geq 0$ units of time, or
  enter a widget to check the choice of Player 1. Consider the case
  when Player 2 proceeds to $\ell_{k+1}$ after elapsing a time $t
  >0$. This incurs a cost $-t$.  Then Player 1 can go to location
  Abort, and spend 1 unit of time there, reaching a target with cost
  $1-t < 1$.  Assume now that Player 1 spends
  $\frac{4}{5^{c_1}7^{c_2}}+\varepsilon$ units of time in $\ell_k$, and
  Player 1 spends $t>0$ units of time at Check. Since Player 1 has
  elapsed more time than what he should have, Player 2 enters the
  widget $\text{WD}_1^{>}$.  The cost incurred so far is $-t$.  On
  entry into $\text{WD}_1^{>}$, we have
  $x_1=\frac{5}{5^{c_1}7^{c_2}}+\varepsilon+t$,
  $x_2=\frac{4}{5^{c_1}7^{c_2}}+\varepsilon+t$.  Then, the total
  accumulated cost becomes $-1+\frac{5}{5^{c_1}7^{c_2}}+\varepsilon$
  on coming out of $L$, and then becomes $-1+\varepsilon$ on coming
  out of $M$; this further becomes $1+\varepsilon$ on entering $O$. At
  location $O$, time $0<p<1$ has to be spent, making the total cost to
  be $1-p+\varepsilon$. Time $p$ is chosen by Player 2. Thus, if
  $\varepsilon=0$ (hence, Player 1 made no error), the cost incurred
  is less than $1$; however, when $\varepsilon >0$, $p$ can always be
  chosen to be at most $\varepsilon$, thereby making the total cost
  at least $1$.
 
  A similar analysis can be done when Player 1 incurs a delay
  $\frac{4}{5^{c_1}7^{c_2}}-\varepsilon$, $\varepsilon>0$ at location
  $\ell_{k}$. In this case, Player 2 enters widget $\text{WD}_1^{<}$.
 
  The increment and zero-check instructions are obtained by a similar
  approach.
   
 \begin{figure}[tbp]
\begin{center}
\begin{tikzpicture}[->,>=stealth',shorten >=1pt,auto,node distance=1cm,
  semithick,scale=0.9]
	\node[initial,initial text={}, player1] at (0.5,0) (lk) {$0$} ;
       \node()[above of=lk,node distance=5mm,color=gray]{$\ell_k$};
       
       \node[player2] at (3,0) (chk){$-1$} ;
       \node()[above right of=chk,node distance=7mm,color=gray]{$\text{Check}$};

       \node[player1] at (6,0) (lk1){$0$ } ;  
       \node()[above of=lk1,node distance=5mm,color=gray]{Go};

       \node[fill=gray!20,rounded corners,fill opacity=0.9] at
       (12,-0.65)(nxt){$\ell_{k+1}$}; 

       \node[player1] at (9, 0) (B) {$1$};
       \node()[above of=B,node distance=5mm,color=gray]{$\text{Abort}$};

       \node[accepting, player1] at (12, 0) (T) {};
       \node()[above of=T,node distance=5mm,color=gray]{$T$};
       
       \path (lk) edge node {$0{<}x_1{<}1$} (chk);
       \path (chk) edge node {$\set{x_2}$} (lk1);
       \path (lk1) edge node {$x_2{=}0$} (B);
       \path(lk1) edge[bend right=7] node[below, near start] {$x_2{=}0$}(nxt);
       \path (B) edge node {$x_2 {=} 1$} (T);
        
       \node[player1] (A) at (2, -1.5) {$1$};
       \node()[below of=A,node distance=5mm,color=gray]{$A$};
       
       \node[player1] at (4.5, -1.5) (B){$5$};
       \node()[below of=B,node distance=5mm,color=gray]{$B$};
       
       \node[player1] at (7, -1.5) (C){$-1$};
       \node()[below of=C,node distance=5mm,color=gray]{$C$};

       \node[accepting, player1] at (9.5, -1.5) (T1) {};
       \node()[below of=T1,node distance=5mm,color=gray]{$T_1$};
       
       \path (A) edge node {$x_1{=}1$} node[below] {$\set{x_1}$} (B);
       \path (B) edge node {$x_2 {=} 1$} node[below] {$\set{x_2}$} (C);
       \path (C) edge node {$0 {<} x_2 {<} 1$} (T1);
     
       \path (chk) edge node {} (A);

       \node[rotate=90,color=gray] at (0.8, -1.5) (N) {$\text{WD}_1^<$};

       \draw[dashed,draw=gray,rounded corners=10pt] (1.3,-1) rectangle (10.1,-2.4);
       
       \node[player1] (L) at (2, 1.5) {$-1$};
       \node()[above of=L,node distance=5mm,color=gray]{$L$};
       
       \node[player1] at (4.5, 1.5) (M){$-5$};
       \node()[above of=M,node distance=5mm,color=gray]{$M$};
       
       \node[player1] at (7, 1.5) (N){$2$};
       \node()[above of=N,node distance=5mm,color=gray]{$N$};
       
       \node[player1] at (9.5, 1.5) (O){$-1$};
       \node()[above of=O,node distance=5mm,color=gray]{$N$};

       \node[accepting, player1] at (12, 1.5) (T2) {};
       \node()[above of=T2,node distance=5mm,color=gray]{$T_2$};
       
       \path (L) edge node {$x_1{=}1$} node[below] {$\set{x_1}$} (M);
       \path (M) edge node {$x_2 {=} 1$} node[below] {$\set{x_2}$} (N);
       \path (N) edge node {$x_2 {=} 1$} node[below] {$\set{x_2}$} (O);
       \path (O) edge node {$0 {<} x_2 {<} 1$} (T2);
     
       \path (chk) edge node {} (L);

       \node[rotate=90,color=gray] at (0.8, 1.5) (N) {$\text{WD}_1^>$};

      \draw[dashed,draw=gray,rounded corners=10pt] (1.3,1) rectangle (12.6,2.4);
      \end{tikzpicture}

      \caption{$\ReachOb({<}1)$: Simulation to decrement counter
        $C_1$}
\label{fig_undec_lk_dec}
\end{center}
\end{figure}

\item $\ReachOb({\leq }1)$.  We use 2 clocks $x_1, x_2$, and a module
  for each instruction of the two counter machine. On entry into a
  module for increment/decrement/zero check,
  $x_1=\frac{1}{5^{c_1}7^{c_2}}$ and $x_2=0$ where $c_1$($c_2$) is the
  value of counter $C_1$($C_2$).
 
  We discuss the case of decrementing counter
  $C_1$. Fig.~\ref{fig_undec_leqk_dec} gives the complete module for
  the instruction to decrement $C_1$.  As in the case of the objective
  $\ReachOb({<}1)$, Player 1 has to elapse a time
  $\frac{4}{5^{c_1}7^{c_2}}$ at $\ell_k$.  If Player 2 elapses $t>0$
  units of time in Check and proceeds to $\ell_{k+1}$, then Player 1 has
  the option to go to Abort, and reach a target with a total cost of
  $-t+1+k$, where $1+k$ is the time elapsed at Abort. $k$ can be
  chosen by Player 1 so that $1+k-t \leq 1$.

  Let us consider the case when Player 1 spends
  $\frac{4}{5^{c_1}7^{c_2}}-\varepsilon$ units of time in $\ell_k$, with
  $\varepsilon \geq 0$ and Player 1 spends $t >0$ units of time at Check.
  Player 2 then will proceed to the widget $\text{WD}_1^{<}$. On entry into
  $\text{WD}_1^{<}$, we have $x_1=\frac{5}{5^{c_1}7^{c_2}}-\varepsilon+t$,
  $x_2=\frac{4}{5^{c_1}7^{c_2}}-\varepsilon+t$, and an incurred cost of
  $-t$.  On entering $M$, we have an accumulated cost of
  $1-\frac{5}{5^{c_1}7^{c_2}}+\varepsilon-2t$, and further on entering
  $N$, the accumulated cost becomes $1+\varepsilon-2t$.  Finally, when
  $T$ is entered, the total cost is $1+\varepsilon-2t-p$, where $0<p<1$
  is a delay chosen by Player 2.  Clearly, $1+\varepsilon-2t-p < 1$ if
  $\varepsilon=0$. However, if $\varepsilon \neq 0$, Player 2 can adjust the
  values of $p,t$ in such a way that $2t+p < \varepsilon$, there by
  making the total cost $>1$.

  A similar analysis can be done when Player 1 incurs a delay
  $\frac{4}{5^{c_1}7^{c_2}}+\varepsilon$, $\varepsilon >0$ at location
  $\ell_{k}$. In this case, Player 2 enters widget $\text{WD}_1^{>}$.
 
  The increment and zero-check instructions are obtained by a similar
  approach.

 \begin{figure}[tbp]
\begin{center}
\begin{tikzpicture}[->,>=stealth',shorten >=1pt,auto,node distance=1cm,
  semithick,scale=0.9]
	\node[initial,initial text={}, player1] at (0,0) (lk) {$0$} ;
       \node()[above of=lk,node distance=5mm,color=gray]{$\ell_k$};
       
       \node[player2] at (3,0) (chk){$-1$} ;
       \node()[above of=chk,node distance=5mm,color=gray]{$\text{Check}$};

       \node[player1] at (6,0) (lk1){$0$ } ;  
       \node()[above of=lk1,node distance=5mm,color=gray]{Go};

       \node[fill=gray!20,rounded corners,fill opacity=0.9] at
       (11,-0.65)(nxt){$\ell_{k+1}$}; 

       \node[player1] at (9, 0) (B) {$1$};
       \node()[above of=B,node distance=5mm,color=gray]{$\text{Abort}$};

       \node[accepting, player1] at (11, 0) (T) {};
       \node()[above of=T,node distance=5mm,color=gray]{$T$};
       
       \path (lk) edge node {$0{<}x_1{<}1$} (chk);
       \path (chk) edge node {$\set{x_2}$} (lk1);
       \path (lk1) edge node {$x_2{=}0$} (B);
       \path(lk1) edge[bend right=7] node[below, near start] {$x_2{=}0$}(nxt);
       \path (B) edge node {$x_2 {>} 1$} (T);
        
       \node[player1] (A) at (0, -1.5) {$1$};
       \node()[below of=A,node distance=5mm,color=gray]{$A$};
       
       \node[player1] at (3, -1.5) (B){$5$};
       \node()[below of=B,node distance=5mm,color=gray]{$B$};
       
       \node[player1] at (6, -1.5) (C){$-1$};
       \node()[below of=C,node distance=5mm,color=gray]{$C$};

       \node[accepting, player1] at (9, -1.5) (T1) {};
       \node()[below of=T1,node distance=5mm,color=gray]{$T_1$};
       
       \path (A) edge node {$x_1{=}1$} node[below] {$\set{x_1}$} (B);
       \path (B) edge node {$x_2 {=} 1$} node[below] {$\set{x_2}$} (C);
       \path (C) edge node {$0 {<} x_2 {<} 1$} (T1);
     
       \path (chk) edge node {} (A);

       \node[rotate=90,color=gray] at (-1.2, -1.5) (N) {$\text{WD}_1^<$};

       \draw[dashed,draw=gray,rounded corners=10pt] (-0.7,-1) rectangle (9.6,-2.4);
       
       \node[player1] (L) at (0, 1.5) {$-1$};
       \node()[above of=L,node distance=5mm,color=gray]{$L$};
       
       \node[player1] at (3, 1.5) (M){$-5$};
       \node()[above of=M,node distance=5mm,color=gray]{$M$};
       
       \node[player1] at (6, 1.5) (N){$2$};
       \node()[above of=N,node distance=5mm,color=gray]{$N$};
       
       \node[player1] at (9, 1.5) (O){$-1$};
       \node()[above of=O,node distance=5mm,color=gray]{$N$};

       \node[accepting, player1] at (11.5, 1.5) (T2) {};
       \node()[above of=T2,node distance=5mm,color=gray]{$T_2$};
       
       \path (L) edge node {$x_1{=}1$} node[below] {$\set{x_1}$} (M);
       \path (M) edge node {$x_2 {=} 1$} node[below] {$\set{x_2}$} (N);
       \path (N) edge node {$x_2 {=} 1$} node[below] {$\set{x_2}$} (O);
       \path (O) edge node {$0 {<} x_2 {<} 1$} (T2);
     
       \path (chk) edge node {} (L);

       \node[rotate=90,color=gray] at (-1.2, 1.5) (N) {$\text{WD}_1^>$};

      \draw[dashed,draw=gray,rounded corners=10pt] (-0.7,1) rectangle (12.1,2.4);
      \end{tikzpicture}
 
\caption{$\ReachOb({\leq}1)$: Simulation to decrement counter $C_1$ }
\label{fig_undec_leqk_dec}
\end{center}
\end{figure}

\item $\ReachOb({> }1)$. We use 2 clocks $x_1, x_2$, and a module for
  each instruction of the two counter machine. On entry into a module
  for increment/decrement/zero check, $x_1=\frac{1}{5^{c_1}7^{c_2}}$
  and $x_2=0$ where $c_1$($c_2$) is the value of counter $C_1$($C_2$).
 
  We discuss the case of decrementing counter
  $C_1$. Fig.~\ref{fig_undec_gk_dec} gives the complete module for the
  instruction to decrement $C_1$.  As in the previous two objectives,
  Player 1 has to spend a time $\frac{4}{5^{c_1}7^{c_2}}$ at $\ell_k$.
  If Player 2 elapses $t>0$ units of time in Check and proceeds to
  $\ell_{k+1}$, then Player 1 has the option to go to Abort, and reach a
  target with a total cost of $t+k$, where $k<1$ is the time elapsed
  at Abort by Player 1. $k$ can be chosen by Player 1 so that $t+k>1$.

  Let us consider the case when Player 1 spends
  $\frac{4}{5^{c_1}7^{c_2}}-\varepsilon$ units of time in $\ell_k$, with
  $\varepsilon \geq 0$ and Player 1 spends $t >0$ units of time at Check.
  Player 2 then will proceed to the widget $\text{WD}_1^{<}$. On entry into
  $\text{WD}_1^{<}$, we have $x_1=\frac{5}{5^{c_1}7^{c_2}}-\varepsilon+t$,
  $x_2=\frac{4}{5^{c_1}7^{c_2}}-\varepsilon+t$, and an incurred cost of
  $t$.  On entering $M$, we have an accumulated cost of
  $-1+\frac{5}{5^{c_1}7^{c_2}}-\varepsilon+2t$, and further on entering
  $N$, the accumulated cost becomes $-1-\varepsilon+2t$.  On entering
  $O$, the accumulated cost becomes $1-\varepsilon+2t$, and finally, on
  entering $T$, the total cost is $1-\varepsilon+2t+p$, where $0<p<1$ is
  a delay chosen by Player 2.  Clearly, $1-\varepsilon+2t+p > 1$ if
  $\varepsilon=0$. However, if $\varepsilon \neq 0$, Player 2 can adjust the
  values of $p,t$ in such a way that $2t+p \leq \varepsilon$, there by
  making the total cost $\leq 1$.

 \begin{figure}[tbp]
\begin{center}
\begin{tikzpicture}[->,>=stealth',shorten >=1pt,auto,node distance=1cm,
  semithick,scale=0.9]
	\node[initial,initial text={}, player1] at (0,0) (lk) {$0$} ;
       \node()[above of=lk,node distance=5mm,color=gray]{$\ell_k$};
       
       \node[player2] at (3,0) (chk){$1$} ;
       \node()[above of=chk,node distance=5mm,color=gray]{$\text{Check}$};

       \node[player1] at (6,0) (lk1){$0$ } ;  
       \node()[above of=lk1,node distance=5mm,color=gray]{Go};

       \node[fill=gray!20,rounded corners,fill opacity=0.9] at
       (11,-0.65)(nxt){$\ell_{k+1}$}; 

       \node[player1] at (9, 0) (B) {$1$};
       \node()[above of=B,node distance=5mm,color=gray]{$\text{Abort}$};

       \node[accepting, player1] at (11, 0) (T) {};
       \node()[above of=T,node distance=5mm,color=gray]{$T$};
       
       \path (lk) edge node {$0{<}x_1{<}1$} (chk);
       \path (chk) edge node[above] {$x_1 \leq 1$} node[below]{$\set{x_2}$} (lk1);
       \path (lk1) edge node {$x_2{=}0$} (B);
       \path(lk1) edge[bend right=7] node[below, near start] {$x_2{=}0$}(nxt);
       \path (B) edge node {$x_2 {<} 1$} (T);
        
       \node[player1] (A) at (0, -1.5) {$1$};
       \node()[below of=A,node distance=5mm,color=gray]{$A$};
       
       \node[player1] at (3, -1.5) (B){$5$};
       \node()[below of=B,node distance=5mm,color=gray]{$B$};
       
       \node[player1] at (6, -1.5) (C){$-1$};
       \node()[below of=C,node distance=5mm,color=gray]{$C$};

       \node[accepting, player1] at (9, -1.5) (T1) {};
       \node()[below of=T1,node distance=5mm,color=gray]{$T_1$};
       
       \path (A) edge node {$x_1{=}1$} node[below] {$\set{x_1}$} (B);
       \path (B) edge node {$x_2 {=} 1$} node[below] {$\set{x_2}$} (C);
       \path (C) edge node {$0 {<} x_2 {<} 1$} (T1);
     
       \path (chk) edge node {} (A);

       \node[rotate=90,color=gray] at (-1.2, -1.5) (N) {$\text{WD}_1^>$};

       \draw[dashed,draw=gray,rounded corners=10pt] (-0.7,-1) rectangle (9.6,-2.4);
       
       \node[player1] (L) at (0, 1.5) {$-1$};
       \node()[above of=L,node distance=5mm,color=gray]{$L$};
       
       \node[player1] at (3, 1.5) (M){$-5$};
       \node()[above of=M,node distance=5mm,color=gray]{$M$};
       
       \node[player1] at (6, 1.5) (N){$2$};
       \node()[above of=N,node distance=5mm,color=gray]{$N$};
       
       \node[player1] at (9, 1.5) (O){$-1$};
       \node()[above of=O,node distance=5mm,color=gray]{$N$};

       \node[accepting, player1] at (11.5, 1.5) (T2) {};
       \node()[above of=T2,node distance=5mm,color=gray]{$T_2$};
       
       \path (L) edge node {$x_1{=}1$} node[below] {$\set{x_1}$} (M);
       \path (M) edge node {$x_2 {=} 1$} node[below] {$\set{x_2}$} (N);
       \path (N) edge node {$x_2 {=} 1$} node[below] {$\set{x_2}$} (O);
       \path (O) edge node {$0 {<} x_2 {<} 1$} (T2);
     
       \path (chk) edge node {} (L);

       \node[rotate=90,color=gray] at (-1.2, 1.5) (N) {$\text{WD}_1^<$};

      \draw[dashed,draw=gray,rounded corners=10pt] (-0.7,1) rectangle (12.1,2.4);
      \end{tikzpicture}
 
\caption{$\ReachOb(> 1)$ : Simulation to Decrement Counter $C_1$ }
\label{fig_undec_gk_dec}
\end{center}
\end{figure}

\item $\ReachOb({\geq }1)$. It can be seen that a proof along similar
  lines as in the case of $\ReachOb({> }1)$ can be given.
\end{itemize}

\subsection{Bounded-time reachability objective} 

\begin{lemma}
  \label{lem_undec_timeBound}
  The existence of a strategy for bounded-time reachability objective
  $\BReachOb(18,40)$ is undecidable for PTGs with price-rates taken
  from $\{0,1\}$ and 5 clocks or more.
\end{lemma}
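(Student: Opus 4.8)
The plan is to reduce the halting problem for deterministic two-counter machines (undecidable, see~\cite{Min67}) to the existence of a winning strategy for Player~1 with objective $\BReachOb(18,40)$. Given a machine $M$ I would build a PTG $\arena$ with five clocks $x_1,x_2,z,a,b$, price-rates of locations in $\{0,1\}$ and zero prices on labels, obtained by wiring one \emph{module} per instruction of $M$ according to its control flow, together with a terminal gadget entered when the HALT instruction is simulated. The encoding invariant, maintained on entry to the module simulating the $(k{+}1)$st executed instruction when the counters hold $(c_1,c_2)$, is: one of $x_1,x_2$ equals $v_k=\frac{1}{2^{k+c_1}3^{k+c_2}}$ and the other is $0$, while $a=b=0$ and $z=1-\frac{1}{2^k}$. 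Here $z$ is never reset and serves as a global time counter: the $(k{+}1)$st module is required, by an equality guard on $z$ at its exit, to consume exactly $\frac{1}{2^{k+1}}$ time units, so the whole main simulation takes strictly less than one time unit; the spurious factors $2^k$ and $3^k$ in $v_k$ are precisely what allow one step to be realised with a \emph{bounded} delay while still performing a multiplicative counter update (so that, e.g., an increment of $C_1$ turns $(x_1=v_k,\ x_2=0)$ into $(x_1=0,\ x_2=\tfrac{1}{12}v_k)$).

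Inside a module Player~1 chooses delays to perform the arithmetic update, using $a,b$ for scratch work, and may cheat, realising $\tfrac{1}{12}v_k+\varepsilon$ for some $\varepsilon\neq 0$ in the increment case and similarly for decrements. As usual, at a ``Check'' location Player~2 decides whether to let the simulation proceed or to divert the play into a \emph{verification widget}; zero-tests are handled by letting Player~1 guess the branch and letting Player~2 refute a wrong guess in the appropriate widget. The widgets are $\{0,1\}$-priced gadgets in which a short sequence of moves, guarded by equalities on $x_1,x_2,a,b$ and on $z$, leads to a target and is designed so that it is completed \emph{within cost $18$ and within total elapsed time $40$} exactly when $\varepsilon=0$; because all price-rates are nonnegative, a nonzero error of either sign cannot be repaired, and Player~2 can force it to surface either as accumulated cost exceeding $18$ or as total elapsed time exceeding $40$ (Player~2 enters the widget matching the sign of the suspected error). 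The constants $18$ and $40$ are chosen large enough to absorb the fixed overheads of the widgets and of the terminal gadget on top of the (strictly less than one) time unit of the main simulation, and to make a faithful challenged play end with cost exactly $18$.

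Correctness then follows the standard pattern. If $M$ halts, Player~1 simulates faithfully; whatever Player~2 does, he either completes the finite simulation and reaches the terminal gadget within the cost and time budgets, or, if challenged, passes the corresponding verification (because $\varepsilon=0$) and reaches a target within the budgets, so Player~1 wins $\BReachOb(18,40)$. If $M$ does not halt, then either Player~1 simulates faithfully forever, never reaching a target and thus incurring payoff $+\infty>18$, or he cheats at some step, and Player~2 challenges at that step's Check location, forcing cost $>18$ or total time $>40$. Hence Player~1 has a winning strategy for $\BReachOb(18,40)$ on $\arena$ iff $M$ halts, giving the claimed undecidability; the five clocks are $x_1,x_2$ (counter encoding), $z$ (time bookkeeping), and $a,b$ (widget scratch work), and the construction generalises verbatim to more clocks.

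The main obstacle is the design of the verification widgets under the restriction to nonnegative rates $\{0,1\}$: in the arbitrary-price reductions one cancels an error by adding a suitably scaled negative copy of a clock, which is impossible here, so one must instead arrange the widget geometry so that an error \emph{necessarily} produces an irreparable excess, detected either in the accumulated cost or in the total elapsed time, and this detection has to coexist with the global time budget, since any time spent inside a widget competes with the $\tfrac{1}{2^{k+1}}$-per-step schedule enforced by $z$. Making the arithmetic of these widgets close --- faithful challenged run with cost exactly $18$ and time below $40$; cheating run with cost $>18$ or time $>40$ for every way Player~1 may try to compensate --- simultaneously for increments, decrements, and both outcomes of zero-tests, is the delicate bookkeeping that the full proof must carry out.
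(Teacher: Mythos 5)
Your proposal takes essentially the same approach as the paper: same five clocks $x_1,x_2,z,a,b$ with the same roles, same encoding $v_k=\frac{1}{2^{k+c_1}3^{k+c_2}}$, same geometric time schedule enforced through $z$, same Player-1-simulates / Player-2-challenges structure with zero-tests guessed by Player~1, and the same pair-of-widgets idea (one per sign of the suspected error) to circumvent the absence of negative prices. You have also correctly identified that the widget design is the technical heart of the argument and left it as the remaining work.

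One small imprecision worth flagging: you describe Player~2 as surfacing a cheating error \emph{either} as cost above $18$ \emph{or} as elapsed time above $40$. In the paper's construction, the time budget is never the discriminating factor: each verification widget is engineered to terminate within a fixed bounded amount of time (roughly $20$ time units for the $\mathrm{WZ}$ widget, at most $32$ for $\mathrm{WI}_1$, etc.), so that an honest challenged play always respects the time bound $40$, and cheating is detected purely by making the accumulated cost strictly exceed $18$ (cost is exactly $18$ on an honest challenged play, $18+\alpha|\varepsilon|$ on a cheating one, with $\alpha>0$). This is why $40$ must be chosen generously relative to the widget overheads while $18$ is the constant doing the detection work; an alternative scheme where time overruns are used as the punishment would require a different widget design and would have to ensure that Player~1 cannot trade a time overrun for a cost saving or vice versa.
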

\begin{proof}
  We prove that the existence of a strategy for bounded-time
  reachability objective ensuring a cost at most $18$ within 40 time
  units of total elapsed time is undecidable. In order to obtain the
  undecidability result, we use a reduction from the halting problem
  for two counter machines.  Our reduction uses PTGs with price-rates
  in $\{0,1\}$, zero prices of labels, and 6 clocks.

  We specify a module for each instruction of the two counter machine.
  On entry into a module for the $(k+1)$th instruction, we have one of
  the clocks $x_1, \, x_2$ having the value as
  $\frac{1}{2^{k+c_1}{3^{k+c_2}}}$, and the other one having value
  0. Values $c_1$ and $c_2$ represent the values of the two counters,
  after simulation of $k$ instructions. A clock $z$ keeps track of the
  total time elapsed during simulation of an instruction: we always
  have $z=1-\frac{1}{2^k}$ at the end of simulating $k$ instructions.
  Thus, a time of $\frac{1}{2}$ is spent simulating the first
  instruction, a time of $\frac{1}{4}$ is spent simulating the second
  instruction and so on, so that the total time spent in simulating
  the main modules corresponding to increment/decrement/zero check is
  less than $1$ at any point of time.  Two clocks $a$ and $b$ are used
  for rough work, and for enforcing urgency in some locations.

  Again, the role of Player 1 is to simulate the machine faithfully by
  choosing appropriate delays to adjust clock values in order to
  reflect the changes in counter values, and also to reflect the total
  time elapsed.  Player 2 will have the opportunity to check if Player
  cheated while simulating the machine.  We now present the modules
  for decrement, increment and zero-check instructions.

  \medskip

  \paragraph{Simulate increment instruction}
  Fig. \ref{fig_undec_tb_inc_module} gives the complete increment
  module with respect to counter $C_1$.  Assume this is the $(k+1)$th
  instruction that we are simulating. Also, as mentioned earlier, one
  of $x_1,x_2$ will have the value of $\frac{1}{2^{k+c_1}{3^{k+c_2}}}$ on entry, while the
  other will be zero. Without loss of generality, assume $x_1=
  \frac{1}{2^{k+c_1}3^{k+c_2}}$, while $a=b=x_2=0$, and
  $z=1-\frac{1}{2^k}$.  At the end of the module we want $x_2 =
  \frac{1}{2^{k+1+c_1+1}3^{k+1+c_2}}$ and $x_1 = 0$ and $z = 1 -
  \frac{1}{2^{k+1}}$.
 
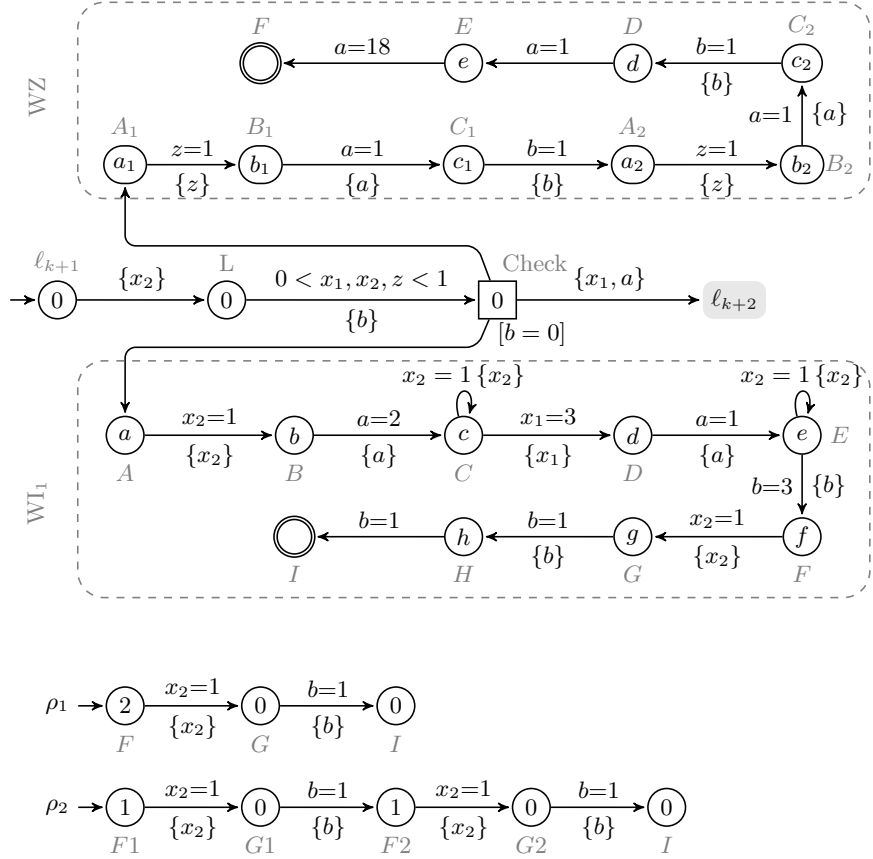
\begin{figure}[tbp]
\begin{center}

\begin{tikzpicture}[->,>=stealth',shorten >=1pt,auto,node distance=1cm,
  semithick,scale=0.9]
  
  	\node[initial,initial text={}, player1] at (0,0) (lk) {$0$} ;
       \node()[above of=lk,node distance=5mm,color=gray]{$\ell_{k+1}$};
       
       \node[player1] at (2.5,0) (L){$0$} ;
       \node()[above of=L,node distance=5mm,color=gray]{$\text{L}$};

       \node[player2] at (6.5,0) (chk){$0$} ;
       \node()[above right of=chk,node distance=7mm,color=gray]{$\text{Check}$};
        \node () [below right of=chk,node distance=6.5mm] {$[b=0]$};

       \node[fill=gray!20,rounded corners,fill opacity=0.9] at (10,0)(nxt){$\ell_{k+2}$}; 

       \path (lk) edge node {$\set{x_2}$} (L);
       \path (L) edge node[above] {$0<x_1,x_2,z<1$} node[below] {$\set{b}$} (chk);
       \path (chk) edge node {$\set{x_1,a}$} (nxt);
       
       \node[player1] (A1) at (1, 2) {$a_1$};
       \node()[above of=A1,node distance=5mm,color=gray]{$A_1$};
       
       \node[player1] (B1) at (3, 2) {$b_1$};
       \node()[above of=B1,node distance=5mm,color=gray]{$B_1$};
       
        \node[player1] (C1) at (6, 2) {$c_1$};
        \node()[above of=C1,node distance=5mm,color=gray]{$C_1$};
        
        \node[player1] (A2) at (8.5, 2) {$a_2$};
        \node()[above of=A2,node distance=5mm,color=gray]{$A_2$};
       
        \node[player1] (B2) at (11, 2) {$b_2$};
        \node()[right of=B2,node distance=5mm,color=gray]{$B_2$};
        
        \node[player1] (C2) at (11, 3.5) {$c_2$};
        \node()[above of=C2,node distance=5mm,color=gray]{$C_2$};
        
        \node[player1] (D) at (8.5, 3.5) {$d$};
        \node()[above of=D,node distance=5mm,color=gray]{$D$};
        
        \node[player1] (E) at (6, 3.5) {$e$};
        \node()[above of=E,node distance=5mm,color=gray]{$E$};
        
        \node[accepting,player1] (F) at (3, 3.5) {};
        \node()[above of=F,node distance=5mm,color=gray]{$F$};
       
       \path (A1) edge node {$z{=}1$} node[below] {$\set{z}$} (B1);
       \path (B1) edge node {$a{=}1$} node[below] {$\set{a}$} (C1);
       \path (C1) edge node {$b{=}1$} node[below] {$\set{b}$} (A2);       
       \path (A2) edge node[above] {$z{=}1$} node[below] {$\set{z}$} (B2);
       \path (B2) edge node[left] {$a{=}1$} node[right] {$\set{a}$} (C2);
       \path (C2) edge node[above] {$b{=}1$} node[below] {$\set{b}$} (D);
       \path (D) edge node [above] {$a{=}1$} (E);
       \path (E) edge node [above] {$a{=}18$} (F);
       
        \draw[rounded corners] (chk) -- (6.2,0.8) -- (1,0.8) -- (A1);

       \node[rotate=90,color=gray] at (-0.3, 3) (N) {$\text{WZ}$};

       \draw[dashed,draw=gray,rounded corners=10pt] (0.3,1.5) rectangle (12,4.4);
       
       \node[player1] (A) at (1, -2) {$a$};
       \node()[below of=A,node distance=5mm,color=gray]{$A$};
       
       \node[player1] (B) at (3.5, -2) {$b$};
       \node()[below of=B,node distance=5mm,color=gray]{$B$};
       
        \node[player1] (C) at (6, -2) {$c$};
        \node()[below of=C,node distance=5mm,color=gray]{$C$};
        
        \node[player1] (D) at (8.5, -2) {$d$};
        \node()[below of=D,node distance=5mm,color=gray]{$D$};
       
        \node[player1] (E) at (11, -2) {$e$};
        \node()[right of=E,node distance=5mm,color=gray]{$E$};
        
        \node[player1] (F) at (11, -3.5) {$f$};
        \node()[below of=F,node distance=5mm,color=gray]{$F$};
        
        \node[player1] (G) at (8.5, -3.5) {$g$};
        \node()[below of=G,node distance=5mm,color=gray]{$G$};        
        
        \node[player1] (H) at (6, -3.5) {$h$};
        \node()[below of=H,node distance=5mm,color=gray]{$H$};        
        
        \node[accepting,player1] (I) at (3.5, -3.5) {};
        \node()[below of=I,node distance=5mm,color=gray]{$I$};
       
       \path (A) edge node {$x_2{=}1$} node[below] {$\set{x_2}$} (B);
       \path (B) edge node {$a{=}2$} node[below] {$\set{a}$} (C);
       \path (C) edge node {$x_1{=}3$} node[below] {$\set{x_1}$} (D);       
       \path (C) edge [loop above] node[above =-0.7mm] {$x_2=1 \set{x_2}$} (C);
       
       \path (D) edge node[above] {$a{=}1$} node[below] {$\set{a}$} (E);
       \path (E) edge node[left] {$b{=}3$} node[right] {$\set{b}$} (F);
       \path (E) edge [loop above] node[above =-0.7mm] {$x_2=1 \set{x_2}$} (E);
       
       \path (F) edge node[above] {$x_2{=}1$} node[below] {$\set{x_2}$} (G);
       \path (G) edge node [above] {$b{=}1$} node[below] {$\set{b}$} (H);
       \path (H) edge node [above] {$b{=}1$} (I);
       
        \draw[rounded corners] (chk) -- (6.2,-0.7) -- (1,-0.7) -- (A);

       \node[rotate=90,color=gray] at (-0.3, -3) (N) {$\text{WI}_1$};

       \draw[dashed,draw=gray,rounded corners=10pt] (0.3,-0.9) rectangle (12,-4.4);

       \node(r1) at (0,-6) {$\rho_1$};
       
       \node[player1] (f) at (1, -6) {$2$};
       \node()[below of=f,node distance=5mm,color=gray]{$F$};
       
       \node[player1] (g) at (3, -6) {$0$};
       \node()[below of=g,node distance=5mm,color=gray]{$G$};
       
        \node[player1] (i) at (5, -6) {$0$};
        \node()[below of=i,node distance=5mm,color=gray]{$I$};

        \path (r1) edge (f);
        \path (f) edge node [above] {$x_2{=}1$} node[below] {$\set{x_2}$} (g);
        \path (g) edge node [above] {$b{=}1$} node[below] {$\set{b}$} (i);
        \node(r2) at (0,-7.5) {$\rho_2$};
       
       \node[player1] (f1) at (1, -7.5) {$1$};
       \node()[below of=f1,node distance=5mm,color=gray]{$F1$};
       
       \node[player1] (g1) at (3, -7.5) {$0$};
       \node()[below of=g1,node distance=5mm,color=gray]{$G1$};
       
       \node[player1] (f2) at (5, -7.5) {$1$};
       \node()[below of=f2,node distance=5mm,color=gray]{$F2$};
       
       \node[player1] (g2) at (7, -7.5) {$0$};
       \node()[below of=g2,node distance=5mm,color=gray]{$G2$};
       
        \node[player1] (i2) at (9, -7.5) {$0$};
        \node()[below of=i2,node distance=5mm,color=gray]{$I$};
       
      \path (r2) edge (f1);
       \path (f1) edge node [above] {$x_2{=}1$} node[below] {$\set{x_2}$} (g1);
        \path (g1) edge node [above] {$b{=}1$} node[below] {$\set{b}$} (f2);
        \path (f2) edge node [above] {$x_2{=}1$} node[below] {$\set{x_2}$} (g2);
        \path (g2) edge node [above] {$b{=}1$} node[below] {$\set{b}$} (i2);
\end{tikzpicture}

\caption{$\BReachOb(18, 40)$: Simulation of instruction : $\ell_{k+1}$:
  increment $C_1$. WZ is a template for two widgets $\text{WZ}^<$ and
  $\text{WZ}^>$, based on the actual values of price-rate
  parameters. $\text{WZ}^>$ has prices $b_1 = c_1 = c_2 =e= 1$ and
  rest are 0, while $\text{WZ}^<$ has $a_1 = d =e= 1$ and rest are
  zero. Similarly, widget $\text{WI}_1$ is template for two widgets
  $\text{WI}_1^{<}$ ($a=d=1,f=11,h=6$) and $\text{WI}_1^{>}$
  ($c=1,g=12,h=17$). Path $\rho_1$ is the shorthand notation with
  larger prices for longhand notation of path $\rho_2$ using prices
  0,1 only.}
\label{fig_undec_tb_inc_module}
\end{center}
\end{figure}

  Let $t_1$ and $t_2$ be respectively the time spent at locations
  $\ell_{k+1}$ and $L$. We want to check that $t_1+t_2
  =\frac{1}{2^{k+1}}$, $t_2 =\frac{1}{2^{k+1+c_1+1}{3^{k+1+c_2}}}$.
  This will ensure that the clocks keep track of the total time
  elapsed, as well as the increment of $C_1$.  Player 2 has two
  widgets at his disposal to check each of these:
  \begin{enumerate}
  \item $\text{WZ}$ is a widget (that can be used by Player 2) in
    order to check that the total time elapsed in any module
    corresponding is correct. More precisely Player 2 has the
    opportunity to check (by means of the module $\text{WZ}$) that the
    execution of the module corresponding to the $(k+1)$th instruction
    takes exaclty $\frac{1}{2^{k+1}}$ time units (recall that this
    time is recorded in clock $z$).
  \item $\text{WI}_1$ is a widget (that can be used by Player 2) in
    order to check that counter $C_1$ is indeed incremented properly.
  \end{enumerate}

  \noindent Upon entering the location Check, the values of clocks are
  $a = t_1 + t_2$, $z = 1 - \frac{1}{2^k} + t_1 +t_2$, $x_1 =
  t_2$, $x_2 = \frac{1}{2^{k+c_1}3^{k+c_2}} + t_1 + t_2 $ and $b=0$.

  \begin{itemize} 
  \item Widget $\text{WZ}$: The role of widget $\text{WZ}$ is to check
    if the value of the clock $z$ is $1 - \frac{1}{2^{k+1}}$ when the
    location Check is reached.  The PTG corresponding to $\text{WZ}$
    is depicted in Fig. \ref{fig_undec_tb_inc_module}, and a cost
    analysis of $\text{WZ}$ is presented in
    Table~\ref{tab3}.\footnote{Notice that in the table, the line
      ``time elapse'' represents time elapsed in the current location,
      and not the time elapsed before reaching the current location.}
    $\text{WZ}$ is a general template for two widgets, based on the
    actual values of the price-rates. These two widgets are
    $\text{WZ}^<$ and $\text{WZ}^>$. In widget $\text{WZ}^>$,
    price-rates $b_1 = c_1 = c_2 =e= 1$ and the other ones are zero,
    while in widget $\text{WZ}^<$, price-rates $a_1 = d =e= 1$ and the
    other ones are zero.
 
    It can be seen that if $t_1+t_2 < \frac{1}{2^{k+1}}$, then the
    total cost incurred in $\text{WZ}^<$ is strictly more than 18;
    similarly, if $t_1+t_2 > \frac{1}{2^{k+1}}$, then the total cost
    incurred in $\text{WZ}^>$ is strictly more than 18. Player 2 can
    choose one these widgets appropriately. In case $t_1+t_2=
    \frac{1}{2^{k+1}}$, then the total cost incurred in either widget
    is exactly 18.
 
    \begin{table}[th]
      \caption{$\BReachOb(18, 40)$ : Cost Incurred in $\text{WZ}$: Total time elapsed= $19-t<20$}
      \label{tab3}
      \begin{tabular}{|c|c|c|c|c|c|}
        \hline
        $\begin{array}{c} \mbox{ Location }\rightarrow \\  i \in
          \{1,2\}\end{array}$ & $A_i$ & $B_i$ & $C_i$ & $D$ & $E$ \\
        \hline 
   
        $\begin{array}{c}z \\ \mbox{on entry }\end{array}$& 
        $\begin{array}{c}1-\frac{1}{2^{k}} +t_1+t_2\\= 1-
          \frac{1}{2^k} + t \\t = t_1+t_2  \end{array}$& 
        0& 
        $1-\frac{1}{2^k}$& 
        -& 
        -
        \\ \hline
   
        $\begin{array}{c}a \\ \mbox{on entry}\end{array}$& 
        $t$ & 
        $\frac{1}{2^k}$& 
        0& 
        1
        \\ \hline
   
        $\begin{array}{c}b \\ \mbox{ on entry}\end{array}$& 
        0& 
        $\frac{1}{2^k}-t$& 
        $1-t$& 
        0& 
        -
        \\ \hline
   
        $\begin{array}{c}\mbox{time} \\ \mbox{elapsed at}\end{array}$& 
        $\frac{1}{2^k}-t$& 
        $1-\frac{1}{2^k}$& 
        $t$& 
        $1-t$& 
        17
        \\ \hline
        $\begin{array}{c}\mbox{Widget $\text{WZ}^>$} \\ \mbox{check $t
            > \frac{1}{2^{k+1}}$} \\ \mbox{prices} \\  b_1,c_1,c_2,e :
          1 \\ \mbox{ rest : 0}\end{array}$ \\ 
        \hline
        $\begin{array}{c}\mbox{cost incurred at} \\  \end{array}$& 
        0& 
        $1-\frac{1}{2^k}$ at $B_1$& 
        $\begin{array}{c} 2t \\ t \mbox{ at } C_1 \\ t \mbox{ at }
          C_2 \end{array}$& 
        0& 
        17 \\ 
        \hline
        $\begin{array}{c}\mbox{Total Cost} \\ \mbox{at target}\end{array}$& 
        -& 
        -& 
        -& 
        -& 
        $\begin{array}{c}1 - \frac{1}{2^k} + 2t + 17\\ =
          18~if~t=\frac{1}{2^{k+1}}\\>18~if~t>\frac{1}{2^{k+1}} \end{array}$
        \\ \hline
        $\begin{array}{c}\mbox{Widget $\text{WZ}^<$} \\ \mbox{check
            $t < \frac{1}{2^{k+1}}$} \\ \mbox{ prices} \\  a_1,d,e :1
          \\ \mbox{ rest : 0}\end{array}$ \\ 
        \hline
        $\begin{array}{c}\mbox{cost incurred at} \\ \end{array}$& 
        $\begin{array}{c}\frac{1}{2^k}-t \\ \mbox{at } A_1 \\ \end{array}$& 
        0& 
        $0$& 
        $1-t$& 
        17\\  
        \hline
   
        $\begin{array}{c}\mbox{Total Cost} \\ \mbox{at target}\end{array}$& 
        -& 
        -& 
        -& 
        -& 
        $\begin{array}{c}18 + \frac{1}{2^k} -2t \\ =
          18~if~t=\frac{1}{2^{k+1}}\\>18~if~t<\frac{1}{2^{k+1}} \end{array}$
        \\ \hline

      \end{tabular}
    \end{table}
    
  \item Widget $\text{WI}_1$: The widget $\text{WI}_1$ in
    Fig.~\ref{fig_undec_tb_inc_module} ensures that upon entering the
    location Check, the value of the clock $x_2 = t_2=
    \frac{1}{2^{k+1+c_1+1}{3^{k+1+c_2}}} = \frac{1}{12} \times
    \frac{1}{2^{k+c_1}{3^{k+c_2}}}=\frac{n}{12}$. This indeed accounts
    for increment of counter $C_1$, and also for reaching the end of
    $k+1^{th}$ instruction while keeping the value of $C_2$ unchanged.

    The widget $\text{WI}_1$ in Fig.~\ref{fig_undec_tb_inc_module} is
    again a general template for two widgets $\text{WI}_1^{<}$ and
    $\text{WI}_1^{>}$, obtained by fixing the prices at various
    locations.  The prices $a,b,c,d,e,f$ are values $>0$. It must be
    noted that we only need prices in $\{0,1\}$ for each location;
    using general prices is a short hand notation for a longer path
    which will use only prices from $\{0,1\}$.  In widget
    $\text{WI}_1^{<}$, (in the shorthand notation), we have prices
    $a=d=1,f=11,h=6$ and the rest are zero.  Likewise, in widget
    $\text{WI}_1^{>}$, we have (in shorthand notation), prices
    $c=1,g=12,h=17$ and the rest of the prices 0.  Player 2 uses
    $\text{WI}_1^{<}$ to check if $t_2 < \frac{n}{12}$, and uses
    $\text{WI}_1^{>}$ to check if $t_2 > \frac{n}{12}$.  Table
    \ref{tab4} runs the reader through the widgets $\text{WI}_1^{<}$
    and $\text{WI}_1^{>}$. While reading the table, keep in mind that
    $n = \frac{1}{2^{k+c_1}3^{k+c_2}}$ and that $t=t_1+t_2$.  As can
    be seen from the table, the total cost incurred is exactly 18 iff
    $t_2=\frac{n}{12}$.
 
    \begin{table}[!ht]
      \caption{$\BReachOb(18, 40)$: Cost incurred in
        $\text{WI}_1$. Recall $t_1+t_2=t$. Also, total 
        time elapsed in $\text{WI}_1^{>}$ and $\text{WI}_1^{<}$ in the
        long hand notation 
        is $\leq$ 3 + 12 + 17 = 32}
      \label{tab4}
      \begin{tabular}{|c|c|c|c|c|c|c|c|c|c|}
        \hline
        Loc $\rightarrow$ & $A$ & $B$ & $C$ & $D$ & $E$ & $F$ & $G$ &
        $H$ & $I$ \\ \hline 
 
        $\begin{array}{c}x_1\\ \mbox{on entry } \end{array}$ &
        $n+t$& 
        $1+n+t_1$& 
        $2+n$& 
        $0$& 
        -& 
        -& 
        -&
        - & 
        - 
        \\ \hline
        
        $\begin{array}{c}x_2\\ \mbox{on entry} \end{array}$ &
        $t_2$& 
        0& 
        $1-t_1$& 
        $1-t_1-n$& 
        $1-t_1$& 
        $t_2$& 
        0 &
        - & 
        - 
        \\ \hline

        $\begin{array}{c} a \\ \mbox{ on entry} \end{array}$ &
        $t$& 
        $1+t_1$& 
        $0$& 
        $1-n$& 
        $0$& 
        -& 
        -&
        - & 
        - 
        \\ \hline

        $\begin{array}{c} b\\ \mbox{on entry } \end{array}$ &
        0& 
        $1-t_2$& 
        $2-t$& 
        $3-t-n$& 
        $3-t$& 
        0& 
        $1-t_2$&
        0 & 
        1 
        \\ \hline

        $\begin{array}{c} \mbox{time }\\ \mbox{elapsed at} \end{array}$ &
        $1-t_2$& 
        $1-t_1$& 
        $1-n$& 
        $n$& 
        $t$& 
        $1-t_2$& 
        $t_2$ & 
        $1$ & 
        $0$ 
        \\ \hline  
    
        $\begin{array}{c} \mbox{$\text{WI}_1^{<}$ }\\ 
          \mbox{checks}\\
          \mbox{if $t_2 {<} \frac{n}{12}$} \\ \mbox{ prices } \\
          a,d{=}1 \\ f{=}11{,} h{=}6\\ \mbox{rest : 0} \end{array}$  
        \\ \hline

        $\begin{array}{c} \mbox{cost} \\\mbox{incurred at}\\  \end{array}$ &
        $1-t_2$& 
        $0$& 
        $0$& 
        $n$& 
        $0$& 
        $11-11t_2$& 
        0&
        6 & 
        - 
        \\ \hline

        $\begin{array}{c} \mbox{Total cost}\\ \mbox {at target } \end{array}$ &
        -& 
        -& 
        -& 
        -& 
        -& 
        -& 
        -&
        - & 
        $\begin{array}{c} 18 -12t_2 +n \\ {=}18~if~t_2 {=}
          \frac{n}{12}\\{>}18~if~t_2{<}\frac{n}{12} \end{array}$
        \\ \hline

        $\begin{array}{c} \mbox{$\text{WI}_1^{>}$}\\ 
          \mbox{checks}\\
          \mbox{if $t_2 {>} \frac{n}{12}$} \\ \mbox{prices} \\
          c{=}1,g{=}12,\\ h{=}17 \\ \mbox{rest : 0} \end{array}$  
        \\ \hline

        $\begin{array}{c} \mbox{cost} \\ \mbox{incurred at}\\  \end{array}$ &
        $0$& 
        $0$& 
        $1-n$& 
        $0$& 
        $0$& 
        $0$& 
        $12t_2$&
        $17$ & 
        - 
        \\ \hline

        $\begin{array}{c} \mbox{Total cost}\\  \mbox {at target } \end{array}$ &
        -& 
        -& 
        -& 
        -& 
        -& 
        -& 
        -&
        - & 
        $\begin{array}{c} 18 +12t_2 -n \\ {=}18~if~t_2
          {=}\frac{n}{12}\\{>}18~if~t_2{>}\frac{n}{12} \end{array}$
        \\ \hline
      \end{tabular}
    \end{table}
  \end{itemize}

  To summarize the simulation of the $(k+1)$th instruction, which is
  an increment $C_1$ instruction: assume we enter $\ell_{k+1}$ with
  values $x_1= \frac{1}{2^{k+c_1}3^{k+c_2}}$, while $a=b=x_2=0$,
  and $z=1-\frac{1}{2^k}$. First let us consider the case when Player
  1 correctly simulates the machine, respecting the time limit: then
  Player 1 spends a total of $\frac{1}{2^{k+1}}$ time across $\ell_{k+1}$
  and $L$, and a time $\frac{1}{2^{k+1+c_1+1}{3^{k+1+c_2}}}$ at
  $L$. Player 2 has 3 possibilities : (i)either Player 2 directly goes
  to the next instruction $\ell_{k+2}$ and thus the simulation of the
  machine goes on, or (ii) Player 2 goes to one of the widgets
  $\text{WZ}^{>}$ or $\text{WZ}^{<}$; in this case, since Player 1 has
  spent the right amount of time in $\ell_{k+1}$ and $L$ together, he
  achieves his objective, or (iii) Player 2 goes to one of the widgets
  $\text{WI}_1^{>}$ or $\text{WI}_1^{<}$; again, in this case, since
  Player 1 has spent the right amount of time at $L$, he achieves his
  objective.  Let us now turn to the case when Player 1 does not spend
  the right amount of time in $\ell_k$ and $L$ together; in this case,
  Player 2 can always enter one of the widgets $\text{WZ}^{>}$ or
  $\text{WZ}^{<}$ and reach a target with a cost $>18$.  Similarly, if
  Player 1 does not spend the right amount of time in $L$ (and
  therefore did not increment $C_1$ properly), then again Player 2 has
  the possibility to reach a target with a cost $>18$ using one of the
  widgets $\text{WI}_1^{>}$ or $\text{WI}_1^{<}$.

  \medskip
  
  \paragraph{Simulate decrement instruction}
  The module to decrement counter $C_1$ is the same as the module to
  increment $C_1$ in Fig.~\ref{fig_undec_tb_inc_module}. We replace
  only the widget $\text{WI}_1$ by the widget $\text{WD}_1$.  This
  widget ensures that the time spent $t_2$ at $L$ (and hence the value
  of $x_2$) is
  $\frac{1}{2^{k+1+c_1-1}3^{k+1+c_2}}=\frac{1}{2^{k+c_1}3^{k+1+c_2}}=
  \frac{1}{3}\frac{1}{2^{k+c_1}{3^{k+c_2}}}$. $\text{WD}_1$ can be
  obtained by a simple modification of the prices in $\text{WI}_1$.

  \medskip

  \paragraph{Simulate Zero-check instruction}
  Consider $\ell_{k+1}$: if $C_1 = 0$ then goto $l^1_{k+2}$ else goto
  $l^2_{k+2}$. During the simulation of this instruction, we need to
  ensure that all the clocks are updated to account for reaching the
  end of $(k+1)$th instruction and the counter values remain
  unchanged.

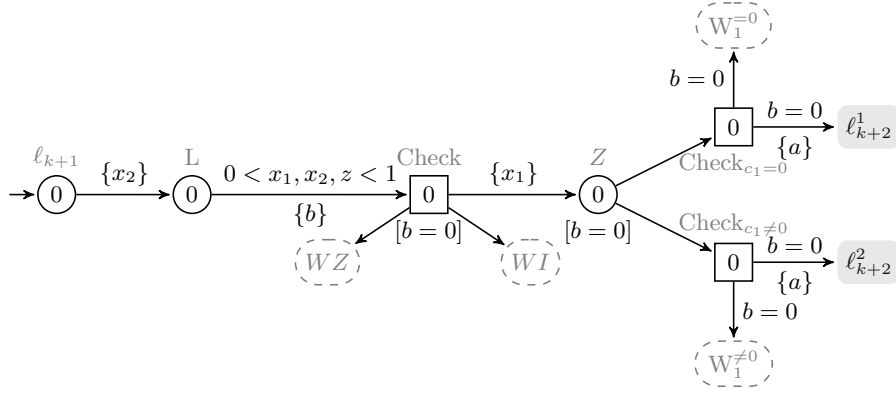
\begin{figure}[tbp]
  \begin{center}
\begin{tikzpicture}[->,>=stealth',shorten >=1pt,auto,node distance=1cm,
  semithick,scale=0.9]  
  
    	\node[initial,initial text={}, player1] at (-1,0) (lk) {$0$} ;
       \node()[above of=lk,node distance=5mm,color=gray]{$\ell_{k+1}$};
       
       \node[player1] at (1,0) (L){$0$} ;
       \node()[above of=L,node distance=5mm,color=gray]{$\text{L}$};

       \node[player2] at (4.5,0) (chk){$0$} ;
       \node()[above of=chk,node distance=5mm,color=gray]{$\text{Check}$};
        \node () [below of =chk,node distance=5mm] {$[b=0]$};

        \node[widget,color=gray] at (3,-1) (D1){$WZ$ } ;
	\node[widget,color=gray] at (6,-1) (D2){$WI$ };
    
       \path (lk) edge node {$\set{x_2}$} (L);
       \path (L) edge node[above] {$0<x_1,x_2,z<1$} node[below] {$\set{b}$} (chk);
       \path (chk) edge (D1);
       \path (chk) edge (D2);
	
      \node[player1] at (7,0) (Z) {$0$};
      \node()[above of=Z,node distance=5mm,color=gray]{$Z$};
       \node () [below of =Z,node distance=5mm] {$[b=0]$};

      \node[player2] at (9,1) (B){$0$} ;
       \node()[below of=B,node distance=5mm,color=gray]{$\text{Check}_{c_1=0}$};
      \node[widget,color=gray] at (9,2.5) (B1){$\text{W}_1^{= 0}$ };

      \node[player2] at (9,-1) (C){$0$} ;
      \node()[above of=C,node distance=5mm,color=gray]{$\text{Check}_{c_1\neq 0}$};
      \node[widget,color=gray] at (9,-2.5) (C1){$\text{W}_1^{\neq 0}$};

       \node[fill=gray!20,rounded corners,fill opacity=0.9] at (11,1)(nxt1){$\ell_{k+2}^1$};
       \node[fill=gray!20,rounded corners,fill opacity=0.9] at (11,-1)(nxt2){$\ell_{k+2}^2$}; 
       
       \path (chk) edge node {$\set{x_1}$} (Z);
       \path (Z) edge (B);
       \path (Z) edge (C);
       \path (B) edge node[above] {$b=0$} node[below] {$\set{a}$}(nxt1);
       \path (B) edge node {$b=0$}(B1);
       \path (C) edge node {$b=0$} node[below] {$\set{a}$}(nxt2);
       \path (C) edge node {$b=0$}(C1);

 \end{tikzpicture}
 \caption{$\BReachOb(18, 40)$: Simulation of instruction zero check
   $C_1=0$. Widget WZ given in Fig.~\ref{fig_undec_tb_inc_module}, and
   WI is similar to $\text{WI}_1$ differing only in prices. WI
   verifies if $x_2 = \frac{1}{6} \frac{1}{2^{k+c_1}3^{k+c_2}}$.}
\label{fig_undec_tb_zeroCheck}
\end{center}
\end{figure}

  The module for zero check is given in
  Fig.~\ref{fig_undec_tb_zeroCheck}.  At the location labelled Check,
  the widgets $\text{WZ}$ and $\text{WI}$ ensure that the clocks $z$
  and $x_2$ are updated to account for reaching the end of the
  $(k+1)$th instruction.  This is similar to the increment module.  If
  the values of clocks on entering $\ell_{k+1}$ are $z=1-\frac{1}{2^k}$,
  $x_1=\frac{1}{2^{k+c_1}3^{k+c_2}}$ then upon entering Check they are
  $z = 1-\frac{1}{2^{k+1}}$,
  $x_2=\frac{1}{2^{k+1+c_1}3^{k+1+c_2}}$. No time elapses at the Check
  location.  At location $Z$, no time elapses, and Player 1 guesses
  the value of counter $C_1$ and goes to either of the locations
  $\text{Check}_{c_1 = 0}$ or $\text{Check}_{c_1 \neq 0}$. Based on
  the choice of Player 1, Player 2 can go to one of the widgets
  $\text{W}_1^{=0}$ (in Fig.~\ref{fig_undec_tb_wz}) or
  $\text{W}_1^{\neq 0}$ (similar to $\text{W}_1^{=0}$), if he suspects
  that Player 1 has made a wrong guess.

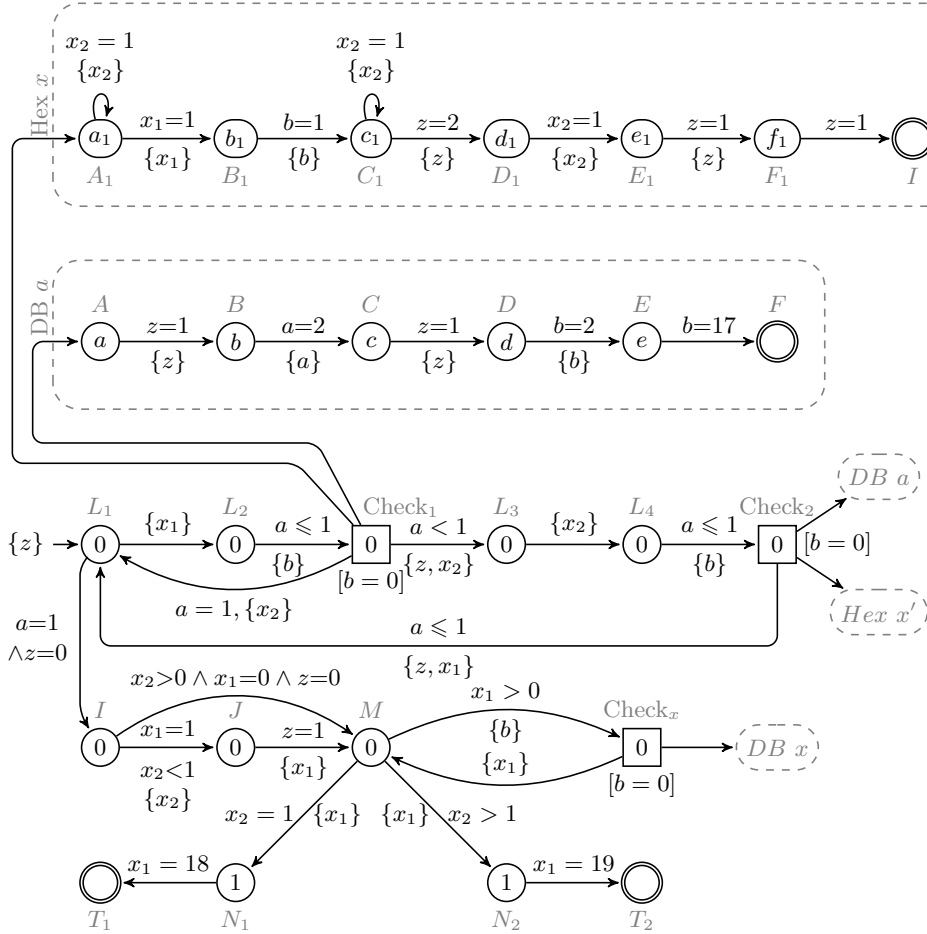
\begin{figure}[tbp]
\begin{center}
\begin{tikzpicture}[->,>=stealth',shorten >=1pt,auto,node distance=1cm,
  semithick,scale=0.9]
       \node[initial,initial text={$\set{z}$}, player1] at (-1,0) (l1) {$0$} ;
       \node()[above of=l1,node distance=5mm,color=gray]{$L_1$};
       
       \node[player1] at (1,0) (l2){$0$} ;
       \node()[above of=l2,node distance=5mm,color=gray]{$L_2$};
       
       \node[player2] at (3,0) (chk1){$0$} ;
       \node()[above right of=chk1,node distance=7mm,color=gray,xshift=-1mm]{$\text{Check}_1$};
        \node () [below of =chk1,node distance=5mm] {$[b=0]$};
        
        \node[player1] at (5,0) (l3) {$0$} ;
       \node()[above of=l3,node distance=5mm,color=gray]{$L_3$};
       
       \node[player1] at (7,0) (l4){$0$} ;
       \node()[above of=l4,node distance=5mm,color=gray]{$L_4$};
       
       \node[player2] at (9,0) (chk2){$0$} ;
       \node()[above of=chk2,node distance=5mm,color=gray]{$\text{Check}_2$};
       \node () [right of=chk2,node distance=8mm] {$[b=0]$};
       
        \node[widget,color=gray] at (10.5, 1) (D1){$DB~a$ } ;
	\node[widget,color=gray] at (10.5,-1) (D2){$Hex~x'$ };

       \path (l1) edge node[above] {$\set{x_1}$} (l2);
       \path (l2) edge node[above] {$a \leq 1$} node[below,xshift=-2mm] {$\set{b}$}(chk1);
       \path (chk1) edge [bend left=30] node {$a=1,\set{x_2}$} (l1);
       
       \path (chk1) edge node[above] {$a < 1$} node[below] {$\set{z,x_2}$}(l3);       
       \path (l3) edge node[above] {$\set{x_2}$} (l4);
       \path (l4) edge node[above] {$a \leq 1$} node[below] {$\set{b}$}(chk2);
       
        \draw[rounded corners] (chk2) -- (9, -1.5) --  node[above] {$a \leq 1$} node[below] {$\set{z,x_1}$} (-1,-1.5) --(l1);
       
       \path (chk2) edge (D1);
       \path (chk2) edge (D2);
      
       \node[player1] at (-1,-3) (i){$0$} ;
       \node()[above of=i,node distance=5mm,color=gray]{$I$};
       
       \node[player1] at (1,-3) (j){$0$} ;
       \node()[above of=j,node distance=5mm,color=gray]{$J$};
       
       \node[player1] at (3,-3) (m){$0$} ;
       \node()[above of=m,node distance=5mm,color=gray]{$M$};
       
       \node[player2] at (7,-3) (chkx){$0$} ;
       \node()[above of=chkx,node distance=5mm,color=gray]{$\text{Check}_x$};
       \node () [below of =chkx,node distance=5mm] {$[b=0]$};
       
       \node[widget,color=gray] at (9, -3) (D3){$DB~x$ } ;
       
       \node[player1] at (1,-5) (n1){$1$} ;
       \node()[below of=n1,node distance=5mm,color=gray]{$N_1$};
       
       \node[accepting,player1] at (-1,-5) (t1){} ;
       \node()[below of=t1,node distance=5mm,color=gray]{$T_1$};
       
       \node[player1] at (5,-5) (n2){$1$} ;
       \node()[below of=n2,node distance=5mm,color=gray]{$N_2$};
       
       \node[accepting,player1] at (7,-5) (t2){} ;
       \node()[below of=t2,node distance=5mm,color=gray]{$T_2$};
       
       \draw[rounded corners] (l1) -- (-1.3, -0.3)   -- node[left] 
       {$\begin{array}{c} \mbox{$a {=} 1$} \\ \wedge z{=}0\end{array}$} (-1.3,-2.5)  --   (i);
       
       \path (i) edge node[above] {$x_1 {=} 1$} node[below] {$\begin{array}{c} \mbox{$x_2 {<}1$}\\ \set{x_2}\end{array}$} (j);       
       \path (j) edge node[above] {$z {=} 1 $} node[below] {$\set{x_1}$} (m);
       \path (i) edge [bend left=35] node[above] {$x_2{>}0\wedge x_1{=}0\wedge z{=}0$}(m);
       
       \path (m) edge [bend left=25] node[above] {$x_1 >0$} node[below] {$\set{b}$} (chkx);
       \path (chkx) edge [bend left=25] node[above] {$\set{x_1}$} (m);
       \path (chkx) edge (D3);
       \path (m) edge node[right] {$x_2 >1$} node[left] {$\set{x_1}$} (n2);
       \path (m) edge node[left] {$x_2 =1$} node[right] {$\set{x_1}$} (n1);
       \path (n1) edge node[above] {$x_1=18$}(t1);
       \path (n2) edge node[above] {$x_1=19$}(t2);
       
       \node[player1] at (-1,3) (A) {$a$} ;
       \node()[above of=A,node distance=5mm,color=gray]{$A$};
       
       \node[player1] at (1,3) (B){$b$} ;
       \node()[above of=B,node distance=5mm,color=gray]{$B$};
       
       \node[player1] at (3,3) (C){$c$} ;
       \node()[above of=C,node distance=5mm,color=gray]{$C$};
        
        \node[player1] at (5,3) (D) {$d$} ;
       \node()[above of=D,node distance=5mm,color=gray]{$D$};
       
       \node[player1] at (7,3) (E){$e$} ;
       \node()[above of=E,node distance=5mm,color=gray]{$E$};
       
       \node[accepting,player1] at (9,3) (F){} ;
       \node()[above of=F,node distance=5mm,color=gray]{$F$};
       
    \draw[rounded corners] (chk1) -- (2.2, 1.5) -- (-2,1.5) -- (-2,3) -- (A);

       \path (A) edge node {$z{=}1$} node[below] {$\set{z}$} (B);
       \path (B) edge node {$a{=}2$} node[below] {$\set{a}$} (C);
       \path (C) edge node {$z{=}1$} node[below] {$\set{z}$} (D);              
       \path (D) edge node[above] {$b{=}2$} node[below] {$\set{b}$} (E);
       \path (E) edge node[above] {$b{=}17$}(F);
       
       \node[rotate=90,color=gray] at (-1.9, 3.5) (N) {$\text{DB}~a$};
       \draw[dashed,draw=gray,rounded corners=10pt] (-1.7,2) rectangle (9.7,4.2);

       \node[player1] at (-1,6) (A1) {$a_1$} ;
       \node()[below of=A1,node distance=5mm,color=gray]{$A_1$};
       
       \node[player1] at (1,6) (B1){$b_1$} ;
       \node()[below of=B1,node distance=5mm,color=gray]{$B_1$};
       
       \node[player1] at (3,6) (C1){$c_1$} ;
       \node()[below of=C1,node distance=5mm,color=gray]{$C_1$};
        
        \node[player1] at (5,6) (D1) {$d_1$} ;
       \node()[below of=D1,node distance=5mm,color=gray]{$D_1$};
       
       \node[player1] at (7,6) (E1){$e_1$} ;
       \node()[below of=E1,node distance=5mm,color=gray]{$E_1$};
       
       \node[player1] at (9,6) (F1){$f_1$} ;
       \node()[below of=F1,node distance=5mm,color=gray]{$F_1$};
       
       \node[accepting,player1] at (11,6) (I){} ;
       \node()[below of=I,node distance=5mm,color=gray]{$I$};
       
       \draw[rounded corners] (chk1) -- (1.9, 1.2) -- (-2.3,1.2) -- (-2.3,6) -- (A1);
       
       \path (A1) edge node {$x_1{=}1$} node[below] {$\set{x_1}$} (B1);
       \path (A1) edge [loop above] node [above] {$\begin{array}{c}x_2=1 \\ \set{x_2}\end{array}$} (A1);
       \path (B1) edge node {$b{=}1$} node[below] {$\set{b}$} (C1);
       \path (C1) edge node {$z{=}2$} node[below] {$\set{z}$} (D1);              
       \path (C1) edge [loop above] node [above] {$\begin{array}{c}x_2=1 \\ \set{x_2}\end{array}$} (C1);
       \path (D1) edge node[above] {$x_2{=}1$} node[below] {$\set{x_2}$} (E1);
       \path (E1) edge node[above] {$z{=}1$} node[below] {$\set{z}$}(F1);
       \path (F1) edge node[above] {$z{=}1$}(I);
       
       \node[rotate=90,color=gray] at (-1.9, 6.5) (N) {$\text{Hex}~x$};
       \draw[dashed,draw=gray,rounded corners=10pt] (-1.7,5) rectangle (11.7,8);

\end{tikzpicture}

\caption{$\BReachOb(18, 40)$: Widget $\text{W}_1^{=0}$ entered with $a
  = \frac{1}{2^{k+1}}=\alpha$ and $x_1=x_3 =
  \frac{1}{2^{k+1+c_1}3^{k+1+c_2}}=\beta$. Widget $\text{DB}_a$ checks
  if upon entry $a=\alpha+t$ and $z=t$ then $t =\alpha$.  Times spent
  : $1-t$ at $A$, $1-\alpha$ at $B$, $\alpha$ at $C$, $t$ at $D$ and
  17 at $E$. $\text{DB}_a$ stands for 2 widgets, to check $\alpha > t$
  ($a,c,e=1$) and to check $\alpha < t$ ($d,b,e=1$). Widget
  $\text{Hex}~x$ checks that if $x_1 =t$, $x_2 = \beta + t+k$ and $z =
  t+k$ then $t = 6\beta$. Again, this stands for two widgets, one when
  $t< 6\beta$ ( $a_1=1,e_1=6,f_1=17$), and the other when $t>6\beta$
  ($d_1=6,b_1=1,f_1=12$). $\text{DB}_x$ is same as $\text{DB}_a$ where
  $a,z$ are replaced by $x_2$ and $x_1$ respectively. 
  Widget $\text{Hex}~x'$ is the same as widget $\text{Hex}~x$ with roles 
  of $x_1$ and $x_2$ reversed.}
\label{fig_undec_tb_wz}
\end{center}
\end{figure}

  \begin{itemize}
  \item Widget $\text{W}_1^{=0}$ given in Fig.~\ref{fig_undec_tb_wz}.
    We have $x_2=\frac{1}{2^{k+1+c_1}3^{k+1+c_2}}$, on
    entering the $L_1$ of Widget $\text{W}_1^{=0}$. To check if $c_1=0$, we first
    convert $x_2$ to be of the form $\frac{1}{2^{c_1}3^{c_2}}$ by
    multiplying $x_2$ by $6^{k+1}$.

    Let $\alpha=\frac{1}{2^{k+1}}$ and
    $\beta=\frac{1}{2^{k+1+c_1}3^{k+1+c_2}}$.  The location $L_1$ is
    entered with $a = \frac{1}{2^{k+1}}=\alpha$ and
    $x_2=\frac{1}{2^{k+1+c_1}3^{k+1+c_2}} = \beta$. Let $t_1,t_2$
    be the times spent at locations $L_1, L_2$ respectively. Then, on
    entering $\text{Check}_1$, we have $a = \alpha + t_1 + t_2$,
    $z=t_1+t_2$, $x_2 = \beta + t_1+t_2$ and $x_1 = t_2$. The
    widget $\text{DB}~a$ (Fig.~\ref{fig_undec_tb_wz}) ensures that
    $t_1+t_2 = \alpha$, i.e; $a$ has been doubled. Similarly, the
    widget $\text{Hex}~x$ (Fig.~\ref{fig_undec_tb_wz}) ensures that
    $t_2 = 6 \beta$. No time is spent at $\text{Check}_1$.

    We repeatedly keep multiplying $a$ by 2 until $a$ becomes equal to
    1.  For this, we once take the path $L_1$ to $\text{Check}_1$,
    using clock $x_1$, and the next time, use the path $L_3$ to
    $\text{Check}_2$, using clock $x_2$.  Note that when $a$
    becomes 1 after $k+1$ iterations, we have also multiplied $\beta$
    with $6^{k+1}$. Due to the alternation of clocks $x_1,x_2$ in paths 
    $L_1$ to $\text{Check}_1$ and $\text{Check}_1$ to $\text{Check}_2$, 
    the value $\beta * 6^{k+1}$ could be either in $x_1$ or $x_2$ while the other clock is 0. 
    We ensure (via locations $I,J$) that 
    $x_2 = 6^{k+1}\beta$ and $x_1=0$ upon entering $M$. 
    Hence, we get after $k+1$ iterations,
    $a=1=2^{k+1}\alpha$ and $x_2= 6^{k+1}\beta=
    6^{k+1}\frac{1}{2^{k+1+c_1} 3^{k+1+c_2}} = \frac{1}{2^{c_1}
      3^{c_2}}$.  At this point, we are at location $M$.

    Now each time the loop between $M$ and $\text{Check}_x$ is taken,
    value in $x_2$ is doubled; the widget $\text{DB}~x$ checks this.
    Repeatedly doubling $x_2$ $i$ times gives $x_2 = 2^i
    \frac{1}{2^{c_1} 3^{c_2}}$.  If this value is 1, then we know
    $i=c_1$, and $c_2=0$.  When this happens we reach location $N_1$,
    from where a target is reached with cost 18.  If $c_2 \neq 0$,
    then after some $j$ iterations of the loop between $Check_x$ and
    $M$, we will obtain $x_2=\frac{1}{3^{c_2}}$.  Note that we can
    neither go to $N_1$ nor $N_2$ at this point, so the only option is
    to continue the loop between $M$ and $\text{Check}_x$. Clearly,
    $x_2$ will never become 1; so the only option is for $x_2$ to grow
    larger than 1. At that point, the transition to $N_2$ is enabled,
    and we reach a target with a cost 19.

  \item Time spent in widget $\text{W}_1^{=0}$: if $L_1$ was entered
    for the first time with $a = \frac{1}{2^{k+1}} = \alpha$, then the
    time spent in $L_1$ and $L_2$ before $\text{Check}_1$ is entered
    is $t=\alpha$.  After visiting $L_1,L_2$, the next time we use
    $L_3,L_4$. Since we enter $L_3$ for the first time with
    $a=2\alpha$, the time spent in $L_3,L_4$ is $2 \alpha$. The next
    time we visit $L_1$, we will be spending $4 \alpha$ and so on.
    Proceeding like this, we know that the total time spent in this
    loop before $M$ is reached is $\alpha + 2 \alpha + 4 \alpha +
    \cdots + \frac{1}{2}$ which is always $<1$. A similar argument
    holds for the time spent in the loop between $M$ and
    $\text{Check}_x$.  This apart, we spend 18 or 19 units of time (at
    $N_1$ or $N_2$) or at most 25 units of time in the widgets
    ($\text{Hex}~x$), thus adding upto a total time that is at most
    $<28$.

  \item The widget $\text{W}_1^{\neq 0}$ is similar to
    $\text{W}_1^{=0}$.  The loop between $M$ and $\text{Check}_x$ is
    retained, as is.  In addition, when $x_2<1$ and $x_1=0$, we go to
    a location $M_1$ from $M$. The idea is to first multiply $x_2$
    repeatedly till we obtain $x_2=\frac{1}{3^{c_2}}<1$, at which
    point of time, we go to $M_1$.  From $M_1$, we have a loop between
    $M_1$ and an urgent Player 2 location $\text{Check}'_x$, and a
    widget $\text{Triple}~x$ is attached to $\text{Check}'_x$. Each
    time we come back to $M_1$ from $\text{Check}'_x$, we reset $x_1$.
    Finally, if we get $x_2=1,x_1=0$, then we go from $M_1$ to a
    location $M_2$ having price 1.  Elapsing 18 units of time in
    $M_2$, we reach a target with cost 18.  However, if $x_2$ exceeds
    1, then with the guard $x_2 > 1, x_1=0?$, we go from $M_1$ to a
    location $M_3$ having price 1. To reach a target from $M_3$, one
    has to elapse 19 units of time, thereby incurring a cost
    19. Clearly, the route via $M_3$ will be needed iff $c_2=0$.

  \item The total time spent in $\text{W}_1^{\neq 0}$ will also be
    less than $28$.
  \end{itemize}

  To summarize the zero check instruction for $C_1$: assume we start
  with $x_1=\frac{1}{2^{k+c_1}3^{k+c_2}}$, $z=\frac{1}{2^k}$ at
  location $\ell_{k+1}$ of Fig.~\ref{fig_undec_tb_zeroCheck}.  Let us
  consider the case when Player 1 correctly simulates the instructions
  within time limits. In this case, location Check is reached with
  $z=\frac{1}{2^{k+1}}$ and $x_2=\frac{1}{2^{k+c_1+1}3^{k+c_2+1}}$.
  Player 2 has the possibility to test if this is indeed the case, by
  visiting widgets $\text{WZ},\text{WI}$; however Player~1 will
  achieve his objective in that case. At location $Z$, Player~1 then
  correctly guesses whether $C_1$ is zero or not, by appropriately
  choosing one of the locations $\text{Check}_{c_1=0}$ or
  $\text{Check}_{c_1 \neq 0}$. Again, Player 2 has the possibility to
  check if Player 1's guess is correct by visiting widgets
  $\text{W}_1^{=0}$ and $\text{W}_1^{\neq 0}$; however, Player 1 will
  achieve his objective here as well.  Now consider the case that
  Player 1 made a mistake: if he did not spend the right amount of
  time in $\ell_{k+1}$ and $T_2$, then Player 2 has the opportunity to
  punish him through the widgets $\text{WI}$ and $\text{WZ}$; if he
  made a wrong guess regarding $C_1$ being zero or non-zero, then
  again Player 2 has a possibility to punish him through the widgets
  $\text{W}_1^{=0}$ and $\text{W}_1^{\neq 0}$.

  \medskip

  \paragraph{Correctness of the construction}
  On entry into the location $\location_n$ (for the HALT instruction),
  we reset clock $x_1$ to 0; $\location_n$ has cost 1, and the edge
  coming out of $\location_n$ goes to a Goal location, with constraint
  $x_1\leq 18$.
  \begin{enumerate}
  \item Assume that the two counter machine halts. If Player 1
    simulates all the instructions correctly, he will incur a cost
    $\leq 18$, by either reaching the goal location after
    $\location_n$, or by entering a widget (the second case only
    occurs if Player~$2$ decides to check whether Player~$1$ simulates
    the machine faithfully.  If Player 1 makes an error in his
    computation, Player 2 can always enter an appropriate widget,
    making the cost greater than $18$.  In summary, if the two counter
    machine halts Player 1 has a strategy to achieve his goal
    (i.e., reaching a target location with a cost at most $18$).
  \item Assume that the two counter machine does not halt.
    \begin{itemize}
    \item If Player 1 simulates all the instructions correctly, and if
      Player 2 never enters a check widget, then Player 1 incurs cost
      $\infty$ as the path is infinite.  Notice that even in this
      case, the total time needed for the computations will be less
      than $1$, due to the strictly decreasing sequence of times
      chosen for simulating successive instructions.  In this case,
      Player 2 will never want to enter a widget, since he gets a
      higher payoff.
    \item Suppose now that Player 1 makes an error. In this case,
      Player 2 always has the possibility to reach a target set with a
      cost greater than $18$.
     \end{itemize}
     In summary, if the two counter machine does not halts Player 1
     does not have a strategy to achieve his goal.
   \end{enumerate}
   Thus, Player 1 incurs a cost at most $18$ iff he chooses the
   strategy of faithfully simulating the two counter machine, when the
   machine halts. When the machine does not halt, the cost incurred by
   Player 1 is greater than $18$ if Player 1 made a simulation error
   and Player 2 entered a widget. Otherwise, if a widget is not
   entered, then the run does not end and cost is $+\infty$.  \qed
 \end{proof}

 \noindent\textbf{Shorthand and Longhand notations used in the proof}: Note
 that in the shorthand notation used in widgets $\text{WI}_1^{>}$ and
 $\text{WI}_1^{<}$,
 \begin{itemize}
 \item we never have consecutive locations with price-rates different
   from $0$;
 \item on entering any location, there is a ``free'' clock with value
   0;
 \item further, all guards are of the form $x=c$ with reset of $x$ on
   all edges.
 \end{itemize}
 The time elapsed at a location is captured in the ``free'' clock
 which had value 0 while entering that location.  Consider, for
 example two consecutive locations $\ell_1$ and $\ell_2$ where
 $\ell_1$ has price $f>0$, $\ell_2$ has price 0, with an edge between
 $\ell_1$ and $\ell_2$ with guard $x=c$ and reset of $x$.  Let $y=0$
 on entering $\ell_1$.  If $t$ units of time was spent at $\ell_1$, we
 get $y=t, x=0$, and the rest of the clocks are incremented by $t$. A
 cost of $ft$ is incurred.  This can be replaced by a series of $2f-1$
 blocks, where a block looks like this: The first block contains a
 copy $\ell_{11}$ of $\ell_1$ and some dummy location $d_1$;
 $\ell_{11}$ has price 1, $d_1$ has price 0, and there is an edge from
 $\ell_{11}$ to $d_1$ with guard $x=c$, and reset of $x$. A cost $t$
 is incurred. In the second block, all locations have price 0. The
 second block begins from $d_1$ and ends in the second copy
 $\ell_{12}$ of $\ell_1$. The price of all copies of $\ell_1$ is same
 as that of $\ell_1$.  In the second block, the clock values are
 adjusted to be the same as they were, when they first entered
 $\ell_1$.  The third block begins with $\ell_{12}$, and is like the
 first block: it ends in a dummy location $d_2$. $d_2$ has price 0.
 The fourth block begins with $d_{2}$, and is like the second block:
 it ends in the 3rd copy $\ell_{13}$ of $\ell_1$ and so on.  The
 $2f-1$th block will end in location $d_{2f-3}$, with price 0.  The
 valuations of all clocks on entering $d_{2f-3}$ is the same as what
 they were on entering $\ell_2$, in the original transition from
 $\ell_1$ to $\ell_2$.  With no time elapse in $d_{2f-3}$, we go to
 $\ell_2$.
   
 The total cost incurred across the $2f-1$ blocks is $ft$.  If the
 time $t$ elapsed at $\ell_1$ is such that $\lceil t \rceil =r$, then the
 total time elapsed across the $2f-1$ blocks is $\leq fr$: $t$ time
 elapsed in the odd numbered blocks, and $r-t$ in the even numbered
 blocks, to restore clock values.  The point to note is that, as long
 as there are sufficiently many clocks, the above trick can be done.
 
 Consider the paths $\rho_1$ and $\rho_2$ in Figure \ref{fig_undec_tb_inc_module}. 
 As explained above, the ``free'' clock with value
   0 is $b$. Here $f$ is $2$ and $t = 1- v$ where $v$ is the value of clock $x_2$. 
   Clearly, the cost accumulated in path $\rho_1$ upon reaching $I$ is $2 * (1-v)$. 
   The clock values upon entering $F1$ of path $\rho_2$ are $x_2 = v$ and $b=0$. 
   Upon entering $G1$ the values are $x_2 = 0$ and $b = 1-v$. 
   Thus upon entering $F2$, $x_2 = v$ and $b=0$ and so on. The costs incurred in this 
   path are $1-v$ at $F1$ and $1-v$ at $F2$. Hence total cost accumulated is $2*(1-v)$ upon reaching $I$.


\subsection{Repeated reachability} 
\begin{lemma}
  \label{lem_undec_cost_buchi}
  The existence of a strategy for the repeated reachability objective
  $\RReachOb(\eta)$, for any $\eta \in \Rpos$ is undecidable for PTGs
  with 3 clocks or more.
\end{lemma}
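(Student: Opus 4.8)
The plan is to reduce the halting problem for two-counter machines to the existence of a winning strategy for Player~1 with objective $\RReachOb(\eta)$, reusing the encoding sketched in Section~\ref{sec:undecidability} and depicted in Fig.~\ref{fig:repeated}. Given a machine $M=(L,C)$ I would build a PTG $\arena$ with three clocks $x_1,x_2,x_3$ in which, on entry into the module simulating the $(k+1)$-th instruction, the invariant $x_1=\frac 1{5^{c_1}7^{c_2}}$, $x_2=x_3=0$ holds, where $c_1,c_2$ are the current counter values. Each instruction (increment, decrement, zero-check) is simulated by a module consisting of a \emph{main line}, where Player~1 picks a delay performing the required multiplication or division of $x_1$ by $5$ or $7$, followed by an \emph{urgent} Player~2 location $\mathrm{Check}$ (urgency enforced by resetting $x_3$ just before $\mathrm{Check}$ and giving $\mathrm{Check}$ the invariant $x_3=0$). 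At $\mathrm{Check}$ Player~2 either lets the simulation proceed to the next module or diverts the play into a verification widget $\mathrm{WD}$, which is entirely owned by Player~2. The decrement widget of Fig.~\ref{fig:repeated} is the template: a cycle $A\to B\to C\to D\to E\to F\to A$ whose guards and resets are tuned so that (i)~$F$ is a target, (ii)~the valuation with which $A$ is entered is restored each time the cycle returns to $A$, and (iii)~one full traversal $A\to\cdots\to F$ accumulates cost exactly $\varepsilon$, where $\varepsilon$ is the signed simulation error committed by Player~1 at that step (so $\varepsilon=0$ iff the counter was updated correctly). The increment and zero-check widgets are obtained by the same device with adjusted prices; the zero-check module additionally uses a $\mathrm{Check}$-loop repeatedly multiplying the relevant clock by $5$ (resp.\ $7$) to test whether $c_1$ (resp.\ $c_2$) is $0$, exactly as in the constrained-price reduction. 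The only target locations are these $F$'s and one further target reached from the HALT instruction $\ell_n$, which carries a cost-$0$ self-loop; the main lines carry no target and have price-rate $0$ throughout.

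I would then prove the two directions. If $M$ halts, Player~1 plays the faithful simulation: he reaches $\ell_n$ in finitely many steps with accumulated cost $0$ and then loops forever on the cost-$0$ self-loop at the target, so a target is visited infinitely often with payoff $0\in[-\eta,\eta]$ each time; and if Player~2 ever diverts into a widget, faithfulness gives $\varepsilon=0$, so every traversal of the cycle adds $0$ and revisits $F$ with payoff $0$, again witnessing the objective. Hence Player~1 wins. Conversely, if $M$ does not halt, Player~2 wins against every strategy of Player~1. If Player~1 ever makes a faulty update ($\varepsilon\neq 0$), Player~2 diverts at that step and iterates the widget cycle forever; since the main line contributes cost $0$, after $n$ traversals the accumulated cost is $n\varepsilon$, which leaves $[-\eta,\eta]$ as soon as $n|\varepsilon|>\eta$, so infinitely many visits to $F$ have payoff outside $[-\eta,\eta]$ and the objective fails. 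If instead Player~1 simulates faithfully forever, Player~2 never diverts; as $M$ does not halt, $\ell_n$ is never reached and, since no target lies on the main line, no target is ever visited, so the objective (which requires infinitely many target visits) fails. Combining, Player~1 has a winning strategy iff $M$ halts, which is undecidable; three clocks suffice, the same $\arena$ works for every $\eta\in\Rpos$, and the construction extends verbatim to price-rates in $\{-d,0,d\}$.

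The main technical obstacle is the design of the widgets under the tight budget of three clocks, especially for the zero-check module: one must iterate a multiplication-by-$5$ (or $7$) loop on $x_1$ while keeping $x_2,x_3$ free to measure the relevant delays, keeping each iteration urgent or at least cost-neutral on the main line, and still attaching a verification widget that returns to its entry with all clocks restored and adds only the error $\varepsilon$ per traversal. The bookkeeping is delicate precisely because, unlike in the constrained-price or time-bounded reductions, here the \emph{same} gadget must be safely re-entered unboundedly often, so every guard and reset on the widget cycle must be consistent with the valuation produced by $\mathrm{Check}$. Verifying property~(ii) (clock restoration after one turn of the cycle) for each of the three instruction types, and checking that Player~2 can neither gain nor lose by idling (ruled out by the $x_3=0$ invariants) nor by diverting when $\varepsilon=0$, is where the bulk of the routine-but-lengthy case analysis lies.
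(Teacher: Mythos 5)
Your proposal is correct and follows essentially the same route as the paper's own proof: the same two-counter-machine reduction with the encoding $x_1=\tfrac{1}{5^{c_1}7^{c_2}}$, the same urgent-\text{Check} structure enforced via $x_3=0$ invariants, the same clock-restoring widget cycle accumulating $\pm\varepsilon$ per traversal with targets only inside widgets and at HALT, and the same two-directional correctness argument. The only differences are presentational (you package the widget-design details as a ``technical obstacle'' whereas the paper gives the concrete location prices and tables), so there is nothing substantive to add.
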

\begin{proof}
  We prove the existence of a strategy with a repeated reachability
  objective ensuring the cost is within an interval $[-\eta, \eta]$ is
  undecidable, for any choice of $\eta >0$.  In order to obtain the
  undecidability result, we use a reduction from the halting problem
  of 2 counter machines.  Our reduction uses a PTG with 3 clocks, and
  arbitrary location prices, but no edge prices.
      
  We specify a module for each instruction of the two counter machine.
  On entry into a module, we have $x_1 =
  \frac{1}{5^{c_1}7^{c_2}}$,$x_2 = 0$ and $x_3=0$, where $c_1,c_2$ are
  the values of counters $C_1,C_2$. We construct a PTG whose building
  blocks are the modules for instructions. The role of Player 1 is to
  faithfully simulate the two counter machine, by choosing appropriate
  delays to adjust the clocks to reflect changes in counter values.
  Player 2 will have the opportunity to verify that Player 1 did not
  cheat while simulating the machine.  We shall now present modules
  for increment, decrement and zero check instructions.
   
\begin{figure}[tbp]
\centering
\begin{tikzpicture}[->,>=stealth',shorten >=1pt,auto,node distance=1cm,
  semithick,scale=0.9]
  \node[initial,initial text={}, player1] at (0,-.5) (lk) {$0$ } ;
   \node()[above of=lk,node distance=5mm,color=gray]{$\ell_k$};

  \node[player2] at (3,-.5) (chk){$0$} ;
     \node()[above of=chk,node distance=5mm,color=gray]{$\text{Check}$};

  \node () [below right of=chk,node distance=6.5mm,xshift=-3mm] {$[x_3=0]$};

   \node[fill=gray!20,rounded corners,fill opacity=0.9] at (6,-.5)(lk1){$\ell_{k+1}$};

  \node[ player2] at (0,-2)(A) {$-1$};
   \node()[above of=A,node distance=5mm,color=gray]{$A$};

  \node[player2] at (2.5,-2) (B){$4$};
   \node()[above of=B,node distance=5mm,color=gray]{$B$};

  \node[player2] at (5,-2) (C){$-3$};
   \node()[above of=C,node distance=5mm,color=gray]{$C$};

  \node[player2] at (7,-2) (D) {$0$};
   \node()[above of=D,node distance=5mm,color=gray]{$D$};

  \node[player2] at (9,-2) (E) {$0$};
   \node()[above of=E,node distance=5mm,color=gray]{$E$};

  \node[player2,accepting] at (11,-2) (F) {$0$} ;
   \node()[above of=F,node distance=5mm,color=gray]{$F$};

  \node () [below of=C,node distance=7mm] {$x_3=0$};
    
\path (lk) edge node {$x_1 {\leq} 1$} node[below] {$\set{x_3}$}(chk);
\path (chk) edge node[below] {$\set{x_2}$} (lk1);
\path (chk) edge node {} (A);

    \path (A) edge node {$ x_2 {=} 1$} node[below]{$\set{x_2}$} (B);
    \path (B) edge node {$x_1 {=} 2$} node[below]{$\set{x_1}$} (C);
    \path (C) edge node {$x_1 {=} 1$} node[below]{$\set{x_1}$} (D);
    \path (D) edge node {$x_2 {=} 2$} node[below]{$\set{x_2}$} (E);
    \path (E) edge node {$x_3 {=} 3$} node[below]{$\set{x_3}$} (F);
    \draw[rounded corners] (F) -- (11, -3) -- (0, -3) --  (A);
           
    \node[rotate=90,color=gray] at (-.8, -2) (N) {$\text{WD}_1$};
    \draw[dashed,draw=gray,rounded corners=10pt] (-.5,-3.2) rectangle (11.5,-1.25);

 \end{tikzpicture}
\caption{$\RReachOb(\eta)$: Simulation to decrement counter $C_1$}
\label{fig_undec_buchi_dec}
\end{figure}

  \paragraph{Simulation of decrement instruction} : The module to
  simulate the decrement of counter $C_1$ is given in
  Fig.~\ref{fig_undec_buchi_dec}.  We enter location $\ell_k$ with
  $x_1=\frac{1}{5^{c_1}7^{c_2}}, x_2 = 0$ and $x_3=0$. Lets denote by
  $x_{old}$ the value $\frac{1}{5^{c_1}7^{c_2}}$.  To correctly
  decrement $C_1$, Player 1 should choose a delay of $4x_{old}$ in
  location $\ell_k$.  At location $\text{Check}$, there is no time
  elapse.  Player 2 has two possibilities : ($i$) to go to
  $\ell_{k+1}$, or ($ii$) to enter the widget $\text{WD}_1$.  If
  Player 1 makes an error, and delays $4x_{old}+\varepsilon$ at
  $\ell_k$ ($\varepsilon \neq 0$)then Player 2 can enter the widget
  $\text{WD}_1$ and punish Player 1.  When we enter $\text{WD}_1$ for
  the first time, we have $x_1=x_{old}+4x_{old}+\varepsilon$,
  $x_2=4x_{old}+\varepsilon$ and $x_3=0$.  In $\text{WD}_1$, the cost
  of going from location $A$ to $F$ is $\varepsilon$. Also, when we
  get back to $A$ after going through the loop once, the clock values
  with which we entered $\text{WD}_1$ are restored; thus, each time,
  we come back to $A$, we restore the starting values with which we
  enter $\text{WD}_1$.  The third clock is really useful for this
  purpose only.
      
  Since all locations in $\text{WD}_1$ are Player 2 locations, Player
  2 can continue taking this loop as long as he pleases; each time
  incurring a cost $\varepsilon$.  Thus, for any choice of $\eta$,
  Player 2 can incur a cost that is not in $[-\eta, \eta]$ by taking
  the loop from $A$ to $F$ a large number of times.  Note however that
  when $\varepsilon=0$, then Player 1 will always achieve his
  objective: he will visit $F$ infinitely often with a cost $0 \in
  [-\eta,\eta]$ for any choice of $\eta$.
             
\begin{figure}[tbp]
\centering
\begin{tikzpicture}[->,>=stealth',shorten >=1pt,auto,node distance=1cm,
  semithick,scale=0.9]
  \node[initial,initial text={}, player1] at (0,-.5) (lk) {$0$ } ;
   \node()[above of=lk,node distance=5mm,color=gray]{$\ell_k$};

     \node[player2] at (3,-.5) (i){$0$} ;
     \node()[above of=i,node distance=5mm,color=gray]{$I$};
 
  \node[player2] at (6,-.5) (chk){$0$} ;
  \node()[above of=chk,node distance=5mm,color=gray]{$\text{Check}$};
  \node () [below right of=chk,node distance=6.5mm,xshift=1mm] {$[x_3=0]$};

   \node[fill=gray!20,rounded corners,fill opacity=0.9] at (9,-.5)(lk1){$\ell_{k+1}$};
   
\path (lk) edge node {$x_1 {=} 1$} node[below] {$\set{x_1}$}(i);
\path (i) edge node {$x_1 {\leq} 1$} node[below] {$\set{x_3}$}(chk);
\path (chk) edge node[above] {$\set{x_2}$} (lk1);

  \node[ player2] at (0,-2.5)(A) {$5$};
   \node()[above right of=A,node distance=8mm,color=gray]{$A$};

  \node[player2] at (2.5,-2.5) (B){$1$};
   \node()[above of=B,node distance=5mm,color=gray]{$B$};

  \node[player2] at (5,-2.5) (C){$-5$};
   \node()[above of=C,node distance=5mm,color=gray]{$C$};

  \node[player2] at (7,-2.5) (D) {$0$};
   \node()[above of=D,node distance=5mm,color=gray]{$D$};

  \node[player2] at (9,-2.5) (E) {$0$};
   \node()[above of=E,node distance=5mm,color=gray]{$E$};

  \node[player2,accepting] at (11,-2.5) (F) {$0$} ;
   \node()[above of=F,node distance=5mm,color=gray]{$F$};

  \node () [below of=C,node distance=7mm] {$x_3=0$};

  \draw[rounded corners] (chk) -- (6, -1.35) -- (0, -1.35) --  (A);

    \path (A) edge node {$ x_1 {=} 1$} node[below]{$\set{x_1}$} (B);
    \path (B) edge node {$x_2 {=} 2$} node[below]{$\set{x_2}$} (C);
    \path (C) edge node {$x_2 {=} 1$} node[below]{$\set{x_2}$} (D);
    \path (D) edge node {$x_1 {=} 2$} node[below]{$\set{x_1}$} (E);
    \path (E) edge node {$x_3 {=} 3$} node[below]{$\set{x_3}$} (F);
    \draw[rounded corners] (F) -- (11, -3.5) -- (0, -3.5) --  (A);
           
    \node[rotate=90,color=gray] at (-.8, -2.6) (N) {$\text{WI}_1$};
    \draw[dashed,draw=gray,rounded corners=10pt] (-.5,-3.7) rectangle (11.5,-1.75);

 \end{tikzpicture}
 
 \vspace{0.5cm}

Let $x_n = x_{new}$ and $x_o = x_{old}$
\vspace{0.25cm}

  \begin{tabular}{|c|c|c|c|c|}
  \hline
  Loc $\rightarrow$ & $\ell_{k}$ & $I$ & Check & $\ell_{k+1}$ \\\hline  
  $x_1$ on entry& $x_o$ & 0 & $x_n$ & $x_n$ \\\hline
  $x_2$ on entry& 0 & $1-x_o$ & $1 - x_o + x_n$ & $0$ \\\hline
  $x_3$ on entry& 0 & $1-x_o$& 0 & $0$ \\\hline
  Time elapsed & $1-x_o$ & $x_n$ & 0 & 0 \\\hline
  Cost incurred & 0 & 0 & 0 &0 \\\hline  
 \end{tabular}
 
\vspace{0.5cm}

 \begin{tabular}{|c|c|c|c|c|c|c|}
  \hline
  Loc $\rightarrow$ & $A$ & $B$ & $~~C~~$ & $D$ & $E$ & $F$ \\\hline  
  $\begin{array}{c}x_1 \\ \mbox{on entry}\end{array}$& $x_n$ &
  $0$ &
  $x_o$ &
  $1+x_o$ &
  0 &
  $x_n$ \\\hline
  $\begin{array}{c}x_2 \\ \mbox{on entry}\end{array}$& $1-x_o+x_n$ &
  $2-x_o$ &
  0 &
  0 &
  $1-x_o$ &
  $1-x_o+x_n$ \\\hline
  $\begin{array}{c}x_3 \\ \mbox{on entry}\end{array}$& 0 &
  $1-x_n$ &
  $1-x_n+x_o$ &
  $2-x_n+x_o$ &
  $3-x_n$ &
  0 \\\hline
  
  $\begin{array}{c}\mbox{Time}\\\mbox{elapsed}\end{array}$ &
  $1-x_n$ &
  $x_o$ &
  1 &
  $1-x_o$ &
  $x_n$ &
  0 \\\hline
  $\begin{array}{c}\mbox{Cost} \\ \mbox{incurred}\\\mbox{in one pass}\end{array}$ & $5-5x_n$ & $x_o$ & -5 & 0 & 0 & 0 \\\hline
  $\begin{array}{ccc} \mbox{Total cost} \\ \end{array}$ & -&- &- &- &- & 
  $\begin{array}{c}x_n = \frac{x_o}{5}+\varepsilon \\ 
  
  \mbox{one pass}\\ =-5x_n+x_o  \\= -5\varepsilon \\ 
  \mbox{after $i$ passes} \\ =-i\times 5\times \varepsilon\\
  
  \mbox{Total cost } \\ 
  =0~if~\varepsilon=0\\
  >\eta~if~\varepsilon < 0\\
  <-\eta~if~\varepsilon>0\end{array}$ \\\hline
 \end{tabular}

\caption{$\RReachOb(\eta)$: Simulation to increment counter $C_1$}
\label{fig_undec_buchi_inc}
\end{figure}

  \paragraph{Simulation of increment instruction}: The module to
  increment $C_1$ is given in Fig.~\ref{fig_undec_buchi_inc}. Again,
  we start at $\ell_k$ with $x_1=\frac{1}{5^{c_1}7^{c_2}}, x_2 = 0$ and
  $x_3=0$. Again, call $\frac{1}{5^{c_1}7^{c_2}}$ as $x_{old}$.  A
  time of $1-x_{old}$ is spent at $\ell_k$.  Let the time spent at $I$ be
  denoted $x_{new}$.  To correctly increment counter 1, $x_{new}$ must
  be $\frac{x_{old}}{5}$.  No time is spent at $\text{Check}$. Player
  2 can either continue simulation of the next instruction, or can
  enter the widget $\text{WI}_1$ to verify if $x_{new}$ is indeed
  $\frac{x_{old}}{5}$.  Fig.~\ref{fig_undec_buchi_inc} gives a table
  detailing the values of clocks, time elapsed and cost incurred at
  each location of the main module, as well as $\text{WI}_1$. It can
  be seen that the total cost incurred in one pass from location $A$
  to $F$ is $x_{old}-5x_{new}$, which is 0 iff $x_{old}=5x_{new}$. As
  seen in the case of decrement, here also, each time we come back to
  $A$, we restore the clock values with which we enter $\text{WI}_1$;
  clearly, if Player 1 makes an error of the form
  $x_{new}=\frac{x_{old}}{5}+\varepsilon$, the cost incurred in one
  pass from $A$ to $F$ is $-5 \varepsilon$. If $\varepsilon >0$, then
  Player 2 can bring the cost less than $-\eta$ for any choice of
  $\eta$ by taking the loop between $A$ and $F$ a large number of
  times. Similarly, if $\varepsilon <0$, a cost $>\eta$ can be
  incurred for any choice of $\eta$.
  
  \paragraph{Simulation of Zero-check}:
  Fig.~\ref{fig_undec_buchi_zeroCheck} gives the module for zero-check
  instruction for counter $C_2$. $\ell_k$ is a no time elapse location,
  from where, Player 1 chooses one of the locations $\text{Check}_{c_2=0}$ or
  $\text{Check}_{c_2 \neq 0}$. Both these are Player 2 locations, and Player 2
  can either continue the simulation, or can go to the check widgets
  $\text{W}_2^{=0}$ or $\text{W}_2^{\neq 0}$ to verify the correctness of Player 1's
  choice. 
  
\begin{figure}[tbp]
\centering
    \begin{tikzpicture}[->,>=stealth',shorten >=1pt,auto,node distance=1cm,
      semithick,scale=0.8]
      \node[initial,initial text={}, player1] at (-.5,0) (A) {$0$};
      \node()[above of=A,node distance=5mm,color=gray]{$\ell_k$};
      \node () [below right of=A,node distance=6.5mm,xshift=-3mm] {$[x_3=0]$};

      \node[player2] at (2,1) (B){$0$} ;
       \node()[below of=B,node distance=5mm,color=gray]{$\text{Check}_{c_2=0}$};
       
      \node[widget,color=gray] at (2,2.5) (B1){$\text{W}_2^{= 0}$ };

      \node[player2] at (2,-1) (C){$0$} ;
      \node()[above of=C,node distance=5mm,color=gray]{$\text{Check}_{c_2\neq 0}$};
      
      \node[widget,color=gray] at (2,-2.5) (C1){$\text{W}_2^{\neq 0}$};

      \node[player1] at (4.5,1) (D) {$0$};
      \node()[above of=D,node distance=5mm,color=gray]{$\ell_{k+1}^1$};

      \node[player1] at (4.5,-1) (E) {$0$ };
      \node()[above of=E,node distance=5mm,color=gray]{$\ell_{k+1}^2$};

      \path (A) edge (B);
      \path (A) edge (C);
      \path (B) edge node {$x_3=0$} (B1);
      \path (B) edge node[above] {$x_3=0$} (D);
      \path (C) edge node[left] {$x_3=0$} (C1);
      \path (C) edge node[above] {$x_3=0$} (E);
    \end{tikzpicture}
    \caption{$\RReachOb(\eta)$: Widget $\text{WZ}_2$ simulating
      zero-check for $C_2$}

\label{fig_undec_buchi_zeroCheck}
\end{figure}
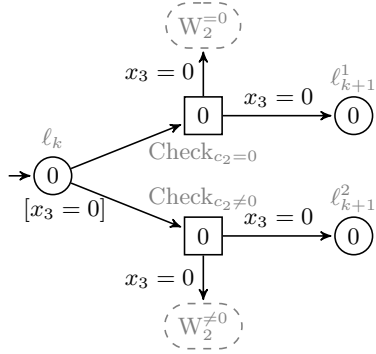

  The widgets $\text{W}_2^{=0}$ and $\text{W}_2^{\neq 0}$ are given in
  Fig.~\ref{fig_undec_buchi_wze} and \ref{fig_undec_buchi_wzne}
  respectively.
  \begin{itemize}
  \item Consider the case when Player 1 guessed that $C_2$ is zero,
    and entered the location $\text{Check}_{c_2=0}$ in
    Fig.~\ref{fig_undec_buchi_zeroCheck}.  Let us assume that Player 2
    verifies Player 1's guess by entering $\text{W}_2^{=0}$
    (Fig.~\ref{fig_undec_buchi_wze}).  No time is spent in the initial
    location $A$ of $\text{W}_2^{=0}$.  We are therefore at $B$ with
    $x_1=\frac{1}{5^{c_1}7^{c_2}}=x_{old}$ and $x_2,x_3=0$.  In case
    $c_1=c_2=0$, we can directly go to the $F$ state, and stay there
    forever, incurring cost 0.  If that is not the case, Player 1 has
    to prove his claim right, by multiplying $x_1$ with 5 repeatedly,
    till $x_1$ becomes 1; clearly, this is possible iff $c_2=0$.  The
    loop between $B$ and $\text{Check}$ precisely does this: each time
    Player 1 spends a time $x_{new}$ in $B$, Player 2 can verify that
    $x_{new}=5 x_{old}$ by going to $\text{WD}_1$, or come back to
    $B$.  No time is elasped in $\text{Check}$. Finally, if $x_1=1$,
    we can go to $F$, and Player 1 achieves his objective. However, if
    $C_2$ was non-zero, then $x_1$ will never reach 1 after repeatedly
    multiplying $x_1$ with 5; in this case, at some point, the edge
    from $\text{Check}$ to $C$ will be enabled. In this case, the
    infinite loop between $C$ and $T$, incurs a cost $+\infty$.
    
  \item Consider now the case when Player 1 guessed that $C_2$ is
    non-zero, and hence entered the location $\text{Check}_{c_2\neq
      0}$ in Fig.~\ref{fig_undec_buchi_zeroCheck}.  Let us assume now
    that Player 2 enters $\text{W}_2^{\neq 0}$
    (Fig.~\ref{fig_undec_buchi_wzne}) to verify Player 1's
    guess. Similar to $\text{W}_2^{=0}$, no time is spent at location
    $A$ of $\text{W}_2^{\neq 0}$, and the clock values at $B$ are
    $x_1=\frac{1}{5^{c_1}7^{c_2}}=x_{old}$ and $x_2,x_3=0$.  If
    $c_1=c_2=0$, then $x_2=1$, in which case, the location $D$ is
    reached, from where, the loop between $D,T$ is taken incurring a
    cost $+\infty$.  There are two possibilities now: ($i$) $B$ can go
    to $C$ or ($ii$) to $\text{Check}_{c_1}$.  In case $B$ goes to
    $\text{Check}_{c_1}$, then Player 1 repeatedly multiplies $x_1$ by
    5 till we obtain $x_1$ as $\frac{1}{7^{c_2}}$.  Note that mistakes
    committed by Player1 in the multiplication by 5, can always be
    caught by Player 2 via $\text{WD}_1$.
      
    If $c_1=0$ already, then we can straightaway go to $C$ spending no
    time at $B$, if $c_2 \neq 0$.  In this case, Player 1 has to
    compulsorily go to $\text{Check}_{c_2}$ from $C$ atleast once,
    since the edge from $C$ to $F$ is not enabled.  The loop between
    $C$ and $\text{Check}_{c_2}$ results in Player 1 multiplying $x_1$
    of the form $\frac{1}{7^{c_2}}$ till $x_1$ becomes 1.  Here again,
    if Player 1 commits a mistake during multiplication by 7, Player 2
    can catch Player 1 by entering the widget
    $\text{WD}_2$. Otherwise, when $x_1$ reaches 1, Player 1 can go
    from $C$ to $F$ achieving his objective.
  \end{itemize}
   
\begin{figure}[tbp]
 \begin{center}
\begin{tikzpicture}[->,>=stealth',shorten >=1pt,auto,node distance=1cm,
  semithick,scale=0.8]
  \node[initial,initial text={}, player1] at (1,0) (a) {$0$};
  \node()[above of=a,node distance=5mm,color=gray]{$A$};
  
  \node[player1] at (4,0) (b) {$0$};
  \node()[above of=b,node distance=5mm,color=gray]{$B$};
  
  \node[player2] at (8,0) (chk) {$0$};
  \node()[above of=chk,node distance=5mm,color=gray]{Check};
  \node()[below right of=chk,node distance=6.5mm]{$[x_3{=}0]$};

  \node[widget,color=gray] at (11,0) (wd){$\text{WD}_1$};
  
  \node[player2] at (8,-2) (c) {$1$};
  \node()[below of=c,node distance=5mm,color=gray]{C};
  
  \node[player2] at (11,-2) (d) {$0$};
  \node()[below of=d,node distance=5mm,color=gray]{$T$};
  
  \node[accepting,player1] at (4,-2) (t) {$0$};
  \node()[below of=t,node distance=5mm,color=gray]{$F$};
  
  \path (a) edge node {$x_2{=}0$} (b);
  \path (b) edge[bend right=10] node[below] {$\set{x_3}$} (chk);
  \path (chk) edge [bend right=10] node[above] {$x_1 {\leq} 1,\{x_2\}$} (b);  
  \path (chk) edge node {} (wd);
  \path (b) edge node[left] {$\begin{array}{c}x_1{=}1\\ \land x_2=0\end{array}$} (t);
  \path (t) edge [loop left] node {$x_2=0$} (t);
  \path (chk) edge node[left]{$x_1{>}1$} node[right]{$\set{x_2}$} (c);
  \path (c) edge[bend right=10] node[below] {$0{<}x_2{<}1$} (d);
  \path (d) edge[bend right=10] node[above] {$\set{x_2}$} (c);
\end{tikzpicture}

\vspace{0.5cm}
Let $\alpha=\frac{1}{5^{c_1}7^{c_2}}$
\vspace{0.25cm}

   \begin{tabular}{|c|c|c|c|c|c|c|c|}
     \hline
     Location $\rightarrow$ & $A$ & $B$ & Check & $C$ & $T$ & $F$ \\\hline  
     $\begin{array}{c} x_1 \\ \mbox{on entry} \end{array}$&
     $\alpha$ &
     $\begin{array}{c} 5^i \times \alpha\\ \mbox{loop} \\
       B \rightarrow \text{Check} \\ \mbox{taken} \\ \mbox{$i$ times}\end{array}$ & $\begin{array}{c} 5^i \times \alpha + k \\  k=  4 (5^i \times \alpha) + \varepsilon\end{array}$  & $\begin{array}{c} >1 \\ 5^i \times \alpha > 1 \\ c_2 \neq 0 \end{array}$  &
     - &
     $\begin{array}{c} 1 \\ 5^i \times \alpha = 1 \\ i=c_1 \wedge c_2=0 \end{array}$ 
     \\\hline
  
  $\begin{array}{c} x_2 \\ \mbox{on entry} \end{array}$& 0 & 0 & $k$  & 0 & $p>0$ & 0 \\\hline
  $\begin{array}{c} x_3 \\ \mbox{on entry} \end{array}$& 0 & 0 & 0  & - & - & 0 \\\hline
  
  $\begin{array}{c} \mbox{Time} \\ \mbox{elapsed} \end{array}$& 0 & $k$ &  $0$ &$p$  &0 &0 \\\hline
  
  $\begin{array}{c} \mbox{Cost} \\ \mbox{incurred} \\ \mbox{ in one pass} \end{array}$& 0 & 0 & 0 & $p$ & 0 & $0$ \\\hline
  
  $\begin{array}{ccc} \mbox{Total cost} \\ \end{array}$ & - & - &  - & $>\eta$ &  - & $0$ \\\hline
 \end{tabular}
 \end{center} 

 \caption{$\RReachOb(\eta)$: Widget $\text{W}_2^{=0}$. Delay at $B$ is
   $k = 4(5^i\times \alpha) + \varepsilon =4(5^i\times
   \frac{1}{5^{c_1}7^{c_2}}) + \varepsilon$ . If $\varepsilon \neq 0$
   then Player 2 will enter the widget $WD_1$ and where the cost
   incurred $\not \in [-\eta,\eta]$ if $\varepsilon\neq 0$.}
\label{fig_undec_buchi_wze}
\end{figure}

\begin{figure}[tbp]
\begin{center}
\begin{tikzpicture}[->,>=stealth',shorten >=1pt,auto,node distance=1cm,
  semithick,scale=0.8]
  \node[initial,initial text={}, player1] (a) {$0$};
  \node()[above of=a,node distance=5mm,color=gray]{$A$};
  
  \node[player1] at (4,0) (b) {$0$};
  \node()[above of=b,node distance=5mm,color=gray]{$B$};
  
  \node[player2] at (8,0) (chk) {$0$};
  \node()[above of=chk,node distance=5mm,color=gray]{$\text{Check}_{c_1}$};
  \node()[below of=chk,node distance=5mm]{$[x_3{=}0]$};

  \node[widget,color=gray] at (11,0) (wd){$\text{WD}_1$};
  
  \node[player2] at (0,-3) (d) {$1$};
  \node()[above of=d,node distance=5mm,color=gray]{$D$};
  
  \node[player2] at (0,-5) (t) {$0$};
  \node()[below of=t,node distance=5mm,color=gray]{$T$};
  
  \node[player1] at (4,-3) (c) {$0$};
  \node()[left of=c,node distance=5mm,color=gray]{C};
  
  \node[player2] at (8,-3) (chk2) {$0$};
  \node()[above of=chk2,node distance=5mm,color=gray]{$\text{Check}_{c_2}$};
  \node()[below of=chk2,node distance=5mm]{$[x_3{=}0]$};

  \node[widget,color=gray] at (11,-3) (wd2){$\text{WD}_2$};
  
  \node[accepting,player1] at (4,-5) (t1) {$0$};
  \node()[below of=t1,node distance=5mm,color=gray]{$F$};
  
  \path (a) edge node {$x_2{=}0$} (b);
  \path (b) edge[bend right=10] node[below] {$\set{x_3}$} (chk);
  \path (chk) edge [bend right=10] node[above] {$x_1 {\leq} 1,\{x_2\}$} (b);  
  \path (chk) edge node {} (wd);
  \path (b) edge node[left] {$\begin{array}{c}x_1{=}1\\ \land x_2=0\end{array}$} (d);
  \path (d) edge[bend left=20] node[right] {$0{<}x_2{<}1$} (t);
  \path (t) edge[bend left=20] node[left] {$\set{x_2}$} (d);
  
  \path (b) edge node[right]{$\begin{array}{c}x_1{<}1\\ \land x_2=0\end{array}$} (c);
  \path (c) edge[bend right=10] node[below] {$\set{x_3}$} (chk2);
  \path (chk2) edge [bend right=10] node[above] {$x_1 {\leq} 1,\{x_2\}$} (c);  
  \path (chk2) edge node {} (wd2);
  \path (c) edge node[left]{$\begin{array}{c}x_1{=}1\\ \land x_2=0\end{array}$} (t1);
  \path (t1) edge [loop right] node {$x_2=0$} (t1);
\end{tikzpicture}

  \vspace{0.55cm}
  Let $\alpha =\frac{1}{5^{c_1}7^{c_2}} $
  \vspace{0.25cm}

  \begin{tabular}{|c|c|c|c|c|c|c|c|}
    \hline
    Loc $\rightarrow$ & $A$ & $B$ & $D$ & $T$ & $C$ & $F$\\\hline  
    $\begin{array}{c} x_1 \\ \mbox{on entry} \end{array}$& $\alpha$ &
    $\begin{array}{c} 5^i \times \alpha\\ \mbox{loop }\\ B \rightarrow \text{Check} \\ \mbox{taken} \\ \mbox{$i$ times}\end{array}$    &
    $\begin{array}{c} 1 \\ 5^i \times \alpha = 1 \\ i=c_1 \wedge c_2=0 \end{array}$ & -& 
    $\begin{array}{c} 7^j \times \frac{1}{7^{c_2}}  \\  \mbox{loop }\\ C \rightarrow \text{Check} \\ \mbox{taken} \\ \mbox{$j$ times}\end{array}$      &
    $\begin{array}{c} 1 \\7^j\times 5^i \times \alpha = 1 \\ i=c_1 \wedge \\j=c_2>0 \end{array}$ 
    \\\hline
  
  $\begin{array}{c} x_2 \\ \mbox{on entry} \end{array}$& 0 & 0 & 0 &  $p>0$ & 0 & 0 \\\hline
  
  $\begin{array}{c} \mbox{Time} \\ \mbox{elapsed} \end{array}$& 0 &
  $4(5^i\times \alpha)+\varepsilon$  & $p$ & 0 & $6(7^j\times\frac{1}{7^{c_2}})+\varepsilon$  & 0 \\\hline
  
  $\begin{array}{c} \mbox{Cost} \\ \mbox{incurred} \\ \mbox{in one pass} \end{array}$& 0 & 0 & $p$ & 0 & 0 & 0 \\\hline
  
  $\begin{array}{ccc} \mbox{Total cost} \\ \end{array}$ & - & - &  $>\eta$ & - & - & 0 \\\hline
 \end{tabular}
\end{center}
 \caption{$\RReachOb(\eta)$: Widget $\text{W}_2^{\neq 0}$. Widget
   $\text{WD}_2$ following $\text{Check}_{c_2}$ is similar to
   $\text{WD}_1$ shown earlier, except that the prices are adjusted to
   verify decrement of counter $C_2$.}
\label{fig_undec_buchi_wzne}
\end{figure}

  \paragraph{Correctness of the construction}
  On entry into the location $\location_n$ (for HALT instruction), we
  reset clock $x_1$ to 0; from $\location_n$, we go to a state $F$
  with price 0, with a self loop $x_1=0$.
  \begin{enumerate}
  \item Assume that the two counter machine halts. If Player 1
    simulates all the instructions correctly, he will incur a cost
    $=0$, by either reaching the $F$ after $\location_n$, or by
    entering a widget (the second case only occurs if Player~$2$
    decides to check whether Player~$1$ simulates the machine
    faithfully.  If Player 1 makes an error in his computation, Player
    2 can always enter an appropriate widget, making the cost as large
    or as small, so as to not fit in $[-\eta,\eta]$ for any choice of
    $\eta$.  In summary, if the two counter machine halts Player 1 has
    a strategy to achieve his goal (visiting $F$ with a cost 0 $\in
    [-\eta,\eta]$ for any $\eta>0$.)
  \item Assume that the two counter machine does not halt.
    \begin{itemize}
    \item If Player 1 simulates all the instructions correctly, and if
      Player 2 never enters a check widget, then Player 1 incurs cost
      $\infty$ as we never reach $F$.  In this case,
      Player 2 will never want to enter a widget, since he gets a
      higher payoff.
    \item Suppose now that Player 1 makes an error. In this case,
      Player 2 always has the possibility to enter a loop, where
      Player 1 will incur cost $\infty$ or $-\infty$.
    \end{itemize}
    In summary, if the two counter machine does not halts Player 1
    does not have a strategy to achieve his goal.
  \end{enumerate}
  Thus, Player 1 incurs a cost in $[-\eta,\eta]$ for any $\eta$ iff he
  chooses the strategy of faithfully simulating the two counter
  machine, when the machine halts. When the machine does not halt, the
  cost incurred by Player 1 is not in $[-\eta,\eta]$ for any chosen
  $\eta$ if Player 1 made a simulation error and Player 2 entered a
  widget. Else if a widget is not entered then the run does not reach
  $F$ and cost is $+\infty$.  \qed
\end{proof}

\end{document}